\documentclass[11pt]{article}

\usepackage{authblk} 

\usepackage{graphicx} 

\usepackage{fullpage}
\usepackage{svg}

\usepackage[
  textsize=scriptsize,
]{todonotes} 
\setuptodonotes{fancyline}
\usepackage[ruled]{algorithm2e}

\usepackage{amsmath}

\ifdefined\newcounteralias%
  \RemoveFromHook{package/amsthm/after}[firstaid/aliascounter]%
\fi
\usepackage{amsthm}
\usepackage[colorlinks=true,citecolor=blue,linkcolor=blue,hypertexnames=false]{hyperref}
\usepackage[capitalise]{cleveref}
\usepackage{cancel}
\usepackage[font={small}]{caption}
\usepackage{mathrsfs}

\usepackage{thmtools}

\declaretheorem[numberwithin=section]{theorem}
\declaretheorem[sibling=theorem,name=Lemma]{lemma}
\declaretheorem[sibling=theorem,name=Definition]{definition}

\declaretheorem[sibling=theorem,name=Proposition]{proposition}
\declaretheorem[sibling=theorem,name=Corollary]{corollary}
\declaretheorem[sibling=theorem,name=Remark]{remark}
\declaretheorem[sibling=theorem,name=Assumption]{assumption}

\declaretheorem[sibling=theorem,name=Fact]{fact}

\crefname{lemma}{Lemma}{Lemmas}\Crefname{lemma}{Lemma}{Lemmas}
\crefname{definition}{Definition}{Definitions}\Crefname{definition}{Definition}{Definitions}
\crefname{observation}{Observation}{Observations}\Crefname{observation}{Observation}{Observations}
\crefname{proposition}{Proposition}{Propositions}\Crefname{proposition}{Proposition}{Propositions}
\crefname{corollary}{Corollary}{Corollaries}\Crefname{corollary}{Corollary}{Corollaries}
\crefname{remark}{Remark}{Remarks}\Crefname{remark}{Remark}{Remarks}

\usepackage{amssymb}
\usepackage{microtype}

\usepackage[normalem]{ulem}
\usepackage[dvipsnames]{xcolor}

\usepackage[T1]{fontenc}
\usepackage{lmodern}
\usepackage{FiraSans}

\usepackage{tikz}

\AddToHook{cmd/appendix/before}{\crefalias{section}{appendix}}

\usetikzlibrary{arrows.meta,calc,decorations.pathreplacing}
\definecolor{heatred}{RGB}{200,40,40}
\definecolor{keyblue}{RGB}{35,90,180}
\definecolor{gateblue}{RGB}{90,160,230}
\definecolor{eraseorange}{RGB}{240,150,60}
\definecolor{garbagepurple}{RGB}{140,60,170}
\tikzset{
  wire/.style={thick},
  wirelabel/.style={fill=white, inner sep=0.5pt, font=\scriptsize},
  gate/.style={draw=keyblue!70!black, rounded corners=2pt, fill=gateblue!30, inner sep=0pt, font=\small},
  tightgate/.style={gate, inner sep=1.5pt, minimum height=0.3cm, font=\scriptsize},
  cerase/.style={gate, draw=eraseorange!70!black, fill=eraseorange!15},
}

\newcommand{\dd}[1]{{\leavevmode\color{teal}DD: #1}}

\newcommand{\boltzmann}{k}
\newcommand{\temp}{\mathbb{T}}
\newcommand{\dist}{\mathcal{D}}
\newcommand{\EP}{\mathsf{EP}}
\newcommand{\EF}{\mathsf{Q}}

\newcommand{\DKL}{D_{\mathrm{KL}}}
\newcommand{\entropy}{\mathcal{S}}

\newcommand{\heat}{Q}
\newcommand{\heatoptimal}{Q^{*}}
\newcommand{\heato}{\tilde Q}
\newcommand{\heatestimate}{\widehat{Q}}
\newcommand{\varLanguage}{\mathcal{L}}
\newcommand{\halfAccept}{{\SAT/2}}
\newcommand{\SAT}{\ensuremath{\textnormal{\textsc{Sat}}}}
\newcommand{\EFmin}{{\heat \text{-}\mathrm{min}}}
\renewcommand{\P}{\mathsf{P}}
\newcommand{\NP}{\mathsf{NP}}
\newcommand{\PP}{\mathsf{PP}}

\newcommand{\BPP}{\mathsf{BPP}}

\newcommand{\landauer}{\mathcal{L}}

\newcommand{\mutualInformation}{\mathcal{I}}

\newcommand{\support}{\mathrm{supp}}
\newcommand{\given}{\;|\;}
\newcommand{\circuit}{\mathcal{C}}
\newcommand{\varSpace}{W}
\newcommand{\varTime}{T}

\newcommand{\twodots}{\mathinner{\ldotp\ldotp}}

\newcommand{\E}{\mathbb{E}}

\newcommand{\R}{\mathbb{R}}

\newcommand{\Z}{\mathbb{Z}}

\newcommand{\calA}{\mathcal{A}}

\newcommand{\calO}{\mathcal{O}}

\newcommand{\calG}{\mathcal{G}}
\newcommand{\calU}{\mathcal{U}}

\newcommand{\calE}{\mathcal{E}}

\newcommand{\poly}{\mathrm{poly}}

\newcommand{\domain}{D}
\newcommand{\range}{R}

\renewcommand{\epsilon}{\varepsilon}

\newcommand\abs[1]{\left|#1\right|}
\newcommand{\floor}[1]{\left\lfloor #1 \right\rfloor}
\newcommand{\ceil}[1]{\left\lceil #1 \right\rceil}
\newcommand{\binaryEntropy}{\mathrm{H_2}}

\newcommand{\valil}[2]{w_{{#1}}(#2)}
\newcommand{\hide}[1]{}
\newcommand{\Prob}{\mathrm{Pr}}
\newcommand{\unif}{\mathrm{unif}}

\newcommand{\newbounda}{\alpha}

\newcommand{\gateSet}{G}

\newcommand{\idx}{I}

\newcommand{\alphabet}{\Sigma}

\newcommand{\gibbs}{\calG}

\newcommand{\counts}{C}
\newcommand{\countDist}{\mathscr{C}}
\newcommand{\EPlocal}{\EP^{\mathrm{Loc}}}

\newcommand{\compare}{\mathrm{Comp}}

\newcommand{\sort}{\mathrm{Sort}}
\newcommand{\distUniform}{\calU}

\newcommand{\evenSet}{\calE}
\newcommand{\oddSet}{\calO}
\newcommand{\xmin}{x_{\min}}
\newcommand{\xmax}{x_{\max}}
\newcommand{\hyperGeo}[3]{HG^{#1,#2}_{#3}}

\newcommand{\blank}[1]{}

\allowdisplaybreaks

\makeatletter
\newcommand{\compacttoc}[1][1]{%
  \begingroup
  \setcounter{tocdepth}{#1}%
  \setlength{\parskip}{0pt}%
  \renewcommand*{\l@section}{\@dottedtocline{1}{0em}{1.7em}}%
  \renewcommand*{\l@subsection}{\@dottedtocline{2}{1.7em}{2.5em}}%
  \small
  \begin{center}\normalsize\bfseries\contentsname\end{center}%
  \nobreak\vspace{-1.5ex}%
  \@starttoc{toc}%
  \endgroup
}
\makeatother

\begin{document}

\title{The Thermodynamic Complexity of Circuits\footnote{
    A preliminary version of this paper appeared in chapter 6 of Niels Kornerup's PhD thesis \cite{Kor25}. This version substantially improves, corrects, and builds on the results in that chapter.
}
}
\date{}

\author[1]{Anne Condon\thanks{Condon was supported by an NSERC Discovery grant.}}
\author[2]{David Doty\thanks{Doty was supported by NSF awards 2329909 and 2211793 and DoE award DE-SC0024467.}}
\author[3]{Niels Kornerup\thanks{Kornerup was supported by the LDRD Program at Sandia National Laboratories. Sandia is managed and operated by NTESS under DOE NNSA contract DE-NA0003525.}}
\author[4]{David Soloveichik\thanks{Soloveichik was supported by a Schmidt Sciences Polymath award.}}
\author[5]{David Wolpert\thanks{Wolpert was supported by NSF award 2626044.}}

\affil[1]{University of British Columbia}
\affil[2]{University of California, Davis}
\affil[3]{Sandia National Laboratories}
\affil[4]{University of Texas, Austin}
\affil[5]{Santa Fe Institute, New Mexico; Complexity Science Hub, Vienna; Arizona State University, Arizona; International Center for Theoretical Physics, Trieste}

\maketitle

\begin{abstract}

When any physical system executes a logically irreversible (non-injective) step of a computation, the excess heat released over the so-called Landauer limit is an energetic cost that cannot be recovered. In modern computers this cost vastly dwarfs the Landauer limit. Modern stochastic thermodynamics refers to this cost as the ``entropy production'' and has derived some very powerful lower bounds on its value, which strictly strengthen the second law of thermodynamics. Here we introduce a model of circuits using \emph{heat functions} and investigate its entropy production, deriving new results on the thermodynamic complexity of circuits. 

In our model, the circuits compute \emph{cleanly}, erasing their inputs and all intermediate values. We first show that computing the optimal heat functions for gates in a given circuit (the gates that minimize total entropy production) when the circuit's input is uniformly random is $\mathsf{PP}$-complete. We also show that multiplicative approximation of those heat functions is $\mathsf{NP}$-hard, but additive approximation admits a polynomial-time randomized algorithm. We give tight bounds for entropy production for certain specific families of circuits (on uniformly distributed inputs): an $n$-ary AND tree of depth $O(\log n)$ requires $\Omega(n)$ circuit entropy production, while a serial chain achieves $O(1)$, showing an energy-depth tradeoff. The Batcher odd-even mergesort network on $n$ bits has $\Theta(n)$ entropy production,  asymptotically less than its $\Theta(n \log^2 n)$ gate count. On random permutations of distinct integers its entropy production is $\Theta(n \log^2 n)$, asymptotically exceeding the Landauer cost of sorting; but preceding it with Leighton and Plaxton's almost-sorting network reduces the total heat released to $O(n \log n)$. Finally, we show that a clean sorting circuit (not a sorting network) can sort $n$ values with zero circuit entropy production over any exchangeable input distribution. Our framework also yields streamlined proofs of known results in stochastic thermodynamics on mismatch cost and modularity cost.
\end{abstract}

\tableofcontents



\section{Introduction}
\label{sec:intro}


A central part of the industrial revolution of
the 19th century was the invention of steam engines. This 
invention resulted in a
practical demand for powerful and efficient steam engines,
which in turn
ignited a mathematical revolution in physics:
the development of the theory of thermodynamics to understand and quantify how efficiently an engine
could accomplish mechanical objectives such as moving a large train.

Then in the 20th century there was a similarly
revolutionary invention, of computers. Accordingly,
the energy required to accomplish \emph{computational} objectives
has been a subject of academic inquiry for a century.
A deeper and more powerful understanding
of this thermodynamics of computation would not just
be a purely theoretical advance. As a practical matter,
we live in a world where the energetic costs of computing are critically important.
The International Energy Agency estimates that $485$ terawatt-hours was consumed running data centers in the year 2025 and projects that this number will roughly double to $950$ terawatt-hours in $2030$ \cite{IEA2026KeyQuestions}.
Despite this, our theoretical understanding of the fundamental thermodynamic costs of computation is rather limited, partially due to the lack of a strong and flexible model that captures many of the nuances of the thermodynamics of computation beyond logical reversibility.
While current computational devices expend energy at rates well above those mandated by the laws of thermodynamics, as technology improves a principled understanding of thermodynamics is key to optimizing the energetic costs of computation.

However, most of the study of the thermodynamics of
computation to date has 
been based on classical thermodynamics principles.
While the field of thermodynamics itself has continued to evolve in recent decades,
the study of the thermodynamics of computation has not kept pace.
In particular, there has been a revolution in non-equilibrium
statistical physics in the 21st century, resulting in what
is now called \textit{stochastic thermodynamics}. Whereas classical thermodynamics is, in its foundations, concerned only with systems at
equilibrium, stochastic thermodynamics focuses on systems arbitrarily
far from equilibrium. This makes it the ideal formalism for analyzing
the thermodynamics of computation.

This paper serves two purposes: 
First, we introduce such a new model for 
investigating the thermodynamics of circuits.
Our goal with this model
is to strike a balance between broad accessibility to a general computer science audience 
and correctness with respect to 
{stochastic thermodynamics},
which represents our current best understanding of the fundamental energetic cost of machines,
including computing machines.
Second, we show how this circuit model can be leveraged to prove a wide range of new results ranging from the computational complexity of optimizing Boolean circuits to asymptotically tight lower and upper bounds on the thermodynamic costs of fixed circuits.

We start by giving a detailed overview of the history of the thermodynamics of computation before diving into our model and results.

\subsection{Landauer's principle and logical reversibility}
\label{sec:intro-reversible-computing}
Most of the computer science literature on the fundamental energetic costs of computation studies \emph{reversible computing},\footnote{
    Notable exceptions abound in the physics literature,
    e.g.,~\cite{WK20,YCW24,YYW25,Wol19,WKL+24},
    but these papers generally use physics terms and assumptions that are often difficult for computer scientists to read.
}
This is an important and groundbreaking body
of work, which we now summarize.
Suppose a computing device executes a basic computational step that is \emph{logically irreversible},
meaning that the device goes from logical state $c_0$ 
(such as the configuration of a multitape Turing machine, for instance) 
to logical state $c_1$,
where $c_1$ has \emph{another} predecessor state $c_0'$ that also transitions to $c_1$ in one step.
Formally, the global transition function $\delta: C \to C$ mapping the machine's current state to its next state is not injective.
Landauer~\cite{Lan61} argued that this necessarily decreases entropy within the computing device, 
entropy must correspondingly increase elsewhere in the universe to satisfy the Second Law of thermodynamics.
Phrased in terms of \emph{work} (i.e., accessible energy), \emph{Landauer's principle} states that \emph{logically irreversible operations must expend work in order to release heat into their environments.}\footnote{
    In this paper we use the term ``release heat'' mostly synonymously with ``dissipate heat'',
    but generally reserve the term ``dissipate'' to mean a loss of heat/work/useful energy that is irreversible.
    However, we are careful to include a descriptor such as ``irreversible'' or ``irrecoverable'' when we want to identify that the lose of heat/energy cannot be undone; absent this descriptor,
    any release/transfer of heat/work/useful energy in principle may be recovered.
}
Given a transition from $n$ equally likely predecessor states to $m$ equally likely successors, $\boltzmann \temp \ln (n/m)$ heat must be released,
regardless of the physical implementation of the operation. Here 
$\temp$ denotes the system's temperature\footnote{In physics, it is common to specify a \emph{heat bath} of temperature $\temp$. For our purposes, we can think of $\temp$ as the temperature of both the system and its environment.}
and
$\boltzmann$ is Boltzmann's constant.\footnote{
    Throughout this paper, we choose the units for measuring energy so that $\boltzmann \temp \ln 2 = 1$. So we
    measure energy and heat in comparable units to bits.
}

Landauer argued~\cite{Lan61} that most real computation would necessarily execute many irreversible operations, implying an unavoidable cost of $\Omega(t \boltzmann \temp \ln 2)$ to execute $t$ steps of a computation regardless of the physical implementation of the computing device.
Bennett~\cite{Ben73} challenged this assumption, showing how to convert an arbitrary Turing machine $M$ into a 
logically reversible Turing machine $R$ computing the same function.
The configuration graph of $R$ is then a 1D line 
$c_0 \to c_1 \to \ldots \to c_t$. 
A completely unbiased random walk on this graph, requiring no work to bias it forwards, would then visit final configuration $c_t$ in $O(t^2)$ time on average.\footnote{
Bennett~\cite{Ben73} observed that readout from such a device would be impractical, since even after waiting $t^2$ time, there is only a $1/t$ probability of observing $R$ in its final output configuration.
He proposed dissipating small $\epsilon \ll \boltzmann \temp \ln 2$ energy per step to bias the computation forward just enough to ensure a constant probability of finding the machine in an output configuration.
This still implies $\epsilon t = \Omega(t)$ energy cost for a $t$-step computation. See~\cite{DKL+24} for arguments that this total cost need not scale linearly with $t$, essentially by using algorithmic design techniques to increase the probability of observing the machine in an output configuration even with totally unbiased transitions, obviating the need to supply energy for even a slight forward bias.
However, this trick (and the original unbiased walk outlined in \cite{Ben73}) doesn't account for the thermodynamic costs of resetting the device (see~\cite{Nor13,SCS+15}).
In \cite{DKL+24} the authors avoid these costs by designing a device that never needs to be reset.
}

Bennett's original construction~\cite{Ben73} simulates a $t$-step Turing machine in $O(t)$ steps 
but pays with an increase in memory complexity up to $\Theta(t)$, 
essentially needing to write down the entire history of the computation on a separate worktape to maintain logical reversibility,
bringing the memory use up to asymptotically match the time.
(Thus a polynomial-time, log-space computation, for example, pays exponentially extra memory to make the computation logically reversible.)
A series of papers over the next few decades~\cite{Bennett1989,LevineSherman1990,LiVitanyi1996,LiTrompVitanyi1998,LangeMcKenzieTapp2000,BuhrmanTrompVitanyi2001}
showed various improved simulation time/space tradeoffs in reversible Turing machines;
see~\cite{Bennett1982,Vitanyi2005,Morita2008} for useful surveys.
Toffoli and Fredkin~\cite{Toffoli1977,Toffoli1980,FredkinToffoli1982} pioneered the study of reversible computation in other computational models such as cellular automata and circuits.
Meanwhile in the physics community, there has been work on understanding the thermodynamics of logically reversible Turing machines~\cite{PhysRevE.92.042104,brittain2021buildthermodynamicallyreversibleuniversal}.

So what is there to say about the fundamental thermodynamic costs of computation that has not already been said in the reversible computing literature?
First, we may consider a situation in which we do not have access to a logically reversible gate set, 
in which we wish to understand the fundamental energetic costs associated with whatever computational devices we do have on hand.
While there has been progress towards actually building logically reversible computational devices (e.g. \cite{Ammer1998Scalable,DBLP:journals/cse/Frank02,DBLP:conf/icrc/FrankBTMH20}), these systems are not yet ready for widespread use and most computation today is performed in a highly irreversible manner.
This is a deep research issue in its own right, and more practically, reflects the situation in almost every deployed computational system.

Second, over the last few decades the field of non-equilibrium statistical physics has undergone major developments, culminating in a new subfield called \emph{stochastic
thermodynamics}~\cite{van2015ensemble,Seifert2025StochasticThermodynamics}. 
These developments have upended our 
understanding of the thermodynamics of physical systems that implement computation. In particular, they reveal
that the traditional understanding of 
logical reversibility and Landauer's principle is at best partially correct:
while the Second Law does indeed imply that decreasing entropy of a system locally requires increasing entropy outside the system,
nothing in the Second Law says that this must be a physically irreversible loss of work.

To give a simple example of a logically irreversible process that can be done in a physically/thermodynamically reversible way (i.e., can be done with arbitrarily little dissipated work), suppose we have a particle in a cylinder, and identify the particle being in the top half of the cylinder as a bit $1$, while the particle being in the bottom half is identified as bit $0$, 
with equal probability.
So if we compress the cylinder to its bottom
half using a piston, we perform a bit erasure, 
since afterwards the particle is guaranteed to be in the bottom half.
Moreover, in first-year physics we learn that if this compression is
done sufficiently slowly, then all the work expended in that compression can be recovered afterward simply by \emph{un}compressing the cylinder sufficiently slowly, returning it to its original volume,
and returning the particle's distribution to equally likely to be in either half (see Szilard engines for more details, \cite{Szi29,doi:10.1073/pnas.1406966111}).

This is precisely the property that defines a process as being thermodynamically reversible. In contrast, if one compresses the cylinder too quickly, 
no matter how one subsequently uncompresses the cylinder, one won't recover all the work that was used in the compression. This is what is meant by thermodynamic irreversibility. 
This example illustrates that logical and thermodynamic reversibility are independent properties of a process;
it is possible to have one without the other.
(See~\cite{Sag14} for a pedagogical elaboration of this point.)

Third, if the function $f$ being computed is not injective,
it is not possible to compute the mapping $x \mapsto f(x)$ in a logically reversibly way;
instead reversible computing models typically compute $x \mapsto (x,f(x))$.
But the mappings $x \to f(x)$ and $x \to (x,f(x))$ are fundamentally different.
If $f$ is non-injective or if $f^{-1}$ is prohibitively expensive to compute (i.e., if $f$ is a one-way function),
then it is impossible to efficiently and reversibly obtain just $f(x)$ from the pair $(x,f(x))$.\footnote{
    Bennett~\cite{Ben73} shows that,
    if $f$ is injective,
    it is possible to compute the mapping $x \mapsto f(x)$ reversibly 
    (i.e., not needing the input also to appear as part of the output). 
    However, this construction has time complexity as high as the max of computing $f$ \emph{and} computing $f^{-1}$,
    since it involves taking a direct computation of $f^{-1}$ by a Turing machine and making it logically reversible in the same way that $f$ is reversibly computed.
    But if we want the time complexity to be no worse than that of computing $f$,
    this rules out one-way functions in which $f^{-1}$ has time complexity fundamentally greater than that of $f$.
    Indeed, it is well-known folklore that the mapping $x \mapsto f(x)$ cannot be efficiently computed in a logically reversible manner if $f$ is a one-way function:
    if such a computation, for example a Boolean circuit, were completely logically reversible,
    starting only with $x$ and ending only with $f(x)$,
    then the mirror image of such a circuit would be an equally efficient circuit computing $f^{-1}$,
    contradicting that $f$ is one-way.
}
One might think it is okay to just leave an extra copy of $x$ in your computer's memory, but 
if you compute $f$ repeatedly and old values of $x$ are overwritten, the cost of erasing these old values must be paid
(see~\cite[Sec.\,11]{Wol19}).
One might hope to store values of $x$ generated by intermediate steps of a larger logically reversible computation and reversibly uncompute them during the uncomputation step, however, temporarily storing $x$ increases the space complexity of the overall computation (see~\cite{Bennett1989}).


Finally,
it remains open whether every time-$t$, space-$s$ Turing machine can be simulated by a reversible Turing machine using $O(t)$ time and $O(s)$ space;
the current best results asymptotically increase at least one of them~\cite{Vitanyi2005},
and there is an oracle relative to which this is necessary~\cite{FrankAmmer2017}.

In this paper we start to investigate how {stochastic thermodynamics} can help untangle these issues.

\subsection{Beyond logical reversibility: Stochastic thermodynamics}
\label{sec:intro-beyond-logical-reversibility}

Bennett's approach of requiring logical reversibility of every computational step is in one sense a very ambitious way to minimize energy costs by pushing the so-called \emph{Landauer cost} of executing logically irreversible operations to 0,
simply by removing all such operations.
Li and Vit\'{a}nyi~\cite{LiVitanyi1996} investigated how it is possible to trade time and space for `energy' by limiting the number of logically irreversible steps in a computation.
Demaine, Lynch, Mirano, and Tyagi~\cite{DemaineLMT16} built on this idea and proposed a generalization of logical reversibility they call \emph{semi-reversible computing},
in which irreversible operations $f:\domain \to \range$ are permitted and assumed to cost $\log \frac{|\domain|}{|\range|}$ bits of energy to execute
(this being the natural generalization of the Landauer cost of 1 bit when $|\domain|=2$ and $|\range|=1$,
assuming a uniform distribution on $\domain$).
The total energy cost (by this definition) is then treated as a complexity measure to be minimized.

However, the more modern field of \emph{stochastic thermodynamics}~\cite{van2015ensemble,Seifert2025StochasticThermodynamics}
has generalized our understanding of the interaction of entropy and fundamental energy use in physical devices,
beyond the overly simplistic assumption that each bit erasure costs $\boltzmann \temp \ln 2$ energy.
It has been argued~\cite{WKL+24} that stochastic thermodynamics can be fruitfully applied to computational complexity to better understand the relationship between fundamental thermodynamic costs and other traditional complexity measures such as time and space.

We now review
\emph{entropy production},
a basic concept from modern stochastic thermodynamics that lets us study the fundamental limits of energetic costs 
even for logically irreversible computation.
As applied to the study of computation, 
we propose entropy production as a computational complexity measure/cost to be minimized.
This is already the measure of interest for such systems in the physics community (i.e. \cite{Wol19,peliti2021stochastic,WKL+24,Seifert2025StochasticThermodynamics}).
We then analyze these costs in a 
particular circuit model, 
noting that our approach can be extended in a natural way to other models of computation.

We provide our full definitions in \Cref{sec:prelim} and give
a semi-formal summary of the underlying concepts here.
Briefly, the generalization of the Second Law of thermodynamics most central to our purpose is the equation

\begin{equation}
\label{eq:intro-entropy-production}
\EP = 
\EF 
- (\entropy(\dist_0) - \entropy(\dist_1)).
\end{equation}
Equation \eqref{eq:intro-entropy-production} appears later in \Cref{def:ep-gate} with notation specialized to the study of the thermodynamics of circuits.
The terms $\EP$ and $\EF$ respectively mean \emph{entropy production} and \emph{entropy flow}.
However, throughout this paper we focus
on the case of a single (thermal)
reservoir. That leads us to use the term \emph{heat released} for $\EF$.
Furthermore,
given our choice of units such that $\boltzmann \temp \ln 2 = 1$, we can equate information in bits with equal units of heat; we discuss these terms below.

To begin, we focus on 
the term $\entropy(\dist_0) - \entropy(\dist_1)$
 in~\cref{eq:intro-entropy-production}.
$\dist_0$ represents the probability distribution on states of a physical device\footnote{
    The stochastic thermodynamic perspective can be viewed as an ``intrinsically distributional'' complexity measure because it defines costs that are fundamentally tied to a distribution over inputs. 
    In other words, there is not in general a sensible notion of ``worst-case entropy production'' defined as $\max_{x \in \{0,1\}^n} \EP(x)$, as is typical for definitions of worst-case time or space complexity.
} prior to executing a (possibly logically irreversible) step, and $\dist_1$ represents the associated distribution after; $\entropy(\dist) = \sum_{x \in \mathrm{supp}(\dist)} \Pr[x] \log \frac{1}{\Pr[x]}$ is the \emph{Shannon entropy} 
of $\dist$.\footnote{For example, if the system has two bits $x_1,x_2$ each i.i.d.\,$0$ or $1$ with probability $\Pr_{\dist_0}[x_1=1] = \Pr_{\dist_0}[x_2=1] = 1/2$, and if the system computes $y = \mathrm{AND}(x_1,x_2)$, then implied output distribution $\dist_1$ on $y$ is $\Pr_{\dist_1}[y=1] = 1/4$ and $\Pr_{\dist_1}[y=0] = 3/4$.
}

Concretely, suppose we perform the most widely discussed logical operation in the thermodynamics of computing: erasing a single bit $b \in \{0,1\}$.
Here ``erasing $b$'' generally means ``setting $b$ to 0'', i.e., no matter whether $b$ was 0 or 1 before, afterward it is 0 and so the system has ``erased'' any knowledge of $b$'s previous value.
In this case, $\dist_1$ is the degenerate point-mass distribution with $\Pr_{\dist_1}[b=0] = 1$.
The prior distribution $\dist_0$ may be a number of things, but the worst case in terms of logical irreversibility is the uniform distribution $\dist$ with
$\Pr_{\dist_0}[b=0] = \Pr_{\dist_0}[b=1] = 1/2$, i.e., prior to erasing the bit, we have maximal uncertainty about its current value.
The term $\entropy(\dist_0) - \entropy(\dist_1)$, the difference in Shannon entropy between distributions $\dist_0$ and $\dist_1$, is the so-called \emph{Landauer cost}.\footnote{
\label{foot:landauer-cost-not-always-one-bit}
    Much literature on reversible computing assumes the Landauer cost of erasing one bit is, depending on units, 1 bit of entropy or $\boltzmann \temp \ln 2$ units of heat.
    (In this paper we will use units of entropy, i.e., bits.)
    However, this assumes $\dist_0$ is the worst-case uniform distribution;
    for other distributions the drop in entropy is strictly less than one bit:
    it is precisely the binary entropy $H_2(p)$ if one of the bits has probability $p$.
    As an extreme example, if $\Pr_{\dist_0}[b=0]=1$, then the Landauer cost is 0. 
    Indeed, obviously it costs nothing to ``set'' a bit to 0 when one knows for certain that it is already 0;
    no action at all is required for this ``computation''.
    More generally,
    Frank~\cite{frank2017foundations} points out that reversible computing need not require a function $f$ to be fully injective, but only injective on inputs with positive probability.
}
In this example, $S$ is the simple binary entropy function on $p=\Pr_{\dist_i}[b=0]$, so $\entropy(\dist_{0}) = 1$,
$\entropy(\dist_{1}) = 0$, and the Landauer cost is 1 bit of entropy.
Importantly, \eqref{eq:intro-entropy-production} means that the Landauer cost does not necessarily imply positive entropy production.
By focusing instead on entropy production, we open the door to modeling and analyzing sources of heat dissipation in computation, over and above the Landauer cost, that cause $\EP$ to be positive.

The other two terms $\EP$ and $\EF$ account for how this local drop in entropy
relates to the outside universe. $\EP$, the \emph{entropy production}, represents the total increase in entropy of the universe.
The Second Law dictates that $\EP \ge 0.$
This is related to the \emph{local} decrease in entropy $\entropy(\dist_0) - \entropy(\dist_1)$ by the \emph{heat released} $\EF$.\footnote{The physics literature often refers to ``the environment'' as a ``heat bath'' or ``heat reservoir''. 
We note that the term ``heat'' in theoretical physics does not necessarily imply anything actually getting hotter,
and even in that community there can be forms of entropy flow not interpretable as heat transfer.
In principle $\EF$ does not necessarily represent heat dissipation.
It could correspond to discharging a rechargeable battery or increasing degrees of freedom that do not correspond to computational states (i.e. the device might write bits to a tape outside of the computer).
However, in most physical devices,
\emph{released heat} is the form entropy flow takes,
and we will use this term synonymously for entropy flow,
especially since it is less easily confused with entropy \emph{production}.
}
$\EF$ corresponds to the entropy change of the environment due to heat exchange (releasing heat increases the entropy of the environment) and encompasses the underlying physical implementation of our computation.
Note that to maintain the Second Law ($\EP \ge 0$), we must have $\EF \ge \entropy(\dist_0) - \entropy(\dist_1)$.
If the local entropy strictly dropped through a logically irreversible step, 
i.e., if $\entropy(\dist_0) > \entropy(\dist_1)$,
then in this case $\EF$ must be strictly positive to maintain $\EP \ge 0$.

The ability of randomized computation to have a negative Landauer cost has implications for energy use as well, allowing a computation to use randomized operations to counteract the heat released in other steps by logically irreversible operations.
As a simple example of these principles in action, and to illustrate how the laws of physics allow for computation with zero dissipated work \emph{even if} a logically irreversible step such as erasing a bit is performed, consider a situation in which we first generate a random bit and then erase it.
Imagine we have some physical sort of randomness/entropy available to generate a random bit $b \in \{0,1\}$.
Suppose we generate this bit in a memory location where we know in advance that a 0 is already written.
Applying equation~\eqref{eq:intro-entropy-production} to the first step, we have that $\dist_0$ is the point-mass $\Pr_{\dist_0}[b=0] = 1$,
whereas afterward we have $\Pr_{\dist_1}[b=0] = \Pr_{\dist_1}[b=1] = 1/2$.
Thus the Landauer cost $\entropy(\dist_0) - \entropy(\dist_1) = -1$ is actually \emph{negative}.
Physically, this corresponds to the fact that in principle, one removes $\boltzmann \temp \ln 2$ heat from the environment to randomize this bit,
releasing the same amount of heat on the subsequent erasure step that takes us from $\dist_1$ ($b$ is equally likely to be 0 or 1) back to $\dist_0$ ($b=0$).
This is essentially ``taking a loan and then paying it back''.

In the remainder of this paper, we focus on deterministic computation, in which every state has at most one successor.
Our model may describe only part of a larger physical device: other parts may perform randomized operations with negative Landauer cost and draw heat from the environment to offset heat released by the deterministic computation.
When $f:\domain \to \range$ is a (deterministic, single-valued) function being computed, we let $\landauer_f(\dist)$ 
(\Cref{defn:landauer-cost})
denote $\entropy(\dist) - \entropy(f(\dist))$,
where $f(\dist)$ denotes the distribution on $\range$ induced by distribution $\dist$ on $\domain$.
In this case, the data processing inequality (\Cref{prop:entropy-data-processing}) implies that 
$\landauer_f(\dist) \ge 0$; 
i.e., a deterministic process cannot increase the entropy of a distribution.

The previous example shows that a logically irreversible operation may have zero entropy production.
Next we discuss a ``dual'' example:
an overall operation can be logically reversible,
yet require positive entropy production due to ``locality'',
i.e., due to the fact that portions of the computation must be done in physically separate places,
so that despite entropy staying constant globally,
local operations must produce entropy due to lack of knowledge of rest of the state of the system.
Consider three entities Alice, Bob, and Eve.
Alice and Bob are at physically separate locations with no ability to communicate.
Eve has a uniformly random bit $b$,
and sends a copy of $b$ to each of Alice and Bob.
Alice must erase her bit, setting it to 0.
Globally, this is logically reversible,
since the mapping $(b,b) \mapsto (0,b)$ is injective,
but since Alice ``locally forgets'' $b$,
she must release a unit of heat into the environment to erase her bit,
even though far-away Bob remembers $b$.
Though this example may seem contrived and hand-wavey,
it is in fact a good example of what we sometimes called \emph{modularity cost}
(also known as \emph{modularity mismatch cost} in the stochastic thermodynamics literature, e.g.,~\cite{wolpert2019uncertainty}):
the fact that a self-contained computational module,
such as a gate in a circuit,
must release heat to compute an irreversible \emph{local} operation,
despite the fact that elsewhere in the circuit,
information about the gate's input is retained and so globally,
the operation is reversible.

To summarize, \eqref{eq:intro-entropy-production} applied to deterministic computation of a function $f$ obeys $\EP = \EF - \landauer_f(\dist)$.
$\landauer_f(\dist)$ is a purely information theoretic term depending only on $f$ and the input distribution $\dist$:
the loss of entropy $\entropy(\dist) - \entropy(f(\dist))$ between the input distribution $\dist$ to $f$ and the induced output distribution $f(\dist)$.
Reversible computing literature is concerned with making $\landauer_f(\dist) = 0$ for every computational step; 
in this paper we allow it to be positive,
though we freely and regularly use reversible computing techniques as well.
$\EF$ is the heat released into the environment during this computation;
this depends on the physics of the device computing $f$.
The Second Law of Thermodynamics states that $\EP \ge 0$,
so since $\EP = \EF - \landauer_f(\dist)$, 
this implies $\EF \ge \landauer_f(\dist)$.
From a complexity-theoretic perspective,
we wish to minimize $\EP$,
since it represents a physically irreversible loss of work (useful energy).
However, given a fixed $f$ to compute and distribution $\dist$,
we have no control over $\landauer_f(\dist)$,
so minimizing $\EF$ is equivalent to minimizing $\EP$.


\subsection{Prior work on thermodynamics of circuits}
\label{sec:intro-related-work-thermodynamics-of-circuits}
One perspective on the thermodynamic costs of circuits is provided by works of Michael Frank~\cite{frank2017foundations, DBLP:conf/rc/Frank18}.
In \cite{frank2017foundations}, Frank generalizes the conventional computer science notion of logical reversibility to include operations that, while not reversible in general, are so under a restricted subset of their inputs.
He denotes such operations as being ``conditionally reversible'' and shows that they can be implemented to release arbitrarily little heat, so long as the gates never receive an input violating the reversibility precondition.
Frank is primarily interested in conditional reversibility since it gives logically reversible gates that may be easier to implement than networks of Toffoli gates.
Our work can be seen as a further generalization of this idea; in addition to allowing conditionally reversible gates to release zero expected heat on valid distributions over inputs, we consider gates that are finely tuned for the actual distribution over inputs they expect to receive.
In \cite{DBLP:conf/rc/Frank18}, Frank also points out that $\EF$ is not quite the correct measure of thermodynamic cost.
He points out that the thermodynamic costs associated with Landauer's principle is not merely the fact that $\EF$ must be positive.
Rather it is that information which is dissipated as heat will rapidly become unrecoverable to the computational process, resulting in something similar to what we call a modularity cost.
Frank justifies the study of logically reversible computing since it is a sufficient condition to avoid Landauer related thermodynamic costs. 

Prior work by Wolpert, Kolchinsky, Caravelli, Yadav, and Yousef~\cite{WK20,YCW24,YYW25} investigated the thermodynamics of Boolean circuits using the theorems involving \emph{mismatch cost} --- a metric that quantifies the additional entropy production incurred by running a process on a different input distribution than the ``ideal'' one for given hardware.
In this case,
the physics of the device computing the gate may, 
in a sense,
be ``summarized information-theoretically'' by an \emph{optimal} input distribution $\dist^*$,
the distribution that minimizes the entropy production according to \eqref{eq:intro-entropy-production}.
They show lower bounds on the entropy production induced by running a circuit on a (potentially mismatched) input distribution $\dist$, where the lower bound depends only on the logical behavior of the gates with respect to an optimal input distribution $\dist^*$.
In other words, it is possible to prove non-trivial lower bounds on the $\EP$ of a circuit without knowing any details of the physical implementation of its gates other than their optimal distributions over its inputs.\footnote{
    The physical implementation does influence what is the optimal distribution.
    The significance of mismatch cost is that it can be measured knowing \emph{only} what is the optimal distribution, i.e., the physical implementation influences the mismatch cost \emph{only} through how it influences what is the optimal distribution.
}

The perspective of these works is that we know an optimal input distribution for a gate, and proceeds from there to measure mismatch costs (which are a lower bound on $\EP$).
Mismatch costs are a powerful tool for analyzing thermodynamic costs without needing to specify the underlying physics, 
but they are cumbersome to work with.
The thermodynamic accounting they provide requires viewing everything through the lens of distributions over state and costs are neither linear (with respect to the probabilities of inputs) nor local to the gates performing computation.
Moreover, the optimal distribution for a gate is a counter-intuitive way to specify its behavior and such a distribution is not unique even when the dynamics of the gate are deterministic.\footnote{There is a well defined way to handle this with mismatch costs, but it requires analyzing the pre-image of each possible output independently, but this can be messy to work with.}

Nevertheless, these methods are particularly nice as with mild assumptions on the behavior of gates in a circuit (i.e. modularity or time-invariant behavior), they can prove entropy lower bounds that are truly independent of the underlying hardware.


\subsection{Our results}
\label{sec:intro-results}
Here we provide a high level overview of our results and the techniques we use to obtain them.

\paragraph{The thermodynamic circuit model}
We introduce a model of circuits where each gate releases heat determined by its input value in order to perform 
logically irreversible computation.
For a circuit to compute a function $f$, we require it to map each input state $(x,\vec{0})$ to output state $(f(x),\vec{0})$;
the $\vec{0}$'s (which have different lengths if $|x| \ne |f(x)|$)
denote ``ancilla'' wires that may be used as extra storage during the computation.
In other words, the circuit must erase any ``dirty'' bits introduced during the computation including the original value of $x$.
The amount of heat a gate releases can depend on that gate's input; imagine that the gate must undergo some physical process to map each of its possible input states to the correct output.
Thus, we can assign a ``heat function'' to each gate $\heat: \{0,1\}^n \to \R_{\geq 0} \cup \{\infty\}$ that maps each input to an associated amount of released heat.
Technically, the amount of heat released by this process may not be deterministic, so one should think of these values as being expected costs,
though sometimes for brevity we may say a gate releases $\heat(x)$ heat on input $x$, understood to mean \emph{expected} heat.
Thanks to the linearity of expectation, this does not make a difference in the details of our model.
We say a heat function for a gate is \emph{valid} if it obeys the Second Law of thermodynamics (i.e. the $\EP$ in \eqref{eq:intro-entropy-production} is always non-negative) on every marginal input distribution.
From this perspective, we connect our model to mismatch costs in \cref{sec:rederive-mismatch-cost} of the appendix, and show that some of the results in \cite{WK20,YCW24,YYW25} have more intuitive proofs using heat functions.
We also show how our model can be used to prove new results about the thermodynamic costs of circuits.


Recall from \Cref{sec:intro-beyond-logical-reversibility} that our goal is to minimize $\EP$,
and that once $\landauer_f(\dist)$ is fixed,
minimizing $\EP$ is equivalent to minimizing $\EF$.
Despite this equivalence once $\landauer_f(\dist)$ is fixed, 
when we prove a one-way bound,
we generally prove lower bounds on $\EP$ 
(i.e., negative results stating that some computation \emph{must} dissipate a certain minimum amount of work that cannot be recovered)
and upper bounds on $\EF$
(i.e., positive results stating that it is \emph{possible} to achieve a certain computation while only releasing a small amount of heat).
For deterministic computation, $\EP \leq \EF$ so a lower bound on $\EP$ is also a lower bound on $\EF$ and an upper bound on $\EF$ is also an upper bound on $\EP$.
As is typical in theoretical computer science, we will use $O$-notation to suppress constant factors in our bounds.
Note that constant factors are important to track in the case where $\EF = \Omega(h(n))$ and $\landauer_f(\dist) = O(h(n))$.
Depending on the hidden constants, the $\EP$ for such a computation could be anywhere between $0$ and $\Omega(h(n))$.

There are two important caveats with our approach in this paper. First, when discussing $\EP$, we are implicitly assuming that the input distribution to our circuit is sampled independently from any larger context.
If a copy of the input to a circuit is retained somewhere outside of the circuit, then the drop in entropy in \eqref{eq:intro-entropy-production} should actually be zero and we get that $\EP = \EF$.
When proving lower bounds on $\EP$ this is a non-issue, as the additional context can only increase the $\EP$.
When proving upper bounds on $\EP$, we specify that we are interested in the \emph{circuit entropy production}, which we define as $\EP$ under this assumption on the input distribution.
Second, when giving upper bounds on either the $\EF$ or the $\EP$ of a circuit, we are operating under the ``axiom'' that every gate could in principle be built with any valid heat function.
While the validity constraint protects our gates from violating the Second Law of thermodynamics, other physical or engineering constraints may theoretically prevent such a gate from existing.
For example, speed limit theorems~\cite{PhysRevLett.121.070601} imply that, with bounds on transition rates, $\EP$ must be strictly positive in order for gates to compute their outputs in finite time\footnote{As is common in physics, we sidestep this issue by assuming that all gates are computed \emph{quasistatically} (i.e. infinitely slowly). In the limit as time to implement a gate goes to infinity, this lower bound on the $\EP$ goes to zero. We could still implement all gates in a circuit in finite time while making this cost a small constant independent of the number of gates, but we choose to do things quasistatically to simplify our analysis.}  and current CMOS technology is far from the thermodynamic limit imposed by validity.
Thus our upper bounds should be interpreted under the ``spherical cow'' assumption that validity is the only constraint on realizable heat functions.


\paragraph{Deriving optimal heat functions}
In \Cref{sec:opt-heat-func}, \Cref{prop:ef-min-heat-fn} shows how to derive the optimal valid heat function for a known distribution over inputs.
In particular if $p_x$ is the probability of input $x$ and $q_y$ is the probability of output $y$, then any optimal heat function for computing $f$ on this distribution over inputs must satisfy $\heat(x) = \log(q_{f(x)}/p_x)$ on all $x$ where $p_x > 0$.
This in turn is a key ingredient in proving our results in Sections \ref{sec:pp-hard}, \ref{sec:heat-functions-fixed-circuits}, and  \ref{sec:zero-entropy-compaction-and-sorting-circuits}.

\paragraph{Hardness results and approximation algorithms for computing optimal heat functions}

\Cref{sec:pp-hard} is concerned with computing the optimal heat functions for gates in a given Boolean circuit (so as to minimize $\EP$), assuming the i.i.d.~uniform distribution for the circuit's inputs.
We first show that the problem of computing the optimal heat function for each gate (so as to minimize total entropy production of the entire circuit) is unfortunately $\PP$-complete
(\Cref{lem:finding-opt-Q-is-PP-hard,lem:finding-opt-Q-in-PP}).
Since \Cref{prop:ef-min-heat-fn} shows how to compute the optimal heat function for a \emph{single} gate and a given input distribution to that gate,
this hardness follows essentially from the fact that computing the induced input distribution for a given gate in a circuit, knowing that the circuit's inputs are i.i.d.~with probability $1/2$, is itself $\PP$-complete.
The (straightforward) reduction is from the $\PP$-complete language $\varLanguage_\halfAccept$ of Boolean circuits $\circuit$ where at least half the assignments satisfy $\circuit$:
by adding a single erasure gate $g$ on the output, with an optimal heat function $Q_g$, $Q_g(1) \le 1$ if and only if $\Pr[\text{output of } \circuit \text{ is }1] \ge 1/2$, 
i.e., if and only if 
$\circuit \in \varLanguage_\halfAccept.$
Containment in $\PP$ directly follows from an algorithm that accepts or rejects depending on the relative probabilities a gate receives a specified input $x$ or another input $x' \neq x$ where $f(x') = f(x)$.


Since $\PP$-hardness rules out polynomial-time exact algorithms,
this motivates the question of whether polynomial-time approximation algorithms exist.
We show that this depends crucially on the form of error.
With \emph{multiplicative} error,
i.e., for some constant $c > 1$, to compute heat functions with expected value at most $c$ \emph{times} optimal, 
then this is $\NP$-hard, i.e., untractable unless $\NP \subseteq \BPP$.
There is an easy argument for this:
suppose we have a Boolean formula $\phi$ (a special type of circuit) and want to know whether $\phi$ is satisfiable.
If we then add an erasure gate $g$ to $\phi$'s output,
then for an optimal heat function $\heatoptimal_g$,
if $\phi$ is unsatisfiable, $\heatoptimal_g(0) = 0$,
i.e., it costs 0 heat to erase the output since we know it is 0.
If $\phi$ is satisfiable, then $\heatoptimal_g(0) > 0$.
Thus any approximation algorithm outputting heat value $q$ will tell us whether $\phi$ is satisfiable depending on whether $q > 0.$
However, this is a somewhat degenerate case,
in the sense that \emph{any} optimization problem permitting an optimal value of 0 has the property that 
no multiplicative approximation is good unless it is exact. The more interesting case is when a wire might have both bits 0 or 1 ($\support(\dist) = \{0,1\}$),
so that we have a positive optimal expected heat $\E_{x \sim \dist}[\heatoptimal(x)] > 0$.
We can in fact prove $\NP$-hardness of multiplicative approximation continues to hold even in this case
(\Cref{lem:mult-approx-hard}).

No $\NP$-hard problem has a polynomial-time \emph{randomized} algorithm unless $\NP \subseteq \BPP$;\footnote{
    The famous and widely-believed conjecture that $\P \ne \NP$ is the statement that $\NP$-complete problems such as $\SAT$ have no polynomial-time deterministic algorithms that are guaranteed to get the answer correct.
    The statement $\NP \not \subseteq \BPP$ generalizes this to the also widely-believed conjecture that such problems have no polynomial-time \emph{randomized} algorithms that are allowed to get the answer wrong with small probability.
}
\Cref{lem:mult-approx-hard} implies this applies to the problem of approximating optimal heat functions with multiplicative error.
However, considering \emph{additive} error in approximation,
we show there is a polynomial-time randomized algorithm that outputs a heat function at most $\epsilon$ \emph{plus} optimal with probability of failure at most $\delta$;
the running time is polynomial in the input, $1/\epsilon$, and $1/\delta$
(\Cref{lem:additive-approx-easy}).
The idea is simply to sample uniformly random inputs to the whole circuit $\circuit$ 
to empirically estimate a distribution on inputs to a gate $g$ in $\circuit$,
from which a heat function is computed via \Cref{prop:ef-min-heat-fn}.

\paragraph{Bounds on thermodynamic costs of fixed circuits}
Having shown the difficulty in computing optimal heat functions for arbitrary given circuits, \Cref{sec:heat-functions-fixed-circuits} then analyzes particular circuits and derives both lower bounds on $\EP$ and upper bounds on $\EF$.
\Cref{sec:heat-functions-AND-entropy-depth-tradeoff} begins with a warm-up exercise analyzing the computation of the $n$-input AND function $x_1 \wedge x_2 \wedge \ldots \wedge x_n$ on uniformly distributed inputs in $\{0,1\}^n$, using only input-preserving binary-AND gates of the form $\bigwedge(x_1,x_2,z) = (x_1, x_2, x_1 \wedge x_2)$ and erasure gates $E(x) = 0$.
We show the (possibly) counterintuitive result that the most natural implementation, a tree of AND gates of depth $O(\log n)$, along with appropriate erasure gates to enable the circuit to cleanly compute the 1-bit output $x_1 \wedge x_2 \wedge \ldots \wedge x_n$, has circuit entropy production $\Omega(n)$,
whereas the ``serial'' variant with depth $\Theta(n)$, can be done with circuit entropy production that is only $O(1)$.

\Cref{sec:batcher-bits} then analyzes the $O(n \log^2 n)$-gate, $O(\log^2 n)$-depth Batcher even-odd mergesort sorting network on Boolean inputs.
When applied to Boolean values, sorting networks are circuits where every gate is a 2-input, 2-output \emph{comparator} gate: computing the identity on inputs $00,01,11$,
but on input $10$,
swapping the bits to output $01$.
The main results of this section, \Cref{thm:batcher-heat-released-upper-bound,thm:batcher-ep-omega-n},
show an $O(n)$ upper bound on $\EF$ using only $O(\log n)$ distinct heat functions and a matching $\Omega(n)$ lower bound on $\EP$ over a uniform distribution of inputs in $\{0,1\}^n$ when using the optimal valid heat function on every gate.
This bound is notable in that it is asymptotically smaller than the number $\Theta(n \log^2 n)$ of gates in the circuit, despite none of the gates performing logically reversible operations.
The $O(n)$ upper bound (\cref{thm:batcher-heat-released-upper-bound}) is proven by slicing the network into regions of gates that all either (1) contain a small number of gates or (2) have one input that is overwhelmingly likely.
In contrast, the $\Omega(n)$ lower bound (\cref{thm:batcher-ep-omega-n}) follows directly from observing that the first $3$ layers of the Batcher's odd even mergesort network consist of $n/4$ independent copies of sorting networks on $4$ bits each and that each of these networks has $\Omega(1)$ entropy production.
In \cref{sec:batcher-overwriting} we prove a similar upper bound on the $\EF$ for sorting bits in a more restricted overwriting model of circuits considered in other works (i.e., \cite{YCW24,YYW25}).

\Cref{sec:batcher-integers} analyzes the more interesting case of the Batcher network applied to $n$ distinct values drawn uniformly at random from an ordered set $\domain$.
In this case, we are able to show that the entropy production must be $\Omega(n \log^2 n)$ which is tightly matched by a trivial $O(n \log^2 n)$ upper bound on $\EF$ given using any valid heat function where the expected released heat is $O(1)$ on all inputs.
This result is rather technical to prove, but the key ingredient is \cref{lem:batcher-integer-swap-bound}, which shows that for a constant fraction of the gates in the network, if that gate receives input $(x_1,x_2)$ with probability $p_{x_1,x_2}$ then whenever $p_{a,b} > 0$ it must be the case that $1/\alpha \leq p_{b,a}/p_{a,b} \leq \alpha$ for some universal constant $\alpha$.
Since such gates must release $\Omega(1)$ expected heat each, we get an $\Omega(n \log^2 n)$ lower bound on $\EF$.
Since the drop in entropy associated with sorting is only $O(n \log n)$, we can conclude that $\EP$ is $\Omega(n \log^2 n)$ in \cref{cor:batcher-integer-any-set}.

\cref{sec:better-than-batchers} presents a straightforward method for constructing a sorting network with better thermodynamic costs than Batcher on $n$ distinct inputs drawn uniformly at random from an ordered set $\domain$.
This circuit has $\EF$ that is $O(n \log n)$ which is asymptotically optimal.\footnote{Technically \cite{AjtaiKS83} automatically gives us this result since the network only has $O(n \log n)$ gates, but this network is known to have prohibitively large constant factors that render it impractical.}
Our circuit works by first using an $O(n \log n)$ gate network developed by Leighton and Plaxton~\cite{LP98} that sorts all but a small fraction of the inputs.
After running that circuit, we apply Batcher to its output.
The result, as we show in \cref{thm:pre-sort-helps-upper-bound}, is a sorting network with $O(n \log^2 n)$ gates that only releases $O(n \log n)$ expected heat.
By first using Leighton and Plaxton's network, the input to Batcher's will already be sorted most of the time.
Thus we can use a heavily biased heat function for the gates in Batcher's that releases less heat whenever the input is already sorted.

\paragraph{Compaction and sorting with zero entropy production}

\cref{sec:clean-compaction-zero-ep}
shows that we can do better than Batcher by designing a bit-sorting 
(a.k.a., \emph{bit-compaction}) circuit
specifically to minimize $\EP$,
and in fact can achieve zero circuit entropy production 
(\cref{thm:compaction-zero-entropy}) on any exchangeable distribution over inputs.
Note that this construction is not a sorting network, as it uses ancilla wires and more general gates than only comparators.

We do so by using standard Bennett-style ``compute/copy output/uncompute''
(applied to circuits)
to convert any circuit computing the mapping $x \mapsto f(x)$
to one reversibly computing 
$(x,\vec{0}) \mapsto (x,f(x))$,\footnote{We actually perform a slightly different logically reversible mapping for this problem to efficiently handle payloads, see the proof of \cref{thm:compaction-zero-entropy} for more details.}
with only a constant increase in gate and depth complexity.\footnote{
    This is useful because, as mentioned,
    reversible computation is automatically zero $\EP$ assuming a heat function of $Q(x)=0$ for all gates.
}
Once we have $(x,f(x))$ in memory, we give a way to pick heat functions for erasing each bit of $x$ such that the expected heat released is exactly the drop in entropy from mapping $x$ to $f(x)$.
The choice of heat function for erasing each $x_i$ is determined based on the value of $f(x)$ and the suffix of $x$ after index $i$, which can be computed and uncomputed in a logically reversible fashion even after $x_i$ has been erased.
We design such a circuit based on any sorting network $C$ with $t(n)$ gates and depth $d(n)$ that compacts $n$ inputs with $m$ bit payloads using only constant arity gates with only $O(t(n)m + n^2)$ gates, $O(d(n) \log m + n \log n)$ depth, and $O(t(n) + nm)$ ancilla wires.
In \Cref{sec:sorting-zero-ep},
we show how this approach can be extended to give zero circuit entropy production circuits for sorting on any exchangeable distribution over inputs.

\paragraph{New proofs of known results on mismatch cost and circuit mutual information using heat functions}

Following some basic lemmas in \cref{sec:basic-lemmas} of the appendix and the aforementioned analysis of Batcher's sorting algorithm in \cref{sec:batcher-overwriting}, we prove new results connecting heat functions to mismatch costs and reprove existing results on mismatch costs using the
language of heat functions in
\Cref{sec:rederive-mismatch-cost,sec:modularity}.
This is intended to share these insights from stochastic thermodynamics with the computer science community and better connect our heat function model with prior work on the thermodynamics of circuits.
\Cref{sec:rederive-mismatch-cost} concerns \emph{mismatch cost}, which is informally defined as the \emph{extra} entropy production incurred by using a gate with a heat function optimal for distribution $\dist^*$, when in fact $\dist \ne \dist^*$ is the true distribution.
The mismatch cost of course depends on the physical implementation of the gate.
Wolpert and Kolchinsky~\cite{WK20} showed that the way this physics influences the extra entropy production is entirely contained in its optimal distribution $\dist^*$, in the sense that the mismatch cost is precisely the information-theoretic terms $\DKL(\dist || \dist^*) - \DKL(f(\dist) || f(\dist^*))$, where $\DKL$ refers to the Kullback-Leibler divergence, see \cref{sec:prelim}.
We show how this result admits a clean proof using heat functions.

In \Cref{sec:modularity},
we give a simple proof of a result of Wolpert and Kolchinsky~\cite{WK20},
who showed that the entropy production of
applying a gate in a circuit is at least the drop in mutual information between the wires in this gate and
the state of the remaining wires of the circuit.
We reprove this result (\Cref{lem:mutual-info-positive-ep}) in the language of heat functions 
showing that it is an immediate consequence of the locality of valid heat functions,
i.e., the fact that heat functions reference a single gate independently of the correlation of its inputs to other gates in the circuit.

\section{Preliminaries}
\label{sec:prelim}
We use $\log$ to denote $\log_2$, 
and fix $2^{-\infty} = 0$, $-\log 0  = \infty$, as well as $z/0 = \infty$ when $z>0$.
Let $\dist$ be a distribution over domain $\domain$ and $x \in \domain$. We use $\Pr_{\dist}[x]$ to denote the probability that a random variable $X \sim \dist$ has value $x$ (i.e. $\Pr_{X \sim \dist}[X = x]$).
Let $S \subseteq \domain$.
We will also use $\dist_{|S}$ to denote the distribution $\dist$ conditioned on the sample being in $S$.
More formally, $\Pr_{\dist_{|S}}[x] = \Pr_{X \sim \dist}[X = x | X \in S]$.
When $S \cap \support(\dist) = \emptyset$, we say that $\dist_{|S}$ is undefined.
We use $\E_{x \sim \dist}[f(x)]$ to denote $\sum_{x \in \support(\dist)} \Pr_{\dist}[x]f(x)$.

\begin{definition}
The \emph{Shannon entropy} of a distribution $\dist$ is given by
\[
\entropy(\dist) 
= -\E_{x \sim \dist} \left[ \log \Pr_\dist[x] \right]
= - \sum_{x \in \support(\dist)} \Pr_\dist[x] \log \Pr_\dist[x].
\]
\end{definition}

We use $e_i$ to refer to a vector with the value $1$ in the $i$'th index and zero everywhere else.
When thinking about function computation, we will often use $\domain$ to refer to the (finite sized) domain and $\range$ to refer to the (finite sized) range.
When needed, we will often specify $\domain = \alphabet^n$ and $\range = \alphabet^m$ for some finite alphabet $\Sigma$ containing an element $0$.
We use $x_i$ to denote the $i$'th coordinate of $x$ and $f_i$ to denote the function $i$'th coordinate function of $f$.
For $\idx \subseteq [n]$ we use $x_{\idx}$ to denote the length-$|\idx|$ vector consisting of $x$ restricted to its coordinates on indices in $\idx$,
and $\Sigma^I$ to denote the set of all such vectors.

In our model, gates apply a function $f$ to some subset $I$ of the $\varSpace$ total wires in the circuit.
It is useful to extend this function $f$ to an $f_\uparrow$ that acts on all $\varSpace$ wires of the circuit, but only modifies the wires in $I$ according to $f$, leaving the other wires unchanged.
Thus we define the following extension of a function $f$.

\begin{definition}
[lifting function $f_{\uparrow}$]
\label{defn:lifting-f}
    If $f: \alphabet^\idx \to \alphabet^\idx$
    for some finite set $\idx ={i_1, \ldots, i_k} \subseteq [\varSpace]$,
    then we can define the function $f_{\uparrow}: \alphabet^{\varSpace} \to \alphabet^{\varSpace}$ as:
    \begin{displaymath}
        (f_{\uparrow})_i(x) = \begin{cases}
            f_i(x_{\idx}) & i \in \idx\\
            x_i & i \not \in \idx.
        \end{cases}
    \end{displaymath}
\end{definition}

The following definition gives us clean notation for the distribution over states after applying a gate computing some function $f$.
\begin{definition}
[induced output distribution]
\label{defn:induced-output-distribution}
    If $f: \domain \to \range$ and $\dist$ is a distribution over $D$, 
    then we use $f(\dist)$ to denote the distribution over $\range$ produced by first sampling $x \sim \dist$ and then outputting $f(x)$.
\end{definition}

Note that we have defined two distinct ways to apply functions to distributions, reflecting whether the co-domain is a scalar or an entire distribution: 
$\landauer_f(\dist) \in \R$ is an example
of the former, and \Cref{defn:induced-output-distribution},
a new distribution resulting from applying the function $f$ to a sample, is an example of the latter. 
Our intended meaning should be clear based off the context. However,
in general we will use capital letters such as $\entropy$ and $\landauer_f$ are functions that compute some property of the input distribution whereas lower case letters are used for mappings that define new distributions.

\begin{definition}
    The \emph{mutual information} between two random variables $X,Y$ is given by:
    \begin{displaymath}
        \mutualInformation(X;Y) = \entropy(X) + \entropy(Y) - \entropy(X,Y).
    \end{displaymath}
\end{definition}

\begin{definition}
[Data processing inequality, \cite{Cov06}]
\label{prop:entropy-data-processing}
    Let $f$ be any function with domain $\domain$ and $\dist$ be any distribution over $\domain$.
    Then $\entropy(\dist) \geq \entropy(f(\dist))$.
\end{definition}

\begin{definition}
The \emph{Kullback-Leibler divergence} between two distributions $\dist$ and $\dist'$ is 
\[
\DKL(\dist || \dist') = \E_{x \sim \dist}\left[ \log \frac{\Pr_\dist[x]}{\Pr_{\dist'}[x]}\right].
\]
\end{definition}

\begin{proposition}[Data processing inequality for KL divergence, \cite{Cov06}]\label{prop:data-processing}
    Let $f: \domain \to \range$ be an arbitrary function and $\dist, \dist'$ be two distributions over $\domain$.
    If $\support(\dist) \subseteq \support(\dist')$, then:
    \begin{displaymath}
        \DKL(\dist || \dist') \geq \DKL(f(\dist) || f(\dist')).
    \end{displaymath}
\end{proposition}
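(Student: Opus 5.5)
The plan is to reduce the statement to pointwise manipulations of the log-sum inequality, using the fact that $f$ partitions $\domain$ into preimages. For each $y \in \range$, let $A_y = f^{-1}(y) \subseteq \domain$. The induced output probabilities are
\[
\Pr_{f(\dist)}[y] = \sum_{x \in A_y} \Pr_\dist[x], \qquad \Pr_{f(\dist')}[y] = \sum_{x \in A_y} \Pr_{\dist'}[x].
\]
The support hypothesis $\support(\dist) \subseteq \support(\dist')$ ensures two things. First, every term in $\DKL(\dist||\dist')$ is finite. Second, whenever $\Pr_{f(\dist)}[y] > 0$ we also have $\Pr_{f(\dist')}[y] > 0$, so the right-hand side is well defined and finite. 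We then regroup the left-hand sum according to the blocks:
\[
\DKL(\dist||\dist') = \sum_{y} \sum_{x \in A_y \cap \support(\dist)} \Pr_\dist[x] \log \frac{\Pr_\dist[x]}{\Pr_{\dist'}[x]}.
\]

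The key step is to show, for each fixed $y$ with $a := \Pr_{f(\dist)}[y] > 0$, that
\[
\sum_{x \in A_y} a_x \log \frac{a_x}{b_x} \;\ge\; a \log \frac{a}{b},
\]
where $a_x = \Pr_\dist[x]$, $b_x = \Pr_{\dist'}[x]$, and $b = \sum_{x\in A_y} b_x$. This is the log-sum inequality, and I would prove it directly via Jensen. Write the left side as
\[
a \sum_{x} \frac{b_x}{b}\cdot\frac{b}{a}\, \phi\!\left(\frac{a_x}{b_x}\right) \quad\text{with } \phi(t) = t\log t.
\]
Since $\phi$ is convex and the weights $b_x/b$ sum to $1$, this is at least $b\,\phi(a/b) = a \log(a/b)$. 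Terms with $b_x > 0$ and $a_x = 0$ contribute $0$, matching $\phi(0)=0$. Terms with $b_x = 0$ have $a_x = 0$ by the support assumption and are omitted.

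The blocks with $a = 0$ contribute nothing to either side. Summing the blockwise inequality over $y$ then gives exactly $\DKL(f(\dist)||f(\dist')) = \sum_y a\log(a/b)$.

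The only delicate point is the bookkeeping of zero-probability terms and the conventions $0\log 0 = 0$, $-\log 0 = \infty$. I would handle this by restricting every sum to $\support(\dist)$ from the start and invoking the support hypothesis once to guarantee that all denominators are positive. An alternative route would be to write $\DKL(\dist||\dist') = \DKL(f(\dist)||f(\dist')) + \E_{y}\big[\DKL(\dist_{|A_y} || \dist'_{|A_y})\big]$ via the chain rule and use nonnegativity of KL (Gibbs' inequality). That route is equivalent and yields the gap explicitly, which could be useful later for the mismatch-cost discussion.
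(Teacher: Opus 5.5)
Your proof is correct. The blockwise log-sum inequality, obtained from Jensen's inequality for $\phi(t)=t\log t$ with weights $b_x/b$, yields the statement once summed over $y$. Your treatment of the zero-probability terms is also sound: $b_x=0$ forces $a_x=0$ by the support hypothesis, and blocks with $a=0$ vanish on both sides.

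The paper gives no proof of this proposition; it only cites Cover and Thomas, so there is no argument of the paper's to compare against. Your alternative chain-rule route is the one the paper's own machinery implicitly supplies. In the proof of \cref{prop:mismatch-cost}, the quantity $\DKL(\dist || \dist^*) - \DKL(f(\dist) || f(\dist^*))$ is rewritten as $\E_{x\sim\dist}\bigl[\log\bigl(\Pr_{\dist_{|f^{-1}(f(x))}}[x] / \Pr_{\dist^*_{|f^{-1}(f(x))}}[x]\bigr)\bigr]$. That is an average over $y\sim f(\dist)$ of conditional KL divergences, hence nonnegative by \cref{prop:gibbs-inequality}.

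Your Jensen argument is more self-contained, needing only convexity of $\phi$. The chain-rule version identifies the gap exactly, which is what the mismatch-cost decomposition needs.

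A minor remark: the support hypothesis only guarantees finiteness. Without it, the left side is $\infty$ under the paper's conventions and the inequality holds trivially.
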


\begin{definition}
    If $X$ and $Y$ are random variables over $\domain$ and $\range$ respectively, then the \emph{conditional} entropy of $X$ \emph{given} $Y$ is $$\entropy(X | Y) = \sum_{x \in \domain, y \in \range} \Pr[X=x,Y=y] \log\left(\frac{\Pr[Y=y]}{\Pr[X=x,Y=y]}\right)$$.
\end{definition}
\begin{proposition}
    If $X$ is a random variable and $f$ is a deterministic mapping, then $$\entropy(X | f(X)) = \entropy(X) - \entropy(f(X))$$.
\end{proposition}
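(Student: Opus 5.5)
The plan is to expand both sides directly from the definitions, using the determinism of $f$ to collapse the joint distribution of $(X, f(X))$. Since $f$ is deterministic, for every $x$ in the support of $X$ and every $y$ we have
\[
\Pr[X=x, f(X)=y] = \begin{cases} \Pr[X=x] & \text{if } y = f(x),\\ 0 & \text{otherwise.}\end{cases}
\]
Terms with zero joint probability contribute nothing to the sum defining $\entropy(X \mid f(X))$, because the summand carries a factor $\Pr[X=x,Y=y]$ and we use the convention $0\log(\cdot)=0$ implicit in that definition. The double sum over $(x,y)$ therefore reduces to a single sum over $x \in \support(X)$ with $y=f(x)$:
\[
\entropy(X \mid f(X)) = \sum_{x} \Pr[X=x]\,\log\frac{\Pr[f(X)=f(x)]}{\Pr[X=x]}.
\]

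Next I split the logarithm into two pieces. The first piece is $-\sum_x \Pr[X=x]\log\Pr[X=x]$, which is exactly $\entropy(X)$.

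The second piece is $\sum_x \Pr[X=x]\log\Pr[f(X)=f(x)]$. To evaluate it, I group the $x$'s by their image $y=f(x)$. Since $\sum_{x:f(x)=y}\Pr[X=x]=\Pr[f(X)=y]$, this becomes
\[
\sum_{y}\Pr[f(X)=y]\log\Pr[f(X)=y] = -\entropy(f(X)).
\]
Adding the two pieces gives $\entropy(X \mid f(X)) = \entropy(X)-\entropy(f(X))$.

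The only delicate point is bookkeeping with zero-probability terms. For any $x$ with $\Pr[X=x]>0$, the quantity $\Pr[f(X)=f(x)]$ is at least $\Pr[X=x]$, hence positive. So no logarithm of zero appears in the surviving terms, and splitting the logarithm is legitimate because all sums are finite.

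Alternatively, one could apply the chain rule $\entropy(X,f(X)) = \entropy(f(X)) + \entropy(X \mid f(X))$ together with $\entropy(X,f(X)) = \entropy(X)$. However, the chain rule is not among the paper's stated results, so I would use the direct computation above.
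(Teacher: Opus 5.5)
Your proof is correct. Reducing the double sum in the paper's definition of $\entropy(X \mid Y)$ to the pairs with $y=f(x)$, splitting the logarithm, and regrouping the second piece over the fibers $f^{-1}(y)$ gives exactly $\entropy(X)-\entropy(f(X))$. Your observation that $\Pr[f(X)=f(x)]\ge\Pr[X=x]>0$ on the support rules out any $\log 0$ in the surviving terms.

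The paper gives no proof of this proposition; it lists it among the standard preliminaries taken from Cover and Thomas. So there is no competing argument to compare against. Your direct computation is the textbook one. The chain-rule variant you mention is equivalent: the identity $\entropy(X,f(X))=\entropy(X)$ encodes the same fiber regrouping.

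The convention you invoke, dropping pairs with zero joint probability, is genuinely needed. The paper's definition sums over all of $\domain\times\range$, and its stated conventions do not cover $0/0$.
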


\begin{proposition}[Gibbs' Inequality]\label{prop:gibbs-inequality}
    Let $\dist, \dist'$ be two distributions over $\domain$.
    Then:
    \begin{displaymath}
        \DKL(\dist || \dist') \geq 0
    \end{displaymath}
    with equality only when $\dist=\dist'$.
\end{proposition}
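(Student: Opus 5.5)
The plan is the standard Jensen argument via concavity of the logarithm. Write $p_x = \Pr_\dist[x]$ and $p'_x = \Pr_{\dist'}[x]$. Restricted to $\support(\dist)$, we have
\[
-\DKL(\dist\|\dist') = \sum_{x\in\support(\dist)} p_x \log \frac{p'_x}{p_x}.
\]
If some $x\in\support(\dist)$ has $p'_x = 0$, then under the paper's conventions ($z/0=\infty$, $-\log 0=\infty$) that term of $\DKL(\dist\|\dist')$ is $+\infty$. Hence $\DKL(\dist\|\dist')=\infty>0$, and since then $\dist\neq\dist'$, both claims hold trivially. So I will assume $\support(\dist)\subseteq\support(\dist')$ from now on.

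Apply Jensen's inequality to the strictly concave function $\log$ with weights $p_x$ over $x\in\support(\dist)$:
\[
\sum_{x} p_x \log\frac{p'_x}{p_x} \le \log \sum_{x\in\support(\dist)} p_x\cdot\frac{p'_x}{p_x} = \log \sum_{x\in\support(\dist)} p'_x \le \log 1 = 0 .
\]
This gives $\DKL(\dist\|\dist')\ge 0$. One could equally use the elementary bound $\ln t\le t-1$ termwise, which avoids citing Jensen and makes the equality analysis more transparent. I would likely present it that way: $\sum_x p_x \ln(p'_x/p_x)\le \sum_x (p'_x - p_x)\le 0$.

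The only step needing care is the equality case. Equality in the final inequality requires $\sum_{x\in\support(\dist)} p'_x = 1$, i.e.\ $\support(\dist')\subseteq\support(\dist)$. Equality in $\ln t\le t-1$ (equivalently, strict concavity in Jensen) requires $p'_x/p_x = 1$ for every $x\in\support(\dist)$. Together these say $p'_x=p_x$ on $\support(\dist)$ and $p'_x=0$ off it, so $\dist=\dist'$. Conversely, $\dist=\dist'$ makes every log-ratio zero, giving $\DKL=0$. This bookkeeping, keeping track of both the termwise equality and the support condition, is the main point to get right.
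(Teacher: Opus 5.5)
Your proof is correct and is the standard textbook argument; the paper states Gibbs' inequality without proof as a known fact (alongside the data processing inequalities it cites from Cover and Thomas), and you also correctly dispose of the case $\support(\dist)\not\subseteq\support(\dist')$ using the paper's convention $z/0=\infty$. One small precision: in the Jensen form, strict concavity only forces $p'_x/p_x$ to be constant on $\support(\dist)$, and it is equality in $\sum_{x\in\support(\dist)}p'_x\le 1$ that pins that constant to $1$. In the $\ln t\le t-1$ form you say you prefer, termwise equality gives $p'_x=p_x$ directly, and the support condition then follows automatically.
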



\begin{definition}
    Let $f: \domain \to \R_{\geq 0} \cup \{\infty\}$ be a function.
    We define the \emph{Gibbs distribution} $\gibbs_f$ over $\dist$ with respect to $f$, where for any $x \in \domain$,
    (recall $2^{-\infty} = 0$)
    \[
    \Pr_{\gibbs_f}[x] 
    = \frac{2^{-f(x)}} 
    {\sum\limits_{x' \in \domain} 2^{-f(x')}}.
    \]
\end{definition}

We prove the following lemma in \cref{sec:basic-lemmas}.
\begin{restatable}{lemma}{gibbsopt}
\label{lem:gibbs-opt}
    Let $f: \domain \to \R_{\geq 0} \cup \{\infty\}$ be a function.
    For any set $S \subseteq \domain$ where $(\gibbs_f)_{|S}$ is defined, it is the unique choice that minimizes $\E_{x \sim \dist_{|S}}[f(x)] - \entropy(\dist_{|S})$ over all distributions supported on $S$.
\end{restatable}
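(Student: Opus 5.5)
The plan is to rewrite the objective as a KL divergence plus a constant and then apply Gibbs' inequality (\Cref{prop:gibbs-inequality}). Write $\gibbs = (\gibbs_f)_{|S}$ and $Z_S = \sum_{x \in S} 2^{-f(x)}$. Since $(\gibbs_f)_{|S}$ is defined, $Z_S > 0$, so $S$ contains at least one point with $f(x) < \infty$. For $x \in S$ we then have $\Pr_{\gibbs}[x] = 2^{-f(x)}/Z_S$, and hence $f(x) = -\log \Pr_{\gibbs}[x] - \log Z_S$. This holds also when $f(x) = \infty$, using the convention $-\log 0 = \infty$.

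Let $\dist'$ be any distribution supported on $S$. First dispose of the degenerate case where $\dist'$ puts positive mass on some $x$ with $f(x) = \infty$. Then $\E_{x\sim\dist'}[f(x)] = \infty$, while the entropy is finite, so the objective is $\infty$. The value at $\gibbs$ is finite, so such $\dist'$ is strictly worse. Otherwise $\support(\dist') \subseteq \support(\gibbs)$, and substituting the expression for $f$ gives
\[
\E_{x \sim \dist'}[f(x)] - \entropy(\dist') = \sum_{x \in \support(\dist')} \Pr_{\dist'}[x]\left(\log \Pr_{\dist'}[x] - \log \Pr_{\gibbs}[x]\right) - \log Z_S = \DKL(\dist' \,||\, \gibbs) - \log Z_S .
\]

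By \Cref{prop:gibbs-inequality}, $\DKL(\dist'||\gibbs) \ge 0$, with equality iff $\dist' = \gibbs$. So the objective is at least $-\log Z_S$, and this value is attained exactly at $\gibbs$, which gives both minimality and uniqueness.

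The main point requiring care is the handling of infinite values of $f$. We must make sure that no expression of the form $\infty - \infty$ arises, and that uniqueness is not lost on distributions charging points where $f = \infty$. The case split above handles both, since those distributions have objective $+\infty$ and strictly exceed the finite minimum.
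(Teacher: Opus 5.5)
Your proof is correct and follows the same route as the paper's: rewrite the objective as $\DKL(\dist_{|S} \,||\, (\gibbs_f)_{|S}) - \log \sum_{x \in S} 2^{-f(x)}$, then conclude minimality and uniqueness from Gibbs' inequality. Your separate case for distributions that put mass on points with $f(x) = \infty$ is a small piece of added care; the paper leaves this implicit in its conventions for $\log 0$ and $2^{-\infty}$.
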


We use $\binaryEntropy(p) = p \log(1/p) + (1-p) \log(1/(1-p))$ to denote the
\emph{binary entropy} of a Bernoulli random variable with probability $p$.
We use the following standard inequality on
$\binaryEntropy$.

\begin{proposition}
    \label{lem:binentropy-ineq}
        $\binaryEntropy(p) \leq p \log(e/p)$.
\end{proposition}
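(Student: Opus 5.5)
We must prove $\binaryEntropy(p) \le p\log(e/p)$ for $p\in[0,1]$ (with $0\log(e/0)=0$ by convention). The plan is to split $\binaryEntropy(p)$ into its two terms. The first term $p\log(1/p)$ already appears on the right-hand side, since $p\log(e/p) = p\log(1/p) + p\log e$. So it suffices to show
\[
(1-p)\log\frac{1}{1-p} \le p\log e ,
\]
or, after dividing by $\log e>0$ (i.e.\ switching to natural logarithms),
\[
(1-p)\ln\frac{1}{1-p} \le p .
\]

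The key step is the elementary inequality $\ln y \le y-1$ for $y>0$. Applying it with $y = 1/(1-p)$ for $p\in[0,1)$ gives
\[
\ln\frac{1}{1-p} \le \frac{1}{1-p} - 1 = \frac{p}{1-p}.
\]
Multiplying through by $1-p\ge 0$ yields exactly $(1-p)\ln\frac{1}{1-p} \le p$.

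It remains to handle the endpoints using the conventions $0\log(1/0)=0$, which are consistent with the definition of Shannon entropy over the support. At $p=1$ the left-hand side term $(1-p)\log\frac{1}{1-p}$ is $0 \le \log e$. At $p=0$ both sides of the original inequality are $0$.

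There is no real obstacle here. The only care needed is the endpoint conventions and the base conversion between $\log$ and $\ln$.
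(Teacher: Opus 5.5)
Your proposal is correct and follows the same route as the paper: split off the matching term $p\log(1/p)$ and bound $(1-p)\ln\frac{1}{1-p}\le p$ via $\ln y\le y-1$ with $y=1/(1-p)$. Your explicit treatment of the endpoints $p\in\{0,1\}$ is a small addition the paper leaves implicit.
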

\begin{proof}
    \begin{align*}
        \binaryEntropy(p) & = p \log(1/p) + (1-p) \log(1/(1-p))                                        \\
                          & = p \log(1/p) + (1-p) \frac{\ln(1/(1-p))}{\ln 2}                           \\
                          & \leq p \log(1/p) + \frac{p}{\ln 2} \qquad \textrm{since $\ln(x) \leq x-1$} \\
                          & = p \log(1/p) + p \log e                                                   \\
                          & = p \log (e/p).
        \qedhere
    \end{align*}
\end{proof}

We also require the following standard inequality.

\begin{proposition}[Hoeffding's inequality~\cite{hoeffding1963probability}]
    \label{prop:hoeffding}
    Let $\distUniform$ be the uniform distribution on $n$ bits and $t \leq n/2$.
    Then
    \begin{displaymath}
        \Pr_{x_1, \ldots, x_n \sim \distUniform}\left[\sum_{i \in [n]} x_i \leq n/2 - t\right] = \Pr_{x_1, \ldots, x_n \sim \distUniform}\left[\sum_{i \in [n]} x_i \geq n/2 + t\right] \leq e^{-2t^2/n}.
    \end{displaymath}
\end{proposition}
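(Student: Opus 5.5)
We plan to prove the equality by a symmetry argument and the inequality by the standard exponential-moment (Chernoff) method. Note that the inequality is the classical Hoeffding bound, so one could simply cite~\cite{hoeffding1963probability}; the plan below gives a self-contained argument for this special case of fair bits.

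\textbf{Equality.} The map $x \mapsto \bar{x}$, with $\bar{x}_i = 1 - x_i$, is a bijection on $\{0,1\}^n$ that preserves the uniform distribution $\distUniform$. It sends $\sum_i x_i$ to $n - \sum_i x_i$. Hence the event $\sum_i x_i \le n/2 - t$ is mapped bijectively onto the event $\sum_i x_i \ge n/2 + t$, and the two probabilities coincide. It therefore suffices to bound the upper tail.

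\textbf{Upper tail.} Write $Y_i = x_i - 1/2$, so the $Y_i$ are i.i.d.\ and uniform on $\{\pm 1/2\}$, and set $S = \sum_i Y_i$. For any $\lambda > 0$, Markov's inequality applied to $e^{\lambda S}$ gives
\[
\Pr[S \ge t] \le e^{-\lambda t}\, \E[e^{\lambda S}].
\]
By independence,
\[
\E[e^{\lambda S}] = \prod_i \E[e^{\lambda Y_i}] = \cosh(\lambda/2)^n .
\]
The key one-variable estimate is $\cosh(u) \le e^{u^2/2}$. This follows by comparing Taylor series term by term: $u^{2k}/(2k)! \le u^{2k}/(2^k k!)$, because $(2k)! \ge 2^k k!$. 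Taking $u = \lambda/2$ yields $\E[e^{\lambda S}] \le e^{n\lambda^2/8}$, and hence
\[
\Pr[S \ge t] \le \exp\!\bigl(-\lambda t + n\lambda^2/8\bigr).
\]

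\textbf{Optimizing $\lambda$.} The exponent is minimized at $\lambda = 4t/n$, where it equals $-4t^2/n + 2t^2/n = -2t^2/n$. This gives the claimed bound $e^{-2t^2/n}$. The hypothesis $t \le n/2$ is not needed for the bound; it only ensures the events are nontrivial.

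\textbf{Main obstacle.} The only step that is not purely mechanical is the moment-generating-function bound $\cosh(u) \le e^{u^2/2}$. This is exactly Hoeffding's lemma specialized to a symmetric two-point variable, and the termwise series comparison above settles it cleanly.
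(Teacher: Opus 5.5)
Your proof is correct. The only difference from the paper is that the paper gives no proof of this proposition: it states the bound as a standard fact and cites Hoeffding.

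Hoeffding's general theorem rests on Hoeffding's lemma, $\E[e^{\lambda(X-\E X)}] \le e^{\lambda^2(b-a)^2/8}$ for any $X \in [a,b]$, which is proved by a convexity argument. You replace this with the elementary series comparison $\cosh u \le e^{u^2/2}$. That is exactly the fair-bit case $b-a=1$ of the lemma, and it yields the same constant. The paper only applies the bound to uniformly random bits, in the Batcher analysis, so your self-contained route loses nothing there; the citation buys generality to arbitrary bounded independent summands. Your complementation bijection for the equality is also something the paper asserts without comment, since that symmetry is not part of Hoeffding's theorem.

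One correction to your closing remark: the hypothesis that matters is $t \ge 0$, not $t \le n/2$. You are right that $t \le n/2$ is unnecessary, since for $t > n/2$ both events are empty. But your choice $\lambda = 4t/n$ is positive only when $t > 0$, which your Markov step requires. The case $t = 0$ is trivial because the bound equals $1$. For negative $t$ the stated inequality is false: $t = -n/2$ gives probability $1$ against the bound $e^{-n/2}$. You should therefore state $t \ge 0$ explicitly as an implicit assumption of the proposition.
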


\subsection{Our model}
While we will focus on circuits in this document, the ideas used in this section are sufficiently general to apply to any computational model that performs deterministic local transformations at fixed time steps.
\paragraph{Heat and entropy flow}
To simplify our math, we will assume that all computation is performed at a fixed temperature $\temp$ and units are chosen such that $k_B \temp \ln 2 = 1$.
Throughout this paper we analyze how much entropy a computational device must release into its environment to obey the second law of thermodynamics.
We will be assuming that each logical state of the machine has the same energy level and number of microstates.\footnote{
    We use the term ``microstate'' in this paper to mean the various different physical states a machine can be in while still representing the same logical state.
    As a concrete example, a Boolean gate implemented with semiconductors has a logical state identified by the bits on the input and output wires, which generally will correspond to many different physical microstates, depending on the precise positions and motion of electrons in the wires.
}
In principle this entropy can take many forms, but for our analysis it is safe to assume that it is always released as heat.
This motivates what we will call \emph{heat functions} as a way to abstract away messy details for how computations release excess entropy.

\paragraph{Thermodynamic circuits}
We start by giving two working definitions of circuits that we use in this paper.
We start with our model of circuits, which is heavily inspired by `serial reinitialized' model introduced in \cite{WK20}.

This model of circuits should be familiar to readers acquainted with the quantum circuit model.
For those unfamiliar with such models, recall that a standard model defines a Boolean circuit as a directed acyclic graph, where each edge represents a wire connecting exactly two gates (nodes).
Instead, in our model, all
information is stored in a fixed set of $n$ wires, and gates operate on a subset (typically of $O(1)$-size)
of these wires to update their values.
However, unlike quantum circuits, we do not require gates to be unitary.
Instead, they can perform any arbitrary mapping on their associated wires.
While we use the language of circuits with wires and gates, this formulation works equally well to describe straight line programs where wires correspond to registers and gates correspond to operations on these registers.
While we define such circuits in abstract, as is common in complexity theory, it will often be interesting to restrict them to specific gate sets 
(thus each gate would operate on $O(1)$ wires independently of the total number of wires).
This rules out the possibility of implementing arbitrary functions with a single gate acting simultaneously on all wires of the circuit and lets us analyze the thermodynamic costs of computations with restricted gate sets.\footnote{
\label{foot:constant-size-gates-assumption}
    As with the standard DAG-based model of Boolean circuits, if one allows larger gates than $O(1)$, even up to a single gate taking all inputs and computing an arbitrary Boolean function in one step, then this will not give informative complexity measures. In this paper, restricting gates to operate on $O(1)$ wires will enable stronger lower bounds on entropy production,
    in an analogous sense that restricting standard Boolean circuits to have $O(1)$-size gates enables nontrivial lower bounds on gate complexity.
}

\begin{definition}
\label{def:sr-circuit}
    A deterministic \emph{circuit} $\circuit$ on $n$ inputs with $\varTime$ gates and $\varSpace \geq n$ wires is described as an 
    ordered collection
    of $\varTime$ gates $g_t = (\idx_{g_t}, f^{g_t})$ for $t \in \{1, \ldots, \varTime\}$.
    Each gate specifies a set $\idx_{g_t} \subseteq [\varSpace]$ and a function $f^{g_t}: \alphabet^{\idx_{g_t}} \to \alphabet^{\idx_{g_t}}$.
    On input $x_1, \ldots x_n \in \alphabet^n$ the circuit starts in the state $\circuit^0 \in \alphabet^\varSpace$ where $\circuit^0_i = x_i$ for $i \in [n]$ and $\circuit^0_i = 0$ for $i \in [\varSpace] \setminus [n]$.
    The \emph{state} of the circuit at step $t$ is given by $\circuit^t = f^{g_t}_\uparrow(\circuit^{t-1})$.
    We say that wire $i \in [\varSpace]$ of $\circuit$ \emph{evaluates to} $x \in \alphabet$ when $\circuit^T_i = x$.
\end{definition}

Note that \Cref{def:sr-circuit} allows to have more wires ($\varSpace$) than input bits ($n$), to allow extra ``workspace'' for computation. It requires all non-input wires to be initially 0. The next definition captures our notion of computing output, which similarly allows the number of output bits $m$ to be strictly less than the number of wires $\varSpace$, and similarly requires that at the end of the computation, all non-output wires are 0.\footnote{
    This requirement of zeroing out non-output wires is particularly important in the context of thermodynamics of computing, because it forces on the circuit the obligation to erase any non-output bits left over from the computation. Absent this requirement, a circuit could appear na\"{i}vely to produce less entropy than one would expect, hiding the cost of erasing those leftover ``garbage'' bits. However, if one wished to reuse the circuit for a subsequent computation, since those bits would at some point need to be set to 0 to prepare the circuit for the next input. The definition of clean computation forces the circuit to prepare itself for reuse. (Note that we do \emph{not} account for the thermodynamic cost of writing the input, recording the output, or replacing the old output with the new input.)
}

\begin{definition}
\label{defn:cleanly-compute}
    We say that a circuit $\circuit$ \emph{cleanly computes} a function $f: \alphabet^n \to \alphabet^m$ if on every input $x \in \alphabet^n$, wire $i$ of $\circuit$ evaluates to $f_i(x)$ if $i \leq m$ and evaluates to $0$ if $i > m$.\footnote{
    The choice to make the \emph{first} $m$ wires correspond to output is arbitrary.
    If a circuit places its output on other wires, logically reversible swap gates can move the output to the first $m$ wires with no extra entropy production.
}
\end{definition}
An example of a circuit that cleanly computes a the and of its inputs is shown in \cref{fig:and-example-circuit}.

\begin{figure}[tb]
    \centering
    \includegraphics[width=0.8\linewidth]{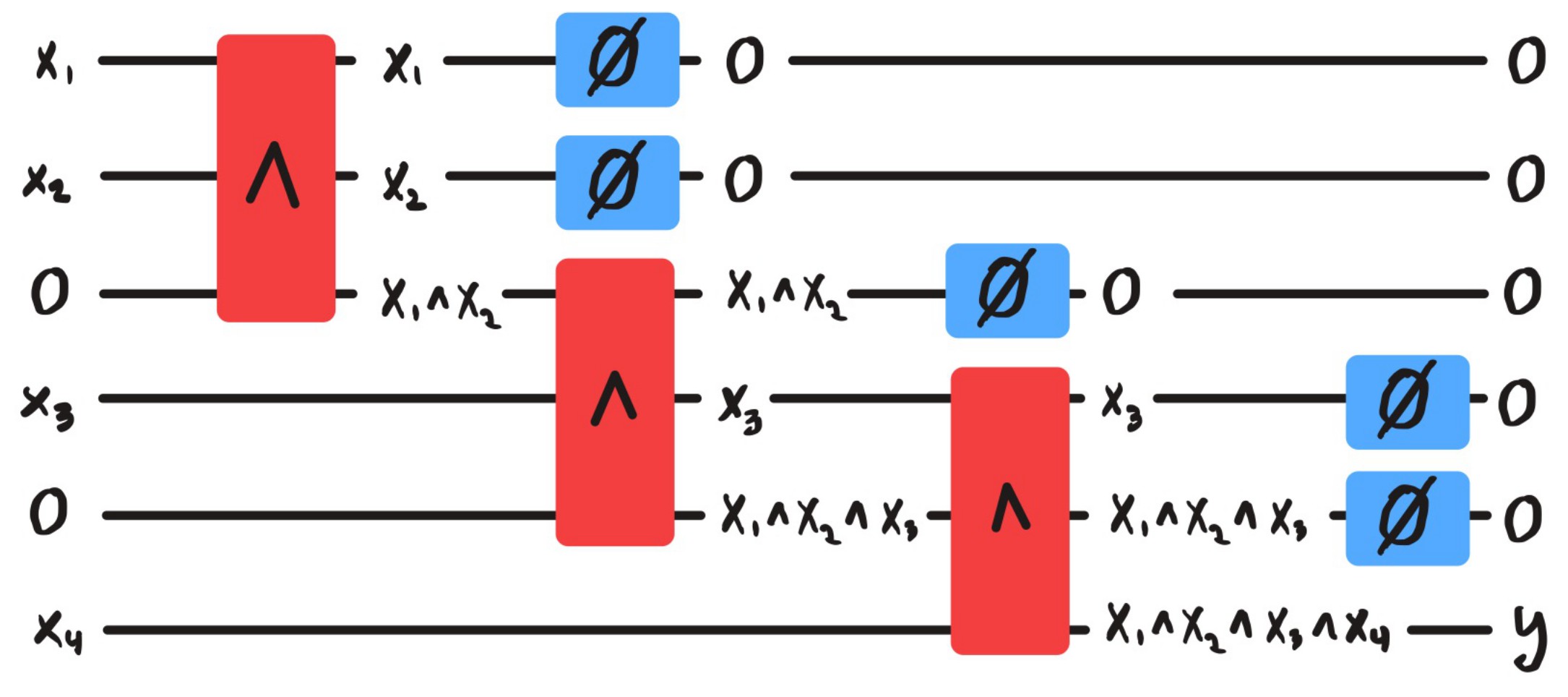}
    \caption{An example of a circuit that cleanly computes $y=x_1 \wedge x_2 \wedge x_3 \wedge x_4$ up to permutation of the wire indices.
    Here the red $\wedge$ gates perform the mapping $(a,b,c) \to (a,b,a\wedge b)$ and the blue $\emptyset$ erasure gates perform the mapping $a \to 0$.
    Each gate overwrites its inputs (values on wires to the left of the gate) with its outputs (values on wires to the right of the gate).}
    \label{fig:and-example-circuit}
\end{figure}


Next we define how heat functions can augment a circuit to model its thermodynamic costs.

\begin{definition}
    A \emph{thermodynamic circuit} is a circuit $\circuit$ where each gate $g$ computing a function with domain $\domain$ is assigned a \emph{heat function} $\heat_g:\domain \to \R_{\geq 0}\cup \{\infty\}$.
    On input $x\in \domain$, the gate releases $\heat_g(x)$ expected heat into the environment.
\end{definition}
\begin{remark}
    As we are fixing the temperature, it is reasonable to view $\heat$ as expected heat release.
    A more generally accurate description of $\heat(x)$ would be the expected entropy flow out of the system on input $x$.
\end{remark}

As we will see in \cref{sec:modularity}, it is important to note that $\heat$ is only allowed to depend on the inputs to a gate.
This represents the fact that a gate is modular and oblivious to the larger state of the circuit.
Otherwise, these heat functions can in principle be an arbitrary function of $\domain$.
A gate that releases $\infty$ heat on some input $x$ can be thought of being designed so it is not physically possible to specify $x$.
This conveys the notion of restricted reversibility in \cite{frank2017foundations}, where Frank points out that some logically irreversible operations can become reversible with such restrictions on their domain.
However, not all choices of heat functions are valid.
The Second Law of Thermodynamics (or Landauer's costs) places constraints on the possible heat functions that can be assigned to each gate.

\begin{definition}
[Landauer cost, \cite{Lan61}]
\label{defn:landauer-cost}
    For any function $f:\domain \to \range$ and distribution $\dist$ over $\domain$, we define the \emph{Landauer cost} function of $f$ on $\dist$:
    \begin{displaymath}
        \landauer_f(\dist) = \entropy(\dist) - \entropy(f(\dist)).
    \end{displaymath}
\end{definition}

The following is a corollary of the Second Law of Thermodynamics as applied to computation of functions.
As a physical law, it cannot be proven in full generality, though it can be proven within specific models.
We state it formally to motivate the subsequent \Cref{def:valid-heat}.

\begin{proposition}
[Second law, \cite{Lan61,Seifert2005Trajectory}]
\label{prop:second-law}
    Let $f: \domain \to \range$ be a function.
    Any physical process that maps every state $x \in \domain$ to the state $f(x)$ must 
    release at least $\landauer_f(\dist)$ expected heat when $x$ is sampled from distribution $\dist$ over $\domain$.
\end{proposition}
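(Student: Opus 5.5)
Because \Cref{prop:second-law} is a physical law, I will prove it inside one concrete model of a ``physical process'' that is standard in stochastic thermodynamics and matches the assumptions stated above: equal energy and equal numbers of microstates for every logical state, and a single bath at temperature $\temp$ in units with $\boltzmann\temp\ln 2=1$. The plan has three steps: show that entropy production at the level of microstates is nonnegative, coarse-grain from microstates to logical states, and then use that $f$ is deterministic to identify the final logical distribution.

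\textbf{Model.} The device has a finite microstate space $\Omega=\bigsqcup_{z}\Omega_z$, partitioned by logical state $z\in\domain\cup\range$, with $|\Omega_z|=M$ for all $z$. It evolves as a time-inhomogeneous continuous-time Markov jump process with rates $W_{\omega\omega'}(t)$ obeying local detailed balance with the bath: $\log\bigl(W_{\omega\omega'}/W_{\omega'\omega}\bigr)$ equals the heat (in bits) released by the jump $\omega'\to\omega$. At the start and at the end the microstate distribution is uniform within each logical block. Here ``maps every $x$ to $f(x)$'' means that an initial distribution concentrated on $\Omega_x$ ends up concentrated on $\Omega_{f(x)}$.

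\textbf{Nonnegative entropy production.} Let $p_t$ be the microstate distribution and $S_\Omega(p_t)$ its Shannon entropy. Write $\tfrac{d}{dt}S_\Omega$ as a sum over jump pairs. Adding the heat-release rate $\dot Q$ given by local detailed balance yields the standard identity
\[
\dot S_\Omega+\dot Q=\tfrac12\sum_{\omega,\omega'}\bigl(W_{\omega\omega'}p_{\omega'}-W_{\omega'\omega}p_\omega\bigr)\log\frac{W_{\omega\omega'}p_{\omega'}}{W_{\omega'\omega}p_\omega}\ \ge 0.
\]
Each term is nonnegative because it has the form $(a-b)\log(a/b)$. Integrating over the run gives $\E[\heat]\ge S_\Omega(p_0)-S_\Omega(p_1)$.

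\textbf{Coarse-graining.} The chain rule gives $S_\Omega(p)=\entropy(\text{logical marginal})+\entropy(\text{microstate}\mid\text{logical})$. At both the start and the end the conditional term equals $\log M$, since the distribution is uniform on every block and all blocks have size $M$. So the conditional terms cancel, leaving $S_\Omega(p_0)-S_\Omega(p_1)=\entropy(\dist)-\entropy(\dist_1)$, where $\dist_1$ is the final logical distribution. By linearity of the dynamics and determinism of the map, the final logical distribution from initial distribution $\dist$ is exactly $f(\dist)$ (\Cref{defn:induced-output-distribution}). Hence $\E[\heat]\ge\landauer_f(\dist)$.

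\textbf{Main obstacle.} The delicate step is the coarse-graining. If the microstate distribution within a logical block were not the same at the start and the end, the conditional entropy terms would not cancel and the bound could fail. This is exactly why the statement is called a physical law rather than a theorem. I would make the uniform-within-block start and end conditions explicit hypotheses, justified as the device being in local equilibrium, which is consistent with the equal-energy assumption. The second point needing care is states with probability zero. I would handle these by summing only over jump pairs that actually carry probability current, so that $\log 0$ terms either do not arise or contribute $+\infty$, which only strengthens the inequality.
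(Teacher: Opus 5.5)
The paper gives no proof of \Cref{prop:second-law}. It states the result as a consequence of the Second Law, citing Landauer and Seifert, and notes that it ``cannot be proven in full generality, though it can be proven within specific models.'' It uses the statement only to motivate \Cref{def:valid-heat}, and that definition, via \Cref{lem:turgut-lemma}, is all the later results use. You supply such a model-specific proof, and it is correct within your model. It is the standard argument: the Schnakenberg entropy-production rate of a Markov jump process under local detailed balance is nonnegative, and you then coarse-grain over equal-size microstate blocks. This is the rate-level counterpart of the trajectory fluctuation-theorem argument behind the Seifert citation. The paper's approach keeps the framework independent of any microscopic model. Your approach makes the required physical hypotheses explicit. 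Because determinism enters only when you identify the final logical distribution, your argument also gives $\E[\heat]\ge\entropy(\dist)-\entropy(\dist_1)$ for stochastic processes.

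\textbf{Refinement 1: only the initial uniformity matters.} At the end, the conditional microstate entropy is at most $\log M$ regardless, and this only strengthens the bound. So your argument breaks only if the initial conditional entropy is smaller than the final one (information hidden in microstates), not whenever the two differ.

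\textbf{Refinement 2: exact implementations require divergent rates.} With integrable rates, your model contains no process that implements any $f$ with $f(x)\neq x$ for some $x$. Let $\lambda_\omega$ be the escape rate of a microstate $\omega\in\Omega_x$. The probability of never leaving $\omega$ is at least $\exp\bigl(-\int_0^T\lambda_\omega(t)\,dt\bigr)>0$. Hence mass started in $\Omega_x$ cannot end entirely in the disjoint block $\Omega_{f(x)}$, and your theorem would hold vacuously. There are two fixes:
\begin{itemize}
\item allow escape rates whose integral diverges as $t\to T$, apply your integrated inequality on $[0,t]$, and let $t\to T$;
\item or prove the bound with the actual final distribution $\dist_1$ and pass to the limit $\dist_1\to f(\dist)$, using continuity of entropy on a finite set.
\end{itemize}
Either way, exact deterministic maps become idealized limits in your model, much as \Cref{as:ideal-real} invokes a quasistatic limit.
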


In the context of physical systems evolving according to $f$,
this is simply the lower bound on the expected heat release (or more generally entropy flow) out of a system where all logical configurations contain the same number of microstates as given by
the second law of thermodynamics.
The following definition essentially formalizes our notion of a heat function being compatible with the second law.

\begin{definition}
\label{def:valid-heat}
    Let $g$ be a gate in circuit computing a function 
    $f$ with domain $\domain$.
    A heat function $\heat:\domain \to \R_{\geq 0} \cup \{\infty\}$ is \emph{valid} for $g$ if and only if, for every input distribution $\dist$ over $\domain$, we have that:
    \begin{displaymath}
        \landauer_f(\dist) \leq \E_{x \sim \dist}[ \heat(x)].
    \end{displaymath}
\end{definition}

\Cref{def:valid-heat} ``directly'' captures what we mean by a valid heat function---no matter the distribution it obeys the Second Law---and as stated intuitively appears to require validation over all input distributions over $\domain$.
The following equivalent formulation of \Cref{def:valid-heat} shows that validity in fact is solely a property of the value $Q(x)$ on \emph{individual} inputs $x$.
This is the deterministic specialization of Turgut’s Theorem 1 (Eq. (1)) in the simplified setting where each logical state has the same internal entropy~\cite{turgut2009relations}.
Zhu~\cite{zhu2024} subsequently rederived the same stochastic realizability criterion from a differential fluctuation theorem; with deterministic dynamics, Zhu’s condition likewise reduces to \Cref{lem:turgut-lemma}.
We give a self-contained proof of \Cref{lem:turgut-lemma} in \Cref{sec:basic-lemmas}.

\begin{restatable}{lemma}{turgut}
\label{lem:turgut-lemma}
    Let $g$ be a gate in a circuit computing a function $f:\domain \to \range$.
    A heat function $\heat:\domain \to \R_{\geq 0} \cup \{\infty\}$ is valid for $g$ if and only if, for every $y \in \range$,
    \begin{displaymath}
        \sum_{x \in f^{-1}(y)} 2^{-\heat(x)} \leq 1.
    \end{displaymath}
\end{restatable}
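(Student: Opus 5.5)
The plan is to rewrite the validity condition of \Cref{def:valid-heat} so that it splits over the fibers $f^{-1}(y)$, and then apply \Cref{lem:gibbs-opt} on each fiber. Fix a distribution $\dist$ on $\domain$. Write $q_y = \Pr_{f(\dist)}[y]$, and for each $y$ with $q_y>0$ let $\dist_y = \dist_{|f^{-1}(y)}$. By the chain rule (the proposition $\entropy(X\mid f(X)) = \entropy(X)-\entropy(f(X))$) we have $\landauer_f(\dist) = \sum_y q_y\, \entropy(\dist_y)$. Linearity of expectation gives $\E_{x\sim\dist}[\heat(x)] = \sum_y q_y \E_{x\sim \dist_y}[\heat(x)]$. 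Hence validity is equivalent to
\[
\sum_y q_y\bigl(\E_{x\sim\dist_y}[\heat(x)] - \entropy(\dist_y)\bigr) \ge 0 \quad \text{for all } \dist.
\]
Since the weights $q_y$ and the conditionals $\dist_y$ can be chosen independently, this holds for all $\dist$ if and only if, for every $y$ and every distribution $\dist'$ supported on $f^{-1}(y)$, we have $\E_{\dist'}[\heat]-\entropy(\dist') \ge 0$. For one direction, take $\dist = \dist'$, so that $q_y=1$. For the other, sum the nonnegative terms.

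Next I compute the minimum of $\E_{\dist'}[\heat]-\entropy(\dist')$ over $\dist'$ supported on $S=f^{-1}(y)$. Put $Z_S=\sum_{x\in S}2^{-\heat(x)}$. If $Z_S>0$, then $(\gibbs_\heat)_{|S}$ is defined and, by \Cref{lem:gibbs-opt}, it is the minimizer; it assigns $x\in S$ probability $2^{-\heat(x)}/Z_S$. Plugging this in gives
\[
\E[\heat] - \entropy = \sum_{x} p_x\bigl(\heat(x) + \log p_x\bigr) = \sum_x p_x(-\log Z_S) = -\log Z_S .
\]
The minimum is therefore nonnegative if and only if $Z_S\le 1$, which is exactly the claimed condition.

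It remains to handle the degenerate cases. If $Z_S=0$, then every $x\in S$ has $\heat(x)=\infty$. Any distribution on $S$ then has infinite expected heat, so the fiber condition holds trivially, and $\sum 2^{-\heat(x)}=0\le 1$ also holds. The points with $\heat(x)=\infty$ inside a fiber where $Z_S>0$ need care: they get Gibbs weight $0$, and under the conventions $2^{-\infty}=0$ and $0\cdot\infty$ excluded (the expectation sums only over the support), the computation above is unaffected. Fibers with $q_y=0$ contribute nothing.

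The main obstacle is the decomposition step, specifically justifying that arbitrary choices of per-fiber conditionals are realizable. It is also the place where the $\infty$-valued heat and the support conventions must be handled carefully so that terms like $\infty\cdot 0$ never arise. Everything else reduces to \Cref{lem:gibbs-opt} and the one-line Gibbs computation.
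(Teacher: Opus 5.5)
Your proposal is correct and follows essentially the same route as the paper: split the validity condition over the fibers $f^{-1}(y)$, reduce to distributions supported on a single fiber, and evaluate the fiberwise minimum at the conditional Gibbs distribution via \Cref{lem:gibbs-opt}, which gives $-\log\sum_{x\in f^{-1}(y)}2^{-\heat(x)}$. The paper phrases the single-fiber reduction as a convexity argument, whereas you do it directly by taking $\dist=\dist'$; as a result, the step you flag as the main obstacle (realizing arbitrary per-fiber conditionals) is never actually needed, and your per-fiber handling of the all-$\infty$ case is slightly more careful than the paper's.
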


In practice, there is a large gap between valid heat functions 
(those obeying the Second Law)
and physically realizable heat functions 
(which may release far \emph{more} heat than the minimum dictated by the Second Law, 
e.g., current CMOS technology).
Additional heat is released by computation to drive the physical processes that implement the logical mapping of the gate in the desired direction in finite time \cite{PhysRevLett.121.070601}.
However, valid heat functions let us prove lower bounds on the thermodynamic costs of a computation and provide a limit for the physical implementation of the computation.
In this paper, we state upper bounds on thermodynamic costs that make the idealized assumption that any combination of logical gate $g$ and valid heat function $\heat$ for $g$ could actually be constructed.
The known laws of physics neither explicitly construct such gates nor rule out their existence (in the quasistatic limit).
Thus our upper bounds all implicitly require the following ``spherical cow'' like assumption.
\begin{assumption}[Idealized realizability assumption]\label{as:ideal-real}
Let $f: \domain \to \range$ be any function and $\heat: \domain \to \R_{\geq 0} \cup \{\infty\}$ be any valid heat function for computing $f$.
Then we assume that it is possible to fabricate a gate that computes $f$ while releasing $\heat(x)$ expected heat on each input where $\heat(x) \neq \infty$ in the quasistatic (i.e. infinitely slowly) limit.
We assume that whenever $\heat(x) = \infty$ the gate is implemented in a way where it physically cannot receive such an input. In other words, the gate actually computes a restriction of $f$ on inputs where $\heat(x)$ is finite.\footnote{The fact that $\heat(x) = \infty$ already makes this domain restriction implicit in our model.
Any distribution with support on such an input would immediately make $\EF$ infinite.
Thus heat functions where $\heat(x) = \infty$ on any input occurring with positive probability trivially satisfy lower bounds and would automatically make any resulting upper bound trivial.
}
\end{assumption}
This is well motivated.
Theorem $2$ of Turgut~\cite{turgut2009relations} shows how, in an idealized model, it is possible to implement any fixed mapping from $\domain$ to $\range$ where the expected released heat on every input exactly matches any valid $\heat$.
If one wishes to reject this assumption, then they should interpret all of our upper bounds on $\EF$ and $\EP$ as theoretical limits on the best possible lower bounds for these quantities that could be proven in our model.
All of our lower bounds on $\EP$ and $\EF$ hold independently of this axiom.

Excess heat released by a physical process is 
called the entropy production~\cite{Seifert2012StochasticThermodynamics}.
More formally (recall \Cref{defn:lifting-f} of $f_\uparrow$),

\begin{definition}[gate entropy production]
\label{def:ep-gate}
    Let $\circuit$ be a $\varSpace$-wire circuit and $g$ be a gate in $\circuit$ computing a function $f$ with domain $\domain = \alphabet^{\idx}$ for $\idx \subseteq [\varSpace]$ that is assigned heat function $\heat$.
    The \emph{entropy production} $\EP_g$ of $g$ on distribution\footnote{
        Note $\dist$ is not the input distribution to the whole circuit, but the induced distribution on all $W$ wires \emph{just prior} to applying gate $g$.
    }
    $\dist$ over $\alphabet^{\varSpace}$ is given by:
    \[
        \EP_{g}(\dist) = \entropy(f_{\uparrow}(\dist)) - \entropy(\dist) 
        + \E_{x \sim \dist}[\heat(x_{\idx})]
    \]
\end{definition}

Note $\entropy(f_{\uparrow}(\dist))$ is used in \Cref{def:ep-gate} instead of $\entropy(f(\dist))$ since $\dist$ is a distribution on all $\varSpace$ wires.
Note that $\entropy(f_{\uparrow}(\dist)) - \entropy(\dist)$ is the \emph{negation} of the Landauer cost of $f$ on $\dist$ (\Cref{defn:landauer-cost}),
thus is nonpositive by \Cref{prop:entropy-data-processing}.
$\E_{x \sim \dist}[\heat(x_{\idx})]$ is the \emph{entropy flow}/\emph{heat released} (see \Cref{sec:intro} for discussion of this term in the broader field of stochastic thermodynamics) in the context of circuits that exchange heat with the environment through their gates.

It is necessary to define $\EP_{g}$ in the context of the larger circuit, 
since $\entropy(f_{\uparrow}(\dist)) - \entropy(\dist)$ is a global property of the circuit, 
while $\E_{x \sim \dist}[\heat(x_{\idx})]$ is a property of the marginal distribution of $\dist$ on the coordinates in $\idx$.
We can also define a local equivalent, termed $\EPlocal_{g}$, 
to $\EP_{g}$, which gives a lower bound on the 
actual entropy production $\EP_{g}$.
In \cref{sec:modularity} we show how these two quantities are related
(\Cref{lem:mutual-info-positive-ep}).

\begin{definition}[local gate entropy production]
\label{def:local-ep}
    Let $g$ be a gate computing a function $f$ with domain $\domain$ that is assigned a heat function $\heat$.
    The \emph{local entropy production} $\EPlocal_{g}$ of $g$ on input distribution $\dist$ over $\domain$ is given by:
    \begin{displaymath}
        \EPlocal_{g}(\dist) = \entropy(f(\dist)) - \entropy(\dist) + \E_{x \sim \dist}[\heat(x)].
    \end{displaymath}
\end{definition}

We are ultimately interested in characterizing the entropy production of a circuit.
When a circuit receives some distribution $\dist$ over its inputs, 
this could in principal be a marginal distribution 
where the circuit's inputs are
only a subset of the variables in a larger system,
with which the inputs may have mutual information that would call for different heat functions for ``globally optimal'' heat release for each gate.
We give the following definition for the entropy production of a circuit under the assumption that its input distribution $\dist$ is independent from the outside world.
\begin{definition}[circuit entropy production]
\label{defn:circuit-ep}
    Let $\circuit$ be a thermodynamic circuit with gates $g \in G$ that cleanly computes a function $f$ with domain $\domain$.
    Let $\dist$ be a distribution over $\domain$ and $\dist_g$ be the induced marginal distribution over inputs for gate $g$ which has heat function $\heat_g$.
    The \emph{circuit entropy production} of $\circuit$ on distribution $\dist$ is given by
    \[
        \EP_\circuit(\dist) = \entropy(f(\dist)) -\entropy(\dist) + \sum_{g \in G} \E_{x \sim \dist_g}[\heat_g(x)]
    \]
\end{definition}

$\EP_\circuit(\dist)$ as defined above is a lower bound on the actual entropy production involved with running the circuit, as we show in \cref{sec:modularity}
(\cref{lem:mutual-info-positive-ep}).
This bound is tight when $\dist$ is an independent source of randomness uncorrelated with the rest of the universe.
We call this circuit entropy production because it is exactly the entropy production of the circuit as if it were isolated from its environment.
We note that this is the same notion of entropy production for circuits used in prior works
(i.e., \cite{WK20,YCW24,YYW25}).

\begin{remark}
    We have defined $3$ separate types of entropy production in this section.
    Informally, $\EPlocal_g$ is the entropy production due to using a sub-optimal heat function for gate $g$ given its induced marginal distribution over inputs.
    This captures what the entropy production of $g$ would be if its input distribution $\dist$ were sampled independently of the rest of the universe.
    $\EP_g$, which is always at least $\EPlocal_g$, captures $\EPlocal_g$ plus additional costs due to $\dist$ having correlations with other wires in the circuit.
    Finally $\EP_\circuit$ is the total entropy production across all gates in the circuit under the assumption that $\dist$ is sampled independently from the rest of the universe.
    We will be proving lower and upper bounds on $\EP_\circuit$ throughout the body this paper and we use both $\EP_g$ and $\EPlocal_g$ primarily in \cref{sec:rederive-mismatch-cost,sec:modularity} where we  discuss mismatch costs.
\end{remark}

\paragraph{Overwriting circuits.}
It is worth noting that in \cite{YCW24,YYW25} the authors use a different notion of a thermodynamic circuit than we consider here.
In their \emph{overwriting} model, a circuit is defined (as is common in most non-quantum circuit complexity classes) as a DAG where the nodes are labeled with gates and the in-edges define the inputs for each gate.
The authors are interested in repeated runs of their circuit where the old logical configuration of each gate is overwritten one gate at a time during the next run,
so the thermodynamic costs are incurred incrementally as each gate is updated and replaces its old output with the new one.
The overwriting model is an accurate abstraction of the CMOS architecture; 
however, it places some restrictions on what is feasible with circuits from a theoretical perspective.
For example a logically reversible gate in the our circuit model admits the valid heat function $\heat(x)=0$ on all inputs.
In the overwriting model, this gate must release heat when run on independently distributed inputs unless it is computing a constant function.\footnote{In the overwriting model, the validity of the heat function for a deterministic gate is entirely determined by the drop in entropy caused by erasing the old value.}
In fact, we can see that any circuit in the overwriting model can be simulated in our circuit model,
simulating a $k$-input, $1$-output gate 
by using $k$ `input' wires and one `output' wire in each of our gates, setting the other $k-1$ wires to $0$.
Each gate in the simulated overwriting circuit is now a gate in our circuit model that treats the output wires of its parents (or the input) as read-only inputs and only modifies its own output wire.
At the end of the computation, the values stored in all output wires for intermediate gates are erased.\footnote{As long as an overwriting circuit is run on independently distributed inputs, it is safe to assume without loss of generality that it erases the values stored in every gate before computing the new output value.}

\paragraph{Cycles of computation and changing inputs.}
An important subtlety with the thermodynamics of computation is how to think about the state of the device before and after a computation has been completed.
Logical reversibility gives a black box way to, for any efficiently computable function $f$, efficiently perform the mapping $x \to (x,f(x))$ without releasing any heat.
However, even if $f$ is injective (hence ``globally'' logically reversible), this does not imply a black box way to compute the mapping $x \mapsto f(x)$ unless $f^{-1}$ is is also efficiently computable.\footnote{
    For instance, it is well-known folklore that one-way functions cannot be efficiently computed in the forward direction (outputting only $f(x)$ cleanly as in \Cref{defn:cleanly-compute}, without retaining a copy of $x$ anywhere) using logically reversible computation;
    Bennett's construction~\cite[Table 2]{Ben73} for efficiently and reversibly computing the mapping $x \mapsto f(x)$ for bijective $f$ requires both $f$ and $f^{-1}$ to be efficiently computable, ruling out one-way functions.
}
More generally any circuit that produces garbage in addition to its output must erase that garbage before it can be run on a new input.
Failing to do so causes real issues; if one extends \cref{def:valid-heat} to probabilistic gates, then there are valid heat functions where $\heat(x)$ may be negative.
By using such gates, you can define circuits that appear to violate the second law of thermodynamics until you account for the increase in entropy in their mapping of inputs to outputs.

Much of the complexity associated with reuse/cycling can be simplified by thinking of the circuit as a cog in a larger machine that goes through cycles.
Let's consider a circuit computing a function $f: \alphabet^n \to \alphabet^m$ with $\varSpace$ total wires.
At the start of a cycle, all $\varSpace$ wires are initialized to zero.
Then some upstream process writes a new input $x$ into the first $n$ wires of our circuit.
The circuit will perform its computation in accordance with \cref{def:sr-circuit}, ending in a state where the first $m$ wires contain $f(x)$ and all other wires are reset to zero.
The value $f(x)$ can then be swapped with $m$ fresh zeroes from the downstream process that takes $f(x)$ as its input at no cost.
Resetting these values is a task for the downstream process.
Note that $f$ can be defined to keep a copy of its input.
Thus the standard logically reversible mapping $f(x)= (x,h(x))$ is allowed in this model, but then $x$ is explicitly part of the output of $f$ and must be handled as such by downstream processes.

\section{Deriving optimal heat functions}
\label{sec:opt-heat-func}

Now we show how to derive the optimal heat function for a known distribution over inputs.
Recall that for a logically reversible operation, i.e., computing an injective function $f: \domain \to \range$,
an optimal heat function is 0 on all inputs.
In this case, no matter the input distribution $\dist$,
since each input $x$ maps to its own unique output $f(x),$
$\Pr_{f(\dist)}[f(x)] = \Pr_{\dist}[x],$
thus $\log \left(\Pr_{f(\dist)}[f(x)] / \Pr_{\dist}[x] \right) = 0$.
If $f$ is not injective,
it remains the case that any input $x$ mapping to its own unique output $f(x)$ has zero expected heat released on input $x$.
``Colliding'' inputs will release more heat the more inputs map to a single output (more precisely, the greater the total probability $\Pr_{f(\dist)}[f(x)]$ of inputs $x$ sharing output $f(x)$):
still
$\log \left(\Pr_{f(\dist)}[f(x)] / \Pr_{\dist}[x] \right)$ (now strictly positive) is the expected heat released.

\begin{lemma}
\label{prop:ef-min-heat-fn}
    Let $f$ be an arbitrary function with domain $\domain$ and $\dist$ be a distribution over $\domain$.
    Then any valid heat function for a gate computing $f$ that minimizes the expected heat release on input distribution $\dist$ satisfies for all $x \in D$:
    \begin{align*}
        \heat(x) = \begin{cases} \log (\Pr_{f(\dist)}[f(x)] / \Pr_{\dist}[x]) & \Pr_{\dist}[x] > 0 \\ 
        \infty & \Pr_{\dist}[x] = 0, \Pr_{f(\dist)}[f(x)] > 0 \end{cases}
    \end{align*}
    on all $x$ where $f(x) \in \support(f(\dist))$.\footnote{
        Note that equivalently $\log (\Pr\limits_{f(\dist)}[f(x)] / \Pr\limits_{\dist}[x]) = -\log \Pr_\dist[x | \text{ output is } f(x)]$.
    }
    On $x$ where $f(x) \not \in \support(f(\dist))$, $\heat$ can behave arbitrarily so long as $\heat$ still satisfies \cref{lem:turgut-lemma} on these inputs.
\end{lemma}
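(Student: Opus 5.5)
By \cref{lem:turgut-lemma}, a heat function $\heat$ is valid exactly when $\sum_{x \in f^{-1}(y)} 2^{-\heat(x)} \le 1$ for every $y \in \range$. The plan is therefore to treat the minimization of $\E_{x\sim\dist}[\heat(x)]$ as a family of independent constrained problems, one for each output $y$. The objective $\sum_x \Pr_\dist[x]\heat(x)$ decomposes as $\sum_y \sum_{x \in f^{-1}(y)} \Pr_\dist[x]\heat(x)$. The constraints from \cref{lem:turgut-lemma} also decouple by $y$. So $\heat$ is optimal if and only if, for each $y$, its restriction to $f^{-1}(y)$ minimizes that block's contribution subject to that block's constraint.

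Blocks with $y \notin \support(f(\dist))$ contribute zero to the objective. On those blocks $\heat$ is arbitrary apart from the \cref{lem:turgut-lemma} constraint, which gives the last sentence of the statement.

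Now fix $y$ with $q_y := \Pr_{f(\dist)}[y] > 0$. For each $x \in f^{-1}(y)$, set $r_x := 2^{-\heat(x)}$, so $\sum_{x\in f^{-1}(y)} r_x \le 1$. Let $p_x := \Pr_\dist[x]$ and $\pi_x := p_x/q_y$; this is the conditional distribution on $f^{-1}(y)$ and sums to $1$. The block objective is $\sum_x p_x \log(1/r_x) = q_y \sum_x \pi_x \log(1/r_x)$.

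Writing $R = \sum_x r_x \le 1$, this equals
\[
q_y\Big[\entropy(\pi) + \DKL(\pi \,\|\, r/R) + \log(1/R)\Big].
\]
By Gibbs' inequality (\cref{prop:gibbs-inequality}) it is at least $q_y\,\entropy(\pi)$. Equality holds if and only if $R = 1$ and $r/R = \pi$, that is, $r_x = \pi_x$ for all $x$ in the block. Alternatively, \cref{lem:gibbs-opt} can be invoked after the same rewriting.

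The uniqueness clause of \cref{prop:gibbs-inequality} then forces $2^{-\heat(x)} = p_x/q_y$ for every $x \in f^{-1}(y)$. When $p_x > 0$ this gives $\heat(x) = \log(q_{f(x)}/p_x)$. When $p_x = 0$ it gives $2^{-\heat(x)} = 0$, i.e.\ $\heat(x) = \infty$.

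This resolves the one subtle point, inputs with $p_x = 0$ but $q_{f(x)} > 0$. Such an $x$ contributes nothing to the objective, so it is not obvious that optimality constrains $\heat(x)$ at all. The constraint comes from the equality case: the mass $r_x$ must be zero, since otherwise $R \ge \sum_{x': p_{x'}>0} r_{x'} + r_x$ would force some positive-probability $r_{x'} < \pi_{x'}$ and strictly increase the cost.

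The careful part is handling $\infty$ and zero-probability terms (the conventions $0\cdot\infty$, $2^{-\infty}=0$) in the KL rewriting. To avoid this, I would first argue directly that an optimum has $\heat(x) < \infty$ whenever $p_x > 0$, since otherwise the objective is infinite. The KL identity can then be applied to the support only, and the leftover mass handled by the $R \le 1$ argument above.

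Finally, the displayed $\heat$ is itself valid, because each block sums to exactly $1$. So the minimum is attained and the characterization is non-vacuous.
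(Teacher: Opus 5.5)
Your proposal is correct and follows essentially the paper's route. You reduce via \cref{lem:turgut-lemma} to independent per-output blocks, then pin down the optimum through the uniqueness case of Gibbs' inequality; your entropy-plus-KL rewriting is exactly the identity behind \cref{lem:gibbs-opt}. The differences are minor: you absorb the paper's separate ``the block sum must equal $1$'' perturbation step into the $\log(1/R)$ term, you get the lower bound blockwise directly rather than by going through $\landauer_f(\dist)$, and you treat the case $\Pr_{\dist}[x]=0$, $\Pr_{f(\dist)}[f(x)]>0$ more explicitly than the paper does.
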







Note that in the finite case, $Q(x)$ is the log of the one divided by the probability that the input is $x$, conditional on the output being $f(x)$.
In other words, the optimal heat function is unique up to the choice of values for inputs with zero probability under the input distribution $\dist$ and whose outputs also have zero probability under the output distribution $f(\dist)$.
While an earlier version of this proof appeared in \cite{Kor25}, it contained some bugs.
Here we give a corrected proof of optimality and uniqueness up to the behavior of $\heat$ on such inputs.
\begin{proof}
Picking $\heat$ as in the statement of \cref{prop:ef-min-heat-fn} exactly satisfies the best possible bound of:
\begin{equation}\label{eqn:opt-heat-land-eq}
    \landauer_f(\dist) = \E_{x \sim \dist}[\heat(x)].
\end{equation}
To see that this choice is valid, we show that $\heat$ satisfies \cref{lem:turgut-lemma}.
For every $y \in \support(f(\dist))$ we have:
\begin{displaymath}
    \sum_{x \in f^{-1}(y)} 2^{-\heat(x)} = \sum_{x \in f^{-1}(y)} \Pr_{\dist}[x] \Big/ \Pr_{f(\dist)}[y] = 1.
\end{displaymath}
And for every $y \not \in \support(f(\dist))$ the only constraint we place on $\heat$ is that this quantity is at most 1.

Next we show that any heat function that minimizes the heat release on $\dist$ must satisfy \cref{prop:ef-min-heat-fn}.
Let $\heatoptimal$ be some other valid heat function that minimizes the expected heat release on input distribution $\dist$.
Then per \cref{lem:turgut-lemma} for every $y,\sum_{x \in f^{-1}(y)} 2^{- \heatoptimal(x)} \leq 1$.
Moreover if $\sum_{x \in f^{-1}(y)} 2^{- \heatoptimal(x)} < 1$ for some $y \in \support(f(\dist))$ then there must be some $x \in \support(\dist_{|f^{-1}(y)})$ where $\heatoptimal(x)$ could have been smaller while remaining valid, and therefore $\heatoptimal$ could not have been optimal.
Thus we can assume that for every $y \in \support(f(\dist))$:
\begin{displaymath}
    \sum_{x \in f^{-1}(y)} 2^{-\heatoptimal(x)} = 1.
\end{displaymath}
This means that we can define the conditional Gibbs distribution $\Pr_{(\gibbs_{\heatoptimal})_{|f^{-1}(y)}}[x] = 2^{-\heatoptimal(x)}$ for every $y \in \support(f(\dist))$.
Since $\heatoptimal$ is valid we know that for every $y \in \support(f(\dist))$:
\begin{displaymath}
    \E_{x \sim (\gibbs_{\heatoptimal})_{|f^{-1}(y)}}[\heatoptimal(x)] - \entropy((\gibbs_{\heatoptimal})_{|f^{-1}(y)}) \geq 0.
\end{displaymath}
And since $\heatoptimal$ is a valid heat function that minimizes the heat release on the distribution $\dist$ we know that for every $y \in \support(f(\dist))$,
\begin{displaymath}
    \E_{x \sim \dist_{|f^{-1}(y)}}[\heatoptimal(x)] - \entropy(\dist_{|f^{-1}(y)}) = 0,
\end{displaymath}
as \cref{def:valid-heat} requires that this expression be non-negative for every $y \in \support(f(\dist))$ and we need the average of these terms to be 0 to match \cref{eqn:opt-heat-land-eq}.
But by \cref{lem:gibbs-opt}, we know that this implies that for every $y$ where $\dist_{|f^{-1}(y)}$ is defined, $\dist_{|f^{-1}(y)} = (\gibbs_{\heatoptimal})_{|f^{-1}(y)}$ and hence for every $x$ where $f(x) \in \support(f(\dist))$,
\begin{displaymath}
    2^{-\heatoptimal(x)} = \Pr_{\dist}[x] \Big/ \Pr_{f(\dist)}[f(x)] = 2^{- \heat(x)}.
\end{displaymath}
For $x$ where $\Pr_{f(\dist)}[f(x)] = 0$ then both $\heat(x)$ and $\heatoptimal(x)$ can take on arbitrary non-negative values without affecting the optimality of the heat function, so long as it remains valid.
\end{proof}

\section{The complexity of optimizing heat functions in a given circuit}
\label{sec:pp-hard}

Here we show that computing optimal heat functions for gates in a fixed circuit is $\PP$-hard.\footnote{ See~\cite[Chapter
        9]{arora2009computational} for a discussion of the complexity class $\PP$.
    Briefly, it is the class of decision problems/languages $\varLanguage$ for which
    yes-instances $x$, that like $\NP$ problems, have a polynomial-length $k =
        \mathrm{poly}(|x|)$ witness $w \in \{0,1\}^k$, furthermore have at least
    \emph{half} of all $w \in \{0,1\}^k$ as witnesses; see for example the language
    $\varLanguage_\halfAccept$ below. A probabilistic characterization ($\PP$
    actually stands for ``probabilistic polynomial-time'') is that there is a
    randomized polynomial-time algorithm for $\varLanguage$ that answers `yes' on input $x$ with probability at least $1/2$ if and only if $x \in \varLanguage$.} In light of \cref{prop:ef-min-heat-fn}, this
hardness result is hardly surprising: computing an optimal heat function requires
computing probabilities induced by the circuit. Nonetheless, we view this as an
overall strength of the heat function framework.

\subsection{$\PP$-complete language for optimizing heat functions}\label{sec:pp-hard-exact}
We start with the language $\varLanguage_{\EFmin}$ that we will prove
$\PP$-complete.
\begin{definition}
    $\varLanguage_{\EFmin}$ is a set of tuples $(\circuit, g, x, a, b)$ where $\circuit$ is a Boolean circuit, $g$ is a gate in that circuit, $x$ is a potential input to that gate, and $q = a/b$ is a positive rational number.
    We say that $(\circuit, g, x, a, b) \in \varLanguage_{\EFmin}$ if, when $\circuit$ receives the uniform distribution over inputs, there exists some valid heat function $\heat$ for gate $g$ that minimizes the expected heat release on its induced marginal input distribution such that $\heat(x) \geq \log q$.
\end{definition}

We note that our choice to express threshold $q=a/b$ as a rational number whose log corresponds to the optimal expected released heat is for simplicity but is not actually needed.
Per \cref{prop:ef-min-heat-fn} the optimal value of $q$ is always the ratio of two integer valued fractions whose numerators are at most $2^n$ and whose denominators are both exactly $2^n$.
Thus we can safely round up any $q$ obtain a rational $q'$ that is the ratio of two $\leq n+1$ bit integers $a'$ and $b'$ without changing membership in $\varLanguage_{\EFmin}$.

To show that $\varLanguage_{\EFmin}$ is $\PP$-hard, we will reduce from the
following $\PP$-hard problem~\cite{akmal2022majority}.\footnote{A version of this problem where `at least' is replaced with `more than' was proven to be $\PP$-hard in \cite{Gill74}. That paper also uses the convention that $\PP$ algorithms must accept with probability strictly larger than $1/2$ on yes instances. In \cite{akmal2022majority} the authors show how to convert between these definitions of $\PP$.}

\begin{proposition}[\cite{akmal2022majority}]
    \label{prop:sat-half-pp-hard}
    Let $\varLanguage_\halfAccept$ be the subset of $\SAT$ instances that evaluate to $1$ on at least half of all inputs.
    Deciding membership in $\varLanguage_\halfAccept$ is $\PP$-hard.
\end{proposition}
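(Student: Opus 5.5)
The plan is to reduce from the classical strict-majority problem: given a Boolean circuit (or formula) $\circuit$ on $n$ variables, decide whether $\circuit$ is satisfied by \emph{more than} $2^{n-1}$ of its $2^n$ assignments. Gill~\cite{Gill74} showed this problem is $\PP$-hard, and I take that as the starting point. What remains is a polynomial-time many-one reduction from the strict threshold to the non-strict threshold ``at least half'' that defines $\varLanguage_\halfAccept$. Throughout, write $\#\circuit$ for the number of satisfying assignments. The only feature being exploited is that counts are integers, so $\#\circuit > 2^{n-1}$ is equivalent to $\#\circuit \ge 2^{n-1}+1$.

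The construction pads the count by a fixed amount using one fresh variable $b$. Define $\circuit'(b,x)$ on $n+1$ variables by
\[
\circuit'(b,x) = \bigl(\neg b \wedge \circuit(x)\bigr) \vee \bigl(b \wedge [\,x < 2^{n-1}-1\,]\bigr),
\]
where $x$ is read as an $n$-bit binary integer. The predicate $[x < c]$ for a fixed constant $c$ has a standard comparator circuit of size $O(n)$. Then
\[
\#\circuit' = \#\circuit + (2^{n-1}-1).
\]
Half of the $2^{n+1}$ assignments is $2^n$, so $\#\circuit' \ge 2^n$ holds iff $\#\circuit \ge 2^{n-1}+1$, which holds iff $\#\circuit > 2^{n-1}$. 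The case $n=1$, where the padding constant is $0$, is trivial or can be handled by padding with a dummy variable. The map $\circuit \mapsto \circuit'$ is clearly polynomial time.

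The main obstacle I expect is the representation. The statement says ``$\SAT$ instances'', which may mean CNF formulas rather than circuits. If circuits are allowed, the argument above is complete. If formulas are required, I would convert $\circuit'$ to CNF with the Tseitin transformation. Tseitin is parsimonious: every assignment to the original variables extends uniquely to the auxiliary gate variables when it satisfies the formula, and has no satisfying extension otherwise. So the satisfying-assignment count is preserved, but the total number of variables grows by the number $m$ of auxiliary variables. This breaks the ``half'' threshold, because the total becomes $2^{n+1+m}$ while the count stays $\#\circuit'$.

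To repair this, I would not use Tseitin's constraints as hard constraints. Instead, I would let each assignment of the auxiliary variables that is \emph{not} the consistent one contribute a controlled number of satisfying assignments. The simplest route is to re-derive the padding at the CNF level: build the final formula as $(\neg b \wedge \phi) \vee (b \wedge \psi)$, where $\phi$ and $\psi$ are Tseitin encodings over the same auxiliary variable set and $\psi$ has exactly $2^{N-1}-1$ models, and adjust the constant to the new total $N$. Alternatively, I would simply cite that the circuit version of the problem is the intended one, since the paper later only uses $\circuit$ as a Boolean circuit in its reduction to $\varLanguage_{\EFmin}$. That is what I would do first.
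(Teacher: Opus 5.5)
Your reduction is correct for the circuit reading of the statement, which is the reading the paper actually uses in \cref{lem:finding-opt-Q-is-PP-hard}. It is also more explicit than what the paper does. The paper gives no argument of its own: it cites \cite{akmal2022majority}, and a footnote remarks that Gill~\cite{Gill74} proved hardness of the strict (``more than half'') version and that the cited work converts between the strict and non-strict acceptance conventions for $\PP$.

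You instead give a direct instance-level Karp reduction from Gill's strict problem. One selector variable plus a comparator pads the count by exactly $2^{n-1}-1$, so by integrality $\#\circuit' \ge 2^n \iff \#\circuit > 2^{n-1}$. The arithmetic is right. The case $n=1$ needs no special treatment: the padding is empty and $\#\circuit \ge 2 \iff \#\circuit > 1$.

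Your route buys a short, self-contained argument that produces exactly the circuit instances the paper consumes. The citation buys the CNF form and an explicit statement that the two $\PP$ conventions agree. You use that agreement tacitly. Gill's problem is hard for $\PP$ defined with strict acceptance, whereas the paper's definition, and its membership proof in \cref{lem:finding-opt-Q-in-PP}, use acceptance probability at least $1/2$. The classes coincide, and your padding is the formula-level version of the threshold shift that proves this, but you should say so.

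Your CNF fallback would not work as written, though it is easy to repair.
\begin{itemize}
\item The formula $(\neg b \wedge \phi) \vee (b \wedge \psi)$ is not a CNF. The equivalent $(b \vee \phi) \wedge (\neg b \vee \psi)$ is one (add the literal $b$, resp.\ $\neg b$, to every clause), so no second Tseitin pass is needed.
\item The constant is wrong. After Tseitin, $\phi$ has $N = n+m$ variables but, by parsimony, still only $\#\circuit \le 2^n$ models. So the final formula on $N+1$ variables has threshold $2^N$, and you need $\#\psi = 2^N - 2^{n-1} - 1$. With $2^{N-1}-1$, the final test becomes either $\#\circuit \ge 1$ (plain satisfiability) or an inequality that never holds once $m \ge 1$, depending on whether your $N$ counts $b$.
\item $\psi$ must leave the auxiliary variables free, so it should not be a Tseitin encoding. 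Use instead the auxiliary-free CNF for $z < c$ with $z = (x,y)$. The complement $z \ge c$ is a DNF of at most $N+1$ subcube terms, so $z<c$ is a CNF of at most $N+1$ clauses.
\end{itemize}
If the strict-majority instance you start from is already a CNF, the selector trick and this comparator CNF make your original $(n+1)$-variable construction a CNF directly, with no Tseitin at all.
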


\begin{lemma}
    \label{lem:finding-opt-Q-is-PP-hard}
    $\varLanguage_{\EFmin}$ is $\PP$-hard.
\end{lemma}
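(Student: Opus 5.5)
We reduce from $\varLanguage_\halfAccept$ (\Cref{prop:sat-half-pp-hard}) by a polynomial-time many-one reduction. Given a $\SAT$ instance $\phi$ on $n$ variables, first convert it in the standard way into a circuit $\circuit_\phi$ in our model (\Cref{def:sr-circuit}). This circuit uses input-preserving gates (e.g.\ $(a,b,c)\mapsto(a,b,a\wedge b)$ and negations on fresh ancillas), so that after polynomially many gates some designated wire $w$ holds $\phi(x)$. Then append a single erasure gate $g$ acting only on wire $w$, computing $E(b)=0$. We do not require $\circuit$ to be clean here: the definition of $\varLanguage_{\EFmin}$ only refers to a circuit, a gate, and the induced marginal distribution on that gate under uniform inputs. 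If cleanliness were demanded, we would also append erasures of the remaining wires after $g$; these do not affect $g$'s input marginal.

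The output of the reduction is the tuple $(\circuit, g, 1, 1, 2)$, i.e.\ gate input $x=1$ and threshold $q=1/2$. We may take $q = 1/2$ because, under uniform inputs, the marginal on $g$'s input is Bernoulli with $p = \Pr[\phi(x)=1] = \#\mathrm{sat}(\phi)/2^n$. Since $E$ maps both inputs to $0$, we have $\Pr_{E(\dist)}[0]=1$. When $p>0$, \Cref{prop:ef-min-heat-fn} says every optimal valid heat function satisfies $\heat(1)=\log(1/p)$, which is uniquely determined. Comparing this with $\log q$ would therefore decide $p\ge 1/2$ in the wrong direction, because $\heat(1)\ge \log q$ means $p \le 1/q$. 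To fix this we set the query up as $\heat(1)\le 1$, which is equivalent to $p\ge 1/2$, and encode it through the complementary input. Thus we query gate input $x=0$ with threshold $q=2$: here $\heat(0)=\log(1/(1-p))\ge 1$ iff $1-p\le 1/2$ iff $p\ge 1/2$. So the instance $(\circuit,g,0,2,1)$ is a yes-instance exactly when $\phi\in\varLanguage_\halfAccept$.

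Next we treat the edge cases using the second clause of \Cref{prop:ef-min-heat-fn}. If $1-p=0$, i.e.\ $\phi$ is a tautology, then $\Pr_\dist[0]=0$ while $\Pr_{E(\dist)}[0]=1>0$, so the optimal heat function has $\heat(0)=\infty\ge 1$. This is correct, since then $p=1\ge 1/2$. If $p=0$, then $\heat(0)=0<1$, which is also correct. Since the value at $x=0$ is forced in every case, "there exists an optimal $\heat$" and "all optimal $\heat$" coincide.

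The reduction is clearly polynomial-time. The main care point is getting the direction of the inequality right and handling the zero-probability inputs, where \Cref{prop:ef-min-heat-fn} forces $\infty$ and leaves no freedom. The other point is checking that the translation from a $\SAT$ formula to our wire/gate circuit model is faithful and polynomial. This is routine, since each formula connective becomes one input-preserving gate writing to a fresh ancilla wire.
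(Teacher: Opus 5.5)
Your final reduction is exactly the paper's: append an erasure gate to the output wire and query $(\circuit,g,0,2,1)$. Your handling of the degenerate cases $p\in\{0,1\}$ through the second clause of \Cref{prop:ef-min-heat-fn} is a useful detail that the paper leaves implicit. Do delete the false start that announces $(\circuit,g,1,1,2)$ as the output, since that tuple is trivially a yes-instance: $\heat(1)\ge 0>\log(1/2)$ always.
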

\begin{proof}
    We prove this by Karp reduction from $\varLanguage_{\halfAccept}$.
    Given a Boolean circuit $\circuit$, assume its output is the first wire.
    We modify $\circuit$ by appending a single-wire gate $g$ that erases its input to the first wire at the end of the circuit and call our new circuit $\circuit'$.
    Observe that, due to
    \Cref{prop:ef-min-heat-fn}, the optimal heat function for the added gate $g$ releases at least $1$ expected heat on input $0$ precisely when the probability of that input is at most $1/2$.
    Thus, $(\circuit', g, 0, 2, 1) \in \varLanguage_{\EFmin}$ precisely when $\circuit \in \varLanguage_{\halfAccept}$.
\end{proof}

Next we show that this language is in $\PP$.
\begin{lemma}
\label{lem:finding-opt-Q-in-PP}
    $\varLanguage_{\EFmin} \in \PP$.
\end{lemma}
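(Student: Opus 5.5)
By \cref{prop:ef-min-heat-fn}, membership in $\varLanguage_{\EFmin}$ reduces to a comparison of two counts. Let $n$ be the number of inputs of $\circuit$, let $f$ be the function computed by $g$, and let $y=f(x)$.

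Define
\[
N_x=\#\{u\in\{0,1\}^n : \text{the input to } g \text{ on } u \text{ is } x\}
\quad\text{and}\quad
N_y=\#\{u : \text{the input to } g \text{ on } u \text{ lies in } f^{-1}(y)\}.
\]
Both sets are polynomial-time checkable: simulate $\circuit$ up to $g$ and read off the input to $g$.

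There are three cases.

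\emph{Case $N_x>0$.} Every optimal heat function has $\heat(x)=\log(N_y/N_x)$. Hence membership is equivalent to $b\,N_y\ge a\,N_x$.

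\emph{Case $N_x=0<N_y$.} Every optimal heat function has $\heat(x)=\infty$, so the tuple is a yes-instance.

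\emph{Case $N_y=0$.} The lemma allows $\heat(x)$ to be arbitrary subject to \cref{lem:turgut-lemma}. In particular we may take $\heat(x)=\infty$, so under the ``there exists'' reading this is also a yes-instance.

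Combining the three cases, membership is equivalent to
\[
N_x=0 \quad\text{or}\quad b\,N_y\ge a\,N_x .
\]

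This condition is a comparison of two $\#\P$ functions, and I will place it in $\PP$ with a single randomized machine that accepts with probability at least $1/2$ exactly on yes-instances. The quantity to make nonnegative is $b N_y - a N_x$. Since $b$ and $a$ are given in binary, I handle the weights by sampling a pair $(u,j)$ with $j$ ranging over a padded range of size $2^m\ge\max(a,b)$. On a sample, the machine accepts or rejects with a bias that contributes $+1$ for each $u$ counted by $N_y$ with $j<b$, and $-1$ for each $u$ counted by $N_x$ with $j<a$. All other samples are answered by a fair coin.

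The two counts overlap, because $x\in f^{-1}(y)$. I will handle this by splitting the work by a first coin: one branch scores the $N_y$ contributions and the other scores the $N_x$ contributions. This makes the acceptance probability equal to $\tfrac12+c\,(bN_y-aN_x)$ for some fixed positive $c=2^{-\mathrm{poly}}$. Alternatively, I can invoke the standard closure fact that $\{z : F(z)\ge G(z)\}\in\PP$ for $F,G\in\#\P$, which follows from $\#\P$ being closed under multiplication by polynomial-time computable integers and under the GapP difference construction.

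The main obstacle is the disjunct $N_x=0$, since the degenerate cases make membership not simply a single threshold. I would absorb it with the following trick. Note that $N_x\le N_y$ always holds, since $x\in f^{-1}(y)$.

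If $a\le b$, the inequality $bN_y\ge aN_x$ holds automatically, and the $N_x=0$ case is then already covered.

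If $a>b$, I would compare $b N_y\cdot 2^{n+1}+2^{n+1}$ against $a N_x\cdot 2^{n+1}+a\,[N_x\ge 1]$-style corrections. A cleaner route is to test instead
\[
(2^{n}+1)\,bN_y \;\ge\; (2^{n}+1)\,aN_x - a\,\mathbf{1}.
\]
Alternatively, use the GapP function
\[
bN_y - aN_x + a\,(\text{number of } u \text{ with } u=0^n \text{ and } N_x=0).
\]
The indicator $[N_x=0]$ is not obviously GapP. So the fallback I would actually use is the following: because the optimal value is a ratio of integers at most $2^n$, when $N_x=0$ the "$\heat(x)=\infty$" convention can be mimicked by scoring $N_x$ in a way that a zero count contributes nothing. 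Equivalently, I would note that when $N_x=0$ we have $aN_x=0\le bN_y$, so the single test $bN_y\ge aN_x$ already accepts. Thus the disjunction collapses to one GapP sign test, which is in $\PP$.
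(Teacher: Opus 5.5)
Your proof is correct and follows essentially the same route as the paper: you use \cref{prop:ef-min-heat-fn} to reduce membership to the counting inequality $bN_y \ge aN_x$ (the paper writes it as $(a-b)\Pr_{\dist}[x] \le b(\Pr_{f(\dist)}[y]-\Pr_{\dist}[x])$), and you realize its sign as the bias of a single randomized machine, with the degenerate cases $N_x=0$ and $N_y=0$ automatically giving acceptance probability at least $1/2$. The differences are only technical: the paper avoids the overlap between $x$ and $f^{-1}(y)$ by scoring the disjoint event ``$x'\neq x$, $f(x')=y$'' against ``$x'=x$'' with one $b/a$-biased coin (after accepting outright when $a\le b$), whereas you branch on a first coin and weight with a uniform $j$ over a power-of-two range; and your detour treating the disjunct $N_x=0$ as an ``obstacle'' was unnecessary, since, as you eventually note, that case is already covered by $bN_y\ge aN_x$.
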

\begin{proof}
    Consider input $(\circuit, g, x, a, b)$.
    Let $\dist$ be the marginal distribution over inputs to gate $g$ when $\circuit$ receives the uniform distribution over inputs, $f$ be the function computed by gate $g$, $y=f(x)$, and $q = a/b$.
    The value $q \leq 1$ asks if the optimal heat function releases at least $0$ expected heat on input $x$.
    Since such a condition is trivially satisfied by the definition of heat functions, we can automatically accept such inputs.
    Now we assume that $a \geq b$.
    By \cref{prop:ef-min-heat-fn}, we know that deciding if $(\circuit, g, x, a, b)\in \varLanguage_{\EFmin}$ is equivalent to deciding if $q =a/b \leq \Pr_{f(\dist)}[y]/\Pr_{\dist}[x]$.
    We will now construct a non deterministic program that accepts with probability at least $1/2$ exactly when $q \leq \Pr_{f(\dist)}[y]/\Pr_{\dist}[x]$.
    We start by observing that this condition is equivalent to
    (multiplying both sides by $b \Pr_{\dist}[x]$, then subtracting $b \Pr_{\dist}[x]$ from both sides)
    \begin{displaymath}
        (a-b)\Pr_{\dist}[x] \leq b
        \left(
        \Pr_{f(\dist)}[y] - \Pr_{\dist}[x]
        \right).
    \end{displaymath}
    Thus by designing a non deterministic program whose success probability is at least $1/2$ when the above inequality is satisfied, we have a $\PP$ algorithm for $\varLanguage_{\EFmin}$.
    Importantly, note that $\Pr_{f(\dist)}[y] - \Pr_{\dist}[x]$ is exactly the probability that $\dist$ selects some input $x' \not = x$ where $f(x') = y$.
    We start by picking an input for $\circuit$ uniformly at random and evaluating it on $\circuit$.
    Let $x'$ be the observed input to gate $g$ and $y' = f(x')$ be the observed output.
    We then sample a Bernoulli random variable $v$ that has outcome $1$ with probability $b/a$.\footnote{While this is not possible in general under standard definitions of $\PP$, we can get the desired behavior by flipping $\ceil{\log(a+b)} + 1$ coins. Treat all but the last of these coins as a number $z$ in binary. If $z < b$ we set $v = 1$. If $b\leq z <a$ we set $v=0$. If $a\leq z$ then we use the final coin to immediately accept or reject without changing the correctness of our algorithm. Post-conditioned on not immediately halting, this protocol simulates the desired Bernoulli random variable.}
    If we have $v=1, x'\neq x,$ and $y' = y$ then we accept.
    If we have $v=0$ and $x' = x$ then we reject.
    On all other patterns we accept or reject with probability $1/2$.
    We note that this protocol accepts with probability exactly $1/2$ precisely when $q=\Pr_{f(\dist)}[y]/\Pr_{\dist}[x]$
    or $\Pr_{f(\dist)}[y] = 0$.
    In both cases we have that our input is in $\varLanguage_\EFmin$ and thus should accept; when $q=\Pr_{f(\dist)}[y]/\Pr_{\dist}[x]$ we have that $q$ is precisely the optimal value and when $\Pr_{f(\dist)}[y] = 0$ every choice of $q$ can be optimal.
    Given our definition of $\PP$, this is the intended behavior.
\end{proof}

\subsection{Approximation}\label{sec:approx-heat-funcs}
We first make a simple argument that if circuits are permitted to have wires where only a 0 bit can ever appear 
(i.e., such as the output of an unsatisfiable formula), 
then it is $\NP$-hard even to approximate an optimal heat function.
Fix a circuit $\circuit$ and let $g$ be an erasure gate acting on the first
wire at the end of that circuit. Let $\dist$ be the marginal induced input
distribution for $g$ when $\circuit$ receives a uniform distribution over its
inputs and $\heatoptimal$ be a valid heat function that minimizes the expected heat release
on distribution $\dist$. If $\Pr_{\dist}[0] = 1$ then we can safely set
$\heat(0) = 0, \heat(1) = \infty$ and observe that $\E_{x
        \sim \dist}[\heat(x)] = 0$. 
However, if there is any input to
$\circuit$ that causes the first wire to evaluate to $1$ before applying $g$,
then this choice of $\heat$ will cause $\E_{x \sim
        \dist}[\heat(x)] = \infty$. Thus, for any $c \geq 1$, this choice of
$\heat$ is the unique choice that satisfies $\E_{x\sim
        \dist}[\heat(x)] \leq c \E_{x\sim \dist}[\heatoptimal(x)]$ when
$\Pr_{\dist}[0] = 1$ and $\heat$ can never satisfy this condition
when $\Pr_{\dist}[1] > 0$. Therefore by fixing $\circuit$ to be the $\circuit'$
in \Cref{lem:finding-opt-Q-is-PP-hard}, finding any multiplicative
approximation to the optimal heat function for the added erasure gate is at
least as hard as solving $\SAT$.

However, this is a somewhat degenerate case,
in the sense that \emph{any} optimization problem permitting an optimal value of 0 has the property that no multiplicative approximation is good unless it is exact. 
The more interesting case is when a wire might have both bits 0 or 1 ($\support(\dist) = \{0,1\}$),
so that we have a positive optimal expected heat $\E_{x \sim \dist}[\heatoptimal(x)] > 0$.
Using a bit of machinery, we can prove that $\NP$-hardness of multiplicative
approximation holds even in this case.

\begin{remark}[Pointwise versus expected approximation]
\label{rem:approximation-pointwise-vs-expectation}
The motivation for studying the thermodynamics of a circuit is to reduce the amount of work (i.e.\ useful energy) that must be irreversibly spent each time circuit is run.
Ultimately we care about finding heat functions that minimize the circuit entropy production, or equivalently (since the mapping of inputs to outputs is fixed) the net expected released heat by each gate.
Thus, for our approximation results, the natural computational objective is to produce \emph{any} heat function where the expected released heat is within an additive (or multiplicative) factor of optimal, rather than to approximate the individual entries of the optimal heat function.
\end{remark}

We first require a technical lemma about converting a formula $\phi$ to another $\psi$ such that, 
depending on whether $\phi$ is satisfiable,
$\psi$ has either 1 or a much larger number of satisfying assignments.
This follows by adding $t$ dummy variables to a construction by Blass and Gurevich~\cite{DBLP:journals/iandc/BlassG82} that converts any $\SAT$ formula $\phi$ on $n$ variables into another formula $\psi$ on $n+1$ variables which contains exactly one additional satisfying assignment.

\begin{lemma}
\label{lem:sat-promise-conv}
    Let $\phi(x_1, \ldots, x_n)$ be a $\SAT$ instance on $n$ variables with $Z(\phi)$ satisfying assignments.
    For every $t \geq 0$, there exists another $\SAT$ instance $\psi(x_1, \ldots, x_n, y_1, \ldots, y_t, z)$ on $n+t+1$ variables that has $Z(\psi) = 1 + 2^t Z(\phi)$ satisfying assignments.
\end{lemma}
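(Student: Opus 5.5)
The plan is a direct construction, with no need for the Blass--Gurevich reduction as a black box beyond its core idea. Let $z$ be a fresh variable, and let $y_1,\ldots,y_t$ be fresh dummy variables. Define
\[
\psi(x,y,z) \;=\; \bigl(z \wedge \phi(x)\bigr) \;\vee\; \Bigl(\neg z \wedge \neg x_1 \wedge \cdots \wedge \neg x_n \wedge \neg y_1 \wedge \cdots \wedge \neg y_t\Bigr).
\]
The $y_i$ do not appear in the first disjunct, so each satisfying assignment of $\phi$ extends to exactly $2^t$ satisfying assignments of $\psi$ with $z=1$. The second disjunct contributes exactly one satisfying assignment, namely all variables set to $0$. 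The two disjuncts are disjoint because they disagree on $z$. Hence $Z(\psi) = 2^t Z(\phi) + 1$.

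The remaining concern is that $\psi$ should be a $\SAT$ instance of the same syntactic type as $\phi$, computable in polynomial time in $n+t+|\phi|$.

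If $\SAT$ instances are arbitrary Boolean formulas or circuits, as is used elsewhere in this section, the formula above is already a valid instance of size $|\phi| + O(n+t)$ and we are done.

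If CNF is required, we distribute instead. Write $\phi = \bigwedge_j C_j$ and form $\bigwedge_j (\neg z \vee C_j)$, which is equivalent to $z \rightarrow \phi$. Add the clauses $(z \vee \neg x_i)$ for each $i$ and $(z \vee \neg y_k)$ for each $k$, which together say that $\neg z$ forces every $x_i$ and every $y_k$ to be $0$. The conjunction is in CNF and is logically equivalent to $\psi$:
\begin{itemize}
\item when $z=1$, it reduces to $\phi(x)$ with the $y_k$ free;
\item when $z=0$, it forces $x=0$ and $y=0$.
\end{itemize}
The same count therefore applies, with no auxiliary Tseitin variables needed. Those would change the count, so avoiding them is the one point that requires care.

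The main obstacle is simply this bookkeeping. We must ensure the encoding introduces no extra variables beyond $y_1,\ldots,y_t,z$, since extra variables would multiply the count. We must also ensure the two cases are disjoint, which is guaranteed by branching on $z$.
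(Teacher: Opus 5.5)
Your proposal is correct and is essentially the paper's construction: the paper takes $\psi = \bigl(\bigwedge_{i} x_i \wedge \bigwedge_{k} y_k \wedge z\bigr) \vee (\phi(x) \wedge \neg z)$, which is your formula with the roles of $z$ and $\neg z$ swapped and the all-ones assignment as the single extra satisfying assignment instead of the all-zeros one. Your CNF encoding without Tseitin variables is also correct, though the paper does not need it because $\psi$ is immediately converted to a Boolean circuit.
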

\begin{proof}
    We fix $\psi(x_1, \ldots, x_n, y_1, \ldots, y_t, z) = \left(\left(\bigwedge_{i \in [n]} x_i\right) \wedge \left(\bigwedge_{i \in [t]} y_i\right) \wedge z \right) \vee (\phi(x_1, \ldots, x_n) \wedge \neg z)$.
    When $z=1$, $\psi$ is satisfied exactly when all $x_i$ and $y_i$ are assigned $1$.
    When $z=0$, each assignment of $x_1, \ldots, x_n$ that satisfy $\phi$ now results in $2^t$ assignments of $y_1, \ldots, y_t$ that all satisfy $\psi$.
    Thus the total number of satisfying assignments is now $1+ 2^t Z(\phi)$.
\end{proof}

Next, we prove $\NP$-hardness of multiplicative approximation even for gates whose inputs have full support.

\begin{lemma}
\label{lem:mult-approx-hard}
    Fix any $c > 1$.
    Let $\circuit$ be a circuit receiving the uniform distribution over inputs and $g$ be a gate in $\circuit$.
    Let $\dist$ be the induced marginal distribution over inputs to $g$.
    Let $\heatoptimal$ be a valid heat function for $g$ that minimizes $\E_{x \sim \dist}[\heatoptimal(x)]$.
    Finding a valid heat function $\heat$ for $g$ such that $\E_{x \sim \dist}[\heat(x)] \leq c \E_{x \sim \dist}[\heatoptimal(x)]$ that is specified by a pairs $a_x,b_x$ of $n+1$ bit integers such that $\heat(x) = \log(a_x/b_x)$ is impossible in randomized polynomial time even with the promise that $\E_{x \sim \dist}[\heatoptimal(x)] > 0$ unless 
    $\NP \subseteq \BPP$.
\end{lemma}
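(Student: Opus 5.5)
Given a formula $\phi$ on $n$ variables, I will use \Cref{lem:sat-promise-conv} with a suitably large $t$ (polynomial in $n$ and depending on $c$) to build $\psi$ on $N=n+t+1$ variables with $Z(\psi)=1+2^tZ(\phi)$ satisfying assignments. Let $\circuit$ be a circuit evaluating $\psi$ onto its first wire, followed by an erasure gate $g$ on that wire. Under uniform inputs, the input to $g$ is $1$ with probability $p=Z(\psi)/2^N$. Since $Z(\psi)\ge 1$, both values occur with positive probability, so the promise $\E_{x\sim\dist}[\heatoptimal(x)]>0$ holds. By \Cref{prop:ef-min-heat-fn}, the optimal expected heat is $\binaryEntropy(p)$.

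There are two cases. If $\phi$ is unsatisfiable, then $p=2^{-N}$, and by \Cref{lem:binentropy-ineq} the optimum is at most $2^{-N}\log(e2^N)=O(N2^{-N})$. If $\phi$ is satisfiable, then $p\ge 2^{t-N}=2^{-(n+1)}$, and the optimum is at least $\binaryEntropy(2^{-(n+1)})\ge (n+1)2^{-(n+1)}$, which is much larger. I will choose $t$ so that these two ranges are separated by far more than the factor $c$. Taking $t=\Theta(n)$ large enough should suffice, giving a ratio of about $2^{t}$.

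The decision procedure runs the hypothetical randomized algorithm to obtain $\heat$ and declares "satisfiable" exactly when the heat is not small. The obstacle is that the algorithm cannot evaluate $\E_{x\sim\dist}[\heat(x)]$, because $p$ is unknown. I will instead test $\heat$ directly through the validity constraint of \Cref{lem:turgut-lemma}, $2^{-\heat(0)}+2^{-\heat(1)}\le 1$, together with the approximation guarantee, which controls $\heat$ pointwise.

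In the unsatisfiable case, $(1-2^{-N})\heat(0)+2^{-N}\heat(1)\le cN2^{-N}\log e$, up to constants. This forces $\heat(1)\le O(cN)$ and $\heat(0)=O(cN2^{-N})$.

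In the satisfiable case, suppose $\heat(1)\le C\,cN$ for the same constant $C$. Validity then gives $2^{-\heat(0)}\le 1-2^{-C cN}$, so $\heat(0)\ge \Omega(2^{-CcN})$. This is not comparable to $2^{-N}$, so I will use the cleaner threshold on $\heat(1)$ together with the expected cost instead. If $\heat(1)\le CcN$, then the expected heat is at most $(1-p)\heat(0)+pCcN$. The lower bound $\binaryEntropy(p)\ge p\log(1/p)$ then forces $\heat(0)$ to be larger than $2^{-N}cN$ times a big factor once $t$ is large.

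So the test is: "$\heat(1)\le CcN$ and $\heat(0)\le C'cN2^{-N}$". This holds in the unsatisfiable case and, after tuning $t$ relative to $c$, fails in the satisfiable case. Both quantities can be checked exactly from the rational representation $\log(a_x/b_x)$ by comparing $a_x/b_x$ against exponentials, i.e.\ comparing integers of polynomial bit length.

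The main work is this calibration of $t$ and the constants so that the pointwise thresholds separate the two cases. If it gets messy, I would fall back on choosing $t=\Theta(cn)$ generously. Finally, the $n+1$-bit representation restriction will be handled by noting that the optimal $\heat$ itself has such a representation (the remark after the definition of $\varLanguage_{\EFmin}$), so the search is nonvacuous. A BPP algorithm for the approximation problem then yields one for $\SAT$, giving $\NP\subseteq\BPP$.
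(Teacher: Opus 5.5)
\paragraph{Verdict.} Your reduction is the paper's: the padded formula from \Cref{lem:sat-promise-conv}, an erasure gate on its output, and a decision read off the returned values using \Cref{lem:turgut-lemma}. The test you end up with does in fact separate the two cases. However, your argument that it fails when $\phi$ is satisfiable does not establish that conclusion.

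\paragraph{Where the satisfiable case breaks.} You combine $\heat(1)\le CcN$ with the lower bound $\E_{x\sim\dist}[\heat(x)]\ge\binaryEntropy(p)$. In the satisfiable case $\log(1/p)\le n+1<CcN$, so $(1-p)\heat(0)+pCcN\ge p\log(e/p)\ge\binaryEntropy(p)$ already holds with $\heat(0)=0$. That inequality forces nothing on $\heat(0)$.

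More fundamentally, your satisfiable-case argument never uses the $c$-approximation guarantee, and without it the test cannot fail. For example, the valid function with $\heat(0)=C'cN2^{-N}$ and $\heat(1)=-\log(1-2^{-\heat(0)})\approx N-\log(C'cN\ln 2)<CcN$ passes your test whether or not $\phi$ is satisfiable.

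\paragraph{The missing step.} Run validity in the other direction, and use the approximation \emph{upper} bound in the satisfiable case. Suppose $\heat(0)\le C'cN2^{-N}$, which is your test clause and which does hold when $\phi$ is unsatisfiable. Then \Cref{lem:turgut-lemma} gives $2^{-\heat(1)}\le 1-2^{-\heat(0)}\le \heat(0)\ln 2$, hence $\heat(1)\ge N-\log(C'cN\ln 2)$.

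When $\phi$ is satisfiable, $p\,\heat(1)\le \E_{x\sim\dist}[\heat(x)]\le c\,\binaryEntropy(p)\le cp\log(e/p)$ with $p\ge 2^{-(n+1)}$. This gives $\heat(1)\le c(n+1+\log e)$. The two bounds are incompatible once $n+t+1-\log(C'cN\ln 2)>c(n+1+\log e)$, for instance for $t=\alpha cn$ with $\alpha$ a large enough constant.

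This is exactly the paper's proof. It takes $\alpha=18$, shows $\heat(1)>9cn/2$ in the unsatisfiable case and $\heat(1)<4cn$ in the satisfiable case, and tests $\heat(1)$ alone.

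\paragraph{Two further remarks.} First, it is this additive comparison on $\heat(1)$ that forces $t=\Theta(cn)$. The roughly $2^t$ ratio between the two optimal values, which you cite, is not the operative constraint: that ratio already exceeds $c$ for $t=O(\log c)$.

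Second, thresholding $\heat(1)$ at an integer $T$ is a genuine comparison $a_1\ge 2^T b_1$ of polynomial-size integers. Your clause $\heat(0)\le C'cN2^{-N}$ instead compares $a_0/b_0$ with $2^{C'cN2^{-N}}$, which is not such a comparison. It could be handled with Taylor bounds and some slack, but there is no need. Your clause $\heat(1)\le CcN$ is redundant.
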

\begin{proof}
    We prove this via an oracle reduction from $\SAT$.
    Let $\phi$ be a $\SAT$ instance on $n$ variables and $\psi$ be the resulting instance on $n+t+1$ variables after applying \cref{lem:sat-promise-conv} for $t=\alpha c n$ with constant $\alpha>1$ that we will choose later.
    By construction, if $\phi$ had $Z(\phi)$ satisfying assignments, then $\psi$ must have $Z(\psi) = 1+2^t Z(\phi)$ satisfying assignments.
    This implies that $1 \leq Z(\psi) \leq 1 + 2^{n+t} < 2^{n+t+1}$.
    We convert $\psi$ into a Boolean circuit $\circuit$ whose output is the first wire and then append an erasure gate to that wire to obtain circuit $\circuit'$.
    Let $\heatoptimal(x)$ be some heat function that minimizes the expected heat release on the induced input distribution $\dist$ to the added erasure gate.
    Since $\psi$ evaluates to both true and false on some input,
    we know that $\E_{x \sim \dist}[\heatoptimal(x)] > 0$.
    If $\phi$ was unsatisfiable, then $\psi$ has exactly $1$ satisfying assignment and $\Pr_{\dist}[1] = 1/2^{n+t+1}$.
    If $\phi$ had some satisfying assignment then $\Pr_{\dist}[1] \geq (1 + 2^t)/2^{n+t+1}$.
    Let $\heat$ be a valid heat function such that $\E_{x \sim \dist}[\heat(x)] \leq c \E_{x \sim \dist}[\heatoptimal(x)]$.
    By \cref{prop:ef-min-heat-fn} we know that $\heatoptimal(x) = -\log(\Pr_{\dist}[x])$ and thus if $\phi$ was unsatisfiable we have:
    \begin{displaymath}
        (1-1/2^{n+t+1}) \heat(0) \leq \E_{x\sim \dist}[\heat(x)] \leq c \binaryEntropy(1/2^{n+t+1}) \leq c (n+t+1+\log e)/2^{n+t+1},
    \end{displaymath}
    which implies $\heat(0) \leq \frac{c(n+t+1+\log e)}{2^{n+t+1}-1} \leq \frac{c(n+t+1+\log e)}{2^{n+t}}$.
    Since $\heat$ must be valid \cref{lem:turgut-lemma} gives us:
    \begin{align}
        \heat(1) & \geq - \log(1-2^{-\heat(0)})\nonumber\\
                 & \geq - \log(\heat(0) \ln 2)\nonumber\\
                 & \geq \log \left(\frac{2^{n+t}}{c(n+t+1+\log e) \ln 2}\right) \nonumber\\
                 &= n+ t - \log\left(c(n+t+1 + \log e)\ln(2)\right)\nonumber\\
                 &= \alpha cn+n - \log\left(c((\alpha c+1)n+1 + \log e)\ln(2)\right)\nonumber\\
                 &> \alpha cn - 2\log( 5\alpha cn) \qquad \textrm{since $\alpha,c,n \geq 1$} \label{eq:heat-one-upper}
    \end{align}
    But if $\phi$ were satisfiable then we must have that:
    \begin{displaymath}
        \Pr_{\dist}[1] \heat(1) \leq  \E_{x \sim \dist}[\heat(x)] \leq c \binaryEntropy(\Pr_{\dist}[1]) \leq c \Pr_{\dist}[1] \log(e/\Pr_{\dist}[1])
    \end{displaymath}
    Which implies:
    \begin{align*}
        \heat(1) & \leq c \log(e/\Pr_{\dist}[1])\\
        &= c(\log(e) - \log(\Pr_{\dist}[1]))\\
        &\leq c\left(\log(e) - \log\left(\frac{1+2^t}{2^{n+t+1}}\right)\right)\\
        &=c\left( n + t + 1 + \log e - \log(1+2^t)\right)\\
        &\leq c(n+1+\log e)\\
        &< 4cn.
    \end{align*}
    As long as $\alpha c n \geq 18$, \cref{eq:heat-one-upper} is at least $\alpha c n / 4$.
    For simplicity, we will fix $\alpha = 18$ to get that $\heat(1) > 9 c n /2$ when $\phi$ is unsatisfiable.
    But, if $\phi$ has a single satisfying assignment then $\heat(1) < 4 cn$.
    Thus finding a valid heat function $\heat$ for this erasure gate such that $\E_{x \sim \dist}[\heat(x)] \leq c \E_{x \sim \dist}[\heatoptimal(x)]$ tells us whether $\phi$ has a satisfying assignment.
\end{proof}

The $\NP$-hardness of multiplicative approximation is unfortunate, as being able to estimate the optimal heat functions for each of $T$ gates in a circuit to multiplicative error $c$ immediately gives us a thermodynamic circuit that is within a multiplicative factor $c$ of optimal.
Like all $\NP$-hard problems,
it also rules out a polynomial-time \emph{randomized} algorithm unless $\NP \subseteq \BPP$.

There \emph{is} a polynomial-time randomized algorithm for the problem of finding optimal heat functions with \emph{additive} error on the expected released heat.
Finding heat functions for each of the $T$ gates with additive error at most $\varepsilon$ will give us a thermodynamic circuit where the additive error (i.e. expected excess heat dissipation) across all gates is at most $\varepsilon T$.
By setting $\varepsilon = \varepsilon'/T$ we can obtain an additive error $\varepsilon'$ for the full circuit that doesn't scale with $T$.
The basic idea is simple: randomly sample inputs to the circuit and use them to empirically estimate the true input distribution to the gate with add-one (or Laplace) smoothing.\footnote{The smoothing factor gives us a straightforward way to ensure that our produced heat function $\heatestimate$ never sets $\heatestimate(x) = \infty$ on any input.}

\begin{lemma}
\label{lem:additive-approx-easy}
    Fix any $\varepsilon>0$.
    Given a circuit $\circuit$ where all gates act on a constant number of bits and one such gate $g$ in $\circuit$, let $\dist$ be the induced marginal distribution over the inputs to gate $g$ when $\circuit$ receives the uniform distribution over its inputs.
    Let $\heatoptimal$ be a valid heat function for $g$ that minimizes $\E_{x \sim \dist}[\heatoptimal(x)]$.
    There is a $\poly(n, 1/\varepsilon, 1/\delta)$ time randomized algorithm that produces a valid heat function $\heatestimate$ such that $\E_{x \sim \dist}[\heatestimate(x)] \leq \E_{x \sim \dist}[\heatoptimal(x)] + \varepsilon$ with probability at least $1-\delta$.
\end{lemma}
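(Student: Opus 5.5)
The plan is to estimate the gate's input distribution by sampling and then plug the estimate into the formula of \cref{prop:ef-min-heat-fn}. Let $\domain$ be the input domain of $g$; since $g$ acts on $O(1)$ wires, $|\domain| = O(1)$. I draw $N$ uniformly random inputs to $\circuit$ and simulate $\circuit$ on each one, which takes $\poly(n)$ time per sample. For each $x \in \domain$ I record the count $c_x$ of how often $g$ receives $x$. I then form the Laplace-smoothed estimate $\hat p_x = (c_x+1)/(N+|\domain|)$ and set $\hat q_y = \sum_{x \in f^{-1}(y)} \hat p_x$. The output heat function is
\[
\heatestimate(x) = \log(\hat q_{f(x)}/\hat p_x),
\]
which is finite everywhere because $\hat p_x > 0$.

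\textbf{Validity.} For every $y$ in the range of $f$,
\[
\sum_{x\in f^{-1}(y)} 2^{-\heatestimate(x)} = \sum_{x \in f^{-1}(y)} \hat p_x/\hat q_y = 1,
\]
so $\heatestimate$ is valid by \cref{lem:turgut-lemma}. This holds with certainty, whatever the samples are.

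\textbf{Excess heat.} Write $p$ for the true distribution $\dist$ and $q = f(\dist)$. By \cref{prop:ef-min-heat-fn}, $\E_{x\sim\dist}[\heatoptimal(x)] = \sum_x p_x \log(q_{f(x)}/p_x)$. Subtracting this from $\E_{x\sim\dist}[\heatestimate(x)]$ gives
\[
\E_{x\sim\dist}[\heatestimate(x)] - \E_{x\sim\dist}[\heatoptimal(x)] = \DKL(p\,\|\,\hat p) - \DKL(q\,\|\,\hat q) \le \DKL(p\,\|\,\hat p),
\]
where the inequality uses Gibbs' inequality (\cref{prop:gibbs-inequality}). Equivalently, the excess equals $\sum_y q_y \DKL(p_{\cdot|y}\,\|\,\hat p_{\cdot|y})$, which is also nonnegative. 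So it suffices to show that $\DKL(p\,\|\,\hat p) \le \varepsilon$ with probability at least $1-\delta$.

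\textbf{Concentration.} By Hoeffding (\cref{prop:hoeffding}, applied to each indicator) and a union bound over the $O(1)$ elements of $\domain$, with $N = O(\eta^{-2}\log(|\domain|/\delta))$ samples we have $|c_x/N - p_x| \le \eta$ for all $x$ with probability at least $1-\delta$. On this event $|\hat p_x - p_x| \le \eta + |\domain|/N$.

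\textbf{Bounding the divergence.} This is the main obstacle, because inputs with tiny $p_x$ could make $\log(p_x/\hat p_x)$ large. I would split $\domain$ using a threshold $\tau$.
\begin{itemize}
\item For $p_x \le \tau$: the smoothing gives $\hat p_x \ge 1/(N+|\domain|)$, so $p_x \log(p_x/\hat p_x) \le \tau \log(\tau(N+|\domain|))$.
\item For $p_x > \tau$: if $\eta + |\domain|/N \le \tau\varepsilon'$, then $\hat p_x \ge p_x(1-\varepsilon')$, so $\log(p_x/\hat p_x) = O(\varepsilon')$ and these terms contribute $O(\varepsilon')$ in total.
\end{itemize}
Summing over the $|\domain| = O(1)$ inputs, I would choose $\tau = \Theta(\varepsilon/\log(N/\varepsilon))$, $\varepsilon' = \Theta(\varepsilon)$ and $\eta = \Theta(\tau\varepsilon)$. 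Then $N = \poly(1/\varepsilon)\cdot\log(1/\delta)$ up to log factors; the mild circularity between $N$ and $\tau$ is resolved by taking $N$ a sufficiently large polynomial. This makes the total divergence at most $\varepsilon$. The overall running time is $N \cdot \poly(n)$, which is within $\poly(n, 1/\varepsilon, 1/\delta)$.
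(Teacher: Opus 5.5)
Your proof is correct and shares the paper's skeleton: the same add-one-smoothed empirical estimate, the same plug-in heat function from \cref{prop:ef-min-heat-fn}, validity via \cref{lem:turgut-lemma}, and the same reduction of the excess heat to $\DKL(p\,\|\,\hat p) - \DKL(q\,\|\,\hat q) \le \DKL(p\,\|\,\hat p)$.

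You diverge from the paper only in how you control $\DKL(p\,\|\,\hat p)$. The paper bounds it only in expectation over the samples. It cites Braess, Forster, Sauer and Simon's result that the add-one estimator satisfies $\E[\DKL(p\,\|\,\hat p)] \le \log(1+(d-1)/(t+1))$. It then applies Markov's inequality to the nonnegative excess, which needs $t = O(d/(\varepsilon\delta))$ samples. You instead prove a high-probability bound directly. You use concentration of each empirical frequency plus a union bound, and then bound the divergence deterministically on the good event. For that you split inputs at a threshold $\tau$: rare inputs are handled by the smoothing floor $\hat p_x \ge 1/(N+d)$, and the rest by relative accuracy.

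The trade-off is as follows. Your route is self-contained and uses $\tilde O(\varepsilon^{-4}\log(1/\delta))$ samples for constant $d$, so it is logarithmic rather than linear in $1/\delta$, but polynomially worse in $1/\varepsilon$. The paper's route is shorter given the citation and is linear in $1/\varepsilon$.

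Two small points. First, \cref{prop:hoeffding} as stated in the paper covers only sums of fair bits. For the Bernoulli$(p_x)$ indicators you need the general Hoeffding bound $\Pr[|c_x/N - p_x| > \eta] \le 2e^{-2N\eta^2}$, which is standard. Second, your small-$p_x$ estimate $\tau\log(\tau(N+d))$ implicitly assumes $\tau(N+d) \ge 1$; otherwise those terms are nonpositive. Your parameter choice guarantees this, but it is worth saying.
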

\begin{proof}
    Let $f$ be the function computed by gate $g$, $\domain$ be the domain of $f$, and $d= \abs{\domain}$.
    We pick $t$ inputs to $\circuit$ uniformly at random, evaluate $\circuit$ on these inputs, and count the number of times we see each input to $g$.
    Let $\counts: \domain \to \Z_{\geq 0}$ be the obtained counts for each input $x \in \domain$ to $g$.
    From these counts, we define some empirical distribution $\widehat{\dist}$ and let $\heatestimate(x) = \log(\Pr_{f(\widehat{\dist})}[f(x)] / \Pr_{\widehat{\dist}}[x])$ be the optimal heat function (per \cref{prop:ef-min-heat-fn}) if $\widehat{\dist}$ were the actual input distribution to $g$.
    We use $\heatestimate_\counts, \widehat{\dist_\counts}$ to denote the choices of $\widehat{\dist}$ and $\heatestimate$ respectively resulting from observing counts $\counts$.
    We use $\countDist$ to denote the distribution over the observed counts.
    This gives us that:
    \begin{align*}
    & \mathrel{\phantom{=}}
        \E_{\counts \sim \countDist}[\E_{x \sim \dist}[\heatestimate_\counts(x) - \heatoptimal(x)]]
    \\ & =
        \E_{\counts \sim \countDist}\left[ \E_{x \sim \dist} \left[
            \log\left( \frac {\Pr_{f(\widehat{\dist_\counts})}[f(x)]} {\Pr_{\widehat{\dist_\counts}}[x]} \right)
            - \log\left( \frac {\Pr_{f(\dist)}[f(x)]} {\Pr_{\dist}[x]} \right)
            \right]\right]
        \quad \hspace{-10pt}\text{definition of $\heatestimate_\counts$; \cref{prop:ef-min-heat-fn} for $\heatoptimal$}
    \\ & =
        \E_{\counts \sim \countDist}\left[ \E_{x \sim \dist} \left[
            \log\left( \frac {\Pr_{\dist}[x]} {\Pr_{\widehat{\dist_\counts}}[x]} \right)
            - \log\left( \frac {\Pr_{f(\dist)}[f(x)]} {\Pr_{f(\widehat{\dist_\counts})}[f(x)]} \right)
            \right]\right]
        \qquad \text{rearrange terms}
    \\ & =
        \E_{\counts \sim \countDist}\left[ \E_{x \sim \dist} \left[
            \log\left( \frac {\Pr_{\dist}[x]} {\Pr_{\widehat{\dist_\counts}}[x]} \right)
            \right]
            - \E_{y \sim f(\dist)} \left[ \log\left( \frac {\Pr_{f(\dist)}[y]} {\Pr_{f(\widehat{\dist_\counts})}[y]} \right)
            \right]\right]
        \quad \hspace{-5pt} \text{since $x \sim \dist \implies f(x) \sim f(\dist)$}
    \\ & =
        \E_{\counts \sim \countDist}\left[\DKL(\dist || \widehat{\dist_\counts})
            - \DKL(f(\dist) || f(\widehat{\dist_\counts}))
            \right]
        \qquad \text{definition of $\DKL$}
    \end{align*}
    This connection between excess heat dissipation and the drop in KL-divergence associated with computing $f$ on a device optimized for the wrong distribution over inputs is well known in the stochastic thermodynamics of computation literature.\footnote{A universal lower bound on $\EP$ of this form was first identified by Kolchinsky and Wolpert~\cite{KW17} and was later called a \emph{mismatch cost}.
    The same authors expanded on this concept in \cite{WK20} and we include and re-prove their generalized version as \cref{prop:mismatch-cost}.}
    Since KL-divergence is non-negative, $E_{\counts \sim \countDist}\left[\DKL(\dist || \widehat{\dist_\counts}) - \DKL(f(\dist) || f(\widehat{\dist_\counts}))\right] \leq E_{\counts \sim \countDist}\left[\DKL(\dist || \widehat{\dist_\counts})\right]$ and the expected excess heat dissipation from using $\widehat{\dist}$ to pick our choice of $\heatestimate$ is at most the expected KL-divergence between $\dist$ and $\widehat{\dist}$.
    It is known that the add-one smoothing estimator is well suited to this task.
    For count vector $\counts \sim \countDist$, this estimator sets $\Pr_{\widehat{\dist_{\counts}}}[x] = (\counts(x)+1)/(t+d)$.
    Braess, Forster, Sauer and Simon~\cite{DBLP:conf/alt/BraessFSS02} showed that for any true distribution $\dist$, this estimator after $t$ samples satisfies
    \begin{displaymath}
        \E_{\counts \sim \countDist}\left[ \DKL(\dist || \widehat{\dist_\counts})\right] \leq \log\left(1+\frac{d-1}{t+1}\right),
    \end{displaymath}
    which is at most $(d-1) \log(e) /(t+1)$.

    Thus by using this choice of $\widehat{\dist}$ and $\heatestimate$ we can set the number of samples $t = \ceil{\log(e) d / (\varepsilon \delta)}$ to get that $\E_{\counts \sim \countDist}[\E_{x \sim \dist}[\heatestimate_\counts(x) - \heatoptimal(x)]]  \leq \varepsilon \delta$.
    By Markov's inequality, this means that the probability over $\counts \sim \countDist$ that $\E_{x \sim \dist}[\heatestimate_\counts(x) - \heatoptimal(x)] \geq \varepsilon$ is at most $\delta$.
    Since $g$ operates on a constant number of bits, $d$ is also a constant and this can be done in a polynomial number of steps.
\end{proof}

\section{Bounds on thermodynamic costs of fixed circuits}
\label{sec:heat-functions-fixed-circuits}


\Cref{sec:pp-hard} was about efficient algorithms for finding optimal heat functions in a given circuit in order to minimize heat released on the uniform input distribution.
\Cref{lem:additive-approx-easy} aside, the hardness results imply efficient algorithms for determining optimal thermodynamic parameters are unlikely to exist in general.
Nevertheless we can analyze fixed families of circuits to derive bounds in these fixed cases.
We first warm up by analyzing two simple implementations of an $n$-ary AND function:
a balanced tree (depth $\log n$),
and a linear chain of (depth $n$),  of 2-input AND gates
(\Cref{sec:heat-functions-AND-entropy-depth-tradeoff}).
We then study the Batcher even-odd sorting networks,
analyzing first the Boolean case (\Cref{sec:batcher-bits})
then the case where each wire carries a whole integer
(\Cref{sec:batcher-integers}).

For all of these results, the upper and lower bounds we show on entropy production are for $\EP_\circuit$ (i.e. they assume that the input distribution is sampled independently from the outside world).
This assumption is perhaps stronger than it looks.
Since $\EP$ measures the total increase in entropy of the universe,
this is essentially saying the inputs,
which presumably came from somewhere in the universe,
are distributed independently of the state of the rest of the universe.
Nevertheless, upper bounds on $\EP_\circuit$ which are smaller than $\EF$ are still worth studying.
When $\EP_\circuit \ll \EF$ we know that most of the released heat is offsetting the drop in entropy of the marginal input distribution.
When $\EP_\circuit = 0$ we know that no circuit could compute the same mapping of inputs to outputs while releasing less expected heat.

\subsection{Warm-up: entropy-depth tradeoff for computing AND}
\label{sec:heat-functions-AND-entropy-depth-tradeoff}

We note that the circuits we consider in this example are compatible with the overwriting model.
Thus the bounds we show here apply equally well to that model of computation under the assumption that inputs to the circuit across runs are independently distributed.

Assume that the input is drawn from the uniform distribution $U_n$ over $\{0,1\}^n$ and say that you want to cleanly compute the function $\text{AND}(x_1, \ldots, x_n) = (x_1 \wedge \ldots \wedge x_n)$ where $n$ is a power of two using only and gates of the form $\bigwedge(x_1,x_2,z) = (x_1, x_2, x_1 \wedge x_2)$ and erasure gates $E(x) = 0$.
Per \cref{prop:ef-min-heat-fn} we know that the optimal heat functions for each of these gates depend on their distributions over inputs.
Since $\bigwedge(x_1,x_2,z)$ leaves inputs $x_1,x_2$ untouched and it overwrites $z$ with a deterministic function only of $x_1,x_2$, when inputs $x_1, x_2$ are independently distributed from $z$, the optimal heat function for this gate only depends on the marginal probability distribution over $z$.
We can see this as follows for such a product input distribution $\dist = \dist_{x_1,x_2}  \otimes \dist_z$:
\begin{displaymath}
    \heat_\wedge(x_1,x_2,z) = \log \left(\frac{\Pr_{\bigwedge(\dist)}[(x_1,x_2, x_1 \wedge x_2)]}{\Pr_{\dist}[(x_1,x_2,z)]}\right) = \log \left(\frac{\Pr_{\dist_{x_1,x_2}}[(x_1,x_2)]}{\Pr_{\dist_{x_1,x_2}}[(x_1,x_2)]\Pr_{\dist_{z}}[z]}\right) = \log \left(\frac{1}{\Pr_{\dist_z}[z]}\right)
\end{displaymath}
Similarly, per \cref{prop:ef-min-heat-fn}, the optimal heat function for $E$ on input distribution $\dist$ is given by:
\begin{displaymath}
    \heat_E(x) = \log\left(\frac{1}{\Pr_{\dist}[x]}\right)
\end{displaymath}
Importantly we also have that the expected heat release of these gates is given by $\E_{x \sim \dist}[\heat_E(x)] = \binaryEntropy(\Pr_{\dist}[0]) = \binaryEntropy(\Pr_{\dist}[1])$.

With these optimal heat functions in mind, we will now analyze the local entropy production of two different fixed Boolean circuits computing $\text{AND}(x_1, \ldots, x_n) = (x_1 \wedge \ldots \wedge x_n)$ using these gates.

The first we consider is the binary tree composed of $\bigwedge(x_1,x_2,z)$ gates on pairs of inputs $x_1,x_2$ and an initially zeroed wire $z$.
After executing each of these gates, we apply $E(x)$ gates on the wires containing the values of $x_1$ and $x_2$.
Since each $\bigwedge(x_1,x_2,z)$ gate in this circuit is run with $z=0$, they can all be implemented with the heat function $\heat_\wedge(x_1,x_2,0) = 0, \heat_\wedge(x_1,x_2,1) = \infty$ and have an expected heat release of zero.
Thus it is only necessary to keep track of the expected heat release of the erasure gates.
The erasure gates deployed after applying the $t$'th layer of $\bigwedge(x_1,x_2,z)$ gates are each erasing bits that have the value $1$ with probability $2^{-2^t}$.
Thus the expected heat release of each of these gates is $\binaryEntropy(2^{-2^t})$.
Since there are $n/2^{t}$ such erasure gates in layer $t$, the expected heat released in this layer is $(n/2^{t})\binaryEntropy(2^{-2^t})$.
Thus the total expected heat release is
\begin{displaymath}
    \sum_{t=0}^{\log (n)-1} \frac{n}{2^t}\binaryEntropy(2^{-2^t}).
\end{displaymath}
Meanwhile the circuit maps a state with $n$ bits of entropy to only $\binaryEntropy(2^{-n})$ bits of entropy.
By \cref{def:local-ep}, the circuit entropy production is therefore:
\begin{displaymath}
    \EP_\circuit = \binaryEntropy(2^{-n}) - n + \sum_{t=0}^{\log (n)-1} \frac{n}{2^t}\binaryEntropy(2^{-2^t}) = \sum_{t=1}^{\log(n)}\frac{n}{2^t}\binaryEntropy(2^{-2^t}) \geq \frac{n}{2}\binaryEntropy(1/4).
\end{displaymath}
Which is $\Omega(n)$.

Next we consider the serial circuit where the $i$'th $\bigwedge(x_1,x_2,z)$ gate has the AND of the first $i$ inputs as its first argument, $x_{i+1}$ as its second argument, and 0 as its third argument.
The circuit then applies erasure gates on the wires containing inputs $x_1$ and $x_2$.
This circuit is illustrated for $n=4$ in \cref{fig:and-example-circuit}.
Again these $\bigwedge(x_1,x_2,z)$ gates can be implemented with 0 expected heat release under the promise that $z=0$, so the only source of heat release are the erasure gates.
There are a total of $n$ erasure gates in this circuit that erase the individual inputs and $n-2$ erasure gates that erase the partial products $x_1 \wedge x_2$ through $x_1 \wedge \ldots \wedge x_{n-1}$.
Since the inputs are unbiased bits,
the expected heat release of these gates is $\binaryEntropy(1/2) = 1$ per gate and $n$ across all gates.
For the partial product erasures, the $i$'th such gate erases the value of $x_1 \wedge \ldots \wedge x_{i+1}$.
Again we know that the $x_i$ are independently and uniformly distributed bits and therefore the $i$'th partial erasure gate is erasing a bit that has the value $1$ with probability $2^{-i-1}$.
The expected heat release for this gate is therefore $\binaryEntropy(2^{-i-1})$.
Thus our total expected heat release for the circuit is $n + \sum_{i = 1}^{n-2} \binaryEntropy(2^{-i-1})$.
Since we have the same overall mapping of inputs to outputs as the binary tree case, our change in entropy is $\binaryEntropy(2^{-n})-n$.
This gives us a circuit entropy production of
\begin{displaymath}
    \EP_\circuit = (\binaryEntropy(2^{-n}) - n) + n + \sum_{i = 1}^{n-2} \binaryEntropy(2^{-i-1}) = \sum_{i = 2}^{n} \binaryEntropy(2^{-i}) \leq \sum_{i=2}^n 2^{-i} (i +\log(e)) \leq (3 + \log(e))/2
\end{displaymath}
which is $O(1)$.

Another way one could compute the circuit entropy production of these circuits would be to use \cref{lem:mutual-info-positive-ep} and sum up the drops of mutual information associated with executing each gate.
Since the gates we are considering here are each locally optimal, this would give us the same values of $\EP_\circuit$ that we computed here.
Doing so is the primary approach used in prior works (i.e. \cite{YYW25}) to produce lower bounds on the local entropy production of circuits.
By using heat functions, we can exactly compute the circuit entropy production by only analyzing the local behavior of the individual gates.
This lets us ignore more global properties of the circuit like the drops in mutual information, making our approach better suited to analyzing complex circuits where these mutual information terms are hard to quantify.

\subsection{Batcher's odd-even mergesort network}
\label{sec:batcher-bits}
We now give bounds on the circuit entropy production of Batcher's odd-even mergesort network~\cite{batcher1968}.
We note that the Landauer cost of a sorting network acting on bits chosen uniformly at random is $\Omega(n)$,
since for an input with exactly $k \approx n/2$ $0$'s,  it collapses $\binom{n}{k} \geq (n/k)^k$ different inputs to the single output $0^k 1^{n-k}$.
As a result, by \Cref{prop:second-law} applied to the whole circuit, any sorting network necessarily
releases heat at least $\Omega(n)$.
In this section we show that the Batcher network gates can be assigned heat functions meeting this bound
on heat released:
$O(n)$
(\Cref{thm:batcher-heat-released-upper-bound}).
It remains an open question whether entropy production is necessarily $\Omega(n)$ in any sorting network
but
we are able to show that the Batcher's odd-even mergesort network on binary inputs does have cost $\Omega(n)$ (\Cref{thm:batcher-ep-omega-n}). We then turn to the integer case, where the Landauer cost (i.e. drop in entropy) of a sorting network acting on a uniformly random permutation of $\{0,1,\ldots,n-1\}$
is $\Theta(n \log n)$. In this case, however, we show in \cref{thm:batcher-entropy-integer-high} that, when the input is 
the distribution resulting from picking $n$
elements of an ordered set without replacement uniformly at random, then
Batcher's network releases heat $\Theta(n \log^2 n)$ with optimal heat functions, which is directly proportional to the size of the network. In \cref{thm:pre-sort-helps-upper-bound} we leverage a sorting network of Leighton and Plaxton 
~\cite{LP98}, to obtain a sorting network with size $\Theta(n \log^2 n)$ but only $\Theta(n \log n)$ heat dissipation.
In contrast, 
by generalizing to a sorting circuit with more than $n$ wires,
it \emph{is} possible to achieve zero circuit entropy production; see \Cref{sec:clean-compaction-zero-ep}.

The Batcher network is illustrated in \Cref{fig:batcher} with $n=16$ inputs.
Each vertical line represents a comparator, which sorts two bits, putting the
smaller of the two on the ``upper" line, i.e., the line with the smaller index,
and the larger on the ``lower" line. If $a \leq b$ then we say that the
comparator gate $\compare(a,b) = \compare(b,a) = (a,b)$.
For simplicity in this section, we will assume that the number of sorting
inputs $n$ is a power of two.

More formally, we will index wires $w_i$ for $i \in \{0, \ldots n-1\}$ for $n$ being a power of two.
We use $w(\ell)$ to denote the value stored in wire $w$ after layer $\ell$.
By the definition of Batcher's, we know that $w_0(0) \leq \ldots \leq w_{n/2-1}(0)$ and $w_{n/2}(0) \leq \ldots \leq w_{n-1}(0)$ as these two halves of the input were sorted during the previous recursive step.
We denote a comparator gate acting on indices $i$ and $j$ with $i < j$ as $(i,j)$.
Let $k_\ell = \log(n) - \ell$.
For layer $1$ our set of gates is $$\gateSet_1 = \{(0, n/2), \ldots, (n/2-1, n-1)\}.$$
The set of gates at layer $\ell > 1$ are
\begin{equation}\label{eqn:gate-locs}
    \gateSet_\ell = \{(i, i+2^{k_\ell}) \; | \; 2^{k_\ell} \leq i < n-2^{k_\ell} \text{ and } \floor{i/2^{k_\ell}} \equiv 1 \bmod 2\}.
\end{equation}

To account for the total number of comparators needed, consider first the
number $M(n)$ in the non-recursive part (layers 1 to $\log n$). We have that
$M(2) = 1$ and
\[
    M(n) = 2M(n/2) + n/2-1,
\]
where the two $M(n/2)$ terms account for the separate sorting of odd-numbered
bits and even-numbered bits, and the $n/2-1$ accounts for the last layer.
Therefore, $M(n) = \Theta(n \log n).$ Now let $C(n)$ be the total number of
comparator gates needed to sort $n = 2^k$ bits. We have that $C(2) = 1$ and for
$n = 2^k \ge 4$,
\[
    C(n) = 2C(n/2) + M(n) = 2C(n/2) + \Theta(n \log n).
\]
Therefore $C(n) = \Theta(n \log^2 n)$.

\begin{figure}[tb]
    \centering
    \includegraphics[width=0.8\linewidth]{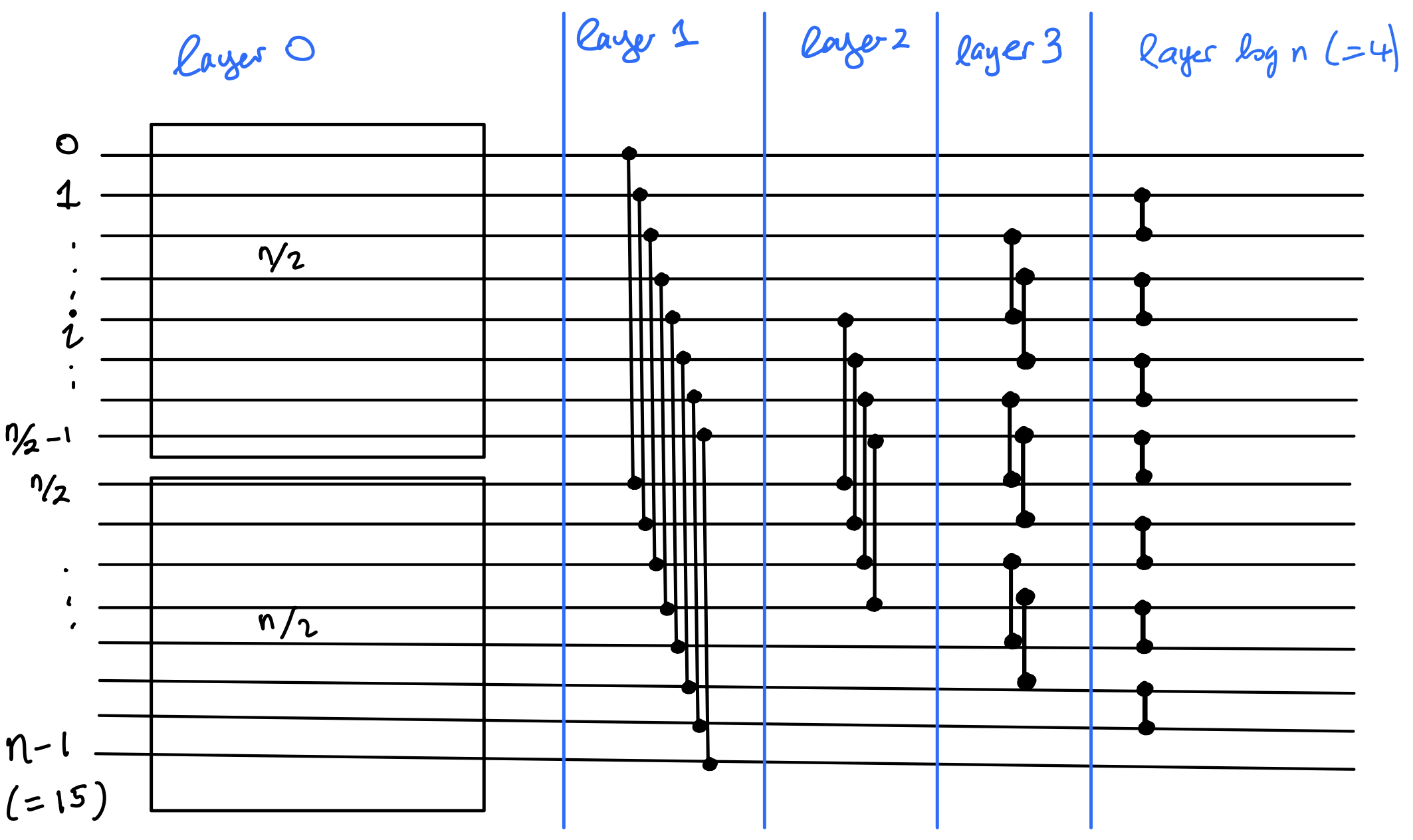}
    \caption{Recursive case of Batcher's odd-even mergesort network for 16 inputs. The base case is $n=4$ is shown in \Cref{fig:batcher-n-4}.}
    \label{fig:batcher}
\end{figure}

\subsubsection{Binary inputs}
We first consider the case where the input to be sorted is a length $n$
bitstring selected uniformly at random. It is known that the correctness of a sorting network
can be determined by only checking its behavior on such inputs.
\begin{proposition}[\cite{Knuth98a}]\label{prop:zero-one-principle}
    A sorting network on $n$ inputs from any ordered set is correct iff it correctly sorts all inputs in $\{0,1\}^n$.
\end{proposition}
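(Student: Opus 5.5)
The plan is the standard 0-1 principle argument. The forward direction is trivial: $\{0,1\}^n$ is a subset of inputs over any ordered set containing two elements, so a network that correctly sorts all inputs in general must in particular sort all bit strings. (If the ordered set has a single element, every network trivially sorts, so we may assume at least two elements and identify $0<1$ with two of them.)

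For the converse, the key tool is that comparator networks commute with monotone maps. Fix a nondecreasing function $h$ from the ordered set to $\{0,1\}$. For a single comparator, $\compare(h(a),h(b)) = (h(\min(a,b)), h(\max(a,b)))$: since $h$ is monotone, $h(\min(a,b)) = \min(h(a),h(b))$ and $h(\max(a,b)) = \max(h(a),h(b))$. Wires not touched by the comparator are unchanged on both sides. By induction on the gates $g_1,\ldots,g_T$, using the lifted maps of \Cref{defn:lifting-f}, the circuit state after step $t$ on input $h(x)$ (coordinatewise) equals $h$ applied coordinatewise to the state after step $t$ on input $x$. In particular the network's output on $h(x)$ is $h$ of its output on $x$.

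Now suppose the network sorts every $0$-$1$ input but fails on some input $x$. Then its output $y$ has an index $i$ with $y_i > y_{i+1}$. Take the threshold map $h(z) = 1$ if $z \ge y_i$ and $0$ otherwise, which is monotone. Then $h(y)_i = 1$ and $h(y)_{i+1} = 0$, so the output on the binary input $h(x)$ is $h(y)$, which is unsorted. This contradicts the hypothesis.

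The only step needing care is the commutation lemma for one comparator, including the tie case $a=b$, which is immediate from the definition $\compare(a,b)=\compare(b,a)=(a,b)$ for $a\le b$. I expect no real obstacle; the induction over gates is routine given \Cref{def:sr-circuit}.
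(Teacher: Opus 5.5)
Your proof is correct: commuting comparator networks with monotone maps, then applying the threshold function at $y_i$, is exactly the standard argument (Knuth's Theorem Z), and the paper only cites this result without giving a proof. One small wording fix: for a fixed one-element ordered set the forward direction actually fails, since every network is trivially correct there but not every network sorts $\{0,1\}^n$. So that case should be excluded by hypothesis, or by reading ``correct'' as ``correct over every ordered set'', rather than dismissed with ``we may assume''.
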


The binary input restriction allows for comparator gates to be implemented in a
more thermodynamically efficient way. This is because the inputs $(0,0)$ and
$(1,1)$ are the unique pre-images of outputs $(0,0)$ and $(1,1)$ respectively,
meaning that we can set $\heat(0,0) = \heat(1,1) =0$ and we only need to
optimize the values of $\heat(0,1)$ versus $\heat(1,0)$ which both map to
output $(0,1)$.

\begin{theorem}\label{thm:batcher-heat-released-upper-bound}
    When given the uniform distribution over input $x$ in $\{0,1\}^n$, there is an assignment of heat function to every gate that results in the circuit releasing only $O(n)$ expected heat across all gates.
    Moreover, the construction need only use $2 \log(n) + 1$ distinct heat functions.
\end{theorem}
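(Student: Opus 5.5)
We plan to exploit the structure of comparators on bits: by the discussion preceding the theorem, $(0,0)$ and $(1,1)$ are sole preimages, so every comparator gets $\heat(0,0)=\heat(1,1)=0$. Only the pair $\heat(1,0),\heat(0,1)$ must satisfy $2^{-\heat(0,1)}+2^{-\heat(1,0)}\le 1$ by \cref{lem:turgut-lemma}. We will use two kinds of heat functions. The \emph{balanced} one has $\heat(0,1)=\heat(1,0)=1$; it costs at most $1$ on every input, so a gate using it costs $O(1)$. The \emph{biased} family has level $j$ given by $\heat(0,1)=-\log(1-2^{-j})\approx 2^{-j}/\ln 2$ and $\heat(1,0)=j$. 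If a gate sees an unsorted input $(1,0)$ with probability $p$, the biased level-$j$ function costs at most $p j + 2^{-j}\log e$. Using $\log n$ biased levels plus a few others gives the claimed count of about $2\log n+1$ distinct functions.

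First we analyze the recursion. The network sorts blocks of size $2^s$ for $s=1,\dots,\log n$ by merging two sorted halves, and merge stage $s$ touches $n/2^{s}$ independent merges of size $2^s$. The bound therefore follows if each merge of size $m$ costs $O(m)$ total heat under the uniform distribution. Since the two halves are disjoint input sets, their numbers of ones $a,b$ are independent binomials $\mathrm{Bin}(m/2,1/2)$. Within the merge, the inputs to each comparator are a deterministic function of $(a,b)$, since sorted halves are determined by their counts. We will use the standard description of odd-even merge: at layer $\ell$, a comparator $(i,i+2^{k_\ell})$ sees $(1,0)$ only when $a$ and $b$ lie in a specific window. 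We expect this window to say that the boundary of the interleaved sequence lands near index $i$ and that $|a-b|$ is about $2^{k_\ell}$ up to a constant factor.

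Then we split the gates of a size-$m$ merge into two regions. The first region consists of gates near the ``boundary'' of the merge. For each layer there are only $O(1)$ such gates per relevant offset, so layer $\ell$ contains $O(m/2^{k_\ell})\cdot \Pr[\text{window}]$-many gates in expectation that can actually be active. These gates get the balanced heat function. The second region consists of gates far from where the $0/1$ boundary can plausibly fall. For these, Hoeffding (\cref{prop:hoeffding}) shows that the probability $p$ of seeing $(1,0)$ is at most $e^{-\Omega(d^2/m)}$, where $d$ is the distance to the median position. These gates get a biased function with level $j\approx\log(1/p)$, rounded to one of the $\log n$ allowed levels, so each costs $O(p\log(1/p)+p)$.

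Summing, the biased gates contribute $\sum_d O\big(e^{-\Omega(d^2/m)}\,d^2/m\big)$ per layer-position class, and the whole merge should total $O(m)$. The main obstacle is the bookkeeping in this middle step. Near the first layer, $|a-b|$ is of order $\sqrt m$ and layer $1$ alone has $m/2$ gates. Most of those gates see $(1,0)$ with nonnegligible probability only in a $\sqrt m$-wide band, because $|a-b|\lesssim\sqrt m$ with high probability. So we must verify that the number of gates with $p$ bounded below is $O(\sqrt m)$ per layer, or at least $O(m)$ summed over the $\log m$ layers, rather than $\Theta(m)$ per layer.

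We will try first to show that in layer $\ell$ a comparator sees $(1,0)$ only if $|a-b|\ge 2^{k_\ell}$ roughly. This gives probability $e^{-\Omega(4^{k_\ell}/m)}$, while only $O(1)$ positions per layer are near the boundary. Summing $O(m/2^{k_\ell})\cdot\min(1,\ldots)$-type terms geometrically should then give $O(m)$ per merge and $\sum_s (n/2^s)\,O(2^s)=O(n\log n)$. That is too weak by a log factor, so the true per-merge cost must be $O(\sqrt m)$ or geometric in $s$. We therefore expect to prove that a merge of size $m$ costs $O(\sqrt m\log m)$ or similar. Summing $(n/2^s)\,O(2^{s/2}s)$ over $s$ then gives $O(n)$, dominated by the small merges.
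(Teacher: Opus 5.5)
Your skeleton matches the paper's. You reduce to a single merge of size $m$ and note that every comparator input there is determined by the independent counts $a,b\sim\mathrm{Bin}(m/2,1/2)$. You then need an $O(m^{1-\delta})$ per-merge cost and sum the recurrence; the paper proves $O(m^{3/4}\log^2 m)$ per merge. The gap is your picture of which comparators see $(1,0)$, which is false at layer~2 of every merge.

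Let $A,B$ be the sorted halves. After layer~1 the top half holds $\min(A_i,B_i)$ and the bottom half holds $\max(A_i,B_i)$. So the wires read $0^{m/4}1^{m/4}0^{m/4}1^{m/4}$, up to $O(\sqrt{m\log m})$ fuzz at each boundary. The layer-2 gates $(i,i+m/4)$, $i\in[m/4,m/2)$, compare the second block with the third. By Hoeffding, the $\Theta(m)$ gates with $i\in[(m+m^{(1+\alpha)/2})/4,\,(m-m^{(1+\alpha)/2})/2]$ receive $(1,0)$ with probability $1-O(e^{-m^{\alpha}/4})$, so they almost surely \emph{swap}. Most of them are at distance $\Theta(m)$ from the median, which contradicts your bound $p\le e^{-\Omega(d^2/m)}$.

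Your conjectured window also fails. Take $m=8$ with both halves equal to $0011$, so $a=b$. Layer~1 does nothing, and both layer-2 gates $(2,4)$ and $(3,5)$ receive $(1,0)$, although $|a-b|=0<2^{k_2}=2$. The exact rule for $\ell\ge2$ is as follows. The residue-$r$ sub-merge has its unique possible swap at layer $\ell$ iff both zero counts $m/2-a$ and $m/2-b$, reduced mod $2^{k_\ell+1}$, lie in $(r,r+2^{k_\ell}]$. At layer~2 this holds for almost all of the $2^{k_2}=m/4$ residues.

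Consequently, with the balanced function and your $(0,1)$-biased levels $j\ge1$, each of these gates pays $\heat(1,0)\ge1$ with probability $1-o(1)$. Layer~2 alone then costs $\Omega(m)$ per merge, and the total is $\sum_s (n/2^s)\,\Omega(2^s)=\Omega(n\log n)$; as you yourself note, $O(m)$ per merge is not enough.

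The missing ingredient is a heat function biased toward the swap on the middle of layer~2: $\heat^{(1,0)}_\varepsilon(1,0)=\log(1/(1-\varepsilon))$ and $\heat^{(1,0)}_\varepsilon(0,1)=\log(1/\varepsilon)$, with $\varepsilon=2e^{-m^{\alpha}/4}$. The two edge bands of width $O(m^{(1+\alpha)/2})$ keep the uniform function. In your parametrization this is $j=\log(1/(1-\varepsilon))\approx\varepsilon\log e$, far below your levels. Its cost must be bounded by $\log(1/(1-\varepsilon))+\Pr[(0,1)]\log(1/\varepsilon)$, not by $pj+2^{-j}\log e$. This is also where the count $2\log n+1$ comes from: one such function and one $(0,1)$-biased function per recursion level, plus the uniform one.

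Only after layer~2 does your picture become accurate. The paper shows that then, except with probability $O(me^{-m^{\alpha}/4})$, the top half holds at most $m^{(1+\alpha)/2}/2$ ones and the bottom half at most that many zeros; later comparators only improve this. So in each layer $\ge3$ only $O(m^{(1+\alpha)/2})$ upper or lower gates see anything other than $(0,0)$ or $(1,1)$. The crossover gates can use $\heat^{(0,1)}_\varepsilon$ with $\varepsilon=m^{(\alpha-1)/2}$, giving $O(m^{(1+\alpha)/2}\log m)$ per layer.
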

\begin{proof}
    We start by bounding the expected heat released in each layer of the non-recursive merge step.
    We will use three different types of heat functions in this construction.
    $\heat^\unif$ is defined such that $\heat^\unif(x_i,x_j) = 0$ when $x_i=x_j$ and $\heat^\unif(x_i,x_j) = 1$ when $x_i \neq x_j$.
    This is an unbiased heat function that we will use either when inputs $(0,0)$ and $(1,1)$ are overwhelmingly common or when the number of gates in a group is sufficiently small.
    This heat function has the advantage that neither input $(0,1)$ or input $(1,0)$ releases too much heat in expectation.
    We also define $\heat^{(0,1)}_\varepsilon$ parameterized by $\varepsilon \leq 1/2$ such that $\heat^{(0,1)}_\varepsilon(0,0) = \heat^{(0,1)}_\varepsilon(1,1) = 0$, $\heat^{(0,1)}_\varepsilon(0,1) = \log(1/(1-\varepsilon))$, and $\heat^{(0,1)}_\varepsilon(1,0) = \log(1/\varepsilon)$.
    Intuitively, these functions are used to reduce the expected heat released by gates where the first input is heavily biased towards input $0$ while the second input is heavily biased towards input $1$.
    In particular, if a gate receives a marginal induced input distribution where the probability of input $(1,0)$ is at most $\varepsilon \leq 1/2$, then the expected heat released by this gate is at most $(1-\varepsilon) \log(1/(1-\varepsilon)) + \varepsilon \log(1/\varepsilon) \leq 2 \varepsilon \log(1/\varepsilon)$.
    Likewise we define $\heat^{(1,0)}_\varepsilon$ such that $\heat^{(1,0)}_\varepsilon(a,b) = \heat^{(0,1)}_{\varepsilon}(b,a)$ for $\varepsilon \leq 1/2$.
    These gates are used when the first input is heavily biased towards $1$ while the second input is heavily biased towards $0$.
    By the same reasoning as $\heat^{(0,1)}_\varepsilon$, if a gate receives a marginal induced input distribution where the probability of input $(0,1)$ is at most $\varepsilon \leq 1/2$, then the expected heat released by this gate is at most $2 \varepsilon \log(1/\varepsilon)$.
    In each layer $\ell$, we split gates up into one of three regions depending on the value $i$ for the gate $(i,i+2^{k_\ell})$ and use the same heat function for every gate in the region.
    We use closed intervals on regions to cleanly handle non-integer values; this has the side-effect of counting some wires as belonging to multiple regions and thus using different heat functions.
    We will analyze such wires as being a part of both regions and account for costs associated with both heat functions.
    This only increases our upper bound and significantly simplifies the analysis.
    Let $\alpha$ be some value in $(0,1)$ that we pick later and for now assume that $n \geq 32$.
    
    \paragraph{Layer 1}
    In this layer, we have that $2^{k_\ell} = n/2$.
    The \emph{upper region} contains all gates with $i \in \left[0, (n-n^{(1+\alpha)/2})/4\right]$ and uses the heat function $\heat^\unif$.
    Since we know that $w_0(0) \leq \ldots \leq w_{n/2-1}(0)$ and $w_{n/2}(0) \leq \ldots \leq w_{n-1}(0)$ and the input was distributed uniformly at random, \cref{prop:hoeffding} gives us that the probability that $w_i(0) = 1$ as well as the probability that $w_{i+n/2}(0) = 1$ are each at most $e^{-n^{\alpha}/4}$.
    Thus by a union bound, the probability that gate $(i,i+n/2)$ receives input $(0,0)$ is at least $1-2 \cdot e^{-n^{\alpha}/4}$ and so the expected released heat by every such gate is at most $2  e^{-n^{\alpha}/4}$.
    
    The \emph{middle narrow band} contains all gates with $i \in \left[(n-n^{(1+\alpha)/2})/4, (n+n^{(1+\alpha)/2})/4\right]$ and uses the heat function $\heat^\unif$.
    There are a total of $n^{(1+\alpha)/2}/2 + 1$ gates in this band that each release at most $1$ expected heat.

    The \emph{lower region} contains all gates with $i \in \left[(n+n^{(1+\alpha)/2})/4, n/2-1\right]$ and uses the heat function $\heat^\unif$.
    Using a symmetrical argument to the upper region case, \cref{prop:hoeffding} gives us that the expected released heat of every gate in this region is at most $2e^{-n^{\alpha}/4}$.
    
    Putting all of this together, the heat released in the first layer is at most $n^{(1+\alpha)/2}/2 + 2ne^{-n^{\alpha}/4}+1$.

    \paragraph{Layer 2}
    In this layer, we have that $2^{k_\ell} = n/4$.
    The \emph{upper narrow band} contains all gates with $i \in \left[n/4, (n+n^{(1+\alpha)/2})/4\right]$ and uses the heat function $\heat^\unif$.
    There are at most $n^{(1+\alpha)/2}/4 +1$ gates in this band that each only release at most $1$ expected heat.

    The \emph{middle region} contains all gates with $i \in \left[(n+n^{(1+\alpha)/2})/4,(n-n^{(1+\alpha)/2})/2\right]$ and uses the heat function $\heat^{(1,0)}_\varepsilon$ for some $\varepsilon \leq 1/2$ specified later.
    By the construction of Batcher's odd-even mergesort, we know that $w_{i+n/4}(1) = \max(w_{i+n/4}(0),w_{i-n/4}(0))$.
    Thus, by \cref{prop:hoeffding}, the probability that $w_{i+n/4}(1) = 1$ is at most the probability that $w_{i+n/4}(0) = 1$ or $w_{i-n/4}(0) = 1$ which is at most $2e^{-n^\alpha/4}$.
    Thus the probability each gate in this region receives the input $(0,1)$ is also at most $2 e^{-n^\alpha / 4}$.
    Picking $\varepsilon = 2 e^{-n^\alpha/4}$ and $\alpha = 1/2$ gives us that $\varepsilon \leq 1/2$ and thus the expected heat released by any gate in this region is at most $\log(e) e^{-n^\alpha/4} n^\alpha$.

    The \emph{bottom narrow band} contains all gates with $i \in [(n-n^{(1+\alpha)/2})/2, n/2-1]$ and uses the heat function $\heat^\unif$.
    There are only $n^{(1+\alpha)/2}/2$ gates in this band that each only release at most $1$ expected heat.

    Putting all of this together, the heat released in the second layer is at most $n^{(1+\alpha)/2}/2 + \log(e) e^{-n^\alpha/4} n^{1+\alpha} + 1$.

    \paragraph{Layer 3 and beyond}
    The following claim gives us that, after layer $2$, most $0$'s are already in the top half of wires and most $1$'s are already in the bottom half of wires.
    \begin{restatable}{claim}{gateshavebiasedhalvesclaim}\label{claim:gates-have-biased-halves}
        Let $\alpha \in (0,1)$.
        Except with probability $3n e^{-n^{\alpha}/4}/2$, the following hold for every $\ell \geq 2$:
        \begin{enumerate}
            \item The number of $i$ such that $i < n/2$ and $w_i(\ell) = 1$ is at most $n^{(1+\alpha)/2}/2$.
            \item The number of $i$ such that $i \geq n/2$ and $w_i(\ell) = 0$ is at most $n^{(1+\alpha)/2}/2$.
        \end{enumerate}
    \end{restatable}
    \begin{proof}
        We start by proving this for $\ell = 2$.
        
        For $i < n/4$, $w_i(2) = w_i(1) = \min(w_i(0), w_{i+n/2}(0))$.
        We have already established using \cref{prop:hoeffding} that when $i \leq (n-n^{(1+\alpha)/2})/4$, each $w_i(0)=0$ except with probability at most $e^{-n^\alpha/4}$.
        By a union bound, the probability that there exists such an $i$ with $w_i(2) = 1$ is at most $n e^{-n^\alpha/4}/4$.
        We give no bounds on the values of $w_i(2) = w_i(1)$ for $i \in [(n-n^{(1+\alpha)/2})/4, n/4-1]$, but we note that there are only $n^{(1+\alpha)/2}/4$ such wires.

        For $i \in [n/4, n/2-1]$, $w_i(2) = \min(w_i(1), w_{i+n/4}(1)) = \min(w_i(1), \max(w_{i+n/4}(0), w_{i-n/4}(0)))$.
        Again we have already established using \cref{prop:hoeffding} that, when $i \in [n/4, n/2 - n^{(1+\alpha)/2}/4]$, both $w_{i+n/4}(0)$ and $w_{i-n/4}(0)$ each equal $0$ except with probability at most $e^{-n^\alpha/4}$.
        Thus by a union bound, their max is $0$ except with probability at most $2 e^{-n^{\alpha}/4}$.
        By a union bound, the probability that there exists such an $i$ with $w_i(2)=1$ is at most $n e^{-n^\alpha/4}/2$.
        We give no bounds on the values of $w_i(2)$ when $i \in [n/2 - n^{(1+\alpha)/2}/4, n/2-1]$, but we again note that there are only $n^{(1+\alpha)/2}/4$ such wires.

        For $i \in [n/2,n-1]$ the proof is symmetric.
        By a union bound, the total failure probability across both halves of the argument is at most $3n e^{-n^{\alpha}/4}/2$ and we have excluded only $n^{(1+\alpha)/2}/2$ wires from each half of the analysis.
        Note that after applying each comparator gate, the number of $1$'s in the first $n/2$ wires can only decrease while the number of $1$'s in the last $n/2$ wires can only increase.
        Thus the bound will automatically extend to all $\ell > 2$.
    \end{proof}

    Next, we partition the gates in layer $\ell \geq 3$ into three groups each.
    The \emph{upper gates} are $(i,i+2^{k_\ell})$ such that $i+2^{k_\ell} < n/2$ and they will use the heat function $\heat^\unif$.
    By \cref{claim:gates-have-biased-halves}, except with probability $3n e^{-n^{\alpha}/4}/2$, at most $n^{(1+\alpha)/2}/2$ upper gates receive an input other than $(0,0)$.
    Thus, except with probability at most $3n e^{-n^{\alpha}/4}/2$, the upper gates collectively release at most $n^{(1+\alpha)/2}/2$ expected heat.
    With probability at most $3n e^{-n^{\alpha}/4}/2$, every upper gate releases at most $1$ heat.
    Since there are at most $n/2$ such gates in a layer, the expected heat released by all upper gates is at most $n^{(1+\alpha)/2}/2 + 3n^2 e^{-n^{\alpha}/4}/4$.

    The \emph{lower gates} are $(i,i+2^{k_\ell})$ such that $i \geq n/2$ and they will use the heat function $\heat^\unif$.
    The reasoning is symmetric to the upper gates, so we get that the expected heat released by all lower gates is also at most $n^{(1+\alpha)/2}/2 + 3n^2 e^{-n^{\alpha}/4}/4$.

    The \emph{crossover gates} are $(i, i+2^{k_\ell})$ where $i < n/2$ but $i+2^{k_\ell} \geq n/2$ and they will use the heat function $\heat^{(0,1)}_\varepsilon$ for some $\varepsilon \leq 1/2$ that we will specify later.
    By \cref{claim:gates-have-biased-halves}, except with probability $3n e^{-n^{\alpha}/4}/2$, there are at most $n^{(1+\alpha)/2}/2$ crossover gates that receive the input $(1,0)$.
    When this is the case, the expected heat released by all $< n$ such gates is less than $n \log(1/(1-\varepsilon)) + n^{(1+\alpha)/2} \log(1/\varepsilon)/2$.
    When this is not the case, there are $<n$ gates that each release at most $\log(1/\varepsilon)$ expected heat.
    Thus the overall expected heat released by the crossover gates is less than $n \log(1/(1-\varepsilon)) + n^{(1+\alpha)/2} \log(1/\varepsilon)/2 + 3n^2 e^{-n^{\alpha}/4} \log(1/\varepsilon)/2 \leq 2 \log(e) \varepsilon n + n^{(1+\alpha)/2} \log(1/\varepsilon)/2 + 3n^2 e^{-n^{\alpha}/4} \log(1/\varepsilon)/2$.
    Picking $\varepsilon = n^{(\alpha -1)/2}$ when $\alpha = 1/2$ makes $\varepsilon \leq 1/2$ and therefore this expression at most $2 \log(e)n^{(1+\alpha)/2} + n^{(1+\alpha)/2} \log(n^{(1-\alpha)/2})/2 + 3n^2 e^{-n^{\alpha}/4} \log(n^{(1-\alpha)/2})/2$.

    Thus, in total, the expected heat released in each layer $\ell \geq 3$ is less than $(\log(n^{(1-\alpha)/2})/2+4 \log(e))n^{(1+\alpha)/2} + (\log(n^{(1-\alpha)/2})+1)3n^2 e^{-n^{\alpha}/4}/2$.

    \paragraph{Putting it all together}
    The total expected heat released in layers $1$ and $2$ is $O(n^{(1+\alpha)/2})$ while the total for layers $\ell \geq 3$ are each $O(n^{(1+\alpha)/2} \log (n))$, which gives a total of $O(n^{(1+\alpha)/2} \log^2 n)$ across all non-recursive layers.
    Fixing our choice of $\alpha = 1/2$ makes this expression $O(n^{3/4} \log^2 n)$.
    Since layer $0$ is two recursive calls on the halves of the input, we have that if $\heat(n)$ is the heat released for inputs of size $n$, then $\heat(n) = 2 \heat(n/2) + O(n^{3/4} \log^2 n)$ for all $n\geq 32$.
    When $n<32$ we use $\heat^\unif$ for all gates and, since the number of gates in this circuit is constant, we also get a constant upper bound for the expected heat released in this base case.
    Since the last term in the recursive case is sub-linear, the master theorem gives us that $\heat(n)$ is $O(n)$.
    We use heat function $\heat^\unif$ across all recursive calls and we use both $\heat^{(0,1)}_\varepsilon$ and $\heat^{(1,0)}_\varepsilon$ for at most $1$ unique value of $\varepsilon$ each per recursive call.
    Thus there are at most $2 \log(n) + 1$ unique heat functions used in this construction.
\end{proof}

In \cref{thm:batcher-heat-released-upper-bound-overwriting} of the appendix, we give a similar proof showing that in the overwriting circuit model (see \cref{sec:prelim}), Batcher's sorting network can still be implemented with only $O(n)$ total expected heat released.

The bound in \cref{thm:batcher-heat-released-upper-bound}  is asymptotically tight.
We can show that the circuit entropy production of sorting $n$ bits is also $\Omega(n)$.

\begin{theorem}
    \label{thm:batcher-ep-omega-n}
    When given the uniform distribution over inputs in $\{0,1\}^n$, for any choice of valid heat functions for each gate in Batcher's odd-even mergesort, the circuit entropy production is $\Omega(n)$.
\end{theorem}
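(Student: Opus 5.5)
The plan is to cut the Batcher network after its first three layers, where it consists of $n/4$ independent $4$-bit sorting networks on disjoint wire sets. The circuit entropy production then splits into a sum of $n/4$ local entropy productions of these blocks, plus a nonnegative remainder. Finally, I show that each $4$-bit block has entropy production bounded below by an absolute constant $c>0$, whatever valid heat functions are used, which gives $\EP_\circuit \ge cn/4 = \Omega(n)$.

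\emph{Step 1 (discarding the rest of the circuit).} Let $\dist^{(3)}$ be the distribution on all wires after the first three layers, and let $G_{\mathrm{late}}$ be the remaining gates. I claim
\[
\sum_{g \in G_{\mathrm{late}}} \E_{x\sim\dist_g}[\heat_g(x)] \ge \entropy(\dist^{(3)}) - \entropy(f(\dist)).
\]
For a single gate $g$ with wires $\idx$, validity gives $\E[\heat_g] \ge \landauer_{f^g}(\dist_g) = \entropy(X_\idx \mid f^g(X_\idx))$. This is at least
\[
\entropy(X_\idx \mid f^g(X_\idx), X_{[\varSpace]\setminus \idx}) = \entropy(\dist') - \entropy(f^g_\uparrow(\dist')),
\]
because conditioning cannot increase entropy; this is essentially \cref{lem:mutual-info-positive-ep}. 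Telescoping over $G_{\mathrm{late}}$ proves the claim. Substituting into \cref{defn:circuit-ep} yields
\[
\EP_\circuit(\dist) \ge \sum_{g \in G_{\le 3}} \E[\heat_g] - \bigl(\entropy(\dist) - \entropy(\dist^{(3)})\bigr).
\]

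\emph{Step 2 (independence of the blocks).} Under the uniform input distribution, the $n/4$ blocks act on disjoint sets of four i.i.d.\ uniform bits. Both $\dist$ and $\dist^{(3)}$ are therefore product distributions over the blocks, so the entropy drop is a sum of per-block drops. Each gate's marginal input distribution is exactly the one it would see inside an isolated $4$-bit sorter on uniform input. Hence the right-hand side above equals $\sum_{b=1}^{n/4} \EP_{\circuit_4}$, where $\EP_{\circuit_4}$ is the circuit entropy production of the $4$-bit Batcher sorter (\Cref{fig:batcher-n-4}) with the inherited heat functions, on uniform input in $\{0,1\}^4$.

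\emph{Step 3 (constant lower bound for the $4$-bit sorter), the main obstacle.} The gate input distributions in the $4$-bit sorter are fixed. Each gate's expected heat can therefore be minimized separately, and by \cref{prop:ef-min-heat-fn} the minimum is the local Landauer cost $\landauer_{\compare}(\dist_g) = \Pr[\{(0,1),(1,0)\}]\cdot \binaryEntropy(\Pr[(1,0)\mid \{(0,1),(1,0)\}])$. So $\EP_{\circuit_4}$ is at least the explicit constant
\[
c = \sum_{g} \landauer_{\compare}(\dist_g) - \bigl(4 - \entropy(\text{number of ones in } 4 \text{ uniform bits})\bigr).
\]
I will compute $c$ by enumerating the $16$ inputs through the five comparators. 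The first two gates already contribute $1/2$ each, and the global entropy drop is $4 - \entropy(\mathrm{Bin}(4,1/2)) \approx 1.97$. It then remains to tabulate the marginals at the three later gates and check that the total exceeds $1.97$ by a positive margin. To guard against arithmetic slips, I also have a qualitative check that $c>0$. Consider the layer-$2$ gate comparing the two minima, on wires $0$ and $2$. When it sees $(1,0)$ versus $(0,1)$, the other wires (the two maxima) are deterministically $(1,\cdot)$ versus $(\cdot,1)$ and still distinguish the cases. So this gate's local Landauer cost strictly exceeds its conditional, global entropy drop, and the inequality in Step 1 is strict for it. That strict loss of mutual information gives $c>0$, and combining the three steps gives $\EP_\circuit \ge cn/4$.
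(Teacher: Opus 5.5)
Your proposal is correct and follows the paper's proof: the first three layers are $n/4$ disjoint 4-bit sorters, each has a constant circuit entropy production, and the remaining gates contribute nonnegatively. The paper computes that constant exactly as $1/8$ with optimal heat functions, which is your $c$. Your Step 1 conditioning argument rigorously justifies what the paper states in one line (nonnegativity of circuit entropy production for the rest of the network), and your mutual-information check at the layer-2 gate on wires $0,2$ is a valid alternative to the explicit arithmetic for showing $c>0$.
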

\begin{proof}
    We show that each base case of sorting $n=4$ inputs has constant circuit entropy production.
    Since this base case occurs $n/4$ times in the overall sorting network and circuit entropy production is always non-negative, the entropy production of the entire circuit must be $\Omega(n)$.
    We now analyze the $n=4$ sorting network, as shown in \Cref{fig:batcher-n-4}, on the uniform distribution over inputs.
    The two gates in the first layer each receive inputs $00,01,10$, and $11$ with probability $1/4$.
    By \cref{prop:ef-min-heat-fn} this means that the expected heat release of each of these gates when given the optimal heat function is $1/2$ and the expected heat released in the first layer is $1$.
    After the first layer, the first and third wires independently have value $1$ with probability $1/4$ each while the second and fourth wires independently have value $0$ with probability $1/4$ each.
    Thus by \cref{prop:ef-min-heat-fn} we get that the expected heat release of these gates when given the optimal heat function is $3/8$ and and the expected heat release in the second layer is $3/4$.
    The final gate is a bit more tricky to analyze.
    Looking at all $16$ possible inputs to the circuit, this gate receives input $00$ on all inputs to the circuit with $3$ or more $0$'s.
    Likewise the gate receives input $11$ on all inputs to the circuit with $3$ or more $1$'s.
    The gate receives input $01$ on inputs $0011$ and inputs $1100$ to the circuit while the gate receives input $10$ on each of the four remaining inputs with $2$ zeroes and $2$ ones to the circuit.
    Thus by \cref{prop:ef-min-heat-fn} the expected heat release of this gate is $(3/8) \binaryEntropy(1/3)$.
    Putting all of this together, the expected heat release of a $4$ input sorting networks is $7/4 + (3/8) \binaryEntropy(1/3)$.
    Meanwhile the circuit maps a uniformly random length $4$ bit-string to a binomial distribution on $4$ unbiased coin flips.
    The entropy of this output equals
    \begin{displaymath}
        (2\log(16) +  8\log(4) + 6\log(8/3))/16 = (8+16 + 18 + 6 \log(1/3))/16 = 21/8 - (3/8) \log(3)
    \end{displaymath}
    and so the drop in entropy is $4-(21/8 - (3/8) \log(3)) = 11/8 + (3/8) \log(3)$.
    Since the released heat is $7/4 + (3/8) \binaryEntropy(1/3)$, the circuit entropy production of the $4$ input Batcher's sorting network with optimal heat function is at precisely $1/8$.
    Since the general Batcher's odd-even mergesort network features this sub-circuit on this input distribution $n/4$ times, the overall circuit entropy production of the network is at least $n / 32$ which is $\Omega(n)$.
\end{proof}

\begin{figure}[tb]
    \centering
    \includegraphics[width=.4\linewidth]{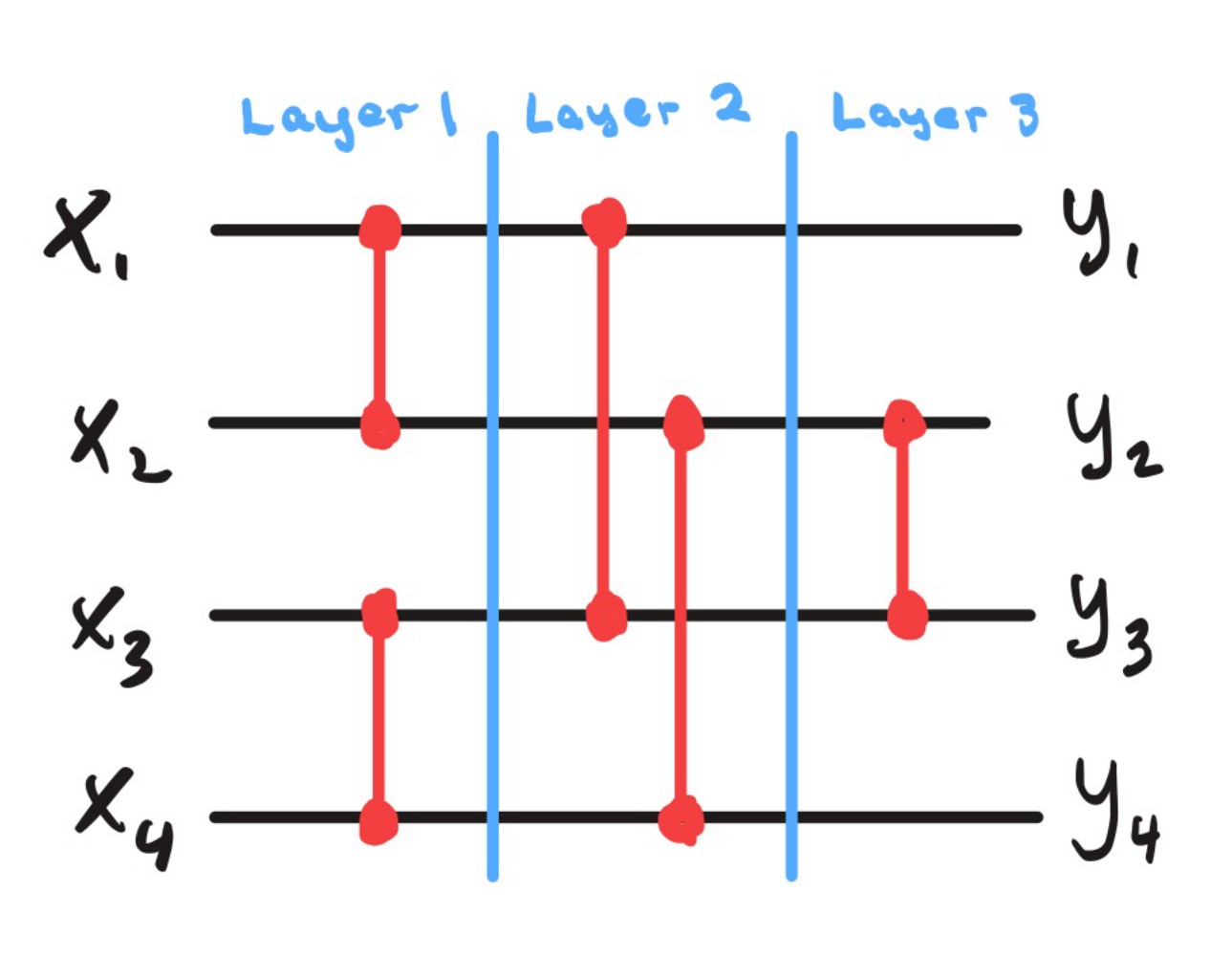}
    \caption{\label{fig:batcher-n-4} The $n=4$ base case of Batcher's odd-even mergesort.}
\end{figure}

\subsubsection{Arbitrary inputs}\label{sec:batcher-integers}
We now consider using Batcher's odd even mergesort for sorting a random permutation of the set $\{0, \ldots, n-1\}$.
This case is actually sufficient to upper bound the minimal heat dissipation for sorting arbitrary list of $n$ distinct items from an ordered set $\domain$.
\begin{lemma}\label{lem:sorting-network-domain-extension}
    Let $\circuit$ be a sorting network on $n$ inputs. Let $\dist_{n}$ be the uniform distribution over permutations of $\{0, \ldots, n-1\}$ and $\dist_\domain$ be the distribution obtained by sampling $n$ elements of $\domain$ uniformly at random without replacement.
    Then the minimum expected heat released by any gate $g$ in $\circuit$ on input distribution $\dist_{n}$ is at most the minimum expected heat released by $g$ in $\circuit$ on input distribution $\dist_\domain$.
\end{lemma}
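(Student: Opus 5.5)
Since a sorting network is a fixed sequence of comparators, its behavior on any input depends only on the relative order of the inputs. This observation drives the whole argument.

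Sample $X \sim \dist_\domain$ and let $\pi(X)$ be its rank vector, i.e.\ replace each entry by its rank among the $n$ values. Because the values are distinct and drawn uniformly without replacement, $\pi(X)$ is exactly a uniformly random permutation of $\{0,\ldots,n-1\}$, so $\pi(X) \sim \dist_n$. Comparators commute with the order-preserving map from ranks back to values. By induction over the gates, the state of the circuit on input $X$ after any prefix of gates is the entrywise image of the state on input $\pi(X)$ under the increasing map $r \mapsto$ ``$r$-th smallest element of $X$''. In particular, fix a gate $g$ and let $\dist^\domain_g$ and $\dist^n_g$ be its induced marginal input distributions. Then $\dist^n_g$ is the pushforward of $\dist^\domain_g$ under a deterministic coarsening map $h$ that sends the pair $(a,b)$ seen by $g$ to the pair of ranks. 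Here $h$ depends on the whole input, so I will work on the joint space of $(X, \text{input to } g)$ if needed.

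The key step is to build a valid heat function for $g$ on the rank domain whose expected heat under $\dist^n_g$ is at most the optimum under $\dist^\domain_g$. By \cref{prop:ef-min-heat-fn}, the optimal expected heat of a comparator on a distribution $\dist$ is $\landauer_{\compare}(\dist)$. For distinct inputs this equals $\E[\binaryEntropy(\text{conditional probability that the pair is swapped, given the unordered pair})]$. Write $\Pr[(a,b)]$ for the input probabilities, so this is $\sum_{\{a,b\}} (p_{a,b}+p_{b,a})\binaryEntropy\bigl(p_{a,b}/(p_{a,b}+p_{b,a})\bigr)$. Equivalently it is the conditional entropy $\entropy(\text{input} \mid \text{output})$. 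So the claim reduces to showing $\entropy(R \mid \compare(R)) \le \entropy(V \mid \compare(V))$, where $V$ is the input pair under $\dist_\domain$ and $R$ the input pair under $\dist_n$.

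For this I plan to use a coupling rather than general data processing, since conditional entropy is not monotone under arbitrary maps. The swap indicator $s = [a > b]$ determines $V$ given $\compare(V)$, and likewise for $R$, so $\entropy(V\mid\compare(V)) = \entropy(s \mid \compare(V))$ and $\entropy(R\mid \compare(R)) = \entropy(s\mid\compare(R))$. Crucially, $s$ is the same bit in both worlds because $h$ preserves order.

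The main obstacle is that $\compare(R)$ is not a function of $\compare(V)$: ranks depend on the other inputs. I would resolve this with exchangeability and symmetry. Under $\dist_\domain$, condition on the set $S$ of $n$ sampled values. Given $S$, the distribution is exactly $\dist_n$ relabeled by the order isomorphism $S \cong \{0,\ldots,n-1\}$, so $\entropy(s \mid \compare(V), S) = \entropy(s \mid \compare(R))$. Since conditioning reduces entropy, $\entropy(s\mid \compare(V)) \ge \entropy(s \mid \compare(V), S) = \entropy(s\mid\compare(R))$. That is precisely the desired inequality between the two minimum expected heats.
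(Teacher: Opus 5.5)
Your proof is correct and follows essentially the same route as the paper: both use \cref{prop:ef-min-heat-fn} to write the minimum expected heat as the conditional entropy of the gate's input given its output, then additionally condition on the unordered set of sampled values, which can only lower that entropy, and finally note that after the order-preserving relabeling onto $\{0,\ldots,n-1\}$ the conditioned gate input is exactly the one induced by $\dist_n$. Your rank-vector coupling in the first paragraph and your rewriting via the swap indicator $s$ are harmless, but the argument does not need them.
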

\begin{proof}
    Let $X_g$ be a random variable over the inputs to gate $g$ induced by $\circuit$ receiving input $X$ sampled from $\dist_{\domain}$ and $V(X)$ be the unordered set of values in $X$.
    Using \cref{prop:ef-min-heat-fn}, the optimal heat function for $g$ would give a minimum expected heat release of
    \begin{align*}
        \sum_{x_1,x_2} \Pr[X_g = (x_1,x_2)] \heat(X_g) &= \sum_{x_1,x_2} \Pr[X_g = (x_1,x_2)] \log\left(\frac{\Pr[\compare(X_g) = \compare(x_1,x_2)]}{\Pr[X_g = (x_1,x_2)]}\right)\\
        &= \entropy(X_g \;| \;\compare(X_g))\\
        & \geq \entropy(X_g \;| \;\compare(X_g), V(X)).
    \end{align*}
    After conditioning on $V(X)$, we can re-label the elements in $\domain$ so that the elements in $V(X)$ get mapped to $0, \ldots, n-1$ in their relative rank order and all remaining elements get mapped to distinct values that are all at least $n$.
    After this relabeling, $X_g$ conditioned on $V(X)$ is just a random variable for the induced marginal input distribution when $\circuit$ has input distribution $\dist$.
    Thus the expected minimum heat dissipation on $\dist_\domain$ is at least that on $\dist_n$.
\end{proof}

Thus we will be assuming input distribution $\dist_n$ in the rest of this section.
By \cref{prop:ef-min-heat-fn} the optimal heat function for our comparator gates on marginal induced input distribution $\dist$ when $x_1 \neq x_2$ is given by:
\begin{align}
        \heat(x_1, x_2) & = \log\left(\Pr_{\compare(\dist)}[\compare(x_1,x_2)] / \Pr_{\dist}[(x_1,x_2)]\right) \nonumber \\
                        & = \log\left((\Pr_{\dist}[(x_1,x_2)] + \Pr_{\dist}[(x_2,x_1)]) / \Pr_{\dist}[(x_1,x_2)]\right) \nonumber \\
                        & = \log(1+\Pr_{\dist}[(x_2,x_1)] / \Pr_{\dist}[(x_1,x_2)]). \label{eqn:int-sort-heat-fn}
\end{align}

This means that to prove a $\Omega(n \log^2 n)$ lower bound on the expected heat release, it is sufficient to show that there are $\Omega(n \log^2 n)$ comparator gates where, on their induced marginal input distribution $\dist$, for every pair $(x_1,x_2) \in \support(\dist)$ we have that there are universal constants $a,b > 0$ such that $a \leq \frac{\Pr_{\dist}[(x_1,x_2)]}{\Pr_{\dist}[(x_2,x_1)]} \leq b$.
In particular, the gates that we show have this property will be ones that are centrally located and in the later layers.
An empirical plot of the expected heat released per gate in Batcher's odd-even mergesort when $n=256$ is shown in \cref{fig:batcher-n-32-heat}.

\begin{figure}[tb]
    \centering
    \includegraphics[width=1\linewidth]{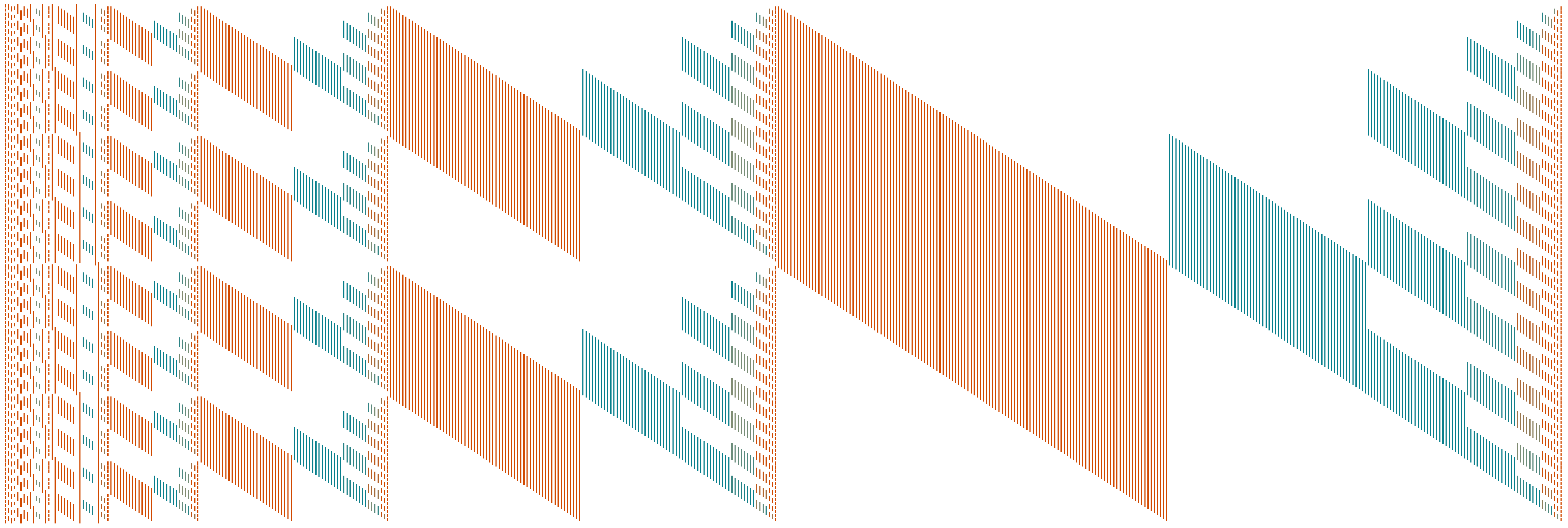}
    \caption{A plot of empirically calculated expected released heat per gate for sorting randomly permuted integers $0,\ldots,n-1$. The algorithm in \cref{lem:additive-approx-easy} is used to find approximate heat functions for each gate and then a fresh source of randomness is used to compute the empirical expected heat release. Teal indicates low heat release and orange high heat release. While it appears that there are many orange gates at the beginning of each recursive step, in the later layers this number is only $O(n)$. Thus we use the centrally located orange gates in the later layers of each recursive step to obtain our bound.
    }
    \label{fig:batcher-n-32-heat}
\end{figure}

\paragraph{Notation and network properties}
In order to show that these gates have similar probabilities for inputs $(x_1,x_2)$ and $(x_2,x_1)$, we need some properties about the Batcher's odd-even mergesort network. Some of the stated properties below are already established (see, e.g., \cite{batcher1968,Knuth98a}), but the key technical lemma that we require (i.e., \cref{lem:batcher-integer-swap-bound}) appears to be new.

\begin{fact}\label{lem:batchers-shuffle-invar}
    For every $0 \leq r < 2^{k_\ell}$, we have that $$\{w_{j}(0) \;|\; j \equiv r \bmod 2^{k_\ell}\} = \{w_{j}(\ell) \;|\; j \equiv r \bmod 2^{k_\ell}\}.$$
\end{fact}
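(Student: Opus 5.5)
The plan is induction on the layer index $\ell$ within the non-recursive merge step: assume the multiset invariant holds after layer $\ell-1$ and show it still holds after layer $\ell$. The key observation is that every comparator in layer $\ell$ joins two wires whose indices are congruent modulo $2^{k_\ell}$, and that these residue classes refine those of earlier layers.

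First, the gates in layer $\ell$ stay inside a single residue class modulo $2^{k_\ell}$. By \eqref{eqn:gate-locs}, every gate in $\gateSet_\ell$ for $\ell>1$ has the form $(i,i+2^{k_\ell})$, and the layer-1 gates have the form $(i,i+n/2)$ with $2^{k_1}=n/2$. So each comparator in layer $\ell$ acts on two wires with $i \equiv i+2^{k_\ell} \pmod{2^{k_\ell}}$. A comparator only permutes the two values it reads, so it preserves the multiset of values on any set of wires that contains both or neither of its endpoints. Distinct gates in one layer act on disjoint wires, so the whole layer is a permutation of values within each residue class modulo $2^{k_\ell}$. Hence
\[
\{w_j(\ell-1) \mid j\equiv r \bmod 2^{k_\ell}\}=\{w_j(\ell) \mid j\equiv r \bmod 2^{k_\ell}\}
\]
as multisets.

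Second, I chain the layers together. Since $k_{\ell'} = \log n - \ell'$ decreases as $\ell'$ increases, $2^{k_\ell}$ divides $2^{k_{\ell'}}$ for every $\ell' \le \ell$. So each residue class modulo $2^{k_\ell}$ is a disjoint union of residue classes modulo $2^{k_{\ell'}}$. The class modulo $2^{k_\ell}$ is therefore also closed under every earlier layer $\ell'$, because those layers permute values within the finer classes. Applying the single-layer step to layers $1,\dots,\ell$ and taking unions gives the stated equality between layer $0$ and layer $\ell$.

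The only point needing care is the indexing: the layer-0 state is the output of the two recursive half-sorts, and \eqref{eqn:gate-locs} must hold exactly as written, including the layer-1 special case. There is no real obstacle. The equality should be read as one of multisets, which is automatic for distinct integers.
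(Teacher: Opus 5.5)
Your proposal is correct and takes the same approach as the paper. Every comparator $(i,j)$ in a layer $\ell' \le \ell$ satisfies $2^{k_\ell} \mid (j-i)$, since $2^{k_\ell}$ divides $2^{k_{\ell'}}$, so layers $1,\dots,\ell$ only permute values within each residue class modulo $2^{k_\ell}$; your layer-by-layer induction just spells out what the paper states in two lines.
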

\begin{proof}
    For every $\ell' \leq \ell$ we know that every $(i,j) \in \gateSet_{\ell'}$ has $2^{k_\ell} \; | \; (j-i)$.
    Thus, comparator gates up to level $\ell$ can only shift the index of elements within these congruency classes.
\end{proof}

The following basic fact about Batcher's odd-even sorting network follows directly from the original proof sketch of the network's correctness by Batcher~\cite{batcher1968}.
Hinze and Martin~\cite{DBLP:journals/jfp/HinzeM18} give a proof of this result similar to the one we include here.

\begin{proposition}\label{lem:batchers-partial-sorts}
    After layer $\ell$ of Batcher's odd-even mergesort, the values stored on wires whose indices are congruent mod $2^{k_\ell}$ are in their relative sorted order.
\end{proposition}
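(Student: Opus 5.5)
We prove this by induction on $\ell$, relying throughout on the zero-one principle (\cref{prop:zero-one-principle}). The property ``wires congruent mod $2^{k_\ell}$ are in relative sorted order after layer $\ell$'' is preserved under applying any monotone map $\phi$ entrywise to the input, because comparators commute with monotone maps. Indeed, if some $x$ had an inversion in a residue class after layer $\ell$, then the threshold map $\phi(v) = [v \geq t]$, for a suitable $t$, would produce a binary input with the same inversion. So it suffices to treat inputs in $\{0,1\}^n$ in which each half $w_0(0),\dots,w_{n/2-1}(0)$ and $w_{n/2}(0),\dots,w_{n-1}(0)$ is already sorted. Concretely, each half has the form $0^{a}1^{n/2-a}$ and $0^{b}1^{n/2-b}$, which is exactly the hypothesis guaranteed by layer $0$.

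\textbf{Base case $\ell = 1$.} Here $k_1 = \log n - 1$, so each residue class mod $n/2$ is a pair $\{r, r+n/2\}$. The layer-1 gate $(r, r+n/2)$ sorts exactly that pair, so the claim holds immediately.

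\textbf{Inductive step.} Assume the claim after layer $\ell-1$, i.e., each residue class mod $2^{k_{\ell}+1}$ is sorted. A residue class $r$ mod $2^{k_\ell}$ is the interleaving of two classes mod $2^{k_\ell+1}$, namely $r$ and $r+2^{k_\ell}$. Call their elements $A_0 \le A_1 \le \dots$ and $B_0 \le B_1 \le \dots$, placed at positions $r, r+2\cdot 2^{k_\ell}, \dots$ and $r+2^{k_\ell}, r+3\cdot 2^{k_\ell},\dots$ respectively. After interleaving, the sequence along class $r$ is $A_0,B_0,A_1,B_1,\dots$.

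The layer-$\ell$ gates from \eqref{eqn:gate-locs} compare position $i$ with $i+2^{k_\ell}$ whenever $\floor{i/2^{k_\ell}}$ is odd. Within this class, that means comparing $B_j$ with $A_{j+1}$. This is precisely the final odd-even merge step. The standard fact is that if the interleaved sequence is a merge of two sorted sequences whose counts of zeros differ by at most $2$ in the right direction, then one layer of adjacent compare-exchanges $(B_j, A_{j+1})$ fully sorts it.

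So the key step is a \textbf{counting invariant}, proved in the 0-1 setting by a second induction carried alongside the first. Let $z(c)$ denote the number of zeros in class $c$ mod $2^{k_\ell+1}$. Using \cref{lem:batchers-shuffle-invar}, $z(c)$ is the same as at time $0$, so it can be computed directly from $a$ and $b$. Each class mod $2^{k_\ell+1}$ takes every $2^{k_\ell+1}$-th element from two sorted blocks, hence contains roughly $a/2^{k_\ell+1}$ zeros from the first half and roughly $b/2^{k_\ell+1}$ from the second, each with a ceiling or floor depending on the residue. From this we get $z(r) - z(r+2^{k_\ell}) \in \{0,1,2\}$, because each of the two ceiling/floor contributions differs by at most one and the class with the smaller residue gets the ceiling.

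Given sorted $A$ and $B$ with $z_A - z_B \in\{0,1,2\}$, the interleaving $A_0 B_0 A_1 B_1\dots$ is sorted except possibly for a single inversion at $B_{z_B}$ versus $A_{z_B+1}$, which occurs when $z_A = z_B+2$. That inversion is fixed by exactly one layer-$\ell$ gate. Boundary gates, the ones excluded by $2^{k_\ell}\le i<n-2^{k_\ell}$, only touch $A_0$ or the last element, and these are already correctly placed.

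\textbf{Main obstacle.} The delicate part is the floor/ceiling bookkeeping that shows $z(r)-z(r+2^{k_\ell})\in\{0,1,2\}$ for \emph{every} residue $r$, with the correct orientation, and reconciling it with the exact index ranges in \eqref{eqn:gate-locs}. I would first verify it carefully for a single half, getting a difference in $\{0,1\}$, and then add the two halves.
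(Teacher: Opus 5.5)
Your proof is correct, but it reaches the key step by a different route than the paper.

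\textbf{The paper's route.} It never passes to $0$/$1$ inputs. Working with arbitrary values, it combines \cref{lem:batchers-shuffle-invar} with two positional facts at time $0$: $e_j(0)\le o_j(0)$ for all $j$, and $o_j(0)\le e_{j+1}(0)$ except at the one index straddling the half boundary. Since sorting preserves this (near-)pointwise dominance, these give $e_j(\ell-1)\le o_j(\ell-1)$ and $o_j(\ell-1)\le e_{j+2}(\ell-1)$. The proof then closes with a chain of $\min/\max$ inequalities and a separate check of the end positions $e_0$ and $o_{m-1}$.

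\textbf{Your route.} Your condition $z_A-z_B\in\{0,1,2\}$ is exactly the $0$/$1$ form of those two dominance relations. Indeed, $A_j\le B_j$ for all $j$ iff $z_A\ge z_B$, and $B_j\le A_{j+2}$ for all $j$ iff $z_A\le z_B+2$. You obtain it arithmetically instead: reading the first $a$ positions of a half with stride $2^{k_\ell+1}$ gives $\lfloor a/2^{k_\ell+1}\rfloor+1$ zeros to residues below $a \bmod 2^{k_\ell+1}$ and $\lfloor a/2^{k_\ell+1}\rfloor$ zeros to the rest. So each half contributes a difference in $\{0,1\}$, which is what your ``smaller residue gets the ceiling'' remark says. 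The slack ``$2$'' thus arises differently in the two proofs: in the paper from the single violated comparison at the half boundary, in yours from one unit of slack per half.

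\textbf{What each buys.} Your version gives a fully explicit picture: the only possible inversion is $(B_{z_B},A_{z_B+1})$, which is a layer-$\ell$ gate since $z_B+2=z_A$ is at most the class size. So the boundary gates need no separate treatment. The cost is the threshold reduction and the floor/ceiling bookkeeping. The paper's version avoids both but needs the order-statistic dominance step and the endpoint checks.

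\textbf{Points to make explicit when writing it up.}
\begin{itemize}
\item \cref{prop:zero-one-principle} as stated concerns full sorting, so the reduction really rests on your inline monotone-map argument. Comparators commute with $v\mapsto[v\ge t]$, and this map preserves sorted halves.
\item The per-half count uses that $2^{k_\ell+1}$ divides $n/2$ for $\ell\ge 2$, so both halves see the residues with the same offset.
\item No second induction is needed for the zero counts; \cref{lem:batchers-shuffle-invar} lets you read them off directly at time $0$.
\end{itemize}
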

\begin{proof}
    We prove this by induction on $\ell$ increasing.
    In the base case of $\ell =1$, we have that $2^{k_{\ell}} = n/2$ and the congruent pairs are the ones that were just compared.
    Now assume that this holds after layer $\ell -1$ and let $m=n/2^{k_{\ell-1}}$.
    Fix any $r$ such that $0 \leq r < 2^{k_\ell}$. We will show that this holds for the wires whose indices are congruent to $r$ mod $2^{k_\ell}$.
    Let $e_j$ be the $j$'th wire whose index is congruent to $r$ mod $2^{k_{\ell-1}}$ and $o_j$ be the $j$'th wire whose index is congruent to $r + 2^{k_{\ell}}$ mod $2^{k_{\ell-1}}$.
    In other words $e_j$ is the wire at index $r + 2j\cdot 2^{k_{\ell}}$ and $o_j$ is the wire at index $r+(2j+1)2^{k_{\ell}}$.
    Note that for all $0\leq j<m-1$, $(o_j,e_{j+1}) \in \gateSet_\ell$.
    By the inductive hypothesis, we know that
    \begin{displaymath}
        e_j(\ell-1) \leq e_{j+1}(\ell-1) \text{ and } o_j(\ell-1) \leq o_{j+1}(\ell-1)
    \end{displaymath}
    for all $j$.
    We also know that $e_j(0) \leq o_j(0)$ since they are on the same half of the input after the recursive call in Batcher's.
    Since the $e_j(\ell-1)$ are a sorted version of the $e_j(0)$'s and likewise the $o_j(\ell-1)$ are a sorted version of the $o_j(0)$'s due to \cref{lem:batchers-shuffle-invar}, this implies that
    \begin{displaymath}
        e_j(\ell-1) \leq o_j(\ell-1)
    \end{displaymath}
    for all $j$.
    By a similar argument we know when $j\neq m/2 - 1$ it must be the case that
    $o_{j}(0) \leq e_{j+1}(0)$ by the recursive call in Batcher's.
    This maximum of one violation comparing $o_j(0)$ to $e_{j+1}(0)$ gives us that in the sorted orders before applying layer $\ell$,
    \begin{displaymath}
        o_j(\ell -1) \leq e_{j+2}(\ell-1)
    \end{displaymath}
    for all $0 \leq j < m-2$.
    Putting this together we have that:
    \begin{align*}
        e_j(\ell)& \leq \max(o_{j-1}(\ell -1), e_j(\ell-1))\\
        \leq& \min(o_{j}(\ell -1), e_{j+1}(\ell-1)) = o_j(\ell)\\
        \leq & \max(o_{j}(\ell -1), e_{j+1}(\ell-1)) = e_{j+1}(\ell)\\
        \leq &\min(o_{j+1}(\ell -1), e_{j+2}(\ell-1)) \leq  o_{j+1}(\ell)
    \end{align*}
    for all $1 \leq j < m-2$.
    We now have that $e_1(\ell) \leq o_1(\ell) \leq \ldots \leq e_{m-2}(\ell) \leq o_{m-2}(\ell)$.
    We know $o_0(\ell) \leq e_1(\ell)$ and $o_{m-2}(\ell) \leq e_{m-1}(\ell)$ since pairs $o_0(\ell), e_1(\ell)$ and $o_{m-2},e_{m-1}$ are compared in layer $\ell$.
    Also we know that $e_0(\ell) \leq o_0(\ell)$ and $e_{m-1}(\ell) \leq o_{m-1}(\ell)$ since $e_0(\ell-1) = e_0(\ell)$ is smaller than both possible values of $o_0(\ell)$ and $o_{m-1}(\ell-1) = o_{m-1}(\ell)$ is larger than both possible values of $e_{m-1}(\ell)$.
    Doing this analysis for all $r$ completes the proof.
\end{proof}
\begin{lemma}\label{lem:possible-gate-values}
    Let $g$ be the gate $(i, i+2^{k_\ell}) \in \gateSet_\ell$ for $\ell \geq 2$ and $r \equiv i (\bmod 2^{k_\ell})$.
    If Batcher's odd even mergesort receives some permutation of $[n]$ as its input, then the smaller input to $g$ is in the set $S_g \subseteq \{i+r-2^{k_\ell} +1, \ldots, i+r\}$ and the larger input to $g$ is in the set $L_g \subseteq \{i+r+1, \ldots, i+r+2^{k_\ell}\}$.
\end{lemma}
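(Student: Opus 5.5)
The plan is to reduce the statement to a counting statement about $0$--$1$ inputs by thresholding, in the spirit of \cref{prop:zero-one-principle}. Comparator networks commute with monotone maps. Hence for any threshold $v$, applying $x \mapsto [x \geq v]$ to the input and then running the network gives the same wire contents as running the network and then thresholding. Fix a permutation input and let $u$ be the smaller value entering $g=(i,i+2^{k_\ell})$. Then $u$ is the value sitting on wire $i$ immediately after layer $\ell$. So it suffices to show the following for every threshold $v$: wire $i$ after layer $\ell$ holds a $1$ whenever $v \le i+r-2^{k_\ell}+1$, and holds a $0$ whenever $v \geq i+r+1$. The same argument applied to wire $i+2^{k_\ell}$ gives the range $L_g$ for the larger input.

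\textbf{Step 1: structure of thresholded inputs.} Fix a threshold $v$, so the input becomes a $0$--$1$ string with exactly $v$ zeros. Because each half was already sorted recursively, after layer $0$ the top half is $0^{a}1^{n/2-a}$ and the bottom half is $0^{b}1^{n/2-b}$, with $a+b=v$. Consider the residue classes mod $2^{k_{\ell-1}}=2^{k_\ell+1}$. For each such class, the number of zeros it receives from each sorted half is determined by $a$, $b$ and the residue, and it equals $\lceil\cdot\rceil$ or $\lfloor\cdot\rfloor$ of $a/2^{k_\ell+1}$ (respectively $b/2^{k_\ell+1}$). By \cref{lem:batchers-shuffle-invar}, these counts are preserved through layer $\ell-1$. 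By \cref{lem:batchers-partial-sorts}, within each class the zeros then occupy a prefix.

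\textbf{Step 2: zero-counts in the two merged classes.} Write $\ell$'s class $r \bmod 2^{k_\ell}$ as the interleaving of the two classes $e$ and $o$ mod $2^{k_\ell+1}$, as in the proof of \cref{lem:batchers-partial-sorts}. Compute the number $z_e$ of zeros in $e$ and $z_o$ of zeros in $o$ as explicit floor/ceiling expressions in $a$, $b$ and $r$. Layer $\ell$ merges these two classes into a sorted class $r \bmod 2^{k_\ell}$ containing $z_e+z_o$ zeros. Wire $i$ is at position $t=(i-r)/2^{k_\ell}$ of that class, so it holds $0$ exactly when $t < z_e+z_o$.

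\textbf{Step 3: convert to a condition on $v$.} Bound $z_e+z_o$ from above and below in terms of $v=a+b$, uniformly over all splits $a+b=v$, using that each floor/ceiling expression is within one of $(a+\text{offset})/2^{k_\ell+1}$. This should yield $z_e+z_o > t$ whenever $v \ge i+r+1$ and $z_e+z_o\le t$ whenever $v\le i+r-2^{k_\ell}+1$, which is exactly $u\in S_g$. The largest and smallest $v$ values where the answer is undetermined delimit a window of width $2^{k_\ell}$.

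\textbf{Main obstacle.} I expect Step 3 to be the hardest part. It requires getting the floor/ceiling offsets exactly right so that the residual ambiguity window has width $2^{k_\ell}$ and is located at the claimed offset $i+r$. This depends on the worst-case split of $v$ between the two halves, and on the indexing convention of the values ($[n]$ versus $\{0,\dots,n-1\}$). I would first check the endpoints on small cases such as $n=8$ and $\ell=2,3$ to pin down the convention, and only then carry out the general inequality. If the offsets do not come out as stated, I would fall back to proving the weaker containment in a window of width $O(2^{k_\ell})$ around $i$. That weaker statement appears to suffice for the later use in \cref{lem:batcher-integer-swap-bound}.
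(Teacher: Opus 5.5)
Your approach is correct, and it takes a genuinely different route from the paper's.

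\textbf{The paper's route.} The paper uses backward induction on $\ell$. At the last layer the output is sorted, so gate $(i,i+1)$ sees exactly the values $i$ and $i+1$. For smaller $\ell$, the value left on wire $i$ (resp.\ $i+2^{k_\ell}$) after layer $\ell$ is an input to some gate of layer $\ell+1$. That gate is $(i-2^{k_\ell-1},i)$ if $r<2^{k_\ell-1}$ and $(i,i+2^{k_\ell-1})$ otherwise. So the value lies in the union of that gate's two ranges, and these two ranges tile exactly the claimed interval of length $2^{k_\ell}$. The argument needs no counting, only residue bookkeeping. It does need $\ell\ge 2$, so that the wire is actually touched in layer $\ell+1$.

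\textbf{What your route buys.} Your forward $0$--$1$ argument rests instead on \cref{lem:batchers-shuffle-invar} and \cref{lem:batchers-partial-sorts}. It gives an exact description of which wires hold zeros as a function of the split $(a,b)$. This is the same $A/B$ interleaving structure that \cref{lem:batcher-integer-swap-bound} later exploits. It also works for $\ell=1$.

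\textbf{Step 3 closes exactly.} This holds with values $\{0,\dots,n-1\}$, which is the paper's convention (wire $i$ ends up holding $i$). However, do not bound the four floor/ceiling terms mod $2^{k_\ell+1}$ separately by ``within one''. That loses roughly a factor of four in the window. Instead, count zeros directly in the class $r \bmod K$, where $K=2^{k_\ell}$.
\begin{itemize}
\item Since $n/2\equiv 0 \pmod K$, this class contains $Z=\lceil (a-r)/K\rceil+\lceil (b-r)/K\rceil$ zeros after layer $0$, and hence also after layer $\ell$. It follows that $(v-2r)/K\le Z\le (v-2r+2K-2)/K$.
\item Write $i=r+tK$. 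If $v\ge i+r+1=2r+tK+1$, then $Z\ge t+1$, so wire $i$ (position $t$ of the sorted class) holds $0$.
\item If $v\le i+r-K+1$, then $Z\le t+1-1/K$, i.e.\ $Z\le t$, so wire $i$ holds $1$.
\item The identical computation at position $t+1$ gives $L_g$.
\end{itemize}
So no fallback to a weaker $O(2^{k_\ell})$ window is needed.
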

\begin{proof}
    We prove this by induction on $\ell$ decreasing.
    When $\ell = \log n$ we have that the smaller input to $g$ must be $i$ and the larger input to $g$ must be $i+1$ since the output after this layer is sorted.
    Now assume the claim holds for all gates in layer $\ell + 1$.
    Now we split into two cases.
    
    When $r < 2^{k_\ell - 1}$ then in layer $\ell+1$ we know that wire $i$ is an input to the gate $g_1 = (i-2^{k_\ell-1}, i)$ and wire $i+2^{k_\ell}$ is an input to the gate $g_2 = (i+2^{k_\ell -1}, i+2^{k_\ell})$.
    By our inductive hypothesis, this means that $S_{g} \subseteq S_{g_1} \cup L_{g_1} = \{i + r - 2^{k_\ell} + 1, \ldots, i+r\}$ and $L_g \subseteq S_{g_2} \cup L_{g_2} = \{i+r+1, \ldots, i+r+2^{k_\ell}\}$.

    When $2^{k_\ell-1} \leq r < 2^{k_\ell}$ then in layer $\ell+1$ we know that wire $i$ is an input to the gate $g_1 = (i, i+2^{k_\ell-1})$ and wire $i+2^{k_\ell}$ is an input to the gate $g_2 = (i+2^{k_\ell}, i+2^{k_\ell} + 2^{k_\ell -1})$.
    Again our inductive hypothesis gives us that $S_g \subseteq S_{g_1} \cup L_{g_1} \subseteq \{i + r - 2^{k_\ell} + 1, \ldots, i+r\}$ and $L_g \subseteq S_{g_2} \cup L_{g_2} = \{i+r+1, \ldots, i+r+2^{k_\ell}\}$.
\end{proof}

Next we state the key technical lemma that we will need to prove our lower bound on $\EP_\circuit$. We need the following definition of \emph{central gates} in order to state our key lemma, and will use the definition of the hypergeometric distribution in the proof:

\begin{definition}
    We say a gate $(i,i+2^{k_\ell}) \in \gateSet_\ell$ is \emph{central} if
        $\frac{3n}{8} \leq i  \leq \frac{5n}{8}$.
\end{definition}

\begin{definition}
\label{def:hyperGeo}
    Let $\hyperGeo{N}{K}{n}[k]$ denote the probability that when $n$ items are taken without replacement from a set of $N$ items of which exactly $K$ are marked that exactly $k$ marked items are taken. This is given by the hypergeometric distribution:
    \[
    \hyperGeo{N}{K}{n}[k] = \binom{N-K}{n-k}\binom{K}{k} / \binom{N}{n},
    \]
    which has mean $nK/N$. The distribution may have one or two modes, which differ by at most 1 from the mean.
\end{definition}

\begin{lemma}\label{lem:batcher-integer-swap-bound}
    Let $n$ be a sufficiently large power of two and $\ell > \ceil{\frac{1}{2} \log n} + 1$ be a layer in Batcher's odd-even mergesort on $n$ inputs.
    Let $g = (i,i+2^{k_\ell}) \in \gateSet_\ell$ be a central gate and $\dist$ be the marginal distribution over input pairs $(x_1,x_2)$ to $g$ induced by giving the sorting network a uniformly random permutation of the set $[n]$ as input.
    Then there is some constant $\newbounda > 1$ (independent of $n$) such that  for every $(x_1,x_2) \in \support(\dist)$,
    \begin{displaymath}
        \frac{1}{\newbounda} \leq \frac{\Pr_{\dist}[(x_2,x_1)]}{\Pr_{\dist}[(x_1,x_2)]} \leq \newbounda.
    \end{displaymath}
\end{lemma}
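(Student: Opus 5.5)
We will reduce the statement to a counting comparison. The plan is to express $\Pr_{\dist}[(x_1,x_2)]$ as a count of permutations, and then build a near-bijection between inputs that deliver $(x_1,x_2)$ to $g$ and inputs that deliver $(x_2,x_1)$.

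\textbf{Structure of the inputs to $g$.} By \cref{lem:batchers-shuffle-invar} and \cref{lem:batchers-partial-sorts}, just before layer $\ell$ the wires in each residue class mod $2^{k_{\ell-1}}$ hold a sorted copy of the values originally in that class. The $r$-class consists of the wires $e_j$ and the $(r+2^{k_\ell})$-class of the wires $o_j$ in the notation of \cref{lem:batchers-partial-sorts}. Moreover, that layer-$0$ content is itself determined by the recursive calls on the two halves, which in turn see uniformly random sets.

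The two inputs of $g=(o_j,e_{j+1})$ are therefore the $j$-th smallest value among the $m=n/2^{k_{\ell-1}}$ values of one class and the $(j+1)$-th smallest value of another class. Here $m\ge 2^{\ell-1}\gtrsim\sqrt n$ because $\ell>\ceil{\frac12\log n}+1$. Since the two halves are sorted independently, the class memberships are determined as follows. Each half's recursive sort places its sorted contents on its wires. So the $r$-class within the top half receives the values of ranks $\equiv r \pmod{2^{k_{\ell-1}}}$ among the top-half set, and similarly for the bottom half.

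Hence the event ``$g$ receives $(x_1,x_2)$'' is a condition only on the random split of $[n]$ into halves $A$ (top) and $B$ (bottom), with $|A|=|B|=n/2$, together with the ranks of elements inside $A$ and $B$. By \cref{lem:possible-gate-values}, both $x_1$ and $x_2$ lie in a window of length $2^{k_\ell+1}$ around $i+r$. For central gates this window lies near the middle of $[n]$, and rank $\approx n/2$ means the relevant order statistics are governed by how many elements of $A$ fall below a threshold. These counts are hypergeometric (\cref{def:hyperGeo}) with mean $\approx$ half the threshold and standard deviation $\Theta(\sqrt n)$.

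\textbf{Coupling.} Given $x_1<x_2$ (or $x_1>x_2$), the orders $(x_1,x_2)$ and $(x_2,x_1)$ correspond to which of the two values sits on the $o$-wire versus the $e$-wire. That in turn depends on whether the relevant element belongs to $A$ or $B$, and on its rank residue. We will write both probabilities as sums over $a=|A\cap[0,t)|$ for a threshold $t$ near $i+r$ of $\hyperGeo{n}{n/2}{t}[a]$, times combinatorial counts of the local arrangement in the window. Swapping which of the two values lies in $A$ shifts the needed $a$ by $O(1)$ or by a bounded amount in rank residue. Because $t$ is within $n/8+O(2^{k_\ell})$ of $n/2$ and $2^{k_\ell}\le \sqrt n$, every relevant $a$ is within $O(\sqrt n)$ of the mean. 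There, consecutive hypergeometric ratios $\hyperGeo{n}{n/2}{t}[a+1]/\hyperGeo{n}{n/2}{t}[a]$ are $1\pm O(1/\sqrt n)$. So shifts of $O(\sqrt n)$ change the probability by at most a constant factor, which gives the constant $\newbounda$.

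\textbf{Main obstacle.} The hardest step is making this precise. We must characterize exactly, via rank residues mod $2^{k_{\ell-1}}$ in $A$ and $B$, when the pair lands in each order, and then check that the map between the two events changes the hypergeometric parameters only by a bounded number of standard deviations. Handling the tails uniformly also takes care, because the lemma requires the bound for every support pair, including unlikely ones. We would first try an explicit injection: exchange the $A$/$B$ membership of $x_1$ and $x_2$, together with a bounded set of window elements, so that the residues are preserved. We would then bound the resulting multiplicity loss by a ratio of hypergeometric probabilities at points differing by $O(2^{k_\ell})=O(\sqrt n)$. The case where the pair is in the far tail would need a separate argument, showing that both orders are comparably rare.
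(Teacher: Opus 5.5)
Your decomposition is the same as the paper's: condition on the split $A/B$, then sort configurations by $s=|A\cap[0,\xmin)|$ and $I=A\cap[\xmin,\xmax]$, with hypergeometric weights $|\calA^s_I|=\binom{\xmin}{s}\binom{n-\xmax-1}{n/2-|I|-s}$. But the two steps that make up the actual proof are missing.

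\textbf{First gap: no order-reversing correspondence.} You have none, and the one you suggest does not work. With $g=(o_t,e_{t+1})$, the order of the pair is decided by whether $\xmin$ is labelled odd (then it sits on $o_t$) or even (then it sits on $e_{t+1}$). That label is fixed by $s \bmod 2^{k_\ell+1}$ together with the half containing $\xmin$: its rank is $s$ if $\xmin\in A$ and $\xmin-s$ if $\xmin\in B$. A direct check shows that, unless $\xmin\equiv 2r+2^{k_\ell}\pmod{2^{k_\ell+1}}$, the two orders require different residues of $s$. So an exchange of memberships inside the window, which leaves $s$ fixed, cannot reverse the order. Preserving residues is exactly what must not happen. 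The paper's key observation is that, with $I$ fixed, the move $s\mapsto s\pm 2^{k_\ell}$ reverses the order. Concretely, move $2^{k_\ell}$ elements below $\xmin$ to the other half and $2^{k_\ell}$ elements above $\xmax$ back. This shifts every window rank, in both halves, by $2^{k_\ell}$ modulo $2^{k_\ell+1}$. It therefore swaps $\evenSet\cap[\xmin,\xmax]$ with $\oddSet\cap[\xmin,\xmax]$ and turns input $(x_1,x_2)$ into $(x_2,x_1)$. The required shift is exactly $2^{k_\ell}$, not $O(1)$.

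\textbf{Second gap: the uniformity step rests on a false claim.} For fixed $I$, whether $g$ receives $(x_1,x_2)$ depends on $s$ only through $s\bmod 2^{k_\ell+1}$. Hence, for every support pair, typical or not, $\Pr_\dist[(x_1,x_2)]$ is a sum over whole residue classes of $s$ spread across a range of length $\Theta(n)$. It is not true that every relevant $a$ lies within $O(\sqrt n)$ of the mean. At distance $\Theta(n)$ from the mode $s^*$, consecutive ratios $|\calA^{s+1}_I|/|\calA^s_I|$ are bounded away from $1$. A shift by $2^{k_\ell}$ in a fixed direction therefore changes the weight by a factor $e^{\Theta(2^{k_\ell})}$, which is unbounded because $2^{k_\ell}$ can be $\Theta(\sqrt n)$. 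Your $1\pm O(1/\sqrt n)$ estimate only holds near the mode.

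The paper resolves this by comparing $|\calA^s_I|$ with $|\calA^{s-2^{k_\ell}}_I|+|\calA^{s+2^{k_\ell}}_I|$, both of which yield the reversed order:
\begin{itemize}
\item if $|s-s^*|\ge 2^{k_\ell}$, unimodality makes the term on the side of the mode at least $|\calA^s_I|$;
\item if $|s-s^*|<2^{k_\ell}$, a telescoping product of factors $(z-j+1)/(z+j)$ with $z=\Theta(n)$ and $j\le 2^{k_\ell}$, followed by AM--GM, gives the lower bound $(2/\newbounda)|\calA^s_I|$.
\end{itemize}
Summing over $(s,I)$, with each target set counted at most twice, yields $C(x_1,x_2)\le\newbounda\,C(x_2,x_1)$, where $C$ counts the splits producing each order. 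So the ``far tail'' you propose to treat separately is misplaced. The tail lives in $s$ and affects every pair, and no case split over rare pairs is needed. An alternative repair would be to show that each residue class's total mass is dominated by its near-mode part, but that argument is also absent from your proposal.
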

Before we more formally prove this lemma, we give a detailed and more intuitive outline of the strategy.
We want to show that if there are $k$ permutations of $[n]$ resulting in a gate $g = (i, i + 2^{k_\ell})$ receiving any feasible input $x_1,x_2$, then there must be at least $k/\newbounda$ permutations of $[n]$ resulting in $g$ receiving input $x_2,x_1$.
To prove this, we first note that due to the structure of Batcher's odd even mergesort, if
$r \equiv i (\bmod 2^{k_\ell})$ then
the top wire inputs to $g$ will have some fixed rank $t$ among the values of all wires after layer zero (the recursive step, which separately sorts the first and second half of the inputs independently) whose indices are congruent to $r$ mod $2^{k_\ell}$,
and bottom wire inputs to $g$ will have rank $t+1$ among the values of all wires after layer zero whose indices are congruent to $r$ mod $2^{k_\ell+1}$.
This observation is an immediate consequence of \cref{lem:batchers-partial-sorts}.
We call these sets $\oddSet$ and $\evenSet$ respectively, since they correspond to indices $r$ plus either odd or even multiples of $2^{k_\ell}$ respectively.

With this, we now know that the top input to $g$ is always the $t$'th ranked element of $\oddSet$ while the bottom input to $g$ is always the $t+1$'st ranked element of $\evenSet$.
To prove our lemma, it is sufficient to show that the number of permutations of $[n]$ that lead to $x_1$ being the $t$'th ranked element of $\oddSet$ and $x_2$ being the $t+1$'st ranked element of $\evenSet$ is within a factor of $\newbounda$ of the number of permutations resulting in $x_2$ being the $t$'th ranked element of $\oddSet$ and $x_1$ being the $t+1$'st ranked element of $\evenSet$.

Unfortunately, this is complicated by the fact that the indices of elements after layer zero (and hence our sets $\oddSet$ and $\evenSet$) are the result of independently sorting the first half and the second half of the input.
Thus the sets $\oddSet$ and $\evenSet$ are determined entirely by which inputs started on which half of the wires.
We introduce sets $A$ and $B$ containing the values stored on the first and second half of the wires at the beginning of the circuit respectively.
Note that, for the uniformly random permutation of $[n]$ as input, the induced equipartition of $[n]$ into $A$ and $B$ is selected uniformly at random.
Then the sets $\oddSet$ and $\evenSet$ are entirely determined by $A$ and $B$.

Let $\xmin$ and $\xmax$ be the smaller and larger values respectively among $x_1,x_2$.
We observe that shifting exactly $2^{k_\ell}$ elements from $A$ to $B$ (or from $B$ to $A$) that are smaller than $\xmin$ and moving $2^{k_\ell}$ elements larger than $\xmax$ back results in flipping the sets $\oddSet \cap [\xmin,\xmax]$ and $\evenSet \cap [\xmin,\xmax]$, which has the effect of swapping the order of the two inputs to $g$.
Moreover for any $A$ and $B$ resulting in $g$ receiving input $x_1,x_2$, swapping any number of elements smaller than $\xmin$ between $A$ and $B$ and doing likewise with elements larger than $\xmax$ will maintain the fact that $g$ receives input $x_1,x_2$.
Thus we want to compare the number of possible $A$ with $s$ elements smaller than $\xmin$ and a fixed subset $I$ of $\{\xmin, \ldots, \xmax\}$ in $A$ with the number of possible $A'$ with $s \pm 2^{k_\ell}$ elements smaller than $\xmin$ and the same fixed subset of $I$ in $A'$.
The remainder of the proof is a (somewhat cumbersome) argument that these values are within a factor of $\alpha$ of one another.

We use $\calA_I^s$ to denote the set of all $A$ that have $A \cap \{\xmin, \ldots, \xmax\} = I$ and exactly $s$ elements smaller than $\xmin$ in $A$.
The relative size of these sets for different values of $s$ are hyper-geometrically distributed; the `marked' items are elements smaller than $\xmin$ and we want to count how many of them end up in set $A$.
Since the probability mass function of the hyper-geometric distribution starts off monotonically non-decreasing until hitting some mode $s^*$ and then is monotonically non-increasing, we immediately have that when $\abs{s-s^*}>2^{k_\ell}$ it must be the case that $\abs{\calA_I^s} \leq \max(\abs{\calA_I^{s-2^{k_\ell}}}, \abs{\calA_I^{s+2^{k_\ell}}})$ and we are done.

When $\abs{s-s^*}<2^{k_\ell}$ the analysis is more involved.
We show that the ratio of the squared geometric mean of $\abs{\calA_I^{s-2^{k_\ell}}}$ and $\abs{\calA_I^{s+2^{k_\ell}}}$ to the value of $\abs{\calA_I^{s}}^2$ is at least some constant, $1/\newbounda^2$, by 
expressing $\abs{\calA^{s-2^{k_\ell}}_I}\abs{\calA^{s+2^{k_\ell}}_I}/\abs{\calA^{s}_I}^2$ as a telescoping product.
With this lower bound in hand,  we can use the arithmetic, geometric mean inequality to deduce that $\abs{\calA^{s-2^{k_\ell}}_I}+\abs{\calA^{s+2^{k_\ell}}_I}\geq \abs{\calA^{s}_I} (2/\newbounda)$.
This in turn gives us that if $C(a,b)$ is the number of choices for $A$ resulting in $g$ receiving input $a,b$ then we always have that $1/\newbounda \leq \frac{C(x_2,x_1)}{C(x_1,x_2)} \leq \newbounda$ as desired.

\begin{proof}[Proof of \cref{lem:batcher-integer-swap-bound}]
    Let $r \equiv i (\bmod 2^{k_\ell})$.
    As in the proof of \cref{lem:batchers-partial-sorts}, we define wires $e_j$ indexed at $r+2j \cdot 2^{k_\ell}$ and $o_j$ indexed at $r+(2j+1)2^{k_\ell}$.
    Let $\evenSet$ and $\oddSet$ be the set of values found in the $e_j(0)$ and the $o_j(0)$ respectively.
    By \cref{lem:batchers-partial-sorts} we know that the $e_j(\ell-1)$ and the $o_j(\ell-1)$ contain the values in $\evenSet$ and $\oddSet$ respectively in  sorted orders.
    Let $t$ be the index such that $g = (o_t, e_{t+1})$.
    Thus we know that the input to $g$ is the pair $(o_t(\ell-1), e_{t+1}(\ell-1))$.
    
    Next we characterize the sets $\evenSet$ and $\oddSet$.
    Let $A$ and $B = [n] \setminus A$ be the uniformly random equipartition of our input elements into the first and second halves of the wires respectively at the beginning of the circuit.
    If $a_0 < a_1 < \ldots < a_{n/2-1}$ are the elements of $A$ and $b_0 < b_1 < \ldots < b_{n/2-1}$ are the elements of $B$, then we know that $w_i(0) = a_i$ for all $i < n/2$ and $w_i(0) = b_{i-n/2}$ for all $i \geq n/2$ at the end of layer zero.
    Then we can characterize our sets as
    \begin{displaymath}
        \evenSet = \{a_j \; | \; j \equiv r (\bmod 2^{k_\ell + 1} )\} \cup \{b_j \; | \; j \equiv r (\bmod 2^{k_\ell + 1} )\}
    \end{displaymath}
    and
    \begin{displaymath}
        \oddSet = \{a_j \; | \; j \equiv r + 2^{k_\ell} (\bmod 2^{k_\ell + 1} )\} \cup \{b_j \; | \; j \equiv r + 2^{k_\ell} (\bmod 2^{k_\ell + 1} )\}.
    \end{displaymath}
    Thus we know that any $x \in \evenSet$ exactly when $x=a_j$ or $x=b_j$ for some $j \equiv r (\bmod 2^{k_{\ell}+1})$.
    Likewise $x \in \oddSet$ exactly when $x=a_j$ or $x=b_j$ for some $j \equiv r + 2^{k_\ell} (\bmod 2^{k_{\ell}+1})$.

Now define $\xmin$ and $\xmax$ as the min and the max respectively of $x_1$ and $x_2$. It follows from \cref{lem:possible-gate-values} that 
\begin{equation}
\label{eqn:xminmaxdiff}
\xmin =  i \pm O(\sqrt{n}) \;\;\;\;\; \xmax =  i \pm O(\sqrt{n}), \;\;\;\;\; \mbox{ and } 1 \le \xmax-\xmin \le 2\sqrt{n}. 
\end{equation}
Since $g = (i, i + 2^{k_{\ell}})$ is central, we have that 
\begin{equation}
\label{eqn:xminmaxbounds}
3n/8 - O(\sqrt{n}) \le \xmin \le \xmax \le 5n/8 + O(\sqrt{n}),
\end{equation}
and so both $\xmin$ and $\xmax$ are $\Theta(n)$.
    
Observe that moving precisely 
    $2^{k_\ell}$ elements from $A$ to $B$ (or from $B$ to $A$) that are smaller than $\xmin$ and moving $2^{k_\ell}$ elements larger than $\xmax$ back results in flipping the sets $\evenSet \cap [\xmin,\xmax]$ and $\oddSet \cap [\xmin,\xmax]$.\footnote{When shifting small elements from $B$ to $A$, each old $a_j$ with a value in this interval becomes $a_{j+2^{k_\ell}}$ and each old $b_j$ with a value in this interval becomes $b_{j-2^{k_\ell}}$.}
    In particular such a change would swap the values of $o_{t}(\ell-1)$ and $e_{t+1}(\ell -1)$, as the even and the odd sets swapped their values in this interval and the moved elements made the number of elements smaller than $\xmin$ now in $\evenSet$ one more than the number of elements smaller than $\xmin$ previously in $\oddSet$.\footnote{As an illustrative example of this, consider $\ell = 0$ so $\evenSet$ and $\oddSet$ are just the even and odd indices respectively of $A$ and $B$ when sorted. Fix $A=\{0,1,2,3\}$ and $B = \{4,5,6,7\}$. Then $\evenSet = \{0,2,4,6\}$ and $\oddSet = \{1,3,5,7\}$.
    If we swap values $0$ and $7$ between $A$ and $B$, then we get $A' = \{1,2,3,7\}, B' = \{0,4,5,6\}$ and $\evenSet' = \{0,1,3,5\}, \oddSet' = \{2,4,6,7\}$. We have that $e'_{t+1}= o_t$ and $o'_t = e_{t+1}$ for $t\in\{0,1,2\}$.}
    This swap would have the effect of replacing input $(x_1,x_2)$ with $(x_2,x_1)$ and vice versa.
    
    Let $\calA^s_I$ be the (possibly empty) subset of equipartitions $A$ where $I=[\xmin,\xmax] \cap A$ and $s=\abs{[0,\xmin) \cap A}$.
    Importantly, all equipartitions in $\calA^s_I$ result in $g$ receiving the same input.
    Then we can precisely characterize the size of these sets by:
    \begin{equation}\label{eqn:hypergeo-counts}
        \abs{\calA^s_I} = \hyperGeo{(n-\xmax + \xmin -1)}{\xmin}{n/2 - \abs{I}}[s] \cdot \binom{n- \xmax + \xmin -1}{n/2 - \abs{I}} = \binom{\xmin}{s} \binom{n - \xmax - 1}{n/2 - \abs{I} - s}.
    \end{equation}
        To see why the hypergeometric distribution applies, note that $\abs{I}$ elements of $A$ are already fixed, leaving $n/2-\abs{I}$ elements of $A$ that can be chosen from the set $[0,n-1] - [\xmin,\xmax]$, which has size $n - \xmax + \xmin -1$.
 Mark the $\xmin$ elements in the range $[0,\xmin-1]$; from these, $s$ must be chosen. The remaining $n/2 - \abs{I} - s$ items are chosen from the unmarked elements in the set $[\xmax+1,n-1]$ which has size $n - \xmax -1$.
 
 From \cref{def:hyperGeo}, the mean of this distribution is 
 \[
\frac{\xmin (n/2-\abs{I})}{n - (\xmax - \xmin +1)} .
\]
 Since $\abs{I} \le \xmax - \xmin + 1 \le 2^{k_{\ell}+1} \le 2 \sqrt{n}$, and since $\xmin = \Theta(n)$, the mean and modes (which differ from the mean by at most 1) are $\xmin/2 \pm O(\sqrt{n})$. We denote the larger mode of this distribution by $s^*$, and so we have
 \begin{equation}
 \label{eqn:modebound}
 s^* = \xmin/2 \pm O(\sqrt{n})
 \end{equation}
    
    For any $s$ and $I$ where $g$ receives the input pair $(x_1,x_2)$ we know that both $\calA^{s+2^{k_\ell}}_I$ and $\calA^{s-2^{k_\ell}}_I$ results in $g$ receiving the input pair $(x_2,x_1)$.
    We will want to show that there is some constant $\newbounda > 1 $ such that for every choice of $s$ and $I$,
    \begin{equation}\label{eqn:shifted-sets-large}
        \abs{\calA^{s-2^{k_\ell}}_I} + \abs{\calA^{s+2^{k_\ell}}_I} \geq  \abs{\calA^{s}_I}(2/ \newbounda).
    \end{equation}

    Since the hyper-geometric distribution is unimodal, i.e., strictly non-decreasing to $s^*$, the larger mode, and then strictly non-increasing
    after that point, for any $s \leq s^* - 2^{k_\ell}$ we immediately have that $\abs{\calA^{s}_I} \leq \abs{\calA^{s+2^{k_\ell}}_I}$.
    Likewise for any $s \geq s^* + 2^{k_\ell}$ we immediately have that $\abs{\calA^{s}_I} \leq \abs{\calA^{s-2^{k_\ell}}_I}$.

    When $\abs{s-s^*} < 2^{k_\ell}$ we give a lower bound on  $\abs{\calA^{s-2^{k_\ell}}_I}\abs{\calA^{s+2^{k_\ell}}_I}/\abs{\calA^{s}_I}^2$ and then use the arithmetic geometric mean equality to deduce that $\abs{\calA^{s-2^{k_\ell}}_I} + \abs{\calA^{s+2^{k_\ell}}_I} \geq (2/\newbounda) \abs{\calA^{s}_I}$.
Using a telescoping product, we have that:
    \[
        \frac{\abs{\calA^{s-2^{k_\ell}}_I}\abs{\calA^{s+2^{k_\ell}}_I}}{\abs{\calA^{s}_I}^2} \\
        = \prod_{j=1}^{2^{k_\ell}} \frac{\abs{\calA_I^{s-j}}}{\abs{\calA_I^{s-j+1}}} \frac{\abs{\calA_I^{s+j}}}{\abs{\calA_I^{s+j-1}}}. \\
\]
We first show that the telescoping product is well-defined, i.e., that all terms $\abs{\calA^{w}_I}$, $s-2^{k_{\ell}} \le w \le s + 2^{k_{\ell}}$, are positive.
The left binomial term of \cref{eqn:hypergeo-counts} for $\abs{\calA^{w}_I}$ is $\binom{\xmin}{w}$. 
From \cref{eqn:xminmaxbounds} we have that $\xmin = \Theta(n)$. Also, 
$\abs{w-s^*} \le \abs{w-s} + \abs{s - s^*} \le 2\sqrt{n}$ and so, since $s^* = \xmin/2 + O(\sqrt{n})$ by \cref{eqn:modebound}, we have 
\[
w = \xmin/2 \pm O(\sqrt{n}).
\]
Therefore for sufficiently large $n$, since $\xmin = \Theta(n)$, $w$ is positive and less than the upper index, $\xmin$, so the first binomial term, $\binom{\xmin}{w}$, is positive.

Now consider the right binomial term of \cref{eqn:hypergeo-counts} for $\abs{\calA^{w}_I}$, which is $\binom{n - \xmax - 1}{n/2 - \abs{I} - w}$.
We have that $\xmax \le 5n/8 + O(\sqrt{n})$, so the upper index of the binomial is at least $3n/8 - O(\sqrt{n})$. 
Using the expression for $w$ above, as well as the fact that $\abs{I} \le \xmax - \xmin + 1 = O(\sqrt{n})$,
\cref{eqn:xminmaxdiff}, we can bound the lower index as follows:
\[
n/2 - \abs{I} - w = n/2 - \xmin/2 \pm O(\sqrt{n}) \ge n/2 - \xmax/2 \pm O(\sqrt{n}),
\]
which again is positive and is less than the upper index for sufficiently large $n$.

We now expand the telescoping product above:
   \begin{align*}
        & \frac{\abs{\calA^{s-2^{k_\ell}}_I}\abs{\calA^{s+2^{k_\ell}}_I}}{\abs{\calA^{s}_I}^2} \\
        &= \prod_{j=1}^{2^{k_\ell}} \frac{\abs{\calA_I^{s-j}}}{\abs{\calA_I^{s-j+1}}} \frac{\abs{\calA_I^{s+j}}}{\abs{\calA_I^{s+j-1}}}\ \\
        &= \prod_{j=1}^{2^{k_\ell}} 
        \frac{(s-j+1)(n/2-\xmax + \abs{I} + s - j)}{(\xmin - s + j)(n/2 - \abs{I} - s + j)}\frac{\abs{\calA_I^{s+j}}}{\abs{\calA_I^{s+j-1}}} \\
        &= \prod_{j=1}^{2^{k_\ell}} 
        \frac{(s-j+1)(n/2-\xmax + \abs{I} + s - j)}{(\xmin - s + j)(n/2 - \abs{I} - s + j)}\frac{(\xmin - s-j+1)(n/2 - \abs{I} -s-j+1)}{(s+j)(n/2 - \xmax + \abs{I} + s+j-1)} \\
        &=  \prod_{j=1}^{2^{k_\ell}} \hspace{-2pt} \left( \frac{s-j+1}{s+j} \right) \hspace{-2pt} \left( \frac{\xmin - s -j+1}{\xmin - s + j}\right) \hspace{-2pt} \left( \frac{n/2 - \abs{I} -s -j +1}{n/2 - \abs{I} - s + j} \right) \hspace{-2pt} \left( \frac{n/2 - \xmax - 1 + \abs{I} + s - j + 1}{n/2 - \xmax - 1 + \abs{I} + s + j} \right) .\\
    \end{align*}
In the last line, inside the product there are four terms, each of the form $\frac{z-j +1}{z+j}$, where, from our previous derivations, $z = \Theta(n)$ and $j \le \sqrt{n}$. Therefore for some positive constant $c$, each of these terms is at least $1 - c/\sqrt{n}$. Therefore, since $k_{\ell} \le \sqrt{n}$,
   \[
 \frac{\abs{\calA^{s-2^{k_\ell}}_I}\abs{\calA^{s+2^{k_\ell}}_I}}{\abs{\calA^{s}_I}^2} \ge (1 - c/\sqrt{n})^{4 \sqrt{n}} \ge (1/4)^{4c},
   \]
where for the last inequality we use the fact that $1/4 \le (1 - 1/k)^k$ for all $k \ge 2$.

Using the arithmetic geometric mean inequality,
    \begin{align*}
        \abs{\calA^{s-2^{k_\ell}}_I} + \abs{\calA^{s+2^{k_\ell}}_I} &\geq 2\sqrt{\abs{\calA^{s-2^{k_\ell}}_I} \abs{\calA^{s+2^{k_\ell}}_I}}. \\
        &\geq 2 (1/4)^{2c}  \abs{\calA^{s}_I} = (2/\newbounda) \abs{\calA^{s}_I},
    \end{align*}
    where the constant $\newbounda$ is $4^{2c}$.
 We have now proven \eqref{eqn:shifted-sets-large} in all cases, 
    Let $C(x_1,x_2)$ denote the number of choices of set $A$ such that $g$ will receive input $(x_1,x_2)$.
    Then:
    \begin{displaymath}
        C(x_1,x_2) = \sum_{\substack{s,I\\ \calA_I^s \text{ gives input } (x_1,x_2)}} \abs{\calA^s_I} \leq (\newbounda/2) \sum_{\substack{s,I\\ \calA_I^s \text{ gives input } (x_1,x_2)}} \abs{\calA^{s-2^{k_\ell}}_I} + \abs{\calA^{s+2^{k_\ell}}_I} \leq \newbounda C(x_2,x_1),
    \end{displaymath}
    where the last inequality comes from the fact that this sum is counting each possible $A$ at most twice.
    By symmetry the same argument gives that $C(x_2,x_1) \leq \newbounda C(x_1,x_2)$.
    Since every equipartition is equally likely we can conclude that:
    \begin{displaymath}
        1/\newbounda \leq \frac{C(x_2,x_1)}{C(x_1.x_2)} = \frac{\Pr_{\dist}[(x_2,x_1)]}{\Pr_{\dist}[(x_1,x_2)]} \leq  \newbounda
    \end{displaymath}
    as desired.    
\end{proof}

\paragraph{Thermodynamic bounds}
Now that we have proven \cref{lem:batcher-integer-swap-bound}, we are ready to prove our thermodynamic bounds.
\begin{lemma}\label{lem:hard-gate-ef-min}
    Let $\domain$ be an ordered set and $\dist$ be the distribution resulting from picking $n$ elements of $\domain$ without replacement uniformly at random.
    Let $n$ be sufficiently large so that \cref{lem:batcher-integer-swap-bound} applies.
     Any gate in Batcher's odd-even mergesort where \cref{lem:batcher-integer-swap-bound} can be applied must release $\Omega(1)$ expected heat.
\end{lemma}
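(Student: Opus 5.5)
The plan is to reduce to the distribution $\dist_n$ over permutations of $\{0,\ldots,n-1\}$ and then show that any valid heat function is forced to pay a constant per use on such a gate. First, by \cref{lem:sorting-network-domain-extension}, the minimum expected heat released by a gate $g$ on input distribution $\dist$ (sampling $n$ elements of $\domain$ without replacement) is at least the minimum on $\dist_n$. It therefore suffices to lower bound the optimal expected heat of $g$ under $\dist_n$. By \cref{prop:ef-min-heat-fn}, every valid heat function releases at least the optimal expected heat, and we can compute that optimum explicitly.

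Next, let $\dist_g$ be the marginal distribution over inputs $(x_1,x_2)$ to $g$ under $\dist_n$. Since inputs are distinct, $x_1 \neq x_2$ on the support. By \eqref{eqn:int-sort-heat-fn}, the optimal heat function satisfies
\[
\heat(x_1,x_2)=\log\bigl(1+\Pr_{\dist_g}[(x_2,x_1)]/\Pr_{\dist_g}[(x_1,x_2)]\bigr).
\]
Because $g$ is a gate to which \cref{lem:batcher-integer-swap-bound} applies, for every $(x_1,x_2)\in\support(\dist_g)$ the ratio $\Pr_{\dist_g}[(x_2,x_1)]/\Pr_{\dist_g}[(x_1,x_2)]$ is at least $1/\newbounda$. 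This gives $\heat(x_1,x_2)\ge \log(1+1/\newbounda)$ pointwise on the support, and so
\[
\E_{x\sim\dist_g}[\heat(x)]\ge\log(1+1/\newbounda)=\Omega(1),
\]
because $\newbounda$ is a constant independent of $n$.

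An alternative formulation of the same bound is
\[
\E_{x\sim\dist_g}[\heat(x)]=\entropy(X_g\mid\compare(X_g)).
\]
Conditioned on the sorted pair, the order is a Bernoulli variable whose bias lies in $[1/(1+\newbounda),\newbounda/(1+\newbounda)]$, so its entropy is at least $\binaryEntropy(1/(1+\newbounda))$. Either form works.

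The only delicate point is the first reduction. \Cref{lem:sorting-network-domain-extension} compares the per-gate optimum across the two distributions, and it is stated with the inequality pointing the right way: the optimum on $\dist_n$ is at most the optimum on $\dist_\domain$. The lower bound therefore transfers from $\dist_n$ to $\dist_\domain$. Everything else is immediate from \cref{lem:batcher-integer-swap-bound}, so no step here is a real obstacle; the heavy lifting was done in that lemma.
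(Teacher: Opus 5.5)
Your proposal is correct and matches the paper's proof: reduce to uniformly random permutations via \cref{lem:sorting-network-domain-extension}, then combine \eqref{eqn:int-sort-heat-fn} with \cref{lem:batcher-integer-swap-bound} to get the pointwise bound $\heat(x_1,x_2)\ge\log(1+1/\newbounda)$ on the support. Your conditional-entropy reformulation is the same argument written differently, and it gives a slightly sharper constant.
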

\begin{proof}
    Assume for now that $\domain = \{0,\ldots, n-1\}$.
    By \eqref{eqn:int-sort-heat-fn} and \cref{lem:batcher-integer-swap-bound} we know that the optimal heat function for such a gate with induced marginal input distribution $\dist$ is given by $\heat(x_1,x_2) = \log(1+\frac{\Pr_{\dist}[(x_2,x_1)]}{\Pr_{\dist}[(x_1,x_2)]}) \geq \log(1 + 1/\newbounda) = \Omega(1)$.
    Since this applies to every input pair $x_1,x_2$, the expected heat dissipation is $\Omega(1)$.
    This lower bound also applies to arbitrary sets $\domain$ due to \cref{lem:sorting-network-domain-extension}.
\end{proof}
\begin{theorem}\label{thm:batcher-entropy-integer-high}
    Let $\domain$ be an ordered set and $\dist$ be the distribution resulting from picking $n$ elements of $\domain$ without replacement uniformly at random.
    Batcher's odd-even mergesort must release $\Omega(n \log^2 n)$ expected heat on input distribution $\dist$.
    When using the optimal heat function for each gate, this value is $\Theta(n \log^2 n)$.
\end{theorem}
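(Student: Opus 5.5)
The plan is to count gates to which \cref{lem:hard-gate-ef-min} applies and show there are $\Omega(n \log^2 n)$ of them. The upper bound is then trivial, and the lower bound follows by summing the per-gate $\Omega(1)$ bounds. For the upper bound I would use \cref{lem:turgut-lemma}. Give every comparator the heat function $\heat(x_1,x_2)=1$ when $x_1\neq x_2$ and $0$ on equal inputs. This is valid, since each output has at most two preimages. Each gate then releases at most $1$ expected heat, so the total is at most the gate count $C(n)=\Theta(n\log^2 n)$. The optimal heat functions do no worse than this, which gives the $O(n\log^2 n)$ bound.

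For the lower bound, I recall that \cref{lem:batcher-integer-swap-bound} is stated for the top-level merge of an $n$-input network. Batcher's network on $n$ inputs contains, for each recursion level, $n/m$ disjoint sub-networks on $m$ wires, each doing a full merge of size $m$. Consider one such sub-network on a block of $m$ wires. Its input is the collection of values arriving on that block, and under the global input distribution this is a uniformly random sample of $m$ distinct elements of $\domain$. The values are placed in uniformly random order, because the earlier gates acting on this block belong to its own recursive sorting, and the input order within the block before that is a uniformly random permutation. So the merge stage of that block sees exactly the distribution hypothesized in \cref{lem:hard-gate-ef-min}, with $m$ in place of $n$. By \cref{lem:sorting-network-domain-extension}, together with the relabeling argument in its proof, each central gate in a layer $\ell>\lceil\frac12\log m\rceil+1$ of that merge releases $\Omega(1)$ expected heat, provided $m$ is large enough, say $m\ge m_0$. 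I would double-check one point here: conditioning on which set of values reaches the block does not break the argument. It does not, because heat lower bounds via conditional entropy only decrease under conditioning, exactly as in \cref{lem:sorting-network-domain-extension}.

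It remains to count. In a merge of size $m$, each layer $\ell\ge 2$ has about $m/2$ gates $(i,i+2^{k_\ell})$, and these are spread roughly uniformly over $i$. So about a quarter of them, $\Omega(m)$ gates, have $3m/8\le i\le 5m/8$, at least once $2^{k_\ell}$ is small compared to $m$, which holds in the layers we use. The admissible layers run from about $\frac12\log m$ to $\log m$, which is $\Omega(\log m)$ layers. This gives $\Omega(m\log m)$ hard gates per merge. Summing over the $n/m$ blocks at each recursion level with $m=2^j$ gives
\[
\sum_{j=\log m_0}^{\log n}\frac{n}{2^j}\,\Omega(2^j j)=\Omega(n\log^2 n).
\]

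The main obstacle is justifying that the sub-merges satisfy the hypotheses of \cref{lem:batcher-integer-swap-bound} and \cref{lem:hard-gate-ef-min}, namely that each block's input is a uniformly random ordered sample. It would also help to state cleanly that \cref{lem:batcher-integer-swap-bound} already refers to a gate inside a merge whose two halves were sorted recursively. The layer-by-layer count of central gates is routine from \eqref{eqn:gate-locs}.
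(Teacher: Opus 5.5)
Your proposal is correct and follows essentially the same route as the paper: a constant valid heat function gives the upper bound, and the lower bound counts the central gates in layers $\ell > \lceil \frac12 \log m\rceil + 1$ of every sub-merge, summed over recursion levels (the paper packages this sum as the recurrence $C(n) = 2C(n/2) + \Theta(n \log n)$). You also justify explicitly that each recursive block receives a uniformly random ordered sample of distinct elements, so that \cref{lem:hard-gate-ef-min} applies to it; the paper leaves this step implicit in its recurrence.
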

\begin{proof}
    We do this by counting the number of gates where we can apply \cref{lem:hard-gate-ef-min}.
    Let $C(n)$ denote the count of such gates in the circuit of size $n$.
    In the non recursive step, when $n$ is sufficiently large, the admissible gates are those in layers $\ell > \ceil{(\log n)/2} + 1$ that are central.
    As in the proof of \cref{lem:batcher-integer-swap-bound}, let $k_\ell = (\log n) - \ell$ be the width of gates in layer $\ell$.
    There are at least $n/4$ choices of $i$ such that $3n/8 \leq i \leq 5n/8$.
    By \eqref{eqn:gate-locs}, at least half of these $i$ have the correct value of $\floor{i/2^{k_\ell}}$ to be either the top or the bottom node of a gate.
    Of these $n/8$ potential gates, at most $2^{k_\ell}$ of them do not actually exist due to the constraint $2^{k_\ell} \leq i < n-2^{k_\ell}$ on the starting point of gates.
    After all this, the number of gates in layer $\ell$ is at least $n/8 - 2^{k_\ell} \geq n/9$ for sufficiently large $n$.
    The number of considered layers is $(\log n) - (\ceil{(1/2)\log n} + 1) \geq (\log n)/4$.
    Thus the number of relevant gates in the non recursive step is at least $(n \log(n))/36$ or $\Theta(n \log n)$ with the asymptotic upper bound coming from the total number of gates in the non-recursive step.
    Considering in the recursive step, we have:
    $$C(n) = 2C(n/2) + \Theta(n \log n)$$
    Which is $\Theta(n \log^2 n)$ by the Master theorem.

    By \cref{lem:hard-gate-ef-min}, each of these gates must release $\Omega(1)$ heat in expectation.
    Thus the total heat dissipation of the circuit is $\Omega(n \log^2 n)$.
    To see that this bound is asymptotically tight, observe that the heat function $\heat(x_1,x_2) = 1$ is valid.
    Since there are $O(n \log^2 n)$ total gates in Batcher's odd even mergesort, this choice for $Q$ gives a total of $O(n \log^2 n)$ expected heat released.
\end{proof}
\begin{corollary}\label{cor:batcher-integer-any-set}
    Let $\domain$ be an ordered set and $\dist$ be the distribution resulting from picking $n$ elements of $\domain$ without replacement uniformly at random.
    Batcher's odd-even mergesort has $\Omega(n \log^2 n)$ circuit entropy production on input distribution $\dist$.
\end{corollary}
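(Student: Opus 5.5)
The plan is to combine \cref{thm:batcher-entropy-integer-high} with a direct upper bound on the Landauer cost of sorting. By \cref{defn:circuit-ep}, the circuit entropy production is
\[
\EP_\circuit(\dist) = \sum_{g} \E_{x \sim \dist_g}[\heat_g(x)] - \bigl(\entropy(\dist) - \entropy(f(\dist))\bigr),
\]
where $f$ is the sorting map. The first term is the total expected heat released. For any choice of valid heat functions, each gate releases at least the optimum given by \cref{prop:ef-min-heat-fn}. By \cref{thm:batcher-entropy-integer-high}, that total is at least $c\, n\log^2 n$ for some constant $c>0$ and all sufficiently large $n$. So it remains to show that the subtracted Landauer term is $O(n\log n)$.

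For the Landauer term, we may assume the input from $\dist$ is an injective sequence, since sampling is without replacement. Conditioned on the output $f(x)$, which is the sorted list and hence the underlying set $V(x)$, the input is a uniformly random ordering of that set. Hence
\[
\entropy(\dist) - \entropy(f(\dist)) = \entropy(X \mid f(X)) = \log(n!) \le n\log n,
\]
using the conditional entropy identity stated in the preliminaries. The point to note is that this bound does not depend on $|\domain|$: the entropy of the unordered set cancels exactly, so the argument works for arbitrary (even infinite, if one restricts to finite support) ordered sets $\domain$.

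Combining the two bounds gives $\EP_\circuit(\dist) \ge c\,n\log^2 n - n\log n = \Omega(n\log^2 n)$. I expect no real obstacle, since all of the work is in \cref{thm:batcher-entropy-integer-high}. The only things to check are that that theorem's lower bound holds for every valid assignment of heat functions, not just the optimal one, which it does because the optimum minimizes per-gate expected heat, and that it transfers to general $\domain$ via \cref{lem:sorting-network-domain-extension}, which the theorem already incorporates.
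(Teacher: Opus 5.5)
Your proof is correct and follows the paper's route. The paper likewise combines the $\Omega(n\log^2 n)$ heat lower bound of \cref{thm:batcher-entropy-integer-high}, which holds for any valid heat assignment, with the fact that the drop in entropy from sorting is at most $\log(n!) = O(n\log n)$; your exact computation of $\entropy(X \mid f(X)) = \log(n!)$, independent of $|\domain|$, is just a slightly sharper form of that bound.
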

\begin{proof}
    By \cref{thm:batcher-entropy-integer-high} we know this circuit releases $\Omega(n \log^2 n)$ expected heat.
    The drop in entropy by sorting an input is at most $\log(n!)$ which is only $O(n \log n)$.
    Thus the released heat dominates the drop in entropy term for this circuit.
\end{proof}
We have thus shown that Batcher's odd even mergesort network is thermodynamically inefficient.

\subsection{A more thermodynamically efficient sorting network}\label{sec:better-than-batchers}
In \cref{sec:batcher-integers} we showed that Batcher's odd even mergesort is a poor choice if one is interested in optimizing the thermodynamic costs of sorting more than just bits, as the minimal expected heat released is asymptotically larger than the drop in entropy.
If one is only interested in asymptotic bounds, then there is a trivial way to get a sorting network with $O(n \log n)$ expected heat release by using the AKS sorting network which has only $O(n \log n)$ gates \cite{AjtaiKS83}.
However, this construction has such a large constant hidden in the $O$ notation that it is unsuitable for actual implementation.

\paragraph{A network sorting almost all inputs}
In \cite{LP98} the authors present a comparator network that sorts almost every permutation of $[n]$ using only $(3.72 + o(1)) n \log n$ comparator gates.
We will show how this construction can be leveraged to give a sorting network that releases only $O(n \log n)$ expected heat.
To start, we cite the relevant result from their paper.
\begin{proposition}[\cite{LP98}]\label{prop:tournament-almost-sort}
    Let $n=2^d$.
    For every $\varepsilon(d) = 2^{-2^{o(\sqrt{d})}}$ there is a deterministic (non-uniform) network of depth $(7.44+o(1)) \cdot d$ that sorts a uniformly random permutation of $[n]$ with probability at least $1-\varepsilon(d)$.
\end{proposition}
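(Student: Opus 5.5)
Because this is a cited result, the proof I would give follows the construction of Leighton and Plaxton~\cite{LP98}, not anything developed in this paper. The main tool is the \emph{butterfly tournament}. Index the wires by $\{0,1\}^d$. For $j=d-1,\dots,0$, apply a layer of comparators between wires whose indices differ only in bit $j$, with a fixed orientation. After these $d$ layers, each output wire corresponds to a win/loss string $\sigma\in\{0,1\}^d$.

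The first step is to track, for a fixed threshold $\theta\in[0,1]$, the probability that the element on wire $\sigma$ lies among the smallest $\theta n$ values. If the two inputs to a comparator were independent, each with this probability $p$, the two outputs would have probabilities $2p-p^2$ and $p^2$. I would make this independence rigorous, or approximately rigorous, using the standard $0$--$1$ reduction: replace the uniformly random permutation by i.i.d.\ Bernoulli$(\theta)$ bits, which is legitimate up to Hoeffding-type error terms (\cref{prop:hoeffding}). Once this holds, the probability at wire $\sigma$ is obtained by composing the maps $p\mapsto p^2$ and $p\mapsto 2p-p^2$ along $\sigma$.

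The second step is a polarization argument, analogous to the one for polar codes. For all but a $2^{-\Omega(\sqrt d)}$-type fraction of strings $\sigma$, the composed map sends every $\theta$ outside a window of width $n^{-\Omega(1)}$ around a threshold $\theta_\sigma$ to a value within $\varepsilon(d)$ of $0$ or $1$. Consequently, with probability $1-\varepsilon(d)$, the element on wire $\sigma$ has rank $\theta_\sigma n\pm n^{1-\delta}$ simultaneously for all good $\sigma$. I would prove this with a martingale or potential argument on a quantity such as $\sqrt{p(1-p)}$, which contracts under a random choice of the two maps. Then the good wires are sorted by $\theta_\sigma$ (a fixed, data-independent permutation of wires). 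The few bad wires are set aside and handled by a small separate merging network.

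The last step is cleanup. Wires with nearby $\theta_\sigma$ are grouped into blocks of size $n^{1-\delta'}$, each block contains, with high probability, exactly the elements of its rank interval up to boundary overlap, and the blocks are sorted recursively by the same scheme. Adjacent blocks are then merged to repair the boundaries. Unrolling this recursion gives depth $c\cdot d$. Since the network is fixed and the failure events are bounded uniformly, it is deterministic (though non-uniform), and the total failure probability is $\varepsilon(d)=2^{-2^{o(\sqrt d)}}$, as long as the polarization tails are doubly exponential in the number of polarized levels.

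I expect two obstacles. The first is obtaining doubly exponential polarization tails strong enough to give $\varepsilon(d)=2^{-2^{o(\sqrt d)}}$. The second, harder one is matching the explicit constant $7.44+o(1)$, which requires Leighton and Plaxton's careful accounting of the recursion and of the merge stages. For that constant I would rely directly on the calculation in~\cite{LP98}, not rederive it.
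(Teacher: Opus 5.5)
The paper gives no proof of this statement. It is quoted from \cite{LP98}, and the remark after it says this is the weaker variant from the end of Section~6 of \cite{LP98}, where the AKS network used on small subproblems is replaced by bitonic sort. Deferring the constant to \cite{LP98} is therefore in line with the paper, and your starting point (butterfly tournament, composing $p\mapsto p^2$ and $p\mapsto 2p-p^2$, polarization) is the right one.

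However, your outline misidentifies where $\varepsilon(d)$ comes from. Your recursion (``sorted recursively by the same scheme'') never hands any block to a network that sorts \emph{every} input, and it must. A block of size $m$ that is only almost-sorted fails on some arrangement, hence with probability at least $1/m!$ on a uniformly random one. So the tournament-based recursion has to stop at some size $m=m(d)$ and finish those blocks with a genuine sorting network. That choice is exactly what produces the stated bound. Bitonic sort costs $\Theta(\log^2 m)$ depth, which fits in the $o(1)\cdot d$ slack only if $\log m=o(\sqrt d)$, and this is what limits the failure probability to the form $2^{-2^{o(\sqrt d)}}$. With AKS one can take $\log m=o(d)$ and get $2^{-2^{o(d)}}$, as the remark says. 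The $\sqrt d$ is thus a depth-budget artifact of the base case, not a limitation of the polarization tails, which you named as the first obstacle.

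The recursive step is also unjustified as written. For a single tournament, independence under i.i.d.\ Bernoulli inputs is exact, not approximate: the two wires entering each comparator depend on disjoint sets of inputs. But every output of the tournament depends on all $n$ inputs, so this structure is lost once blocks of its output feed further tournaments. Those blocks do not receive uniformly random arrangements of their contents. Already for $n=4$, the two middle output wires always hold the two middle values, but in sorted order with probability $2/3$. Since your scheme is guaranteed only on uniformly random inputs, the recursive call needs either an analysis of these correlated inputs or fresh randomness. One option is to insert a uniformly random permutation of the wires within each block, bound the failure probability averaged over inputs and permutations, and then fix a good choice by averaging. An argument of this kind is the natural source of the ``non-uniform'' qualifier. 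Your stated reason (``the network is fixed and the failure events are bounded uniformly'') neither implies non-uniformity nor addresses the issue.

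Similarly, the ``small separate merging network'' for bad wires cannot be treated as lower-order cleanup. An element whose rank is uncertain over $n^{\Omega(1)}$ positions needs depth $\Omega(\log n)$ to reach its place, since in $t$ layers a value can end up on at most $2^t$ wires. Its handling therefore affects the leading constant you defer to \cite{LP98}.
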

\begin{remark}
    In \cite{LP98} the authors actually prove a bound that works for every $\varepsilon(d) = 2^{-2^{o(d)}}$ however, this stronger error bound requires invoking AKS on small subproblems. Thus the enormous constant associated with AKS would still be present in the $o(1)$ term in the depth bound.
    Instead, we use a weaker version of their result that they explicitly mention at the end of section 6 where AKS is replaced by bitonic sort.
    This slightly weakens the error bounds but avoids the large constant factor in the overall cost.
\end{remark}

In addition to a depth bound that is linear in $d$, \cref{prop:tournament-almost-sort} has only a linear number of gates per layer.
Thus the overall gate count is only $O(n \log n)$.
We will only need to instantiate this with $\varepsilon(d) = 2^{-\log(d)-\log \log(d)} = 1/(\log(n) \log \log (n))$.
The following corollary suffices for our use case.
\begin{corollary}\label{cor:almost-sort-network}
    Let $n$ be a power of two, $\domain$ be an arbitrary ordered set and $\dist$ be the distribution over inputs given by sampling $n$ elements of $\domain$ uniformly at random without replacement.
    Then there exists a deterministic (non-uniform) network with $O(n \log n)$ gates that correctly sorts inputs $x_1, \ldots, x_n \sim \dist$ except with probability at most $1/(\log(n) \log \log (n))$.
\end{corollary}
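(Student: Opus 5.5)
The plan is to instantiate \cref{prop:tournament-almost-sort} with $d=\log n$ and $\varepsilon(d) = 2^{-\log d - \log\log d} = 1/(d\log d) = 1/(\log(n)\log\log(n))$. Then we transfer the resulting statement from uniformly random permutations of $[n]$ to the distribution $\dist$ over $\domain$.

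\textbf{Step 1: the admissible error function.} We check that this choice of $\varepsilon$ is of the form $2^{-2^{o(\sqrt d)}}$. Writing $\varepsilon(d) = 2^{-2^{h(d)}}$, we have $2^{h(d)} = \log d + \log\log d$, so $h(d) = \log(\log d + \log\log d) = O(\log\log d)$, which is $o(\sqrt d)$. Hence \cref{prop:tournament-almost-sort} yields a network of depth $(7.44+o(1))\log n$ that sorts a uniformly random permutation of $[n]$ except with probability at most $1/(\log(n)\log\log(n))$. This holds for all sufficiently large $n$; for the finitely many small $n$ we can simply use any sorting network, which has $O(1)$ gates.

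\textbf{Step 2: gate count.} A comparator network on $n$ wires has at most $n/2$ comparators per layer, because the comparators in a single layer act on disjoint wires. Depth $O(\log n)$ therefore gives $O(n\log n)$ gates.

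\textbf{Step 3: transfer to $\dist$.} A comparator network's behavior depends only on the relative order of its inputs. In particular, whether it sorts an input depends only on the permutation $\pi$ that records the relative ranks of $x_1,\ldots,x_n$. Since the $n$ elements are distinct, this permutation is well defined; relabeling each $x_i$ by its rank gives an input in $[n]$ on which the network makes exactly the same comparisons and swaps. Under $\dist$, the elements are drawn uniformly without replacement, so the rank permutation is uniformly distributed over the permutations of $[n]$, by exchangeability of the ordered sample. The failure probability under $\dist$ therefore equals the failure probability on a uniformly random permutation of $[n]$, which is at most $1/(\log(n)\log\log(n))$. This is the same relabeling idea used in \cref{lem:sorting-network-domain-extension}.

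The only step requiring any care is Step 1, namely verifying that the chosen $\varepsilon$ lies in the admissible class, and that verification is a short calculation. The rest is bookkeeping.
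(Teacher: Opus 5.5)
Your proposal is correct and follows the same route as the paper: instantiate \cref{prop:tournament-almost-sort} with $\varepsilon(d) = 2^{-\log d - \log\log d} = 1/(\log(n)\log\log(n))$, then bound the gate count by at most $n/2$ comparators per layer times $O(\log n)$ depth. The paper leaves three details implicit that you spell out: the check that this $\varepsilon$ has the admissible form $2^{-2^{o(\sqrt d)}}$, the rank-relabeling transfer from uniform permutations of $[n]$ to $\dist$ over $\domain$, and the treatment of small $n$.
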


\paragraph{A sorting network with low expected released heat}
Now we show that the network in \cref{cor:almost-sort-network} can be used as a subroutine in a thermodynamically efficient sorting network.
Intuitively, this network sorts most inputs using only a small number of gates.
This allows it to be followed by a larger sorting network which uses a heat function that takes advantage of the fact that its input is often already in sorted order.
\begin{theorem}\label{thm:pre-sort-helps-upper-bound}
    Let $\domain$ be an arbitrary ordered set and $\dist$ be the distribution over inputs given by sampling $n$ elements of $\domain$ uniformly at random without replacement.
    There is a thermodynamic sorting network $\circuit$ with $O(n \log^2 n)$ gates that are each augmented with one of two heat functions $\heat_0,\heat_1$ such that, on input distribution $\dist$, the expected heat dissipation of $\circuit$ is $O(n \log n)$.
\end{theorem}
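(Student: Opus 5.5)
The plan is to take $\circuit$ to be the network of \cref{cor:almost-sort-network}, call it $\circuit_1$, followed by Batcher's odd-even mergesort $\circuit_2$ on the same $n$ wires. Then $\circuit$ is a sorting network: $\circuit_2$ alone sorts every input, and $\circuit_1$ only permutes the input. The gate count is $O(n\log n)+O(n\log^2 n)=O(n\log^2 n)$. Every gate of $\circuit_1$ gets the unbiased heat function $\heat_0(x_1,x_2)=1$. It is valid by \cref{lem:turgut-lemma}, since each output $(a,b)$ with $a<b$ has the two preimages $(a,b),(b,a)$ and $2^{-1}+2^{-1}=1$; ties, which do not occur for distinct inputs, can be given heat $0$. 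With at most one unit per gate, $\circuit_1$ releases $O(n\log n)$ expected heat.

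Every gate of $\circuit_2$ gets the biased heat function $\heat_1$, with $\heat_1(x_1,x_2)=\log(1/(1-\varepsilon))$ when $x_1\le x_2$ (already in order) and $\heat_1(x_1,x_2)=\log(1/\varepsilon)$ when $x_1>x_2$. Validity again follows from \cref{lem:turgut-lemma}, because $(1-\varepsilon)+\varepsilon=1$. I would set $\varepsilon=1/(\log n\,\log\log n)$, which is the failure probability in \cref{cor:almost-sort-network}. The key observation is that if $\circuit_1$ outputs a sorted sequence, then every comparator of $\circuit_2$ sees its inputs in order and performs no swap, so the input stays sorted throughout. On this event, which has probability at least $1-\varepsilon$, each of the $O(n\log^2 n)$ gates of $\circuit_2$ releases $\log(1/(1-\varepsilon))\le 2\varepsilon$ heat. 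The total is $O(n\log^2 n\cdot \varepsilon)=O(n\log n/\log\log n)$.

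On the complementary event, of probability at most $\varepsilon$, every gate releases at most $\log(1/\varepsilon)=O(\log\log n)$ heat, for a total of $O(n\log^2 n\log\log n)$. Weighted by the probability $\varepsilon$, this contributes $O(n\log^2 n\log\log n/(\log n\log\log n))=O(n\log n)$ expected heat. By linearity of expectation, the total expected heat of $\circuit$ is $O(n\log n)$. The result holds for any ordered $\domain$, because \cref{cor:almost-sort-network} is already stated for sampling from arbitrary $\domain$ without replacement.

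The step needing the most care is the failure-event bound. It requires a worst-case per-gate heat, which is finite and equal to $\log(1/\varepsilon)$ because $\heat_1$ never takes the value $\infty$. It also requires the choice of $\varepsilon$ to balance the $n\log^2 n\cdot\varepsilon$ term against the $\varepsilon\cdot n\log^2 n\log(1/\varepsilon)$ term. This choice is exactly why the failure probability $1/(\log n\log\log n)$ is used and not just $1/\log n$. With failure probability only $1/\log n$, the bad event would contribute $O(n\log n\log\log n)$.
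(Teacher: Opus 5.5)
Your proposal is correct and follows the paper's proof. The paper also uses the almost-sorting network of \cref{cor:almost-sort-network} with $\heat_0 \equiv 1$, followed by Batcher's network with a heat function favoring already-ordered inputs, and splits on whether the first stage sorted; the only difference is that the paper sets the in-order heat to $1/\log n$ (making the out-of-order heat $\log(1/(1-2^{-1/\log n})) = O(\log\log n)$) while you set $\varepsilon = 1/(\log n\log\log n)$, which is cosmetic and gives the same $O(n\log n)$ bound.
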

\begin{proof}
    We fix $\circuit$ to be the concatenation of the sorting network in \cref{cor:almost-sort-network} on all $n$ wires (call this $\circuit_0$) followed by Batcher's odd even mergesort on the same $n$ wires (call this $\circuit_1$).
    Since Batcher's odd even mergesort is a sorting network, we know that $\circuit$ is also a sorting network.
    We augment every gate in $\circuit_0$ with the heat function $\heat_0(x_1,x_2) = 1$.
    Since $\circuit_0$ has $O(n \log n)$ gates, the expected heat dissipation of this component is also $O(n \log n)$.
    Importantly, after running $\circuit_0$, the input is already sorted except with probability at most $1/(\log(n) \log \log (n))$.
    
    We augment every gate in $\circuit_1$ with the heat function:
    \begin{displaymath}
        \heat_1(x_1, x_2) = \begin{cases}
            1/\log(n) & x_1 \leq x_2\\
            \log\left(\frac{1}{1-2^{-1/\log(n)}}\right) & x_1 > x_2
        \end{cases}
    \end{displaymath}
    The validity of $\heat_1$ follows from \cref{lem:turgut-lemma}.
    When $\circuit_1$ receives an already sorted input from $\circuit_0$, this choice of $\heat_1$ implies that every gate will dissipate only $1/\log(n)$ heat.
    Since $\circuit_1$ has $O(n \log^2 n)$ total gates, the expected released heat conditioned on $\circuit_0$ sorting its input is only $O( n \log n)$.
    Conditioned on $\circuit_0$ failing to sort its input, some gates in the network might release up to
    \begin{align*}
        - \log (1-2^{-1/\log n}) &= -\log(1-e^{-\frac{\ln 2}{\log n}})\\
        &= -\log \left(\frac{\ln 2}{\log n}\right) - \log \left(\frac{1-e^{-\frac{\ln 2}{\log n}}}{\frac{\ln 2}{\log n}}\right)\\
        &= \log \log (n) - \log(\ln(2)) - \log \left(\frac{1-e^{-\frac{\ln 2}{\log n}}}{\frac{\ln 2}{\log n}}\right)\\
        &= \log \log (n) - \log(\ln(2)) + o(1) \qquad \textrm{since $\lim_{x \to 0} \frac{1-e^{-x}}{x} = 1$}\\
        &= O(\log \log n)
    \end{align*}
    expected heat.
    Even if every gate releases this much, the expected released heat conditioned on $\circuit_0$ failing to sort its input is only $O(n \log^2(n) \log \log(n))$.
    Thus the total expected heat released by $\circuit_1$ is at most:
    \begin{displaymath}
        \left(1-\frac{1}{\log(n) \log \log (n)}\right) \cdot O(n \log^2 (n)) / \log(n) + \frac{1}{\log(n) \log \log (n)} \cdot  O(n \log^2(n)) \cdot  \log \log (n)) = O(n \log n)
    \end{displaymath}
    as desired.
    Since both $\circuit_0$ and $\circuit_1$ release only $O(n \log n)$ expected heat, so does $\circuit$.
\end{proof}
It is important to note that this construction is heavily reliant on the input distribution $\dist$.
If $\dist$ gives heavy weight to inputs that $\circuit_0$ fails to sort, then the analysis presented here falls apart.
\section{Clean zero entropy circuits for compaction and sorting}
\label{sec:zero-entropy-compaction-and-sorting-circuits}
We start by considering \emph{bit compaction} as a special case of sorting with only two possible values for each key.
We give a detailed overview of this circuit with zero circuit entropy production and then show how the idea can be extended to more general sorting.

\subsection{A clean compaction circuit with zero entropy production}
\label{sec:clean-compaction-zero-ep}

We describe a bit-sorting (a.k.a., \emph{bit-compaction}) circuit with zero $\EP_\circuit$
(assuming the distribution of the inputs is \emph{exchangeable},
explained below).
For any binary string $x$, write $|x|_1$ to denote the number of 1's in $x$ 
and $|x|_0 = |x| - |x|_1$ to denote the number of 0's.

First, we note that we are considering compaction \emph{circuits}, rather than sorting networks:
these can use more general gates than the comparators used in \cref{sec:batcher-bits,sec:batcher-integers,sec:better-than-batchers}, 
and they are allowed ancilla wires beyond those needed to store inputs and outputs.
Although the problem of sorting bits with a sorting network as in \cref{sec:batcher-bits} is nontrivial
due to~\cref{prop:zero-one-principle},
general circuits can perform compaction without payloads by simply counting the number of times each key appears in the input.
Thus, we add payloads that must be sorted alongside their keys to make the problem non-trivial.

This problem was studied by Asharov, Lin, and Shi~\cite{asharov2021sorting}, who gave an $o(n \log n)$ sized circuit for compaction (when the keys are sufficiently small) without considering the thermodynamic costs.\footnote{
\label{footnote:compaction-history}
    See also~\cite{lin2022optimal}.
    The problem has a longer history in other models of computation,
    going back to Pippenger's self-routing superconcentrators~\cite{Pip96}.\footnotemark{}
    With comparator gates only, compaction is equivalent to sorting:
    by \cref{prop:zero-one-principle}, a comparator network that sorts every 0/1 key vector sorts every input,
    so $\Theta(n \log n)$ comparators are necessary and sufficient~\cite{AjtaiKS83}.
    With general Boolean gates,
    Asharov, Lin, and Shi~\cite{asharov2021sorting} gave the first compaction circuits of size $O(nm)$ for $m$-bit payloads,
    up to a $\mathrm{poly}(\log^* n - \log^* m)$ factor that disappears when $m = O(1)$ or $m \ge \log^{(c)} n$ for any constant $c$,
    but with superpolylogarithmic depth;
    Kouck\'{y} and Kr\'{a}l~\cite{KouckyK21} reduced the depth to $O(\log^3 n)$ with size $O(nm(1 + \log^* n - \log^* m))$,
    and Lin and Shi~\cite{lin2022optimal} to the optimal $O(\log n + \log m)$ with size $O(nm) \cdot \mathrm{poly}(\log^* n - \log^* m)$.
    (Pippenger's self-routing superconcentrators~\cite{Pip96} already give $O(n \log n + nm)$ gates~\cite{asharov2021sorting};
    in the RAM model, oblivious compaction takes linear time~\cite{AsharovKLNPS23},
    but word operations cost $\Omega(\log n)$ gates, so this does not yield linear-size circuits.)
    All of these circuits are non-stable, necessarily so:
    stable compaction needs $\Omega(n \log n)$ selector gates if payloads are only routed and never computed on~\cite{LinSX19},
    and $\Omega(nm \log n)$ gates in general if the \emph{Li-Li network coding conjecture}~\cite{LiLi04} holds~\cite{AfshaniFKL19,asharov2021sorting}.
    To our knowledge it remains open whether the $\mathrm{poly}(\log^* n - \log^* m)$ factor can be removed for every $m$,
    and no unconditional superlinear lower bound is known for stable compaction, or for sorting keys of length $\Omega(\log n)$, with general gates.
    The circuits in this paper are much larger than these; they are designed for zero circuit entropy production, not size.
}%
\footnotetext{
\label{footnote:superconcentrator}
    A \emph{superconcentrator} is a network of $2$-input, $2$-output \emph{switches},
    with $n$ inputs and $n$ outputs,
    such that for every $k$ and every choice of $k$ inputs and $k$ outputs,
    some setting of the switches joins them by $k$ vertex-disjoint paths;
    \emph{self-routing} means the network computes that setting itself
    from flags presented at its inputs,
    rather than being handed it by an external routing algorithm.
    A switch differs from a comparator in that it is told whether to swap by a separate control wire
    and never inspects the payloads it carries,
    so compaction in this model splits into a control computation over the $n$ keys
    and a data path moving the $m$-bit payloads as opaque cargo.
} 
The input is $n$ key-value pairs $(b_i,V_{i,:})$ called \emph{records},
where $b_i \in \{0,1\}$ is a one-bit \emph{key}
and $V \in \{0,1\}^{n\times m}$ is the payload matrix,
whose $i$'th row $V_{i,:}$ is the $m$-bit \emph{payload} value in record $i$.
The goal is to sort the records by the key;
for example sorting $(1,a), (0,b), (1,c), (0,d)$ could result in sorted order
$(0,d), (0,b), (1,c), (1,a)$.
(There is no requirement that the sort be stable, 
i.e., preserve the original order between two records with equal keys.)
The requirement to carry the payload values along with the keys means that 
we must really track the order of the records as we sort them; 
simply counting the 1's as above is not sufficient.

\Cref{thm:compaction-zero-entropy} assumes the distribution on the input record list is \emph{exchangeable},
which means that any permutation of the $n$ record positions gives the same distribution over inputs.
This allows payloads to be arbitrarily correlated with their keys but prevents, for example, a bias towards keys that are almost already sorted or payloads that encode the initial indices of their record.


Our circuit is based on first computing the mapping $(b,V) \mapsto (b,s,W)$,
where $b \in \{0,1\}^n$ represents the unsorted 1-bit keys,
$V \in \{0,1\}^{n\times m}$ represents the associated $m$-bit payloads,
$s \in \{0,1\}^n$ is the sorted keys, and
$W \in \{0,1\}^{n\times m}$ is the sorted payload matrix.
To make the computation clean, we then erase the bits $b$ with zero $\EP$,
which requires determining the probability $p$ that a bit $b_i$ is 1 
and then routing it to an erasure gate optimal for that probability.
It turns out that there are $\Omega(n^2)$ many such probabilities,
which implies that this technique requires $\Omega(n^2)$ erasure gates.
It remains an open question whether one can erase $n$ bits, 
which in total have quadratically many possible distributions,
using $o(n^2)$ gates,
or to compute the desired mapping $(b,V) \mapsto (s,W)$ with zero circuit entropy production using a different technique
to obtain $o(n^2)$ total gates.
We note that for any register $r$,
both copying 
$(r,\vec{0}) \mapsto (r,r)$ 
and moving
$(r,\vec{0}) \mapsto (\vec{0}, r)$
can be done reversibly.

The next theorem turns an arbitrary sorting network into a compaction circuit with zero circuit entropy production.

\begin{theorem}
\label{thm:compaction-zero-entropy}
For all positive integers $n$ and $m$, 
for any sorting network $C$ with $t(n)$ gates and depth $d(n)$,
there is a circuit $\circuit$ cleanly solving the 1-bit key, $m$-bit payload value,
compaction problem on $n$ inputs, 
with  $\EP_\circuit(\dist) = 0$ for every exchangeable distribution $\dist$ over input record lists,
using $O(t(n)m + n^2)$ gates, $O(d(n)\log m + n \log n)$ depth, and $O(t(n) + nm)$ ancilla wires.
\end{theorem}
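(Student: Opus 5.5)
Building on the overview at the start of this section, I will construct the circuit in three phases: a logically reversible computation of $(b,V,\vec 0)\mapsto(b,s,W)$, a heat-released erasure of $b$ matched exactly to the entropy drop, and a reversible cleanup of all scratch space. Every gate outside the erasure phase is injective on its reachable inputs, so it takes heat function $\heat\equiv 0$ (with $\infty$ off the promised domain), which is valid by \cref{lem:turgut-lemma}. Hence $\EP_\circuit(\dist)=0$ reduces to one identity. We need
\[
\sum_{i}\E[\heat_{g_i}]=\entropy(\dist)-\entropy(f(\dist)) = \entropy(b,V\mid s,W),
\]
where the $g_i$ are the erasure gates.

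\textbf{Phase 1 (reversible sort with a permutation record).} I would replace each comparator of $C$ by a reversible controlled swap. First I compute the comparison bit $c=[\text{key}_1>\text{key}_2]$ into a fresh ancilla, and keep it; there are $t(n)$ ancillas in total. Then I swap the two $(m+1)$-bit records controlled on $c$. Fanning $c$ out to $m$ copies via a depth-$\log m$ copy tree and then uncomputing the copies gives $O(m)$ gates and $O(\log m)$ depth per layer. This part costs $O(t(n)m)$ gates and $O(d(n)\log m)$ depth.

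Before sorting, I attach to each record its original index $i$ (about $\log n$ bits); this is the ``slightly different mapping'' needed for payloads. After sorting, I copy $(s,W)$ out, and then run the network backwards to uncompute the comparison bits, leaving $(b,V,s,W)$. Instead of using indices, one can equivalently keep only the sorted keys and reconstruct the order afterwards.

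\textbf{Phase 2 (erasing $b$ with zero EP).} I erase $b_1,b_2,\dots,b_n$ in order. When erasing $b_i$, what remains is $b_{i+1..n}$ together with $(s,W)$, and $s$ determines $k=|b|_1$. The conditional distribution of $b_i$ given everything not yet erased should then be determined by $k$ and the number $j$ of ones in the suffix $b_{i+1..n}$. Exchangeability makes the positions of the $k$ ones uniform given the multiset of records, so
\[
\Pr[b_i=1\mid\cdot]=\frac{k-j}{i}.
\]
The main obstacle is justifying this. Conditioning on $W$, which carries the payloads, must not bias which positions hold which payload. This holds because exchangeability makes $(b,V)$ a uniformly random arrangement of the sorted multiset $(s,W)$. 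One subtlety is that $V$ itself is also being erased, so I must make sure $V$ really is recoverable, or erase it jointly. The cleanest choice is to erase each whole record $(b_i,V_{i,:})$ given its suffix and $(s,W)$. The optimal conditional probability is then $1/(\text{number of remaining copies of that record value})$, which reduces to counts again. For keys, with payload recovery handled reversibly, it becomes the $(k-j)/i$ rule above.

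To realize this, I reversibly compute the pair $(i,k-j)$ into a one-hot selector over $O(n^2)$ lines. I route $b_i$ via controlled swaps to the erasure gate $E_{p}$ with $p=(k-j)/i$, whose heat function $\heat(x)=\log(1/\Pr[x])$ is optimal by \cref{prop:ef-min-heat-fn}. I then swap back and uncompute the selector. By the chain rule, the total expected heat equals $\sum_i \entropy(b_i\mid b_{>i},s,W)=\entropy(b\mid s,W)$. Each gate receives a point-mass input distribution at the wrong lines, where $\heat=0$, so the unused gates cost nothing. There are $O(n^2)$ erasure gates, and the counters run serially in depth $O(n\log n)$.

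\textbf{Phase 3 (cleanup).} I uncompute all counters and indices reversibly, and move $(s,W)$ to the output wires. This leaves $O(t(n)+nm)$ ancilla wires, all returned to zero, so the computation is clean. The gate and depth totals match the statement.
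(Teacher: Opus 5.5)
\textbf{The leftover copy of the payloads.} Your Phase 1 ends in $(b,V,s,W)$. Besides $b$, there is therefore a second copy of the payloads, $V$, which must vanish for the output to be clean, and your concrete construction never removes it. Neither remedy you suggest works.

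\emph{Erasing whole records by a count rule.} This rests on the claim that, given $(s,W)$, the input is a uniformly random arrangement of the records. That claim is false. The order of $W$ within each key block encodes the permutation $\pi_b$ applied by the network, and $\pi_b$ depends only on $b$. Given $(s,W)$, the input is uniform over only the $\binom{n}{k}$ preimages $(b,\pi_b^{-1}(W))$. So when payloads are distinct, a rule based on multiplicities of record values releases $\log n!$ per input, while the entropy drop is only $\log\binom{n}{k}$.

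\emph{Keeping $V$ while erasing $b$.} Here $b$ is generally determined by $(s,V,W)$. Erasing $b_i$ then destroys mutual information with the remaining wires and produces entropy by \cref{lem:mutual-info-positive-ep}. Erasing $V$ afterwards costs a further $\entropy(V\mid s,W)$.

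The missing step is the paper's. After computing $(s,W,G)$, copy only $s$, then run the network backwards controlled by $G$ while swapping only the keys. This restores $b$ and clears $G$ but leaves the payloads in sorted position, yielding $(b,s,W)$: payloads are moved, never copied. From your state you could equivalently cancel the duplicate reversibly, since $V=\pi_b^{-1}(W)$ is a function of $(b,W)$. With this fix, only the key bits need erasing and your $(k-j)/i$ rule is correct. Its justification, though, is the preimage count above, not uniform arrangements. The attached indices also become unnecessary, which matters because they add $\Theta(t(n)\log n)$ gates. That can exceed $O(t(n)m+n^2)$, e.g.\ for a $\Theta(n^2)$-gate network with $m=O(1)$.

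\textbf{Idle erasure gates are not free.} A plain erasure gate whose heat function $\heat(x)=\log(1/\Pr[x])$ is optimal for probability $p\in(0,1)$ has $\heat(0)=\log\frac{1}{1-p}>0$. So your claim that unused gates see a point mass ``where $\heat=0$'' is false. No heat function on the data line alone can repair this. That line equals $b_i$ when selected and $0$ otherwise, so it is correlated with the selector, and erasing it incurs the modularity cost of \cref{lem:mutual-info-positive-ep}. The paper instead uses controlled erasure gates that read the pair (control, data). Inputs $(0,\cdot)$ map injectively and get heat $0$, while $(1,0)$ and $(1,1)$ get $\log\frac{1}{1-p}$ and $\log\frac{1}{p}$.

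Also, the stage index $i$ is fixed by the gate's position. Stage $i$ therefore needs only $i+1$ lines, addressed by $|b_{\le i}|_1$, on $O(n)$ reused wires. A one-hot selector over $O(n^2)$ lines at every stage would cost $\Theta(n^3)$ gates and $\Theta(n^2)$ ancilla wires, which breaks the stated bounds.

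Once these points are fixed, your chain-rule accounting $\sum_i\entropy(b_i\mid b_{>i},s,W)=\entropy(b\mid s,W)$ is a valid substitute for the paper's per-input telescoping computation of $\log\binom{n}{k}$.
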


\begin{proof}
Denote the input $x = (b,V)$, where $b \in \{0,1\}^n$ represents the $n$ 1-bit keys,
and $V \in \{0,1\}^{n\times m}$ is the payload matrix whose row $V_{i,:}$ is the value associated with key $b_i$.
The correct output of the compaction problem on input $(b,V)$ is $C(b,V) = (s,W)$,
where $s = 0^{n-h} 1^{h}$ represents the compacted keys (assuming there are $h$ 1's in $b$),
and $W \in \{0,1\}^{n\times m}$ represents the associated values, sorted according to the keys.
(That is, the final $h$ rows $W_{n-h+1,:}, \dots, W_{n,:}$ are exactly those associated with 1 keys.)

Most of the task of doing compaction with zero circuit entropy production 
will use standard reversible computing primitives 
to compute the mapping $(b,V) \mapsto (b,s,W)$.
The main difficulty is erasing $b$ with zero $\EP$, 
once the rest of the output $(s,W)$ is correct and the other input $V$ is reversibly erased.
This last erasure of $b$ is the only part of the circuit that is not logically reversible and thus the only part which requires a heat function with non-zero entries.
In order to obtain zero circuit entropy production, we must pick these heat functions, per \cref{prop:ef-min-heat-fn}, in a way that exactly matches the induced distribution over inputs for each bit being erased.
Unfortunately, this induced input distribution depends on values other than the bit being erased (see \cref{sec:modularity} for why this causes positive entropy production).
Thus, it is necessary to determine the induced marginal probability distribution over the bit being erased conditioned on the values stored on all other wires to obtain zero circuit entropy production.

First we explain how to compute the mapping $(b,V) \mapsto (b,s,W)$ reversibly.
In the sorting network $C$, replace each comparator gate between wires $i$ and $j$
with a reversible comparator between records that also outputs (for reversibility)
whether a swap occurred:
\[
\begin{array}{rllll}
& ((b_i, V_{i,:}), & (b_j, V_{j,:}), & 0 & ) \\
\mapsto{} & ((b_i', V'_{i,:}), & (b_j', V'_{j,:}), & g & ),
\end{array}
\]
where the primed values are the identity if $b_i \le b_j$ and swapped otherwise,
with $g=1$ if a swap occurred and 0 otherwise.
To implement this record comparator using constant-size gates,
we reversibly distribute $g$ to $m$ clean ancilla wires using a balanced tree of copy gates of depth $O(\log m)$,
use those copies to swap all $m$ pairs of payload bits in parallel,
and then reverse the copy tree.
Thus each comparator of $C$ becomes $O(m)$ gates of depth $O(\log m)$.
All record comparators belonging to one layer of $C$ act on disjoint records,
so their copy trees and bit swaps can be performed in parallel.
The resulting reversible sorting computation has gate complexity $O(t(n)m)$
and depth $O(d(n)\log m)$.
Running this modified sorting network $C$ on the records $(b,V)$ results in the sorted output $(s,W)$ described above,
along with ``garbage'' data $G$ indicating which swaps occurred,
reversibly computing the mapping
\[
\begin{array}{rllll}
& (b, & V, & \vec{0} & ) \\
\mapsto{} & (s, & W, & G & ).
\end{array}
\]
We now reversibly copy $s$:
\[
\begin{array}{rlllll}
& (s, & W, & G, & \vec{0} & ) \\
\mapsto{} & (s, & W, & G, & s & ).
\end{array}
\]
Now, ``uncompute'' $C$ via the mirror-image reverse of the first sorting step,
but using swap gates that \emph{only} swap the keys, leaving the sorted payloads in place:
\[
\begin{array}{rlllll}
& (s, & W, & G, & s & ) \\
\mapsto{} & (b, & W, & \vec{0}, & s & ).
\end{array}
\]
This completes the mapping
$
(b,V)
\mapsto
(b,s,W),
$
using $O(t(n)m)$ gates and $O(d(n)\log m)$ depth.

At this point we \emph{almost} have the clean output $y = (s,W)$,
but it remains to erase the unsorted key bits $b$.

Write $b_{\le i}$ to denote the $i$'th prefix $b_{1 \twodots i}$ of $b$
(similarly 
$b_{< i}=b_{1\twodots i-1}$,
$b_{\ge i} = b_{i \twodots n}$, and 
$b_{>i} = b_{i+1\twodots n}$).
Let $h = |b|_1 = |s|_1$.
We erase the bits one at a time starting with the first,
successively producing 
$
b
\mapsto
0 b_{\ge 2}
\mapsto
0 0 b_{\ge 3}
\mapsto
\ldots
\mapsto
0^{n-2} b_{n-1} b_n
\mapsto
0^{n-1} b_n
\mapsto
0^{n}.
$
Suppose we currently have
$0^{i-1} b_{\ge i}$
and now need to erase $b_{i}$;
call this \emph{stage} $i$.
The correct distribution to assume when erasing $b_i$ when we apply \cref{prop:ef-min-heat-fn} is such that the probability of a one is 
$|b_{\le i}|_1 / i$;
we justify this at the end of the proof.
The denominator $i$ is a fixed constant in stage $i$ of this circuit,
so the key is to make the numerator
$z_i = |b_{\le i}|_1$ reversibly available in a binary register $Z$,
which will then be used to route $b_i$ to a ``controlled erasure'' gate 
with an optimal heat function for this probability.\footnote{
    A controlled erasure gate takes two input bits: $c$ and $b$.
    If the control bit $c=0$, then it leaves $b$ alone (and releases 0 heat),
    otherwise it erases $b$.
    We need these because there will be a ``bank'' of several that might be used,
    but only one will actually be used to erase $b_i$, i.e., will have its $c=1$;
    the other gates will have $c=0$.
}
We explain below how to maintain $Z$ incrementally as we iterate through the bits of $b$.

An example of the erasure subcircuit for stage $i$ is shown in \Cref{fig:compaction-subcircuit}.
There are $i + 1$ pairs of clean wires, one pair for each possible numerator
$j\in\{0,\ldots,i\}$.
Set a clean wire to 1 with a NOT gate and,
using a reversible $(i+1)$-way demultiplexer,\footnote{
    A demultiplexer takes as input a value $v$ (one or more bits),
    and an address $a$ interpreted as a binary integer,
    and routes $v$ to one of several output lines based on the value of $a$.
}
route the pair $(1,b_i)$ (the first bit is a control bit $c$ for the controlled erasure gate downstream)
to output wires addressed by the binary value $z_i$ in register $Z$.
Thus the selected pair contains $(1,b_i)$ and every other pair contains $(0,0)$.
On pair $j$, apply a controlled erasure gate to each such pair for $j \in \{0,\ldots,i\}$,
using a heat function optimal for probability $j/i$.
Consequently only the selected gate performs an erasure, using the heat function optimal for
the probability $z_i/i$; all other gates act reversibly with zero heat since their control bits are 0.
After these gates act in parallel, reverse the demultiplexer (a.k.a., multiplexing) using the same address $z_i$.
It returns the pair $(1,0)$, after which a NOT gate restores the control wire to 0,
while the original $b_i$ wire and all routing wires are also 0.

\begin{figure}[ht]
    \centering
    \resizebox{\textwidth}{!}{\sffamily\begin{tikzpicture}[>={Stealth[length=1.6mm]}, every node/.style={font=\small}, scale=0.57]
    \foreach \y in {1.2, 0.7}{
      \node[font=\small, gray] at (0,\y) {0};
      \draw[wire, gray!60] (0.3,\y) -- (0.95,\y);
      \node[font=\small, gray] at (1.6,\y) {$\cdots$};
    }
    \node[font=\small, keyblue] at (0,0.2) {1};
    \node[font=\small] at (0,-0.3) {1};
    \draw[wire] (0.3,-0.3) -- (0.95,-0.3);
    \node[font=\small] at (1.6,-0.3) {$\cdots$};
    \node[font=\scriptsize, keyblue, above=1pt, inner sep=1pt] at (2.45,0.2) {$b_3$};
    \foreach \y/\v in {-2.70/0, -3.10/1}{\node[font=\small] at (0,\y) {\v};}
    \draw[decorate, decoration={brace, amplitude=3pt}]
      (-0.42,-0.02) -- (-0.42,1.42)
      node[midway, left=3pt, align=right, font=\scriptsize] {$b_{\le 3}$\\\textcolor{gray}{($b_{<3}$ erased)}};
    \draw[decorate, decoration={brace, amplitude=3pt}]
      (-0.42,-0.52) -- (-0.42,-0.08) node[midway, left=3pt, font=\scriptsize] {$b_{>3}$};
    \draw[decorate, decoration={brace, amplitude=3pt}]
      (-0.42,-3.32) -- (-0.42,-2.48) node[midway, left=3pt, font=\scriptsize] {$Z$};
    \node[gate, minimum width=0.95cm, minimum height=2.15cm, font=\footnotesize] (inc) at (3.6,-1.45) {$Z + b_3$};
    \draw[wire, keyblue] (0.3,0.2) -- (inc.west |- 0,0.2);
    \foreach \y in {-2.70, -3.10}{\draw[wire] (0.3,\y) -- (inc.west |- 0,\y);}
    \draw[wire, keyblue] (inc.east |- 0,0.2) -- (4.9,0.2);
    \node[wirelabel, text=keyblue] at (4.9,0.2) {1};
    \foreach \y/\v in {-2.70/1, -3.10/0}{
      \draw[wire] (inc.east |- 0,\y) -- (4.9,\y);
      \node[wirelabel] at (4.9,\y) {\v};
    }
    \foreach \k/\y/\l in {0/1.8/{$p = 0/3$}, 1/0.6/{$p = 1/3$},
                          2/-0.6/{$p = 2/3$}, 3/-1.8/{$p = 3/3$}}{
      \node[cerase, minimum width=1.55cm, minimum height=0.4cm, font=\footnotesize]
        (E\k) at (9.9,\y) {\l};
    }
    \foreach \y in {-2.70, -3.10}{
      \draw[wire] (5.05,\y) -- (14.3,\y);
    }
    \draw[thick, fill=keyblue!12] (6.6,-0.6) -- (6.6,0.6) -- (7.8,2.2) -- (7.8,-2.2) -- cycle;
    \draw[wire] (7.0,-1.13) -- (7.0,-2.70);
    \fill (7.0,-2.70) circle (2.4pt);
    \draw[wire] (7.4,-1.67) -- (7.4,-3.10);
    \fill (7.4,-3.10) circle (2.4pt);
    \node[font=\scriptsize] at (7.2,-3.53) {address $z_3 = 2$};
    \foreach \y/\v in {-2.70/1, -3.10/0}{\node[wirelabel] at (8.1,\y) {\v};}
    \draw[wire, keyblue] (5.05,0.2) -- (6.6,0.2);
    \draw[wire, keyblue, dashed] (6.6,0.2) to[out=0, in=180] (7.8,-0.4);
    \foreach \k/\y in {0/1.8, 1/0.6, 3/-1.8}{
      \draw[wire] (7.8,{\y+0.2}) -- ($(E\k.west)+(0,0.2)$);
      \node[wirelabel] at (8.1,{\y+0.2}) {0};
    }
    \draw[wire, keyblue] (7.8,-0.4) -- ($(E2.west)+(0,0.2)$);
    \node[wirelabel, text=keyblue] at (8.1,-0.4) {1};
    \draw[wire, garbagepurple] (5.55,-0.2) -- (6.6,-0.2);
    \node[wirelabel, text=garbagepurple] at (5.45,-0.2) {0};
    \node[tightgate] at (6.05,-0.2) {$\neg$};
    \draw[wire, garbagepurple, dashed] (6.6,-0.2) to[out=0, in=180] (7.8,-0.8);
    \foreach \k/\y/\c in {0/1.8/0, 1/0.6/0, 2/-0.6/1, 3/-1.8/0}{
      \draw[wire, garbagepurple] (7.8,{\y-0.2}) -- ($(E\k.west)+(0,-0.2)$);
      \node[wirelabel, text=garbagepurple] at (8.1,{\y-0.2}) {\c};
      \draw[wire, garbagepurple] ($(E\k.east)+(0,-0.2)$) -- (14.3,{\y-0.2});
      \node[wirelabel, text=garbagepurple] at (11.75,{\y-0.2}) {\c};
    }
    \node[font=\tiny, garbagepurple, above] at (13.0,1.64) {control bits};
    \node[font=\tiny, garbagepurple, right, align=left] at (12.2,-1.25) {needs\\cleaning};
    \draw[->, garbagepurple] (12.33,-1.2) -- (11.92,-0.9);
    \foreach \k/\y in {0/1.8, 1/0.6, 3/-1.8}{
      \draw[wire] ($(E\k.east)+(0,0.2)$) -- (11.6,{\y+0.2});
      \node[wirelabel] at (11.75,{\y+0.2}) {0};
    }
    \draw[wire] ($(E2.east)+(0,0.2)$) -- (11.6,-0.4);
    \node[wirelabel, text=heatred] at (11.75,-0.4) {0};
    \node[font=\scriptsize, heatred] at (13.08,-0.4) {$b_3$ erased};
    \foreach \y/\v in {-2.70/1, -3.10/0}{
      \draw[wire] (14.3,\y) -- (15.55,\y);
      \node[wirelabel] at (15.55,\y) {\v};
      \node[font=\small, right] at (15.57,\y) {$\cdots$};
    }
    \draw[decorate, decoration={brace, amplitude=3pt}] (16.45,-2.48) -- (16.45,-3.32)
      node[midway, right=2pt, align=left, font=\scriptsize] {initial $Z$ for next\\stage erasing $b_4$};
    \draw[thick, fill=keyblue!12] (14.3,-2.2) -- (14.3,2.2) -- (15.5,0.6) -- (15.5,-0.6) -- cycle;
    \draw[wire] (14.7,-1.67) -- (14.7,-2.70);
    \fill (14.7,-2.70) circle (2.4pt);
    \draw[wire] (15.1,-1.13) -- (15.1,-3.10);
    \fill (15.1,-3.10) circle (2.4pt);
    \draw[wire, garbagepurple, dashed] (14.3,-0.8) to[out=0, in=180] (15.5,0.0);
    \draw[wire, garbagepurple] (15.5,0.0) -- (16.25,0.0);
    \draw[wire] (16.25,0.0) -- (17.0,0.0);
    \node[tightgate] at (16.25,0.0) {$\neg$};
    \node[font=\small, right] at (17.05,0.0) {0};
    \node[font=\scriptsize, anchor=south] at (7.2,2.25) {demultiplexer};
    \node[font=\scriptsize, anchor=south] at (14.9,2.25) {multiplexer};
    \node[font=\tiny] at (7.2,0.45) {demux};
    \node[font=\tiny] at (14.9,0.45) {mux};
  \end{tikzpicture}%
}
    \caption{
        The subcircuit for erasing bit $b_3$ when $b = 1011$,
        so $z_3 = |b_{\le 3}|_1 = 2$ and the gate must assume $\Pr[b_3 = 1] = 2/3$.
        Reversible demultiplexer drawn without extra clean input wires needed to supply $0$'s to unaddressed output wires, and
        similarly for unshown output wires of multiplexer that simply pass through unaddressed inputs.
    }
    \label{fig:compaction-subcircuit}
\end{figure}

A binary-tree demultiplexer uses $O(i) = O(n)$ constant-size gates and $O(\log i) = O(\log n)$ depth.
Together with the $i+1$ controlled erasure gates, stage $i$ therefore uses
$O(n)$ gates and $O(\log n)$ depth for routing and erasure.

Computing $z_i$ from scratch with a straightforward sequential implementation would require $O(n)$ depth,
and since this must be done sequentially for each bit erasure,
the total depth would be $O(n^2)$ if computed na\"{i}vely.
Instead, we observe that $z_i$ can be updated incrementally using only $O(\log n)$ depth as we erase bits.
Maintain a register $Z$, initially 0.
Just before stage $i$, maintain the invariant that $Z=|b_{<i}|_1$, the number of 1's among
the already-erased prefix bits.
At stage $i$, do the reversible increment
$
(b_i, Z)
\mapsto
(b_i, Z+b_i)
$
to update $Z$ to represent $|b_{\le i}|_1 = h-|b_{>i}|_1 = z_i$.\footnote{
    \label{fn:increment-before-demux}
    The increment must precede the demultiplexing that selects the erasure gate.
    If the pre-increment value $Z = z_i - b_i$ were used as the address instead,
    then $(s, W)$, $b_{>i}$, and $Z$ would together determine $b_i = z_i - Z$,
    since $z_i = |s|_1 - |b_{>i}|_1$ is determined by the first two.
    Erasing $b_i$ would then destroy the mutual information between $b_i$ and the rest of the circuit,
    which by \Cref{lem:mutual-info-positive-ep} produces entropy no matter which heat function the erasure gate uses.
    After the increment, $Z = z_i$ carries no information about $b_i$ beyond what $(s, W)$ and $b_{>i}$ already determine.
}
This requires only $O(\log n)$ depth per bit erasure
(using a standard ripple-carry increment),\footnote{
    We could actually do better than $O(\log n)$ depth to increment $Z$,
    using $O(\log \log n)$ depth by using a parallel prefix circuit~\cite{ladner1980parallel}.
    However, we forsake such optimizations, 
    since we are already using $\Omega(\log n)$ depth 
    for the demultiplexer used to route the bits to erase,
    so there is no asymptotic benefit in optimizing the increment depth.
    In fact, given our setup, there is no way to demultiplex in $o(\log n)$ depth,
    since we need a single value $b_i$ to reach one of $\Omega(n)$ different wires,
    but due to $O(1)$ size gates,
    each layer only enables it to reach a constant multiple more wires than the previous.
    This does not rule out a more sophisticated approach that can achieve $O(1)$ amortized depth per bit erasure,
    using some sort of ``pipelining'' for instance,
    or perhaps some way to erase $\omega(1)$ bits in parallel.
}
so $O(n \log n)$ total depth for all $n$ bit erasures.
After the final erasure, $Z = h = |s|_1$.
To restore this work register to 0, as required for clean output,
apply for each $i \in \{1, \ldots, n\}$ in turn the reversible decrement
$
(Z, s_i) \mapsto (Z - s_i, s_i),
$
the inverse of the increment above, with the retained sorted key bit $s_i$ as the control.
Since $Z$ starts at $|s|_1$, $Z$ ends at 0.
This uses $n$ decrements of $O(\log n)$ gates and depth per bit,
i.e., $O(n\log n)$ gates and depth total.

The erasure portion uses
$\sum_{i=1}^n (i+1)=\Theta(n^2)$ controlled erasure gates,
one for each possible numerator $z_i$ at each stage $i$;
the forward and reverse demultiplexers use $O(n^2)$ additional reversible gates in total.
Together with the reversible sorting computation, the total gate complexity is
$O(t(n)m)$ (for the sorting network)
plus
$O(n^2)$,
i.e., $O(t(n)m + n^2)$.
The depth is $O(d(n)\log m + n \log n)$ since we have $n$ layers of single-bit erasure,
one for each $b_i$,
and each layer requires $O(\log n)$ depth as justified above
(plus the $O(d(n)\log m)$ depth of the reversible sorting computation.)
The circuit uses $O(t(n) + nm)$ wires beyond the $n(m+1)$ input wires:
$n(m+1)$ for the register that receives $(s, W)$;
$t(n)$ for the garbage bits $G$;
$O(nm)$ for the copy trees of the comparators in one layer of $C$,
which are returned to 0 after each layer and reused by the next;
and $O(n)$ for the register $Z$, the demultiplexers, and the pairs of wires feeding the controlled erasure gates,
which are likewise returned to 0 after each stage and reused by the next.

Now we show that this gives us zero circuit entropy production.
Overall, we are mapping $(b,V)\sim \dist$ to $C(b,V) = (s,W)$.

Since $C$ is a deterministic function, we have that $\entropy((b,V)) - \entropy(C((b,V))) = \entropy((b,V) \; | \; (s,W)) = \entropy(b \; | \; (s,W))$.
Since $\dist$ is exchangeable, $W$ is irrelevant to the conditional probability distribution,\footnote{The distribution over payloads must be independent of their input index and $b$ is a permutation of $s$.} so this equals $\entropy(b \; | \; s) = \entropy(b \; | \; \abs{b}_1) = \sum_{k \leq n} \Pr\left[\abs{b}_1 = k\right] \cdot \entropy(b \; | \; \abs{b}_1 = k)$.
Since $\dist$ is exchangeable, each $b$ with a fixed value of $\abs{b}_1$ is equally likely, so $\entropy(b \; | \; \abs{b}_1 = k) = \log \binom{n}{k} = \log\frac{n!}{k!(n-k)!}$ and the total drop in entropy is:
\begin{equation}\label{eqn:compaction-entropy-drop}
    \sum_{k \leq n} \Pr[\abs{b}_1 = k] \log \frac{n!}{k!(n-k)!}.
\end{equation}

We will now show that our choices of heat function $\heat_i$ for erasing each bit of any fixed $b$ with $\abs{b}_1 = k$ result in the circuit releasing exactly this much heat in expectation.
We now fix $k$ and any $b$ such that $\abs{b}_1 = k$.
As a reminder, we are picking the optimal heat function per \cref{prop:ef-min-heat-fn} for erasing bit $b_i$ based on the probability that $b_i$ is $1$ is $\abs{b_{\leq i}}_1 / i$.
This gives us the heat function
\begin{displaymath}
    \heat_i(b_i, \abs{b_{\leq i}}_0, \abs{b_{\leq i}}_1) = \begin{cases}
    \log(i/\abs{b_{\leq i}}_0) & b_i = 0\\
    \log(i/\abs{b_{\leq i}}_1) & b_i = 1
\end{cases}
\end{displaymath}
for erasing the $i$'th bit of $b$.
Let $I = i_1, \ldots, i_{n-k}$ denote the indices of $b$ where $b_{i_{\ell}} = 0$ and $J = j_1, \ldots, j_{k}$ denote the indices of $b$ where $b_{j_{\ell}} = 1$.
Then the expected heat released on input $b$ is exactly
\begin{displaymath}
    \sum_{i_\ell \in I} \log(i_\ell / \ell) + \sum_{j_\ell \in J} \log(j_\ell / \ell)
    = \sum_{\ell \in [n]} \log(\ell) + \sum_{i \in [n-k]} \log(1/i) + \sum_{j \in [k]} \log(1/j)
    = \log \frac{n!}{k!(n-k)!}.
\end{displaymath}
Due to the linearity of expectation, the total expected heat released during compaction is exactly
\begin{equation}\label{eqn:compaction-heat-released}
    \sum_{k \leq n} \Pr[\abs{b}_1 = k] \log \frac{n!}{k!(n-k)!}.
\end{equation}
Since the drop in entropy~\eqref{eqn:compaction-entropy-drop} and the expected heat release~\eqref{eqn:compaction-heat-released} are identical, we can conclude that $\EP_\circuit = 0$.
\end{proof}
\begin{remark}
The analysis in \cref{thm:compaction-zero-entropy} has a natural extension to compaction on $k$-bit keys by instead applying \cref{prop:ef-min-heat-fn} to pick heat functions where, for every $d \in \{0,1\}^k$, the chosen distribution over inputs has $\Pr[b_i = d] = \abs{b_{\leq i}}_d/i$ where $\abs{x}_d$ denotes the count of key $d$ in $x$.
\end{remark}
\subsection{A clean sorting circuit with zero entropy production}
\label{sec:sorting-zero-ep}
Our method of zero circuit entropy production erasure in \cref{sec:clean-compaction-zero-ep} can be extended to give a circuit with zero entropy production for sorting.
As we are now doing general sorting, we can always assume that payloads are just low order bits of the keys and therefore ignore them altogether.
Thus the task is now, given an input $x = x_1, \ldots, x_n$ with each $x_i \in \{0,1\}^m$, output $\sort(x) = y_1, \ldots, y_n$ where the $y_i$ are a permutation of the $x_i$ such that, for every $i$, $y_i \leq y_{i+1}$.
For simplicity, we will assume that our larger keys can be compared with a single comparator gate, which is implemented using multiple constant-arity gates.

\begin{theorem}\label{thm:sorting-zero-entropy}
    Let $\dist$ be any exchangeable distribution over $(\{0,1\}^m)^n$.
    There exists a polynomial sized (in terms of $n$ and $m$) circuit that sorts $n$ inputs that are each $m$ bits with zero circuit entropy production.
\end{theorem}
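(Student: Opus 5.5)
We follow the template of \cref{thm:compaction-zero-entropy}. The computation splits into a logically reversible phase that computes $x \mapsto (x, \sort(x))$, followed by an erasure phase that removes $x$ one element at a time. Each erasure uses a heat function chosen by \cref{prop:ef-min-heat-fn}, so that the total expected heat exactly matches the Landauer cost $\entropy(x \mid \sort(x))$.

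\textbf{Reversible phase.} We take any sorting network $C$ (Batcher suffices, so the size is polynomial). We replace each comparator with a reversible comparator that also records a swap bit $g$, run it, copy the sorted output $y = \sort(x)$ onto clean wires, and then uncompute the network by mirroring it. This gives $(x,\vec 0)\mapsto(x,y)$ with zero heat, using $\poly(n,m)$ gates.

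\textbf{Erasure phase, stage $i$.} We erase $x_1, x_2, \ldots, x_n$ in order. At stage $i$ the remaining wires hold $y$ and the suffix $x_{\geq i}$, so the multiset $M_i = y \setminus x_{>i}$ is determined; it equals the multiset of $x_{\leq i}$. By exchangeability, conditioned on all of $y$ and $x_{>i}$, the prefix $x_{\leq i}$ is uniformly distributed over the arrangements of $M_i$. Hence
\[
\Pr[x_i = d \mid y, x_{>i}] = c_d(M_i)/i,
\]
where $c_d$ denotes multiplicity. The justification is the same as in the compaction proof: for a fixed multiset all arrangements have equal probability under an exchangeable $\dist$.

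Following \cref{prop:ef-min-heat-fn}, we erase $x_i$ with the heat function
\[
\heat_i(x_i) = \log\bigl(i / c_{x_i}(M_i)\bigr).
\]
Telescoping over $i$ then gives total heat
\[
\log\frac{n!}{\prod_d c_d(y)!},
\]
which for every fixed input equals $\log$ of the number of distinct arrangements of the multiset $y$. Since $\entropy(x) - \entropy(\sort(x)) = \entropy(x \mid y) = \E\bigl[\log(n!/\prod_d c_d(y)!)\bigr]$ by exchangeability, linearity of expectation yields $\EP_\circuit = 0$.

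\textbf{Making the count available reversibly.} To implement this with constant-size gates, the gate erasing $x_i$ needs $c_{x_i}(M_i)$ available as an address. We compute it reversibly from the retained wires. For each position $j$ we compute the comparison bit $[x_j = x_i]$ for $j > i$ and $[y_j = x_i]$ for all $j$. Summing these with adder trees gives $c = \#\{j : y_j = x_i\} - \#\{j > i : x_j = x_i\}$. We then use a demultiplexer on the address $c \in \{1, \ldots, i\}$ to route $x_i$ (all $m$ bits, with a control bit) to one of $i$ controlled-erasure gadgets. Gadget $c$ erases an $m$-bit value with heat $\log(i/c)$ on every input. This is valid by \cref{lem:turgut-lemma}: $\sum_{x} 2^{-\heat(x)}$ over the $2^m$ preimages of $0$ equals $2^m c/i$, which exceeds $1$, so an $m$-bit value cannot be erased this way in a single gadget.

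That validity failure is the main obstacle. Erasing $x_i$ carries only $\log(i/c)$ bits of information, because conditioned on the rest of the circuit $x_i$ is one of the elements of $M_i$, but a local gate cannot see that restriction. The fix I would try first is to never erase $x_i$ outright. Instead, reversibly compute an index $r \in \{1,\ldots,c_{\text{distinct}}\}$ that identifies which distinct value of $M_i$ equals $x_i$; for instance, the rank of $x_i$ among the distinct values of $y$, computed by comparisons against $y$. Then reversibly uncompute $x_i$ from $(r, y)$, since $x_i$ is the $r$-th distinct value of $y$. What remains to erase is the small index register $r$.

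Erasing $r$ must itself be done bit by bit, or as one $O(\log n)$-bit value routed by the demultiplexed address to a gadget. The heat function of that gadget must be optimal for the conditional distribution of $r$ given the remaining wires, which is $c_r(M_i)/i$. This depends on the full count vector of $M_i$, not just on a single count. So, as in the compaction proof, I would erase $r$ one bit at a time, most significant bit first. At each bit the conditional probability of a $1$ is a ratio of partial counts of $M_i$. Those counts can be computed reversibly from $y$, $x_{>i}$, and the already-known higher bits, used as a demux address into a bank of single-bit controlled erasures optimal for that ratio, and then uncomputed.

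Checking that each such single-bit erasure carries no residual mutual information with the other wires, in the sense of \cref{lem:mutual-info-positive-ep} and the footnote on incrementing before demultiplexing, is the delicate part. Once that holds, the heat telescopes exactly as above. The gate count is polynomial: about $n$ stages, each with $O(\log n)$ bits, each bit requiring $\poly(n,m)$ gates for counting and routing.
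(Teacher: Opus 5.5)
Your route is sound in outline but differs from the paper's in the order and granularity of the irreversible step. After computing $(x,\sort(x))$, the paper never compresses $x_i$. It erases the raw input bits in bit-plane order $x_1^1,\dots,x_n^1,x_1^2,\dots,x_n^m$, giving $x_i^j$ the heat function optimal for $\Pr[x_i^j=1]=|x^{\geq j}_{\leq i}|_{1\cdot x_i^{\geq j+1}}/|x^{\geq j+1}_{\leq i}|_{x_i^{\geq j+1}}$. That is, it conditions on the still-present lower-order suffix of $x_i$, using suffix-match counts obtained from $\sort(x)$ minus the surviving parts of the other records. The sum over $j$ telescopes to $\log(i/C_{x_i}(x_{\le i}))$, and the sum over $i$ gives $\log(n!/\prod_b c_b!)$.

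You instead go record by record, first reversibly replacing $x_i$ by its rank $r$ among the distinct values of $y$. Only $O(\log n)$ bits per record are then erased irreversibly, with counts aggregated over rank classes. Each version buys something different:
\begin{itemize}
\item The paper's needs only suffix-equality tests and no rank or selection subcircuits.
\item Yours cuts the irreversible single-bit erasures from $nm$ to $O(n\log n)$.
\item Your record-major order makes the exchangeability step more transparent. Conditioned on $(y,x_{>i})$, the prefix $x_{\le i}$ is plainly uniform over arrangements of $M_i$. In bit-plane order one must also condition on partial suffixes of the later records.
\end{itemize}
The rank trick is optional: erasing the $m$ raw bits of $x_i$ record by record, with the same chain-rule conditioning, would also work.

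As written, one step fails and must be fixed. You erase the bits of $r$ most significant first, but compute the counts from $y$, $x_{>i}$ and ``the already-known higher bits''. Under that order those are exactly the bits already erased. The conditioning must be on the bits of $r$ still on the wires. Let $v_\rho$ be the $\rho$-th distinct value of $y$. With MSB-first order, bit $r^{(k)}$ must be erased with probability of a $1$ equal to $\sum_\rho c_{v_\rho}(M_i)$ over ranks $\rho$ whose bit $k$ is $1$ and whose lower-order bits agree with the current ones, divided by the same sum without the constraint on bit $k$. Alternatively, erase least significant first and condition on the higher bits. This is precisely the paper's suffix conditioning.

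Once this is fixed, the ``delicate part'' you flag is immediate:
\begin{itemize}
\item The numerator and denominator depend only on wires that survive the erasure, so the demux address can be uncomputed afterward.
\item The bit's conditional law given the address equals its law given all other wires.
\item The product of these conditional probabilities over the bits of $r$ telescopes to $c_{x_i}(M_i)/i$.
\end{itemize}
The per-input heat is therefore $\log(n!/\prod_d c_d(y)!)$, and $\EP_\circuit=0$ follows from the global accounting, with no separate mutual-information argument needed.
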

\begin{proof}
    We will use the same circuit as in \cref{sec:clean-compaction-zero-ep} to map input $x$ to the state $(x,\sort(x))$ using only logically reversible gates.
    Now, the task is to erase the copy of $x$ with zero circuit entropy production.
    For $b = b_1, \ldots, b_\ell$, and $b_i, v \in \{0,1\}^k$, we use $\abs{b}_v$ to denote the number of $b_i$ such that $b_i = v$.

    We will erase the highest order bit of every $x_i$ before moving on to the next highest order bit and so on.
    In other words, the order in which we erase bits will be $x_1^1, x_2^1, \ldots, x_n^1, x_1^2, x_2^2, \ldots, x_n^m$.
    Let $x_i^j$ denote the $j$'th bit of $x_i$.
    Analogous to before, we now use $x_{\leq i}^{\geq j}$ to denote the suffixes starting on the $j$'th bit of $x_{i'}$ for all $i' \leq i$.
    When erasing $x_i^j$, we will use the optimal heat function, per \cref{prop:ef-min-heat-fn}, for the distribution with:
    \begin{displaymath}
    \Pr[x_i^j = 1] = \frac{\abs{x_{\leq i}^{\geq j}}_{1\cdot x_i^{\geq j+1}}}{\abs{x_{\leq i}^{\geq j+1}}_{ x_i^{\geq j+1}}},
    \end{displaymath}
    where $1 \cdot x_i^{\geq j}$ denotes the suffix of $x_i$ beginning with index $j$ with a $1$ appended to the front.
    In other words, the probability of a $1$ is exactly the probability one would get a $1$ by sampling uniformly at random the $j$'th bit of any $x_{i'}$ with $i'\leq i$  that shares its suffix after index $j$ with $x_i$.
    While it would be unfeasible to track these counts for every possible suffix, this probability can be computed on the fly for each erasure by paying a polynomial cost in the depth of the circuit and the number of gates.
    This follows from the fact that these counts can be computed (and uncomputed) in a logically reversible fashion even after the actual value of $x_i^j$ has been erased.\footnote{The counts of $0$'s versus $1$'s for erased prefixes can be deduced from the sorted output and the non erased part of the input.
    For example, $\abs{x_{\leq i}^{\geq j}}_{0 \cdot x_i^{\geq j+1}} = \abs{\sort(x)^{\geq j}}_{0 \cdot x_i^{\geq j+1}} - \abs{x_{>i}^{\geq j}}_{0 \cdot x_i^{\geq j+1}}$ and the terms on the right can be computed after $x_i^j$ has been erased.
    Control gates can be used to ensure that we only count the number of $0$'s and $1$'s for inputs with the desired suffix.}
    We will use the same reversible demultiplexer / multiplexer gadget as in the compaction circuit with controlled erasure gates to selectively pick the optimal heat function to use for each erasure.

    Now we argue that this approach gives us a circuit with zero circuit entropy production.
    First we consider the drop in entropy from the mapping $x \mapsto \sort(x)$.
    Let $C(x) \in \Z_{\geq 0}^{\{0,1\}^m}$ be the vector of counts for the number of times each possible $b \in \{0,1\}^m$ appears in the input $x$.
    Then we have that $\entropy(\dist) - \entropy(\sort(\dist)) = \entropy(\dist \; | \; \sort(\dist)) = \entropy(\dist \; | \; C(\dist))$ and thus when $x \sim \dist$:
    \begin{align}
        \entropy(\dist) - \entropy(\sort(\dist)) &=  \sum_{c \in \Z_{\geq 0}^{\{0,1\}^m}} \Pr_{x \sim \dist}[C(x) = c] \entropy(x \; | \; C(x) = c) \nonumber \\
        &= \sum_{c \in \Z_{\geq 0}^{\{0,1\}^m}} \Pr_{x \sim \dist}[C(x) = c] \log \frac{n!}{\prod_{b \in \{0,1\}^m} c_b !} \label{eqn:sort-entropy-drop}
    \end{align}
    Note that this term only has support on $c$ where the $1$-norm (i.e. Manhattan norm) of $c$ is exactly $n$.
    
    As before, we will now show that our choices of heat functions $\heat_{i,j}$ for erasing each bit release $\log(n!/\prod_b c_b!)$ expected heat on every input with count vector $c$.
    Our heat function for erasing the $j$'th bit of the $i$'th input will take the form:
    \begin{align*}
        \heat_{i,j}\left(x_i^j, \abs{x_{\leq i}^{\geq j+1}}_{x_i^{\geq j+1}}, \abs{x_{\leq i}^{\geq j}}_{0 \cdot x_i^{\geq j+1}}, \abs{x_{\leq i}^{\geq j}}_{1 \cdot x_i^{\geq j+1}}\right) &= \begin{cases} \log\left(\frac{\abs{x_{\leq i}^{\geq j+1}}_{x_i^{\geq j+1}}}{\abs{x_{\leq i}^{\geq j}}_{0 \cdot x_i^{\geq j+1}}}\right) & x_i^j = 0 \vspace{5pt}\\
        \log\left(\frac{\abs{x_{\leq i}^{\geq j+1}}_{x_i^{\geq j+1}}}{\abs{x_{\leq i}^{\geq j}}_{1 \cdot x_i^{\geq j+1}}}\right) & x_i^j = 1
        \end{cases} \\
        &= \log\left(\frac{\abs{x^{\geq j+1}_{\leq i}}_{x_i^{\geq j+1}}}{\abs{x^{\geq j}_{\leq i}}_{x_i^{\geq j}}}\right)
    \end{align*}
    Let's fix any $x$ and let $c = C(x)$.
    Then the total expected heat released on input $x$ is exactly
    \begin{align*}
        \sum_{i \in [n], j \in [m]} &\heat_{i,j}\left(x_i^j, \abs{x_{\leq i}^{\geq j+1}}_{x_i^{\geq j+1}}, \abs{x_{\leq i}^{\geq j}}_{0 \cdot x_i^{\geq j+1}}, \abs{x_{\leq i}^{\geq j}}_{1 \cdot x_i^{\geq j+1}}\right)  = \sum_{i \in [n], j \in [m]} \log\left(\frac{\abs{x^{\geq j+1}_{\leq i}}_{x_i^{\geq j+1}}}{\abs{x^{\geq j}_{\leq i}}_{x_i^{\geq j}}}\right)\\
        &=\sum_{i \in [n]} \sum_{j \in [m]} \log \left(\abs{x^{\geq j+1}_{\leq i}}_{x_i^{\geq j+1}}\right) - \log \left(\abs{x^{\geq j}_{\leq i}}_{x_i^{\geq j}}\right)\\
        &= \sum_{i \in [n]} \log\left(\abs{x^{\geq m + 1}_{\leq i}}_{x_i^{\geq m+1}}\right) - \log \left(\abs{x^{\geq 1}_{\leq i}}_{x_i^{\geq 1}}\right) \\
        &= \sum_{i \in [n]} \log(i) - \log(C_{x_i}(x_{\leq i})) \qquad \textrm{since $x_i^{\geq m+1}$ is the empty string}\\
        &=\log \frac{n!}{\prod_{b \in \{0,1\}^m} c_b!}.
    \end{align*}
    Due to the linearity of expectation, the total expected heat released during sorting is exactly:
\begin{equation}\label{eqn:sort-heat-released}
    \sum_{c \in \Z_{\geq 0}^{\{0,1\}^m}} \Pr_{x \sim \dist}[C(x) = c] \log \frac{n!}{\prod_{b \in \{0,1\}^m} c_b !}.
\end{equation}
    Since the drop in entropy \eqref{eqn:sort-entropy-drop} exactly matches the expected heat released \eqref{eqn:sort-heat-released} by our circuit, we can conclude that the circuit entropy production is zero.
\end{proof}

\section{Conclusion}
\label{sec:conclusion}

In this paper we posit that the concepts of entropy production and heat release provide a principled foundation for
understanding inherent energy costs of computations. 
We introduce a thermodynamic logic circuit model in which gates  (e.g., Boolean logic gates or comparator gates) are augmented with heat functions, which
determine a circuit's entropy flow and entropy production with respect to a given input distribution. We show that in general, determining optimal heat functions
is computationally intractable. In contrast, we show that for specific problems (i.e. sorting) our model can be used both to analyze the thermodynamic costs of existing circuits and to design new circuits that are thermodynamically efficient.

There are many directions for future work. Our analysis of Batcher's odd-even mergesort on bits uses
a set of gate heat functions that scales with the circuit size; what are the penalties for using only a constant number of distinct heat functions for this problem?
How can we analyze heat dissipation of circuits in other problem domains, such as graph algorithms? What are good general design principles for energy-efficient computation?

\Cref{thm:compaction-zero-entropy} shows a zero $\EP$ circuit for bit-compaction,
the problem of sorting 1-bit keys with $m$-bit payloads,
using $O(n \log n)$ depth (as long as $\log m$ is $O(n)$).
Can we design a zero $\EP$ compaction circuit with polylog($n$) depth?
This rules out erasing key bits one at a time as we do, 
which requires $\Omega(n)$ depth.\footnote{
    It is even worse for our circuit, $\Theta(n \log n)$,
    due to requiring $\Theta(\log n)$ depth per bit to route it through a $\Theta(\log(n))$-layer demultiplexer to the erasure gate with appropriate heat function.
}
Is it possible to show a lower bound of $\Omega(n)$ depth for the clean compaction problem, arguing that zero $\EP$ erasure with constant arity gates must do so sequentially?

Another limitation of our technique to do compaction in \Cref{thm:compaction-zero-entropy} is that we require $\Omega(n^2)$ erasure gates.
This is necessary given our specific erasure technique:
we get the circuit to a point where it has the intended output 
(compacted key-payload records),
as well as the uncompacted keys $b \in \{0,1\}^n$,
and we erase each key bit one at a time.
There are $\Omega(n^2)$ conditional probabilities that are possible for each bit to be a 1 across all stages during this process, necessitating $\Omega(n^2)$ erasure gates.
However, this does not rule out a more sophisticated way to erase the bits in which we do not encounter so many different conditional probabilities.
Perhaps some efficient reversible circuit computation could be done on $b$ to transform it into a different bit string $b'$ of length $O(n)$,
such that while erasing the bits of $b'$ in order,
each bit would only have one of,
say, $O(\log n)$ different (possibly position-dependent) probabilities.
In this case $O(n \log n)$ total erasure gates would suffice to erase all bits.

Suppose the number of erasure gates could be reduced to $O(n \mathrm{polylog}(n))$,
or if $m \gg n$ so that the $O(n^2)$ erasure gates are negligible.
(To simplify discussion below assume $m = O(1)$, so that swapping $m$-bit payloads takes $O(1)$ gates in our model.)
Then it becomes more important to reduce the size of the compaction/sorting circuit implementing the
underlying routing logic in \Cref{thm:compaction-zero-entropy,thm:sorting-zero-entropy}.
For instance, \Cref{thm:compaction-zero-entropy} can very likely be improved to use
the $O(n \mathrm{polylog}^*(n))$-gate compaction circuit of Asharov, Lin, and Shi~\cite{asharov2021sorting},
instead of a sorting network that requires $\Omega(n \log n)$ gates by the 0/1 principle.\footnote{
    Our main reason for using sorting networks is that it is more straightforward to transform them to be reversible in a black-box way, 
    whereas more sorting circuits allowing more general gates than comparators require more complex hypotheses about how the sorting circuit treats its keys and payloads.
    For example it is important for our technique that routing decisions are made based only on keys,
    never payloads, 
    which is true of the networks of Asharov, Lin, and Shi~\cite{asharov2021sorting} 
    and Lin and Shi~\cite{lin2022optimal}.
}

Rather than insisting on $\EP = 0$,
one can ask for the minimum $\EP$ achievable by compaction circuits of near-linear size or polylogarithmic depth.
We know of no lower bound of the form ``depth $d$ forces $\EP \ge g(n,d)$'' for any $g>0$;
such a bound would be the first general quantitative trade-off between a standard circuit resource and entropy production.

\Cref{thm:compaction-zero-entropy} uses a single circuit, with finitely many heat functions,
that has zero $\EP$ for every exchangeable input distribution simultaneously,
because the conditional probability that each erased bit is 1 depends only on exchangeability.
For which other classes of distributions is this possible?
If, for instance, the keys are independent but position-dependent,
the conditional probabilities depend on the bias of every position,
and it is unclear whether finitely many heat functions suffice.
Conversely, what is the $\EP$ of our circuit when the input is only approximately exchangeable,
say as a function of a divergence between the actual distribution and the nearest exchangeable one?

Analogous questions about the gate and depth complexity for zero circuit entropy production sorting are natural.
In general, a more detailed study of gate-depth-entropy tradeoffs for sorting is worthy of further investigation.

Our results concern expected $\EP$ and expected heat. What are the variance, and the worst case over inputs, of the heat released by our circuits,
and is there a trade-off between the fluctuations of heat and circuit size or depth,
in the spirit of the thermodynamic uncertainty relations of stochastic thermodynamics~\cite{HorowitzGingrich2020}?
Can we obtain concentration guarantees on the distribution of expected heat and entropy production?

\Cref{sec:batcher-bits} shows that Batcher's network has $\EP = \Omega(n)$ on uniformly random bits
and $\Omega(n \log^2 n)$ on uniformly random distinct inputs,
and \Cref{sec:better-than-batchers} gives a network releasing only $O(n \log n)$ expected heat,
but, as noted in \Cref{sec:batcher-bits},
it is open whether every sorting network has $\EP = \Omega(n)$, or even $\EP > 0$, on uniformly random bits.
(Every sorting network must release $\Omega(n)$ heat, since that is the Landauer cost of sorting bits;
zero $\EP$ means releasing exactly that much.)
A positive lower bound would show that comparators alone cannot achieve zero $\EP$,
so that the more general gates of \Cref{thm:compaction-zero-entropy} are necessary.\footnote{
    Ancilla wires alone would not help a comparator network on bits:
    a comparator between a data wire and a wire holding a known constant is a data-independent wire crossing
    (with constant 0 it maps $(x, 0) \mapsto (0, x)$, and with constant 1 it maps $(x, 1) \mapsto (x, 1)$),
    which is a bijection on its reachable inputs,
    so it can be assigned zero heat and removed by renaming wires,
    without changing the outputs or increasing $\EP$.
}
In the other direction, \Cref{thm:compaction-zero-entropy} uses $O(t(n) + nm)$ ancilla wires,
where $t(n)$ is the size of the underlying sorting network.
What is the minimum number of ancilla wires in a zero $\EP$ sorting or compaction circuit?
In particular, can zero $\EP$ be achieved in place, with no ancilla wires,
using $O(1)$-size gates more general than comparators?


Our thermodynamic circuit model is deterministic; we are interested in adapting it to circuits with probabilistic gates, where the output of a gate on a given input is specified as a distribution over the possibilities. We also would like to understand
heat dissipation in other models of computation, such as random access machines. Can we show  trade-offs between time, space, and entropy production? Might it be possible to show average-case hardness results for heat dissipation?

\section{Acknowledgments}
Niels Kornerup's contribution to work was supported by the Laboratory Directed Research and Development program (project
240650) at Sandia National Laboratories, a multimission laboratory managed and operated by National
Technology and Engineering Solutions of Sandia LLC, a wholly owned subsidiary of Honeywell International
Inc. for the U.S. Department of Energy’s National Nuclear Security Administration under contract DE-
NA0003525.
This paper describes objective technical results and analysis. Any subjective views or opinions that might be expressed in the paper do not necessarily represent the views of the U.S. Department of Energy or the United States Government.
The publisher acknowledges that the U.S. Government retains a non-exclusive, paid-up, irrevocable, world-wide license to publish or reproduce the published form of this written work or allow others to do so, for U.S. Government purposes. The DOE will provide public access to results of federally sponsored research in accordance with the DOE Public Access Plan. 

David Wolpert thanks the Santa Fe Institute for their support.
\subsection{Discussion of substantive LLM use}

In \cref{sec:approx-heat-funcs}, we used ChatGPT 5.6 to help us strengthen~\cref{lem:mult-approx-hard} to inputs where it is promised that $\E_{x \sim \dist}[\heatoptimal(x)] > 0$.
In particular, the LLM pointed us towards \cref{lem:sat-promise-conv} as a black-box way to convert any $\SAT$ instance $\phi$ into a new one $\psi$ where (1) there is always some satisfying assignment and (2) there is a programmable gap between the number of satisfying assignments for $\phi$ depending on whether or not $\phi$ had some satisfying assignment.
We uncovered that the LLM likely arrived at this exact construction based on a similar construction by Blass and Gurevich~\cite{DBLP:journals/iandc/BlassG82} designed to handle (1) and then observing that (2) follows directly from adding additional dummy variables.
The LLM also generated a full proof of~\cref{lem:mult-approx-hard} which we have heavily simplified into the version presented in this paper.

We suspected that the add-one smoothing estimator would be a natural candidate to use in the proof of \cref{lem:additive-approx-easy}, as it is a simple estimator which avoids the catastrophic pitfall of assigning zero probability on any event with positive probability, and asked the LLM to show that this estimator gives our polynomial time additive approximation.
The LLM observed that the excess heat dissipation associated with using an estimator is upper-bounded by the expected KL-divergence between the input distribution and its estimated version.
We identified that this is an immediate consequence of what is now known as \emph{mismatch costs} that were first identified by Kolchinsky and Wolpert~\cite{KW17}.
Next, the LLM proceeded to give a rather convoluted proof that upper bounds the expected KL-divergence between an input distribution and it's add-one smoothed estimator.
We instead identified that Braess, Forster, Sauer, and Simon~\cite{DBLP:conf/alt/BraessFSS02} directly proved that the add-one smoothing estimator is well suited for minimizing the expected KL-divergence.

We independently discovered proofs of \cref{lem:hard-gate-ef-min,thm:batcher-entropy-integer-high} where the heat function was restricted to only depend on the value of the comparison (i.e. is $x_1 \leq x_2$?) rather than the actual input values.
The LLM helped us find that shifting exactly $2^{k_\ell}$ small elements between sets $A$ and $B$ causes a central gate to receive input $x_2,x_1$ instead of input $x_1,x_2$, which was the key ingredient in our proof of \cref{lem:batcher-integer-swap-bound}, that let us extend our lower bound to general heat functions.

The high level strategy in \cref{sec:better-than-batchers} of combining an $O(n \log n)$ gate almost sorting network with an $O(n \log^2 n)$ gate network to reduce the expected heat dissipation was the result of interactions between the authors and the LLM when searching for a more thermodynamically efficient sorting network on integers than Batcher's.

ChatGPT 5.6 and ChatGPT 6 Astra were used to help proofread this document and catch basic errors, which were fixed by the authors.
\cref{fig:batcher-n-32-heat} was generated by code produced with the assistance of Claude code.
All text, proofs, and the remaining figures in this paper were produced by the authors, who take full responsibility for the correctness and originality of all content therein.

\bibliographystyle{alpha}
\bibliography{sources}

\appendix
\section{Basic lemmas}
\label{sec:basic-lemmas}
\gibbsopt*
\begin{proof}
    Observe that
    \begin{align*}
        \E_{x \sim \dist_{|S}}[f(x)] - \entropy(\dist_{|S}) &= \E_{x \sim \dist_{|S}}[f(x) + \log \Pr_{\dist_{|S}}[x]]\\
        &=\E_{x \sim \dist_{|S}}[\log (\Pr_{\dist_{|S}}[x] /2^{-f(x)})]\\
        &=\E_{x \sim \dist_{|S}}[\log (\Pr_{\dist_{|S}}[x] / \Pr_{(\gibbs_f)_{|S}} [x])] - \log \left(\sum_{x \in S} 2^{-f(x)}\right)\\
        &= \DKL(\dist_{|S} || (\gibbs_f)_{|S}) - \log \left(\sum_{x \in S} 2^{-f(x)}\right).
    \end{align*}
    By \cref{prop:gibbs-inequality}, we know that $\DKL(\dist_{|S} || (\gibbs_f)_{|S})$ and hence $\E_{\dist_{|S}}[f(x)] - \entropy(\dist_{|S})$ is minimized exactly when $\Pr_{\dist_{|S}}[x] = \Pr_{(\gibbs_f)_{|S}}[x]$ for every $x \in S$.
\end{proof}
\turgut*
\begin{proof}
    By \cref{def:valid-heat} we know that $\heat$ is valid precisely when
    \begin{displaymath}
        \max_{\dist} \sum_{x \in \support(\dist)} \Pr_{\dist}[x] \left( \log(\Pr_{f(\dist)}[f(x)]) - \log(\Pr_{\dist}[x]) -  \heat(x) \right) \leq 0.
    \end{displaymath}
    Or equivalently when
    \begin{displaymath}
        \max_{\dist} \sum_{y \in \support(f(\dist))} \Pr_{f(\dist)}[y] \left( \entropy(\dist_{|f^{-1}(y)}) - \E_{\dist_{|f^{-1}(y)}}[\heat(x)]\right) \leq 0.
    \end{displaymath}
    By convexity, we can assume without loss of generality that the maximum $\dist$ has $\Pr_{f(\dist)}[y] = 1$ for some fixed choice of $y$.
    We can also assume that the maximum $\dist$ has no probability mass on any $x$ where $\heat(x) = \infty$ unless $\heat$ is $\infty$ on every point and the validity condition is trivial.
    Additionally, by \cref{lem:gibbs-opt} we know that the maximal $\dist_{|f^{-1}(y)}$ is $(\gibbs_\heat)_{|f^{-1}(y)}$.
    Thus the above is equivalent to the claim
    \begin{displaymath}
        \max_y \log\left(\sum_{x \in f^{-1}(y)} 2^{-\heat(x)}\right) \leq 0.
    \end{displaymath}
    Taking $2$ to the power of both sides of this inequality proves our desired bound.
\end{proof}
\section{Analyzing entropy flow in Batcher's sorting network: overwriting model with binary inputs}
\label{sec:batcher-overwriting}

Here we show that in the overwriting model, the expected heat dissipation of Batcher's sorting network is $O(n)$, matching our bound for our thermodynamic circuit model.
We first describe useful heat functions for our analysis,  then use these to bound the expected heat dissipation for layers 1 through $\log n$ of Batcher's network, and finish with a recurrence to bound the total heat dissipation.
This proof is very similar to the one for our standard circuit model (i.e. \cref{thm:batcher-heat-released-upper-bound}).

\paragraph{The overwriting model and its heat functions}
In the overwriting model, we view a circuit as a DAG where every node represents a gate and edges indicate input dependence.
Unlike in our standard circuit model, the gates in the overwriting model retain their output values in a persistent storage between runs and are not required to erase their inputs.
As such, the old output to every gate must be \emph{overwritten} by a new value during each run of the circuit.
For this section, we will be assuming that inputs to our overwriting circuit are independently and identically distributed and will use $\heato$ to denote heat functions for circuits in the overwriting model.
If a gate in the overwriting model computes a function $f: \domain \to \range$, then it is actually computing the function $h(x,y) = (x,f(x))$ for every $x \in \domain, y \in \range$.
As such, the possible heat functions for this gate is given by functions $\heato: \domain \times \range \to \R_{\geq 0}$ and the validity criteria matches that of \cref{def:valid-heat} when the function being computed is $h$.

Since our inputs are independently and identically distributed, there is a significantly simpler expression for the optimal heat function given by \cref{prop:ef-min-heat-fn}.
Let $\dist$ be the marginal induced distribution over inputs to our gate (including the output from the previous run), $\dist_\domain$ be the marginal distribution over new input $x\in \domain$, and $\dist_\range$ be the marginal distribution over the old output $y \in \range$.
Then, since $\dist$ is independently and identically distribution, we have that $\dist = \dist_\domain \otimes \dist_\range$.
Thus the optimal heat function for our gate is given by:
\begin{displaymath}
\heato(x,y) = \log \frac{\Pr_{x,y\sim h(\dist)}[(x,f(x))]}{\Pr_{x,y \sim \dist}[x,y]} = \log \frac{\Pr_{x \sim \dist_\domain}[x]}{\Pr_{x \sim \dist_\domain}[x]\Pr_{y \sim \dist_\range}[y]} = \log \frac{1}{\Pr_{y \sim \dist_\range}[y]}.
\end{displaymath}
So the optimal heat function for gates run on independently identically distributed inputs in the overwriting model is equivalent to the optimal heat function for erasing the value of $f(x)$.
Thus we will be specifying heat functions in the overwriting model whose only inputs are the values that are about to be overwritten.

\paragraph{Batcher's odd-even mergesort on bits}
Rather than use the optimal heat function for every gate, in our analysis, we'll use the three heat functions (parameterized by $\epsilon$) shown in the following table:

           \begin{center}
              \begin{tabular}{l|l||l|l||l||l|l|l}
                  $x_1,x_2$  & $y_1', y_2'$ & $\Prob$ & $\heato^{(0,0)}_\epsilon$ &  $\Prob$ & $\heato^{(0,1)}_\epsilon$ &  $\Prob$ & $\heato^{(1,1)}_\epsilon$ \\ \hline
                   0,0        & 0,0          & $1-\epsilon$ & $\log_2 (1/ (1 - \epsilon))$    & $\epsilon/2$ & $\log_2 (2 / \epsilon)$            &  $\epsilon/2$ & $\log_2 (2 / \epsilon)$ \\
                  0,1 or 1,0 & 0,1       & $\epsilon/2$ & $\log_2 (2 / \epsilon)$           & $1-\epsilon$ & $\log_2 (1/ (1 - \epsilon))$  & $\epsilon/2$ & $\log_2 (2 / \epsilon)$      \\
                  1,1           & 1,1        & $\epsilon/2$ & $\log_2 (2 / \epsilon)$          & $\epsilon/2$ & $\log_2 (2 / \epsilon)$         & $1-\epsilon$ & $\log_2 (1/ (1 - \epsilon))$      \end{tabular}
          \end{center}
          
In the table, preceding the column for each heat function $\heato^{(b,b')}_\epsilon$, are the probabilities for which that function is optimal. For example. 
$\heato^{(0,0)}_\epsilon$ is optimal when $(y_1',y_2')= (0,0)$ with probability $1-\epsilon$, $(0,1)$ with probability $\epsilon/2$, and $(1,1)$ with probability $\epsilon/2$ (or equivalently when $(x_1,x_2) = (0,0)$ with probability $1-\epsilon$, $(0,1)$ or $(1,0)$ with probability $\epsilon/2$, and $(1,1)$ with probability $\epsilon/2$).
The expected heat release of a gate with one of these heat functions, when the input distribution is optimal, is
     \[
           (1 - \epsilon)  \log_2 (1/(1-\epsilon))  + \epsilon \log_2 (2/\epsilon) = O(\epsilon \log (1/\epsilon)),
          \]
          using the fact that $\log_2 (1/(1-\epsilon)) = O(\epsilon)$ by application of
          Taylor's expansion of $\log (1-\epsilon)$. 
Moreover, on distributions where $(x_1,x_2) = (b,b')$ with probability {\em at least} $1-\epsilon$, the expected heat release of a gate with heat function $\heato^{(b,b')}_\epsilon$ is $O(\epsilon \log 1/\epsilon)$. 
 
We will also use heat function $\heato^{(\unif)}$ that is optimal on the uniform input distribution. Its values are $\heato^{(\unif)}(0,0) = \heato^{(\unif)}(1,1) = 2$, and $\heato^{(\unif)}(0,1) = 1$ and the expected heat release is
          \[
          (1/4) 2 + (1/4) 2 + (1/2)  = 3/2.
          \]
          Moreover, for any input distribution, the expected heat release with respect to  $\heato^{(\unif)}$
          is at most 2; it approaches 2 as the probability of
          inputs (0,1) or (1,0) approaches 0.

          A key difference between the overwriting model and our circuit model is that \emph{logically reversible} operations do not buy you anything extra in the overwriting model.
          In \cref{sec:batcher-bits} we saw that we could always set $\heat(0,0) = \heat(1,1) = 0$ since these inputs give unique outputs.
          On the other hand, we need to dissipate heat when we write over these outputs in the overwriting model.

As in the analysis of Batcher's network for our 
model, we partition the comparator gates into $\log_2 n +1$ layers, with layer $0$ encompassing
the recursive sub-networks, as in Figure \ref{fig:batcher}.
Within each layer $\ell$, we will partition the gates into the same three regions as in the proof of \cref{thm:batcher-heat-released-upper-bound}, but will pick heat functions that are amenable to the overwriting model.
Again $\alpha \in (0,1)$ is a parameter we will specify later.

\paragraph{Layer 1}
In this layer, there are $n/2$ comparator gates.
The $i$'th gate has input wires $i$ and $i+n/2$.
\begin{itemize}
    \item
Upper region: $i \in [0, (n-n^{(1+\alpha)/2})/4]$.
Since just after layer $0$, bits $0,1,\ldots, n/2-1$ are sorted, and also the remaining
$n/2$ bits are independently sorted,
it follows from \cref{prop:hoeffding} that for $i$ in this region,
\begin{equation}
    \label{eqn:pi0}
    \Prob[\valil{i}{0}=1] \le \exp(-n^{\alpha}/4) \mbox{ and also }
    \Prob[\valil{n/2+i}{0}=1]  \le \exp(-n^{\alpha}/4).
\end{equation}
\begin{sloppypar}
Consequently, the $i$th comparator gate has the input $(0,0)$ with probability at least $(1-\exp(-n^{\alpha}/4))^2 \ge 1 - 2\exp(-n^{\alpha}/4)$. 
If the gate has heat function $\heato^{(0,0)}_{2\exp(-n^{\alpha}/4)}$,
the expected heat release is $O(n^{\alpha} \exp(-n^{\alpha}/4)) = O( \exp(-n^{\alpha}/5))$.
\end{sloppypar}
\item
Lower region: $i \in [(n+n^{(1+\alpha)/2})/4,n/2-1]$. A similar analysis shows that the $i$th comparator gate in this region gets $(1,1)$
as input with probability at least  $1 - 2\exp(-n^{\alpha}/4)$.
If the gate has heat function $\heato^{(1,1)}_{2\exp(-n^{\alpha}/4)}$,
the expected heat release is $O( \exp(-n^{\alpha}/5))$. 
\item
Middle band: $i \in [(n-n^{(1+\alpha)/2})/4, (n+n^{(1+\alpha)/2})/4]$.
The expected heat release each gate in this region is at most 2 if the gates have heat function $\heato^{\unif}$.
\end{itemize}
Summing up over all three regions, the total expected heat dissipation for layer 1 is dominated by the middle band, and is $O(n^{(1+\alpha)/2})$, using three different types of heat functions.

\paragraph{Layer 2}

There are $n/4$ comparator gates in layer 2; we number them by the index of their upper wire, from $n/4$
to $n/2-1$. The $i$'th gate has input wires $i$ and $i+n/4$.
We partition the gates into three regions: a narrow band of gates at the top, a large middle region,
and a narrow band at the bottom.
\begin{itemize}
\item
Top narrow band: $i \in [n/4,(n+n^{(1+\alpha)/2})/4]$.
Using the heat function $\heato^{\unif}$, the expected
heat dissipation for each gate is at most 2. 
\item
Middle region: $i \in [(n+n^{(1+\alpha)/2})/4, (n-n^{(1+\alpha)/2})/2]$.
It follows from \cref{prop:hoeffding} that $\Prob[\valil{i}{1}=1] \hspace{-1.5pt}= \hspace{-1.5pt}\Prob[ \min \{ \valil{i}{0},\valil{n/2+i}{0} \} \hspace{-1.5pt} = \hspace{-1.5pt}1] \hspace{-1.5pt} = \hspace{-1.5pt}  \Prob[ \valil{i}{0} \hspace{-1.5pt}= \hspace{-1.5pt} 1] \Prob[\valil{i}{0} = 1]
\hspace{-3pt} \ge \hspace{-3pt} 1 - 2\exp(-n^{\alpha}/4)$,
and
$\Prob[\valil{i+n/4}{1}=0] = \Prob[ \max \{ \valil{i-n/4}{0},\valil{i+n/4}{0} \} = 0] \hspace{-1.5pt} \ge \hspace{-1.5pt} \Prob[ \valil{i-n/4}{0} = 0] \Prob[ \valil{i+n/4}{0} = 0] \hspace{-2.5pt} \ge \hspace{-2.5pt} 1 - 2\exp(-n^{\alpha}/4)$.
So each comparator gate in this range has input $(1,0)$ with probability at least  $1 - 4\exp(-n^{\alpha}/4)$. Using the heat function $\heato^{(0,1)}(4\exp(-n^{\alpha}/4))$, the expected heat release of each gate is $O(n^{\alpha} \exp(-n^{\alpha}/4)) = O( \exp(-n^{\alpha}/5))$. 
\item
Bottom narrow band: $i \in [(n-n^{(1+\alpha)/2})/2, n/2-1]$. Using the heat function $\heato^{\unif}$, the the expected
heat dissipation for each gate is at most 2. 
\end{itemize}
Therefore, the total expected heat dissipation for layer 2 is dominated by the narrow bands, and is $O(n^{(1+\alpha)/2})$, using two different types of heat functions.

\paragraph{Layers 3 and beyond}
We will reuse \cref{claim:gates-have-biased-halves} from the proof of \cref{thm:batcher-heat-released-upper-bound} which shows that with high probability, the number of 1-valued wires in the range $[0,n/2-1]$, and the number of 0-valued wires in the range $[n/2,n-1]$, is small.

\gateshavebiasedhalvesclaim*

We partition gates of each layer $\ell, 3 \le \ell \le \log_2 n$, into three groups: upper gates,
lower gates, and crossover gates, defined in the cases below.
In Figure \ref{fig:batcher}, where $n = 16$, when $l=3$ there are two upper gates,
two lower gates, and two crossover gates.

\begin{itemize}
 \item
 Upper gates: Both inputs are from wires in the range $[0,\ldots,n/2-1]$. We assign them the heat function $\heato^{(0,0)}(2\exp(-n^{\alpha}/4))$. On input $(0,0)$, such gates have expected heat dissipation $O( \exp(-n^{\alpha}/4))$, and otherwise they have expected heat dissipation $O(n^{\alpha} \log_2 (1/\exp(-n^{\alpha}/4))) = O(n^{2\alpha})$. It follows from \cref{claim:gates-have-biased-halves}
 that with probability $1 - O(\exp(-n^{\alpha}/5))$, the number of upper gates that don't have input (0,0) is $O(n^{(1+\alpha)/2})$.
 In this scenario, the total (conditional) expected heat dissipation of the upper gates of layer 3 is
          \[
              O(n \exp(-n^{\alpha}/4)) + O(n^{(1+\alpha)/2} n^{2\alpha}) = O(n^{(1+5\alpha)/2}).
          \]
          With the remaining probability $O(\exp(-n^{\alpha}/5))$, there may be more
          gates that don't have input (0,0); in the worst case none of the $n/8$ gates
          has input $(0,0)$. In this alternative scenario the conditional expected heat dissipation is
          $O(n^{1 + 2 \alpha})$. Taking the expectation over the two scenarios, the total
          heat dissipation for the upper gates of layer 3 is
          \[
              (1 - O(\exp(-n^{\alpha}/5))) O(n^{(1+5\alpha)/2}) + O(\exp(-n^{\alpha}/5) n^{1 + 2\alpha}) = O(n^{(1+5 \alpha)/2}).
          \]
    \item
    Lower gates: Both inputs are from wires in the range $[n/2,\ldots,n-1]$. The reasoning is similar to that for the upper gates, with heat function $\heato^{(1,1)}(2\exp(-n^{\alpha}/4))$ and expected heat dissipation $O(n^{(1+5 \alpha)/2})$.
    \item
 Crossover gates: One input wire is in the range $[0,\ldots,n/2-1]$ and the other is in the range $[n/2,\ldots,n-1]$.
  The reasoning is similar to the previous two cases, with heat function $\heato^{(0,1)}(2\exp(-n^{\alpha}/4))$ and expected heat dissipation $O(n^{(1+5 \alpha)/2})$. 
\end{itemize}
Summing up over upper, lower and crossover gates, the total heat dissipation per layer in the range 3 to $\log_2 n$ is $O(n^{(1+5 \alpha)/2})$, and three different heat functions are used.

\begin{theorem}
    \label{thm:batcher-heat-released-upper-bound-overwriting}
    With respect to the overwriting circuit model, when given the uniform distribution over inputs in $\{0,1\}^n$,
    there exist valid heat functions for each gate in Batcher's odd-even mergesort such that the expected total heat release across all gates is $O(n)$.
\end{theorem}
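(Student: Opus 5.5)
The plan is to mirror the proof of \cref{thm:batcher-heat-released-upper-bound}: derive a recurrence for the heat released by the whole network. Let $\heato(n)$ be the total expected heat released by Batcher's network on $n$ uniformly random bits in the overwriting model. The network is two recursive copies on the halves of the input, followed by the merge layers $1,\ldots,\log n$.

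\textbf{Input distributions of the recursive copies.} Each copy receives $n/2$ i.i.d.\ uniform bits. In the overwriting model, a gate's optimal heat depends only on the distribution of the old output it overwrites. Because inputs across runs are i.i.d., that old output has the same distribution as the new one. So the recursive copies are exactly instances of the same problem on $n/2$ inputs. This gives
\[
\heato(n) = 2\heato(n/2) + \sum_{\ell=1}^{\log n} H_\ell(n),
\]
where $H_\ell(n)$ is the heat released in merge layer $\ell$.

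\textbf{Bounding each merge layer.} I use the layer-by-layer bounds established just above. Layers $1$ and $2$ contribute $O(n^{(1+\alpha)/2})$ each. Each layer $\ell\ge 3$ contributes $O(n^{(1+5\alpha)/2})$. Summing over at most $\log n$ layers gives $O(n^{(1+5\alpha)/2}\log n)$ for the merge step.

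\textbf{Choosing $\alpha$.} I fix a small constant, say $\alpha = 1/10$, so that $(1+5\alpha)/2 = 3/4 < 1$. The non-recursive term is then $O(n^{3/4}\log n)$, which is $O(n^{1-\delta})$ for some $\delta>0$. Two conditions must be checked for this choice:
\begin{itemize}
\item each $\epsilon$ used in the heat functions $\heato^{(b,b')}_\epsilon$, such as $2\exp(-n^\alpha/4)$, is at most a constant less than $1$ for $n$ beyond a fixed threshold $n_0$;
\item the Hoeffding-based estimates (\cref{prop:hoeffding}, \cref{claim:gates-have-biased-halves}) hold with this $\alpha\in(0,1)$.
\end{itemize}
For $n<n_0$, I assign $\heato^{\unif}$ to every gate. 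Each such gate releases at most $2$ expected heat, and there are $O(1)$ gates, so the base case is $O(1)$.

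\textbf{Solving the recurrence.} The master theorem applied to $\heato(n) = 2\heato(n/2) + O(n^{1-\delta})$ gives $\heato(n) = O(n)$.

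\textbf{Validity.} Every heat function used is the optimal heat function of \cref{prop:ef-min-heat-fn} for some full-support distribution on the old output. By \cref{lem:turgut-lemma} each is therefore valid for the map $h(x,y)=(x,f(x))$: for each fixed $x$, the preimage of $(x,f(x))$ is $\{x\}\times\range$, and $\sum_y 2^{-\heato(y)} = 1$ over that preimage.

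\textbf{Main obstacle.} The delicate step is the layer $\ge 3$ accounting. Upper and lower gates can release $\Theta(n^\alpha)$ heat (up to $\log(2/\epsilon)$) whenever they see an atypical value. I must check that the polynomial factor this introduces, $n^{(1+\alpha)/2}\cdot n^{2\alpha}$, stays sublinear; it does, but only for small $\alpha$, which is exactly what forces $\alpha<1/5$. A second point to check is that the failure event of \cref{claim:gates-have-biased-halves}, which has probability $O(n e^{-n^\alpha/4})$, still kills the worst-case $O(n^{1+2\alpha})$ heat on that event.
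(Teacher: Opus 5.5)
Your proposal is correct and is essentially the paper's proof. The paper likewise sums the layer bounds ($O(n^{(1+\alpha)/2})$ for layers $1,2$ and $O(n^{(1+5\alpha)/2})$ for each layer $\ell\ge 3$) and solves $\heato(n)=2\heato(n/2)+O(n^{(1+5\alpha)/2}\log n)$ to get $O(n)$ for any $\alpha<1/5$, of which your $\alpha=1/10$ is one instance; your explicit base case and your validity check via \cref{lem:turgut-lemma} (each of $\heato^{\unif}$ and $\heato^{(b,b')}_\epsilon$ has $\sum_y 2^{-\heato(y)}=1$ over the three possible old outputs) are details the paper leaves implicit, and they are fine.
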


\begin{proof}
Adding up the expected heat dissipation for layers 1 through $\log_2 n$, we see that the total expected heat dissipation is
\[
    O(n^{(1+\alpha)/2}) + O(n^{(1+5\alpha)/2} \log_2 n),
\]
where the first term accounts for layers 1 and 2, and the second term accounts for gates in layers 3 to $\log_2 n$.
The total heat dissipation for the whole circuit is therefore given by the
recurrence
\[
    \EF(n) = 2 \EF(n/2) + O(n^{(1+5\alpha)/2} \log_2 n)
\]
and so is $O(n)$ when $\alpha < 1/5$. The number of different heat functions used is $O(1)$ per layer, and so is  $O(\log n)$ in total for the network.
\end{proof}

\section{Re-deriving mismatch costs}
\label{sec:rederive-mismatch-cost}
In \cite{WK20,YCW24,YYW25} the authors use the language of \emph{mismatch costs} to lower bound the entropy production of implementing gates in a Boolean circuit.
Instead of explicitly defining heat functions, they start with the assumption that some distribution over inputs $\dist^*$ minimizes the entropy production of the desired computation.
Their main observation is that entropy production can be decomposed as the sum of a term characterizing how far the actual input distribution is from $\dist^*$ and a term characterizing the inefficiencies even when using $\dist^*$ as the input distribution.
We start by characterizing how $\dist^*$ can be derived from a given heat function.

\begin{lemma}
\label{lem:opt-heat-fun-for-dist-marginal}
    Let $f$ be an arbitrary function with domain $\domain$ and $\heat: \domain \to \R_{\geq 0} \cup \{\infty\}$ be a valid heat function for computing $f$.
    Then any conditional input distribution $\dist_{|f^{-1}(y)}^*$ that minimizes the entropy production of applying a gate $g$ computing $f$ with heat function $\heat$ must satisfy
    \begin{displaymath}
        \dist_{|f^{-1}(y)}^* = (\gibbs_\heat)_{|f^{-1}(y)}
    \end{displaymath}
    on all $y$ where $(\gibbs_\heat)_{|f^{-1}(y)}$ is defined.
    If $(\gibbs_\heat)_{|f^{-1}(y)}$ is undefined, then all conditional input distributions have infinite entropy flow.
\end{lemma}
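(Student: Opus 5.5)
The plan is to decompose the entropy production of $g$ by output value $y$ and minimize each conditional piece with \cref{lem:gibbs-opt}.

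For a distribution $\dist$ over $\domain$, write $p_y = \Pr_{f(\dist)}[y]$ and $\dist_y = \dist_{|f^{-1}(y)}$. By the conditional-entropy identity $\entropy(\dist) - \entropy(f(\dist)) = \entropy(X \mid f(X))$, we have
\[
\EPlocal_g(\dist) = \E_{x\sim\dist}[\heat(x)] - \entropy(X\mid f(X)) = \sum_{y \in \support(f(\dist))} p_y \left( \E_{x \sim \dist_y}[\heat(x)] - \entropy(\dist_y) \right).
\]
This is exactly the rewriting already used in the proof of \cref{lem:turgut-lemma}. The key observation is that once the output marginal $(p_y)_y$ is fixed, the conditional distributions $\dist_y$ can be chosen independently. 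The $y$-th summand depends only on $\dist_y$, which ranges over all distributions supported on $S = f^{-1}(y)$.

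Next, fix $y$ with $p_y > 0$ and apply \cref{lem:gibbs-opt} with $f := \heat$ and $S := f^{-1}(y)$. When $(\gibbs_\heat)_{|f^{-1}(y)}$ is defined, it is the unique minimizer of $\E_{\dist_y}[\heat] - \entropy(\dist_y)$. So any conditional $\dist^*_{|f^{-1}(y)}$ that minimizes entropy production must equal it. Otherwise one could replace that single conditional by the Gibbs conditional, keeping $p_y$ and all other conditionals fixed, and strictly decrease the $y$-th term and hence the total, since $p_y > 0$. I will state the claim for $y$ in the support of the output distribution, or equivalently treat minimization per conditional. That is the sense of ``conditional input distribution'' in the statement.

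For the undefined case: $(\gibbs_\heat)_{|f^{-1}(y)}$ is undefined exactly when $\heat(x) = \infty$ for every $x \in f^{-1}(y)$, so that $\sum_{x\in f^{-1}(y)} 2^{-\heat(x)} = 0$. Then every distribution supported on $f^{-1}(y)$ puts positive mass on some $x$ with $\heat(x) = \infty$, so its expected heat, the entropy flow, is infinite.

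The main obstacle is interpretive rather than computational: making precise that minimizing over conditionals with $p_y$ held fixed is the intended notion, and handling $\infty$ values so that the proof of \cref{lem:gibbs-opt} (the KL rewriting) remains valid. I expect to restrict attention to $x$ with finite $\heat(x)$, noting that any mass on $\heat = \infty$ makes the objective infinite and hence non-optimal whenever the Gibbs conditional exists.
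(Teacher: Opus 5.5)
Your proposal is correct and follows the paper's proof. For each $y$, the entropy production of a conditional distribution on $f^{-1}(y)$ reduces to $\E[\heat] - \entropy(\cdot)$, and \cref{lem:gibbs-opt} gives the Gibbs conditional as its unique minimizer; the undefined case is exactly $\heat \equiv \infty$ on $f^{-1}(y)$.

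The only difference is that the paper applies $\EP_g$ directly to the conditional distribution, whose output entropy is zero, instead of first decomposing a full input distribution by output (that decomposition appears in \cref{cor:opt-dist-has-full-support-on-opt-inputs}). The per-conditional reading you settle on is the safe one, because your ``strictly decreases the total'' step fails if some other $y'$ with $p_{y'}>0$ already contributes infinite entropy production.
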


\begin{proof}
    By the definition of entropy production, we have that for every $y$:
    \begin{displaymath}
        \EP_{g}(\dist_{|f^{-1}(y)}) = -\entropy(\dist_{|f^{-1}(y)}) + \E_{x \sim \dist_{|f^{-1}(y)}}[\heat(x)].
    \end{displaymath}
    By \cref{lem:gibbs-opt}, we know that $(\gibbs_\heat)_{|f^{-1}(y)}$ is the distribution that minimizes this quantity.
    When $(\gibbs_\heat)_{|f^{-1}(y)}$ is undefined, it must be the case that $\heat(x) = \infty$ for every $x \in f^{-1}(y)$.
    In such cases, any distribution over these inputs must release infinite expected heat.
\end{proof}

\begin{corollary}\label{cor:opt-dist-has-full-support-on-opt-inputs}
    Let $f$ be an arbitrary function with domain $\domain$ and $\heat: \domain \to \R_{\geq 0} \cup \{\infty\}$ be a valid heat function for computing $f$ such that $\heat(x) \in \R_{\geq 0}$ for some $x \in \domain$.
    Then any input distribution $\dist^*$ over $\domain$ that minimizes the entropy production of applying a gate computing $f$ with heat function $\heat$ must have conditional distributions $\dist^*_{|f^{-1}(y)} = (\gibbs_\heat)_{|f^{-1}(y)}$ and the distribution $f(\dist^*)$ must only have support on the $y$ where $(\gibbs_\heat)_{|f^{-1}(y)}$ is defined and that minimize $\EP_{g}((\gibbs_\heat)_{|f^{-1}(y)})$.
\end{corollary}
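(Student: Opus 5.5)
The plan is to decompose the local entropy production of an arbitrary input distribution $\dist$ according to the output it produces, then optimize the conditional and marginal parts separately. Write $p_y = \Pr_{f(\dist)}[y]$ for each $y \in \range$. Grouping the terms of $\E_{x\sim\dist}[\heat(x)] - (\entropy(\dist) - \entropy(f(\dist)))$ by output value, as in the proof of \cref{lem:turgut-lemma}, should give
\[
\EP_g(\dist) = \sum_{y \in \support(f(\dist))} p_y \left( \E_{x \sim \dist_{|f^{-1}(y)}}[\heat(x)] - \entropy(\dist_{|f^{-1}(y)}) \right),
\]
where each bracket is exactly the per-output quantity $\EP_g(\dist_{|f^{-1}(y)})$ used in \cref{lem:opt-heat-fun-for-dist-marginal}. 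The key feature is that $\EP_g(\dist)$ is a convex combination, weighted by $f(\dist)$, of costs that depend only on the conditional distributions.

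Because of this structure, the minimization splits into an inner problem and an outer problem. For the inner problem, fix $y$ with $p_y > 0$. By \cref{lem:opt-heat-fun-for-dist-marginal}, which rests on \cref{lem:gibbs-opt}, the bracket is uniquely minimized by $(\gibbs_\heat)_{|f^{-1}(y)}$ whenever that distribution is defined. If it is undefined, then $\heat = \infty$ on all of $f^{-1}(y)$ and the bracket is infinite. The hypothesis that $\heat$ is finite somewhere guarantees that at least one $y$ has a defined Gibbs conditional, so the minimal value $m^* = \min_y \EP_g((\gibbs_\heat)_{|f^{-1}(y)})$ is finite.

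For the outer problem, set $m_y = \EP_g((\gibbs_\heat)_{|f^{-1}(y)})$, taking $m_y = \infty$ when the conditional is undefined. Every $\dist$ then satisfies $\EP_g(\dist) \ge \sum_y p_y m_y \ge m^*$. This lower bound is attained by placing all mass on a single minimizing $y$ and using the Gibbs conditional there, so the optimal value is exactly $m^*$.

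It remains to show that any minimizer $\dist^*$ must have the stated form. Suppose $\dist^*$ achieves $m^*$. Then both inequalities in the chain above must be equalities. The first equality forces each bracket with $p_y > 0$ to equal $m_y$, and the uniqueness part of \cref{lem:gibbs-opt} then gives $\dist^*_{|f^{-1}(y)} = (\gibbs_\heat)_{|f^{-1}(y)}$. The second equality is $\sum_y p_y m_y = m^*$, which means every $y$ in the support of $f(\dist^*)$ has $m_y = m^*$. This gives both claims: the Gibbs conditional is defined at each such $y$ (since $m_y$ is finite), and each such $y$ minimizes $m_y$.

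The main obstacle is keeping the infinite values consistent. An undefined conditional must contribute $+\infty$ rather than an indeterminate expression, and the condition on $\dist^*$ only makes sense on outputs in $\support(f(\dist^*))$. Away from that subtlety, the argument is a direct convexity/averaging step built on \cref{lem:gibbs-opt}.
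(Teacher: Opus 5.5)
Your proposal is correct and takes the same route as the paper. Both write $\EP_g(\dist)$ as the $f(\dist)$-weighted average of the per-output costs $\EP_g(\dist_{|f^{-1}(y)})$, apply \cref{lem:opt-heat-fun-for-dist-marginal} (via the uniqueness in \cref{lem:gibbs-opt}) to fix the conditionals, and use the averaging/convexity step to confine the support of $f(\dist^*)$ to minimizing outputs; your version just spells out the equality case and the treatment of infinite values, which the paper leaves implicit.
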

\begin{proof}
    The entropy production on an input distribution $\dist$ is given by:
    \begin{align*}
        \EP_{g}(\dist) &= \E_{x \sim \dist}[\heat(x)] + \entropy(f(\dist)) - \entropy(\dist)\\
        &= \E_{y \sim f(\dist)}\left[ \log(1/\Pr_{f(\dist)}[y]) + \E_{x \sim \dist_{|f^{-1}(y)}}\left[Q(x) - \log(1/\Pr_{\dist}[x])\right]\right]\\
        & =\E_{y \sim f(\dist)}\left[\E_{x \sim \dist_{|f^{-1}(y)}}\left[Q(x)\right] -\entropy(\dist_{|f^{-1}(y)})\right]\\
        &= \E_{y \sim f(\dist)}[\EP_{g}(\dist_{|f^{-1}(y)})]
    \end{align*}
    The entropy production can be expressed as a convex combination of the entropy production for the different conditional distributions.
    \cref{lem:opt-heat-fun-for-dist-marginal} exactly characterizes what these conditional distributions look like.
    Due to convexity, any optimal distribution must have all of its probability mass on $x$ where $\EP_{g}(\dist_{|f^{-1}(f(x))})$ is minimized.
\end{proof}
We now use heat functions to reprove the decomposition of entropy production into the sum of terms first presented in \cite{WK20}.
Intuitively, in \Cref{prop:mismatch-cost}, $\dist^*$ represents the distribution on inputs to gate $g$ for which the heat function $\gibbs$ is optimal (i.e., that minimizes entropy production), and $\dist$ represents the true distribution, incurring a higher entropy production. 
The difference is known as the \emph{mismatch cost}. The significance of \Cref{prop:mismatch-cost} is that this mismatch cost can be characterized purely information theoretically in terms of $\dist$ and $\dist^*$ (specifically by the KL-divergence terms $\DKL(\dist || \dist^*) - \DKL(f(\dist) || f(\dist^*)$). In other words the way the physics of the device implementing the gate influences the extra entropy production is entirely contained in its optimal distribution $\dist^*$.
Our proof of this result uses the language of heat functions, but is similar to that in \cite{WK20}.

\begin{proposition}[\cite{WK20}]
\label{prop:mismatch-cost}
    Let $g$ be a gate computing function $f$ with domain $\domain$, and valid heat function $\heat$. Let $\dist$ be any input distribution over $\domain$, and $\dist^*$ be another input distribution over $\domain$ where $\support(f(\dist)) \subseteq \support(f(\dist^*))$ and $\dist^*_{|f^{-1}(y)} = (\gibbs_\heat)_{|f^{-1}(y)}$ for every $y$ where both of these distributions are defined.
    If $(\gibbs_\heat)_{|f^{-1}(y)}$ is undefined for some $y \in \support(f(\dist))$ then $\EP_{g}(\dist) = \infty$.
    Otherwise, the entropy production of applying $g$ on $\dist$ is:
    \begin{displaymath}
        \EP_{g}(\dist) = \E_{y \sim f(\dist)}[ \EP_{g}((\gibbs_\heat)_{|f^{-1}(y)})] + \DKL(\dist || \dist^*) - \DKL(f(\dist) || f(\dist^*))
    \end{displaymath}
\end{proposition}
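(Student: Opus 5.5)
We follow the decomposition already used in the proof of \cref{cor:opt-dist-has-full-support-on-opt-inputs}, treating $\EP_g$ here as the local quantity on the gate's input distribution. The plan is to expand both sides pointwise over outputs $y$ and check that they agree term by term.

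\textbf{The infinite case.} Suppose $(\gibbs_\heat)_{|f^{-1}(y)}$ is undefined for some $y \in \support(f(\dist))$. Then $\heat(x)=\infty$ for every $x \in f^{-1}(y)$. Since $\dist$ puts positive mass on some such $x$, we get $\E_{x\sim\dist}[\heat(x)]=\infty$. The entropy terms are finite, so $\EP_g(\dist)=\infty$.

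\textbf{Decomposing over outputs.} Otherwise, we first write, exactly as in the corollary's proof,
\[
\EP_g(\dist)=\E_{y\sim f(\dist)}\bigl[\E_{x\sim\dist_{|f^{-1}(y)}}[\heat(x)]-\entropy(\dist_{|f^{-1}(y)})\bigr].
\]
By the computation in the proof of \cref{lem:gibbs-opt}, with $Z_y=\sum_{x\in f^{-1}(y)}2^{-\heat(x)}$, each inner bracket equals
\[
\DKL\bigl(\dist_{|f^{-1}(y)}\,\|\,(\gibbs_\heat)_{|f^{-1}(y)}\bigr)-\log Z_y .
\]
Applying the same identity with $\dist_{|f^{-1}(y)}$ replaced by $(\gibbs_\heat)_{|f^{-1}(y)}$ makes the divergence vanish. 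This gives $\EP_g((\gibbs_\heat)_{|f^{-1}(y)})=-\log Z_y$. Substituting back,
\[
\EP_g(\dist)=\E_{y\sim f(\dist)}\bigl[\EP_g((\gibbs_\heat)_{|f^{-1}(y)})\bigr]+\E_{y\sim f(\dist)}\bigl[\DKL(\dist_{|f^{-1}(y)}\,\|\,\dist^*_{|f^{-1}(y)})\bigr],
\]
where we used the hypothesis $\dist^*_{|f^{-1}(y)}=(\gibbs_\heat)_{|f^{-1}(y)}$. This hypothesis applies because $\support(f(\dist))\subseteq\support(f(\dist^*))$ guarantees the conditional of $\dist^*$ is defined on every relevant $y$.

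\textbf{Chain rule for KL.} It remains to show that the averaged conditional divergence equals $\DKL(\dist\|\dist^*)-\DKL(f(\dist)\|f(\dist^*))$. Write $\Pr_\dist[x]=\Pr_{f(\dist)}[y]\Pr_{\dist_{|f^{-1}(y)}}[x]$ for $y=f(x)$, and similarly for $\dist^*$. The logarithm of the ratio then splits into an output part and a conditional part, and taking expectation over $x\sim\dist$ yields the identity directly.

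\textbf{Main obstacle: supports and infinities.} If $\dist_{|f^{-1}(y)}$ charges some $x$ outside $\support(\dist^*_{|f^{-1}(y)})$, then $\heat(x)=\infty$ there. In that case both sides are $+\infty$, and we argue this separately rather than through the chain rule. The chain rule itself is applied only when all quantities are finite, where the output term $\DKL(f(\dist)\|f(\dist^*))$ is finite by the support hypothesis, so no $\infty-\infty$ arises.
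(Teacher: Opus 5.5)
Your proof is correct and follows essentially the same route as the paper's. Your fiberwise identity $\E_{x\sim\dist_{|f^{-1}(y)}}[\heat(x)]-\entropy(\dist_{|f^{-1}(y)})=\DKL(\dist_{|f^{-1}(y)}\,\|\,(\gibbs_\heat)_{|f^{-1}(y)})-\log Z_y$ is exactly the paper's observation that $2^{\heat(x)}\Pr_{(\gibbs_\heat)_{|f^{-1}(y)}}[x]=1/Z_y$ is constant on each fiber, and your KL chain rule is the paper's final rewriting of the likelihood ratio; the only cosmetic difference is that the paper disposes of the case $\support(\dist)\not\subseteq\support(\dist^*)$ up front, whereas you handle it at the end.
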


\begin{proof}
    If $\support(\dist) \not \subseteq \support(\dist^*)$ then there must be some $x \in \support(\dist)$ where $\heat(x) = \infty$.
    We would immediately have that $\EP_{g}(\dist) = \infty$ and also $\DKL(\dist || \dist^*) = \infty$ while the remaining terms on the right-hand side of the equality would be finite.
    Thus, we can assume without loss of generality that $\support(\dist) \subseteq \support(\dist^*)$.
    \begin{align}
        \EP_{g}(\dist) &= \entropy(f(\dist)) - \entropy(\dist) + \E_{x \sim \dist}[\heat(x)] \label{eqn:ep-mmc-islands-1}\\
        &= \E_{y \sim f(\dist)}\left[\E_{x \sim \dist_{|f^{-1}(y)}}[ \log(\Pr_{\dist}[x]/\Pr_{f(\dist)}[y]) + \heat(x)]\right] \nonumber\\
        &= \E_{y \sim f(\dist)}\left[\E_{x \sim \dist_{|f^{-1}(y)}}[ \log(\Pr_{\dist_{|f^{-1}(y)}}[x]) + \heat(x)]\right] \nonumber\\
        &= \E_{y \sim f(\dist)}\left[ \E_{x \sim \dist_{|f^{-1}(y)}} [\log(2^{\heat(x)}\Pr_{\dist_{|f^{-1}(y)}}[x])]\right]. \nonumber
    \end{align}
    Since $\EP_{g}((\gibbs_\heat)_{|f^{-1}(y)}) = \E_{x \sim (\gibbs_\heat)_{|f^{-1}(y)}} [\log(2^{\heat(x)} \Pr_{(\gibbs_\heat)_{|f^{-1}(y)}}[x])]$ we can rewrite \cref{eqn:ep-mmc-islands-1} as:
    \begin{align}
        &\E_{y \sim f(\dist)}\left[\EP_{g}((\gibbs_\heat)_{|f^{-1}(y)}) + \E_{x \sim \dist_{|f^{-1}(y)}} [\log(2^{\heat(x)}\hspace{-3pt}\Pr_{\dist_{|f^{-1}(y)}} \hspace{-3pt}[x])] - \E_{x \sim (\gibbs_\heat)_{|f^{-1}(y)}} [\log(2^{\heat(x)} \hspace{-10pt} \Pr_{(\gibbs_\heat)_{|f^{-1}(y)}}\hspace{-10pt}[x])]\right]\label{eqn:ep-mmc-islands-2}\\
        &= \E_{y \sim f(\dist)}\left[ \EP_{g}((\gibbs_\heat)_{|f^{-1}(y)}) \right] \nonumber
        \\&
        \quad + \E_{y \sim f(\dist)} \left[  \E_{x \sim \dist_{|f^{-1}(y)}} [\log(2^{\heat(x)}\Pr_{\dist_{|f^{-1}(y)}}[x])] - \E_{x \sim (\gibbs_\heat)_{|f^{-1}(y)}} [\log(2^{\heat(x)} \Pr_{(\gibbs_\heat)_{|f^{-1}(y)}}[x])]\right].
        \nonumber
    \end{align}
    Now by the definition of $(\gibbs_{\heat})_{|f^{-1}(y)}$, for every $x\in \support((\gibbs_{\heat})_{|f^{-1}(y)})$, $2^{\heat(x)} \Pr_{(\gibbs_\heat)_{|f^{-1}(y)}}[x] = 1/\sum_{x' \in f^{-1}(y)} 2^{-\heat(x')}$.
    This implies that this quantity has the same value for every element of the set $f^{-1}(y)$.
    Therefore, any convex combination over the terms with the same values of $y$ will give the same value.
    This lets us rewrite \cref{eqn:ep-mmc-islands-2} as:
    \begin{align}
        &= \E_{y \sim f(\dist)}[ \EP_{g}((\gibbs_\heat)_{|f^{-1}(y)})] + \E_{x \sim \dist} [\log (2^{\heat(x)} \Pr_{\dist_{|f^{-1}(f(x))}}[x]) - \log(2^{\heat(x)} \Pr_{(\gibbs_\heat)_{|f^{-1}(f(x))}}[x])] \nonumber\\
        &= \E_{y \sim f(\dist)}[ \EP_{g}((\gibbs_\heat)_{|f^{-1}(y)})]  + \E_{x \sim \dist}[\log (\Pr_{\dist_{|f^{-1}(f(x))}}[x] / \Pr_{(\gibbs_\heat)_{|f^{-1}(f(x))}}[x])]] \nonumber\\
        &= \E_{y \sim f(\dist)}[ \EP_{g}((\gibbs_\heat)_{|f^{-1}(y)})] + \E_{x \sim \dist} \left[\log \left(\frac{\Pr_{\dist}[x] \Pr_{f(\dist^*)}[f(x)]}{\Pr_{f(\dist)}[f(x)]\Pr_{\dist^*}[x]}\right)\right] \nonumber\\
        &= \E_{y \sim f(\dist)}[ \EP_{g}((\gibbs_\heat)_{|f^{-1}(y)})] + \DKL(\dist || \dist^*) - \DKL(f(\dist)||f(\dist^*)) \nonumber
    \end{align}
    as desired.
\end{proof}
There are actually some subtlety in our statement of \cref{prop:mismatch-cost} that becomes obvious with the language of heat functions.
In particular, we cannot always take $\dist^*$ to be the distribution over inputs that minimizes the entropy production.
Due to \cref{cor:opt-dist-has-full-support-on-opt-inputs}, we know that such a $\dist^*$ would only have support on $y$ that minimize $\EP_{g}((\gibbs_\heat)_{|f^{-1}(y)})$, and thus for many choices of $\dist$ we will not have that $\support(f(\dist)) \subseteq \support(f(\dist^*))$.
\cref{prop:mismatch-cost} actually has a nice intuitive interpretation in terms of heat functions.
The term $\DKL(\dist || \dist^*) - \DKL(f(\dist) || f(\dist^*))$ corresponds to how well suited $\heat$ is for the input distribution $\dist$ while $\E_{y \sim f(\dist))} [\EP_{g}((\gibbs_\heat)_{|f^{-1}(y)})]$ corresponds to how close $\heat$ is to being optimal on the best possible distributions over the preimage of each $y$.

\section{Entropy production from modularity}
\label{sec:modularity}
Heat functions only consider the inputs to a specific gate to determine the corresponding heat release.
Thus, they do not account for correlations between the inputs to this gate and the rest of the circuit.
This leads to a source of entropy production due to the locality of operations that are not injective.
This source of entropy production was first identified in~\cite{WK20}, where it was shown that the entropy production of applying a gate in a circuit is at least the drop in mutual information between the wires in this gate and the state of the remaining wires of the circuit.
We prove essentially the same result, showing that it is an immediate consequence of the locality of valid heat functions.

\begin{proposition}[\cite{WK20}]
\label{lem:mutual-info-positive-ep}
    Let $\circuit$ be a thermodynamic circuit with $\varSpace$ wires,
    and let $X,Y$ be a partition of $[\varSpace]$ and $\circuit_{t-1}[X], \circuit_{t-1}[Y] \sim \dist_{t-1}$ be random variables for the state of the wires in $X,Y$ after applying gate $t-1$ in $\circuit$. Let $\dist_X$ be the marginal distribution of random variable $\circuit_{t-1}[X]$ according to distribution $\dist_{t-1}$.
    Then if gate $g_t$ computes the function $f^{g_t}: \alphabet^X \to \alphabet^X$ and has valid heat function $\heat_{g_t}$, the entropy production of applying $g_t$ as the next gate in $\circuit$ is $\EPlocal_{g_t}(\dist_X)+ \mutualInformation(\circuit_{t-1}[X];\circuit_{t}[Y]) - \mutualInformation(\circuit_{t}[X];\circuit_{t}[Y])$.
\end{proposition}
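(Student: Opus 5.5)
The plan is to expand both sides into entropies and check that they agree term by term. Write $X_{t-1}=\circuit_{t-1}[X]$, $X_t=\circuit_t[X]=f^{g_t}(X_{t-1})$, and $Y=\circuit_{t-1}[Y]=\circuit_t[Y]$. The $Y$ wires are untouched by $g_t$, so the distribution after the gate is $f^{g_t}_\uparrow(\dist_{t-1})$, the joint law of $(X_t,Y)$. By \Cref{def:ep-gate},
\[
\EP_{g_t}(\dist_{t-1}) = \entropy(X_t,Y) - \entropy(X_{t-1},Y) + \E_{x\sim\dist_X}[\heat_{g_t}(x)].
\]
The heat term depends only on the marginal of $X_{t-1}$. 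This is the locality of heat functions, and it is the only structural fact the argument uses.

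Next, expand each joint entropy with the definition of mutual information: $\entropy(X_t,Y)=\entropy(X_t)+\entropy(Y)-\mutualInformation(X_t;Y)$ and $\entropy(X_{t-1},Y)=\entropy(X_{t-1})+\entropy(Y)-\mutualInformation(X_{t-1};Y)$. Substituting, the $\entropy(Y)$ terms cancel, and we get
\[
\EP_{g_t}(\dist_{t-1}) = \bigl[\entropy(f^{g_t}(\dist_X))-\entropy(\dist_X)+\E_{x\sim\dist_X}[\heat_{g_t}(x)]\bigr] + \mutualInformation(X_{t-1};Y)-\mutualInformation(X_t;Y).
\]
The bracket is exactly $\EPlocal_{g_t}(\dist_X)$ by \Cref{def:local-ep}. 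Because $Y$ is unchanged, $\mutualInformation(X_{t-1};Y)$ equals $\mutualInformation(\circuit_{t-1}[X];\circuit_t[Y])$ as written in the statement, which finishes the identity.

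The main care point is bookkeeping, not a hard estimate. First, I must make sure the gate acts only on the wires in $X$ (with $\idx_{g_t}=X$, or more generally inside $X$, in which case the heat function is read off the sub-coordinates). Second, the infinite cases need handling: if $\heat_{g_t}=\infty$ with positive probability, both sides are $\infty$, since all the entropy terms are finite over a finite alphabet. I would also note the consequence. By the data processing inequality, $\mutualInformation(X_t;Y)\le\mutualInformation(X_{t-1};Y)$, because $X_t$ is a function of $X_{t-1}$. Together with $\EPlocal_{g_t}\ge0$ from validity (\Cref{def:valid-heat}), this gives the lower bound $\EP_{g_t}\ge\EPlocal_{g_t}$ claimed earlier.
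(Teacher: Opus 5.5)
Your proposal is correct and follows the paper's proof. Both write $\EP_{g_t}$ via \Cref{def:ep-gate} and use the fact that the heat term depends only on $\dist_X$ to split off $\EPlocal_{g_t}(\dist_X)$, then rearrange joint entropies; the paper passes through the conditional entropies $\entropy(\circuit_t[X]\given\circuit_t[Y])$, whereas you expand the mutual-information definition directly, which is the same algebra. Your remarks on the infinite-heat case and on $\EP_{g_t}\ge\EPlocal_{g_t}$ via data processing are correct additions that the paper's proof does not spell out.
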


\begin{proof}
    By \cref{def:local-ep},
    \begin{displaymath}
        \E_{x \sim \dist_X}[\heat_{g_t}(x)] = \EPlocal_{g_t}(\dist_X) + \entropy(\circuit_{t-1}[X]) - \entropy(\circuit_{t}[X])
    \end{displaymath}
    Let $\dist_t$ be the distribution over states of the circuit after applying $g_t$.
    Since $\circuit_{t-1}[Y]$ is the same as $\circuit_t[Y]$, we know that according to \cref{def:ep-gate}:
    \begin{align*}
        \EP_{g_t}(\dist_{t-1}) &= \E_{x \sim {\dist_{t-1}}}[\heat_{g_t}(x_{X})] + \entropy(\dist_t) - \entropy(\dist_{t-1})\\
        &= \E_{x \sim \dist_X}[\heat_{g_t}(x)] + \entropy(\dist_t) - \entropy(\dist_{t-1})\\
        &= \EPlocal_{g_t}(\dist_X) + \entropy(\circuit_{t-1}[X]) - \entropy(\circuit_t[X]) + \entropy(\dist_t) - \entropy(\dist_{t-1})\\
        &= \EPlocal_{g_t}(\dist_X) + \entropy(\circuit_{t-1}[X]) - \entropy(\circuit_t[X]) + \entropy(\circuit_{t}[X],\circuit_{t}[Y]) - \entropy(\circuit_{t-1}[X],\circuit_{t}[Y])\\
        &=\EPlocal_{g_t}(\dist_X) + \entropy(\circuit_{t-1}[X]) - \entropy(\circuit_t[X]) + \entropy(\circuit_{t}[X] \given \circuit_t[Y]) - \entropy(\circuit_{t-1}[X] \given \circuit_{t}[Y])\\
        &=\EPlocal_{g_t}(\dist_X) + \mutualInformation(\circuit_{t-1}[X];\circuit_{t}[Y]) - \mutualInformation(\circuit_{t}[X];\circuit_{t}[Y])
    \end{align*}
    as desired.
\end{proof}
Thus we can decompose entropy production into a term based off of sub-optimal computation on the marginal input distribution plus the drop in mutual information caused by the locality constraint on heat functions.

\end{document}